\definecolor{SHCcolor}{rgb}{0.9,0.1,0.1}
\definecolor{VHcolor}{rgb}{0.7,0.3,0.9}
\definecolor{MRacolor}{rgb}{0.1,0.7,0.1}
\definecolor{MRocolor}{rgb}{0.1,0,1}
\definecolor{maxcolor}{rgb}{1,0.03,0}
\definecolor{dblue}{rgb}{0.25,0.03,0.8}
\definecolor{vecolor}{rgb}{0.7,0.3,0.9}
\definecolor{sergiocolor}{rgb}{0.1,0.7,0.1}
\definecolor{outcolor}{rgb}{1,0.55,0}
\definecolor{tocolor}{rgb}{0.2,0.5,0.2}
\newcommand{\ignore}[1]{}
\newtheorem*{claim}{Claim}
\newtheorem{nthm}{Theorem}[]
\newtheorem{nlemma}{Lemma}
\newtheorem{nconj}{Conjecture}
\newtheorem{ncor}{Corollary}
\newtheorem{ndefi}{Definition}
\renewcommand{\[}{\left[}
\renewcommand{\emptyset}{\varnothing}
\renewcommand{\implies}{\quad\Rightarrow\quad}
\newcommand{\st}{\, : \,}
\newcommand{\R}{\mathbb{R}}
\newcommand{\N}{{\sf{N}}}
\newcommand{\M}{{\sf{M}}}
\newcommand{\D}{{\sf{D}}}
\newcommand{\E}{{\sf{E}}}
\newcommand{\I}{\mathscr{I}}
\newcommand{\J}{\mathscr{J}}
\newcommand{\K}{\mathscr{K}}
\newcommand{\cut}{\mathscr{C}}
\newcommand{\conf}{\mathcal{C}}
\newcommand{\mU}{U^*}
\newcommand{\tgm}[1]{G_{#1}}
\newcommand{\htgm}[1]{\hat{G}_{#1}}
\newcommand{\gm}[1]{\widetilde{G_{#1}}}
\newcommand{\mcs}[1]{\breve{#1}} 
\newcommand{\es}{\varnothing}
\def\@fpheader{~}\makeatother
\newcommand{\er}{\mathscr{R}} 
\newcommand{\erssa}{\er_{_\text{SSA}}} 
\newcommand{\erq}{\er_{_\text{Q}}} 
\newcommand{\erh}{\er_{_\text{H}}} 
\tikzset{cutedgestyle/.style={very thick,color=cutedgecolor}}  
\tikzset{edgestyle/.style={very thick,color=edgecolor}}  
\tikzset{edgeweightstyle/.style={color=edgecolor!60!black}}  
\tikzset{ugofastyle/.style={->,densely dotted, thick,color=ugofacolor}}     
\tikzset{ugobastyle/.style={<-, densely dotted,thick,color=ugobacolor}}     
\tikzset{bfstyle/.style={<->,densely dotted, thick,color=ugofacolor}}     
\definecolor{Acolor}{rgb}{0.8,0.5,0.5}  
\definecolor{Bcolor}{rgb}{0.5,0.8,0.5}  
\definecolor{Ccolor}{rgb}{0.55,0.55,0.85}  
\definecolor{Dcolor}{rgb}{0.9,0.8,0.}  
\definecolor{Ecolor}{rgb}{0.7,0.5,0.8}  
\definecolor{Fcolor}{rgb}{0.5,0.35,0.2}  
\definecolor{Ocolor}{rgb}{0,0,0}  
\definecolor{bvcolor}{rgb}{0.6,0.6,0.6}  
\definecolor{cutvertcolor}{rgb}{0,0,0}  
\definecolor{cutedgecolor}{rgb}{1,0,0}  
\definecolor{edgecolor}{rgb}{0.8,0.8,0.5}  
\definecolor{ugofacolor}{rgb}{0.3,0.3,0.7}  
\definecolor{ugobacolor}{rgb}{0.7,0.4,0.7}  
\definecolor{edgeweightcolor}{rgb}{1,0,0}  
\newcommand{\bigplus}{%
  \DOTSB\mathop{\mathpalette\mattos@bigplus\relax}\slimits@
}
\newcommand\mattos@bigplus[2]{%
  \vcenter{\hbox{%
    \sbox\z@{$#1\sum$}%
    \resizebox{!}{0.9\dimexpr\ht\z@+\dp\z@}{\raisebox{\depth}{$\m@th#1+$}}%
  }}%
  \vphantom{\sum}%
}
\title{The holographic entropy cone from marginal independence}
\author[a]{Sergio Hern\'andez-Cuenca}
\emailAdd{sergiohc@ucsb.edu}
\author[b]{Veronika E. Hubeny}
\emailAdd{veronika@physics.ucdavis.edu}
\author[c]{Massimiliano Rota}
\emailAdd{m.rota@uva.nl}
\affiliation[a]{Department of Physics, University of California, Santa Barbara, CA 93106, USA}
\affiliation[b]{Center for Quantum Mathematics and Physics (QMAP)\\ 
Department of Physics \& Astronomy, University of California, Davis CA, USA}
\affiliation[c]{Institute for Theoretical Physics, University of Amsterdam,
Science Park 904, Postbus 94485, 1090 GL Amsterdam, The Netherlands}
\abstract{
The holographic entropy cone characterizes the relations between entanglement entropies for a spatial partitioning of the boundary spacetime of a holographic CFT in any state describing a classical bulk geometry.
We argue that the holographic entropy cone, for an arbitrary number of parties, can be reconstructed from more fundamental data determined solely by subadditivity of quantum entropy. We formulate certain conjectures about graph models of holographic entanglement, for which we provide strong evidence, and rigorously prove that they all imply that such a reconstruction is possible. Our conjectures (except only for the weakest) further imply that the necessary data is remarkably simple. In essence, all one needs to know to reconstruct the holographic entropy cone, is a certain subset of the extreme rays of this simpler ``subadditivity cone'', namely those which can be realized in holography.
This recasting of the bewildering entanglement structure of geometric states into primal building blocks paves the way to distilling the essence of holography for the emergence of a classical bulk spacetime.
}
\begin{document}
 
 
\maketitle

\section{Introduction}
\label{sec:intro}

In recent years there has been significant progress in understanding how in the gauge/gravity duality \cite{Maldacena:1997re,Gubser:1998bc,Witten:1998qj} the semiclassical bulk physics is encoded in the boundary theory. 
Drawing on \cite{Almheiri:2014lwa,Jafferis:2015del}, it was shown
\cite{Dong:2016eik,Cotler:2017erl} that if one has access to a subsystem of the boundary, one can then reconstruct local operators inside a bulk region known as the entanglement wedge \cite{Headrick:2014cta}. However, this reconstruction of bulk operators assumes the knowledge of the classical geometric background.  To obtain a more complete understanding of the holographic encoding of the bulk physics, one would like to characterise which states of the boundary field theory are dual to classical geometries, and for all such cases, decode the bulk metric directly from the boundary data.  

Intuition from the scale/radius duality suggests that the deeper in the bulk we wish to see, the more non-local (in a suitable sense)  the CFT probe we need.  One particularly convenient class of boundary observables are built from lightlike objects.  Perhaps the simplest and most accessible ones are the `bulk-cone singularities' \cite{Hubeny:2006yu} 
whereby endpoints of null geodesics through the bulk (which can be used to reconstruct the conformal metric in the bulk region probed by such geodesics, cf.\ e.g.\ \cite{Hammersley:2006cp,Folkestad:2021kyz}) are visible directly in the singularities of boundary correlators.
This can be generalized to the so-called `bulk-point singularities' of Landau diagrams corresponding to $n$-particle scattering \cite{Maldacena:2015iua}, used for the construction of the `light-cone cuts' \cite{Engelhardt:2016wgb,Engelhardt:2016crc,Hernandez-Cuenca:2020ppu}, which offer a much more elegant extraction from much more complicated and harder-to-access CFT data.
However to reach even deeper, into causally inaccessible regions, one needs a CFT probe implemented by a spacelike  construct in the bulk.  One particularly natural such geometrical object is a codimension-$2$ extremal surface.\footnote{\, From the purely geometrical level, the motivation is that higher-dimensional surfaces probe deeper (when comparing amongst various-dimensional bulk extremal surfaces anchored within a fixed-radius region of the boundary of a given asymptotically-AdS spacetime) \cite{Hubeny:2012ry}. }
Knowing the proper areas for a family of such surfaces can be inverted to extract the bulk metric (including the conformal factor), again within the bulk regions reached by such surfaces.\footnote{\, Early proof of principle was carried out in e.g.\ \cite{Bilson:2010ff} and an argument for uniqueness of the recovered metric in 4-dimensions was given in \cite{Bao:2019bib}, which also reviews further approaches to bulk metric reconstruction.
}

Having given a bulk motivation for codimension-$2$ extremal surfaces as providing a particularly natural and deep probe of the bulk geometry, it is intriguing to note that even from the CFT viewpoint, such surfaces arise very naturally, in the context of holographic entanglement entropy.  
Here a key role was played by the celebrated RT/HRT formula \cite{Ryu_2006,Hubeny_2007} (collectively shortened to HRRT), which computes the von Neumann entropy of boundary subsystems in terms of the area of certain bulk surfaces. This ``geometrisation'' of correlations \cite{VanRaamsdonk:2010pw}, which more recently has also been observed for other information quantities (such as for example \cite{Takayanagi:2017knl,Dutta:2019gen}), seems to indicate that from the boundary point of view, a possible characterisation of bulk \textit{geometric states} could be formulated in terms of certain features
of their entanglement structure. Motivated by this, we would like to use the HRRT formula to extract as much information as possible about the entanglement structure of geometric states in holography.

In general, one way to investigate the entanglement structure of a given class of states  
utilizes an analysis of constraints, which typically take the form of inequalities satisfied by certain information quantities. Since we are interested in the implications of the HRRT formula, here we will focus on inequalities satisfied by linear combinations of von Neumann entropies of various boundary subsystems.

Two important examples of these classes of inequalities are \textit{subadditivity} (SA) and \textit{strong subadditivity} (SSA), the saturation of either of which has a clear implication for the entanglement structure of 
a given density matrix. The saturation of SA, or equivalently the vanishing of the mutual information, is associated with the absence of any form of correlation between 
a specified pair of subsystems, and the corresponding factorisation of the density matrix. As we will see, this fact will play a central role in our construction. The saturation of SSA on the other hand is associated with a particular entanglement structure, commonly known as quantum Markov chain, which is central for the theory of quantum error correction and recovery maps \cite{Hayden:2004}.\footnote{\, See \cite{Casini:2017roe} for a discussion of this property in QFT.}$^,$\footnote{\, The fact that saturation of SSA will not play a fundamental role in our characterization of geometric states is related to the fact that by MMI (see below) such saturation can only be achieved when SA is also saturated (see \cite{Hayden:2011ag} for more details).}

It is important to notice that while the saturation of SA and SSA is associated with specific entanglement structures, the inequalities themselves do not characterise any particular structure because they are satisfied by all quantum states. For restricted classes of states, however, the von Neumann entropies of various subsystems might satisfy additional inequalities. This can be easily seen for example for the case of classical probability distributions, where the von Neumann entropy reduces to the Shannon entropy, which in addition to the inequalities mentioned above also satisfies monotonicity.

The fact that in holography the entanglement structure of geometric states is constrained, beyond that of arbitrary quantum states, became evident with the work of \cite{Hayden:2011ag}, which proved an inequality known as \textit{monogamy of mutual information} (MMI).\footnote{\, It is straightforward to see that outside the holographic setting,
this inequality can be violated, for example, by a $4$-party GHZ state.} It is then interesting to ask 
what are all
other such  inequalities, and a systematic search was initiated in \cite{Bao:2015bfa}. This work introduced the notion of the \textit{holographic entropy cone} (HEC) and proved that this cone is polyhedral for an arbitrary number of parties $\N$, implying that for any $\N$ there exists only a finite number of non-redundant\footnote{\, An inequality is non-redundant if it is not implied by other inequalities.} inequalities which define the \emph{facets}. It also derived a set of new inequalities for five parties, which was later proved to be the complete set in \cite{HernandezCuenca:2019wgh}. For more than five parties, it was shown in \cite{Hayden:2016cfa} that the holographic entropy cone is indeed contained in the quantum one \cite{pippenger2003inequalities} for any $\N$, but the detailed structure of the HEC remains mostly unknown.\footnote{\, A new family of inequalities for every odd $\N$ was also found in \cite{Bao:2015bfa}, and argued to be non-redundant with respect to SA and SSA and among themselves. However it remains unclear  whether they genuinely are all facets of the HEC for every $\N$. Computational efforts to construct the complete HEC for $\N=6$ are also ongoing \cite{n6wip}. At the time of writing, more than $4122$ orbits of extreme rays and $182$ orbits of facets have been found.}

One limitation of \cite{Bao:2015bfa} is that, while it offered a tool that can be used to prove if a given inequality is valid (via so-called contraction maps), it did not provide a constructive way of deriving such candidate inequalities in the first place, or even determine if they correspond to facets at all.\footnote{\, Indeed, the proof-by-contraction method is unable to ascertain that a given inequality is \emph{not} valid. In other words, while the method provides a sufficient condition for an inequality to be valid, it remains an open question whether it is also necessary \cite{Avis:2021xnz}.} A step forward in this direction was accomplished by \cite{Avis:2021xnz}, in the form of two systematic algorithms for the construction of the HEC for any number of parties. These algorithms were devised so as to recursively converge towards tighter inequalities, terminating with the finding of all facets of the HEC as one of its outputs. However, even if one were able to somehow find explicit formulae for all the facet inequalities of the HEC, it would still remain totally obscure where they come from and what they mean.

A first attempt at circumventing this limitation was presented in \cite{Hubeny:2018trv,Hubeny:2018ijt}, which introduced the notion of \textit{proto-entropies} and \textit{holographic entropy polyhedron}. Motivated by
the physical requirement of cut-off independence, this work suggested focusing more on topological features of the relevant RT surfaces, specifically their connectivity, rather than on their actual areas. This ultimately translated the search for the inequalities into the search of certain generating configurations, called building blocks, making the problem more combinatorial in nature. One of the features of this approach is that any inequality found by this procedure is guaranteed by construction to be a facet of the polyhedron. Intuitively, one can think of this procedure as being akin to deriving the facets of a polyhedral cone by first finding its extreme rays. In the present paper we continue in this direction. By combining some of the techniques of \cite{Bao:2015bfa} based on graph models of holographic entanglement with the ideas of \cite{Hubeny:2018trv,Hubeny:2018ijt} based on connectivity of entanglement wedges, we take a significant step towards the derivation of the HEC for an arbitrary number of parties. However in doing so, we will also slightly change our perspective on the problem.

As evident from these earlier works, one of the main problems one has to face while trying to derive the facets of the HEC is the complexity of the combinatorics, which is typically characterised by a doubly exponential scaling in the number $\N$ of parties involved. In addition, the explicit form of the inequalities might be highly dependent on $\N$, which can make it extremely hard to find a convenient parametrisation of all the inequalities. And even if one could circumvent these complications, it is far from clear how to interpret the inequalities, for example by searching for explicit structures of density matrices which are ruled out by them. Ultimately, if there exists a general lesson to be learned about the entanglement structure of geometric states, it might be very hard to understand what it is merely by looking at a very large number of complicated and seemingly unrelated expressions. For all these reasons we will not look for an explicit derivation of the inequalities. Instead, we will argue that the holographic entropy cone can be derived, at least in principle, from the solution to a much simpler problem. 

For any density matrix on a given number of parties $\N$, and an arbitrary purification of it, one can consider a pair of subsystems (possibly composite and possibly including the purifier) and compute their mutual information to determine whether these subsystems are correlated or not. One can then repeat this analysis for any pair of subsystems to determine what was called in \cite{Hernandez-Cuenca:2019jpv} the \textit{pattern of marginal independence} (PMI) of the density matrix. Any PMI, specifying which pairs of subsystems are marginally independent (i.e.\ have vanishing mutual information) and which are not, can be viewed as a linear subspace in a certain vector space called the \emph{entropy space}. In particular, the PMI is the supporting subspace of a face of a cone, the \textit{subadditivity cone} (SAC), built only from the instances of SA for that $\N$.\footnote{\, Said more explicitly, the supporting subspaces of the \emph{facets} of the SAC are the hyperplanes of vanishing mutual information (i.e.\ the saturation of some instance of SA), and their intersections then form the supporting subspaces of the \emph{faces} of the SAC.}
With this structure in hand, one can conversely ask for which PMI does there exist a density matrix corresponding to it. This was dubbed in \cite{Hernandez-Cuenca:2019jpv} the marginal independence problem, and for the restricted class of states corresponding to geometric states in holography, the \textit{holographic marginal independence problem} (HMIP).

The main goal of this work is to argue that the HEC can be fully reconstructed from the solution to the HMIP, and that the solution to this problem amounts to establishing which extreme rays of the SAC can be realized by geometric states. Importantly, as we will explain, if one wants to construct the HEC for a given number of parties $\N$, it will not in general be sufficient to know the solution to the HMIP for the same number of parties. We will argue however that there always exists a finite $\N_{\text{max}}(\N)$ such that the $\N$-party HEC can be constructed from the solution of the $\N_{\text{max}}$-party HMIP.

We will not be able to provide a definite proof that this reconstruction is possible, but we will formulate and discuss certain conjectures on graph models realizing the extreme rays of the HEC which imply that this is the case, and provide evidence in their support. We stress that the focus of this work however is not on any specific algorithm for an explicit reconstruction of the HEC, but rather on the possibility of the reconstruction from \textit{only} the seemingly limited information contained in the solution to the HMIP. A conclusive proof of such a possibility would in fact amount to the proof of a deep equivalence between the information contained in the set of \textit{all} holographic entropy cones (for all values of $\N$) and in the set of \textit{all} PMIs that can be realized in holography. Any deeper question about constraints on the entanglement structure of geometric states should then be formulated in terms of these more fundamental objects. Furthermore, since for the reason mentioned above this equivalence would in general not be attained for any specific value of $\N$, the significance of certain specific objects which are manifestly $\N$-dependent (like the holographic entropy inequalities) should be questioned, as these objects might be significantly affected by structural artifacts of the formulation.

A priori it might seem surprising that all the holographic entropy inequalities can be derived from a simpler structure which only involves SA -- after all, SA is a universal relation which does not `know about' holography. Intuitively, one can imagine that this is ultimately related to the fact that in quantum field theory the values of the entropies are typically immaterial because of the cut-off dependence, and that the saturation of SA (at leading order in $N$) 
is sensitive solely to the connectivity of the entanglement wedge. And even though certain `balanced' combinations of entropies such as the mutual information can be often ascribed a finite value independent of the cutoff \cite{Sorce:2019zce}, such numerical data can typically be `dialed' by for example deforming the state or the subsystem specification.  Since the HEC is specified by the limit to which such dialing can be pushed, it should not involve rescalable numerical values.  In other words, for delimiting the HEC, it should only matter whether such finite quantities are zero or non-zero, which is indeed borne out in our results. Even if the starting point in the development of our framework will be the graph models from \cite{Bao:2015bfa}, where the edge weights can be dialed at will, any dependence on a specific choice will be effectively modded out by the fact that we will formulate all our results purely in terms of equivalence classes of graph models inspired by the proto-entropies of \cite{Hubeny:2018trv,Hubeny:2018ijt}.

The structure of the paper is as follows. In \cref{sec:basics} we review some of the basic definitions which were already used in previous works. In \cref{sec:tools} we introduce the main tools which will be used in later sections, in particular the notion of an equivalence class of graph models, specified by a \textit{min-cut structure} on a \textit{topological graph model} of holographic entanglement, and of its corresponding \textit{min-cut subspace}. Section \ref{sec:cone_marginal} reviews the concept of marginal independence from \cite{Hernandez-Cuenca:2019jpv}, and establishes a connection between patterns of marginal independence and min-cut subspaces for a certain class of tree graphs. Section \ref{sec:recolorings} analyses how min-cut structures and subspaces transform when one varies the number of parties, and generalizes the results for tree graphs from \cref{sec:cone_marginal}. All these tools will then be used in \cref{sec:logic}, where we introduce our conjectures and discuss what evidence we have to support them, how they are related to each other, and their implications for the derivation of the holographic entropy cone. We conclude in \cref{sec:discussion} with a discussion of the main questions that still need to be answered in order to obtain a full characterization of the holographic entropy cone for an arbitrary number of parties, and comments on other future directions. For convenience, in \cref{tab:summary} we specify our font conventions and collect the notation for the main constructs we use, organized by what spaces they live in, with reference to the place in the main text where they are first defined. Throughout the text we will occasionally decorate various symbols to stress particular choices of the corresponding objects which satisfy additional requirements or have important additional features.

\vspace{0.8cm}
{\footnotesize
\begin{longtable}{|| c | l | l ||}
\hline \hline
symbol 
	& definition
	& reference 
	\\
\hline \hline  
\multicolumn{3}{|| c ||}{Colors, collections of colors and colorings:} 
	\\
\hline
$\N$ 
	& number of parties
	& \cref{sec:intro} 
	\\
$\D$ 
	& number of polychromatic subsystems
	& \cref{ssec:cones} 
	\\
\hline
$[\N+1]$ 
	& set of colors, including the purifier ($\N+1$)
	& \cref{ssec:cones} 
	\\
$\ell$ 
	& single color (possibly the purifier)
	& \cref{ssec:cones} 
	\\
$\I,\J,\ldots$ 
	& polychromatic subsystem that does not include the purifier
	& \cref{ssec:cones} 
	\\
$\underline{\I},\underline{\J},\ldots$ 
	& polychromatic subsystem that might include the purifier
	& \cref{ssec:cones} 
	\\
$\beta$
	& coloring of boundary regions / boundary vertices
	& \cref{sec:basics} 
	\\
\hline \hline  
\multicolumn{3}{|| c ||}{Graph constructs:} \\
\hline
$\E$
	& number of edges in a graph $G=(E,V)$
	& \cref{ssec:graphs}
	\\	
$w_e$ or $w(e)$
	& weight of an edge $e$
	& \cref{ssec:graphs}
	\\
$\partial V\subseteq V$
	& set of ``boundary'' vertices
	& \cref{ssec:graphs}
	\\
\hline 
$\tgm{\N}$
	& topological graph model of holographic entanglement
	& \cref{ssec:graphs}
	\\
$\gm{\N}$
	& graph model of holographic entanglement
	& \cref{ssec:graphs}
	\\
\hline 
$U\subseteq V$ 
    & an arbitrary cut
    & \cref{ssec:graphs}
    \\
$U_{\I}$ 
	& $\I$-cut
	& 	\cref{ssec:graphs} 
	\\
$\cut(U_{\I})$ or $\cut_{\I}$
	& set of cut edges of an $\I$-cut $U_{\I}$
	& \cref{eq:cutedges}
	\\
$\mU_{\I}$ or ${\mU_{\I}}^{\alpha}$
	& min-cut for $\I$
	& 	\cref{ssec:graphs} 
	\\
$\mathscr{U}_{\I}$  
	& set of min-cuts for $\I$
	& \cref{sec:tools}
	\\
$\mathfrak{m}$
	& min-cut structure on a topological graph model
	& \cref{def:mincutstructure}
	\\
\hline \hline  
\multicolumn{3}{|| c ||}{Objects in entropy space $\R^{\D}$:} \\
\hline
$S_{\I}$ 
	& von Neumann entropy of subsystem $\I$
	&  \cref{ssec:cones} 
	\\
$\mathbf{S}$ 
	& entropy vector
	&  \cref{ssec:cones} 
	\\
\hline
HEC$_{\N}$
	& $\N$-party holographic entropy cone
	&  \cref{ssec:configurations} 
	\\
SAC$_{\N}$
	& $\N$-party subadditivity cone
	& \cref{def:SAcone}
	\\
\hline
$\mathcal{S}$  
	& S-cell 
	& \cref{def:scell}
	\\
$\mathbb{S}$ 
	& min-cut subspace 
	& \cref{def:vspace} 
	\\
\hline 
$\mathbb{P}$
	& pattern of marginal independence (PMI)
	& \cref{def:patterns}
	\\
$\pi$
	& map that gives the PMI of a graph model
	& \cref{subsec:pmireview}
	\\
$\Pi(\mathbb{P})$
	& matrix of MI instances that determine a PMI
	& \cref{eq:pmimatrix}
	\\
\hline \hline  
\multicolumn{3}{|| c ||}{Objects in the space of edge weights $\R^{\E}$:} \\
\hline
$\mathbf{w}$
	& weight vector
	& \cref{ssec:graphs}
	\\
$\mathcal{W}$ 
	& W-cell   
	& \cref{subsec:gmnodeg}
	\\
$\mathbb{W}$ 
	& subspace specified by degeneracy equations 
	& \cref{sssec:gmdeg}
	\\
$\mathbf{w}_{\mathcal{W}}$
	& weight vector in $\mathcal{W}$
	& \cref{subsec:properties}
	\\
$\mathbf{x}_{\mathcal{W}}$
	& extreme ray of $\overline{\mathcal{W}}$
	& \cref{subsec:properties}
	\\
\hline 
$\mathbf{\Gamma}_{\I}$
	& incidence vector
	& \cref{eq:incidence}
	\\
$\Gamma$
	& map from W-cell to S-cell
	& \cref{subsec:gmnodeg}
	\\	
\hline \hline
\multicolumn{3}{|| c ||}{Geometric constructs:} \\
\hline
$\overline{\mathcal{F}}_d$
	& $d-$dimensional face of a cone
	& \cref{subsec:properties}
	\\
$\mathcal{F}_d$
	& interior of a face $\overline{\mathcal{F}}_d$
	& \cref{subsec:properties}
	\\
$\mathbb{F}_d$
	& minimal linear supporting subspace of a face $\overline{\mathcal{F}}_d$
	& \cref{subsec:properties}
	\\
\hline \hline
\multicolumn{3}{|| c ||}{Recolorings:} \\
\hline
$\beta^{\downarrow}$
	& recoloring that reduces $\N$
	& \cref{subsec:cg}
	\\
$\phi$
	& coarse-graining induced by the recoloring $\beta^{\downarrow}$
	& \cref{subsec:cg}
	\\
$\Phi_{\N \rightarrow\N'}$
	& projection associated to the coarse-graining $\phi$
	& \cref{eq:colorproj}
	\\
\hline
$\beta^{\uparrow}$
	& recoloring that increases $\N$
	& \cref{subsec:fg}
	\\
$\mathfrak{m}\big\uparrow\!_{_{\N'}}$
	& set of fine-grainings of a min-cut structure $\mathfrak{m}$ to $\N'$ parties
	& \cref{eq:mstructurelifts} 
	\\
\hline \hline
\caption{A reference table summarizing the main notation and terminology used. To facilitate orientation, we typically reserve sans-serif font for natural numbers, mathcal for geometrical regions, mathbb for linear subspaces, bold face letters for vectors, mathscript for sets, and mathfrak for more complicated but important constructs. 
}
\label{tab:summary}
\end{longtable}
}

\section{Basic definitions and notation}
\label{sec:basics}

In this section we briefly review some of the main definitions, most of which were already used in previous works \cite{pippenger2003inequalities,Bao:2015bfa,Hubeny:2018ijt,Hubeny:2018trv,Bao:2020mqq,Avis:2021xnz}. In \cref{ssec:cones} we introduce the concept of entropy cones in quantum mechanics. In \cref{ssec:configurations} we review the holographic set-up, the definition of the holographic entropy cone and the concept of proto-entropies. In \cref{ssec:graphs} we review the basic definitions of the graph models of holographic entropies. Finally, in \cref{ssec:graph_lemmas} we list some useful immediate consequences of the main definitions. For more details the reader is referred to the original works.

\subsection{Entropy cones}
\label{ssec:cones}

Consider a Hilbert space which is a tensor product of $\N$ factors,
\begin{equation}
\label{eq:hilbert-space}
\mathcal{H}\coloneqq\mathcal{H}_1\otimes\mathcal{H}_2\otimes\dots\otimes\mathcal{H}_{\N},
\end{equation}
and a density matrix $\rho$ acting on it. For any non-empty subset $\I\subseteq[\N]\coloneqq \{1,2,\dots,\N\}$ of these factors, the von Neumann entropy is defined as
\begin{equation}
S_{\I}\coloneqq S(\rho_{\I})=-\Tr\left(\rho_{\I}\log\rho_{\I}\right),
\end{equation}
where
\begin{equation}
\rho_{\I}\coloneqq\Tr_{\mathcal{H}_{[\N]\setminus\I}}\rho,\qquad 
\text{with} \qquad
\mathcal{H}_{\I}\coloneqq\bigotimes_{\ell\in\I}\mathcal{H}_\ell,
\end{equation}
is the reduced density matrix, or \textit{marginal}, for the subsystem $\I$. The \textit{entropy vector} corresponding to the density matrix $\rho$ is the ordered\footnote{\, We will follow our previous convention of ordering the entropies first by cardinality of $\I$ and then lexicographically, though the actual order will not play a significant role in what follows.
In all specific examples, we will use alternate (and more conventional) notation of letters $A,B,C, \ldots$ instead of numbers $1,2,3,\ldots$ to denote colors, with the letter $O$ reserved for the purifier. 
} collection of entropies of all its marginals, namely,
\begin{equation}
\mathbf{S}(\rho)=\{S_{\I}\; \text{for all}\; \I\}.
\end{equation}
The vector space where entropy vectors live is $\R^{\D}$, where $\D=2^{\N}-1$, and will be referred to as \textit{entropy space}.
The labeling of the Hilbert space factors in \cref{eq:hilbert-space} will be called a \textit{coloring}, the subscripts $\ell\in[\N]$ are referred to as \textit{colors} and any non-empty set of colors $\I$ is a \textit{polychromatic} index.

For a fixed number of parties $\N$, the collection of all entropy vectors for all possible Hilbert spaces and density matrices was shown by \cite{pippenger2003inequalities} to be a convex cone\footnote{\, A closed convex cone is a set of vectors such that for any two vectors $\mathbf{v}_1,\mathbf{v}_2$ in the set, the \textit{conical combination} $\alpha\mathbf{v}_1+\beta\mathbf{v}_2$ (where $\alpha,\beta\geq 0$) also belongs to the set.} known as the $\N$-party \textit{quantum entropy cone} (QEC$_{\N}$).\footnote{\, More precisely, it is the topological closure of this set which is a convex cone, while the set itself has a more complicated structure.} 
By construction, the QEC$_{\N}$ is clearly symmetric under an arbitrary permutation of the $\N$ parties, as can be seen by just permuting Hilbert space factors. In fact, it exhibits a larger symmetry group of permutations of $[\N+1]$, as we now explain.

For a density matrix $\rho$, a \textit{purification} is any pure state $\ket{\psi}$ in an enlarged Hilbert space
\begin{equation}
\label{eq:purification}
\underline{\mathcal{H}}\coloneqq\mathcal{H}_1\otimes\mathcal{H}_2\otimes\dots\otimes\mathcal{H}_{\N}\otimes\mathcal{H}_{\N+1}
\end{equation}
such that
\begin{equation}
\rho=\Tr_{\mathcal{H}_{\N+1}}\ket{\psi}\bra{\psi} \, .
\end{equation}
We will refer to the additional auxiliary subsystem $\mathcal{H}_{\N+1}$ as the \textit{purifier}\footnote{\, The purifier will often be referred to as  color $\N+1$.} and denote a non-empty subset of $[\N+1]$ by an underlined index $\underline{\I}$. Occasionally we will take complements of (not necessarily underlined) polychromatic indices, and we will always define these complements with respect to the set $[\N+1]$.\footnote{\, Note that according to this definition the set of non-underlined indices is not closed under this operation, however to simplify the notation we will write the complement of $\I$ as  $\I^{\complement}$ instead of  $\underline{\I^{\complement}}$.} Since for any pure state the entropy of a subsystem is equal to the entropy of its complement, the entropies of all the subsystems of $\ket{\psi}$ are already encoded in the entries of $\bf{S}(\rho)$.\footnote{\, By convention, given a pair of complementary subsystems $(\underline{\I},\underline{\I}^\complement)$ we have $S_{\underline{\I}}=S_{\underline{\I}^\complement}$ and we denote the entropy of each of them by the index which does \textit{not} include the purifier.} Any permutation of the $\N+1$ factors in \cref{eq:purification} will map an $\N$-party entropy vector to another $\N$-party entropy vector, resulting in the extended symmetry mentioned above. In what follows, when we consider permutations of entropy vectors or inequalities, we will always mean permutations of $[\N+1]$.

For $\N=2,3$, the QEC$_{\N}$ is known to be a polyhedral cone and can therefore be specified by a finite set of inequalities.\footnote{\, A polyhedral cone is said to be \textit{pointed} when it does not contain any non-trivial linear subspace, and in what follows all polyhedral cones that we will consider will be pointed. Any pointed polyhedral cone can equivalently be described as the \textit{conical hull} (the set of all possible conical combinations) of the set of its extreme rays.} In the $\N=2$ case, the facets are given by the permutations of \textit{subadditivity} (SA),\footnote{\, These include the \textit{Araki-Lieb inequality} $S_1+S_{12}\geq S_{2}$.} 
\begin{equation}
\label{eq:subad}
S_1+S_2\geq S_{12},
\end{equation}
while for $\N=3$ there exists a new inequality known as \textit{strong subadditivity} (SSA),\footnote{\, Its permutations include \textit{weak monotonicity} $S_{12}+S_{23}\geq S_1+S_3$.}
\begin{equation}
\label{eq:ssa}
S_{12}+S_{23}\geq S_{2}+S_{123}. 
\end{equation}
It is important to notice that for $\N=3$, the permutations of \cref{eq:ssa} 
constitute only a proper subset of the full set of facet-defining inequalities of the cone. The additional facets correspond to certain particular instances of SA. For example, one can easily verify that \cref{eq:subad} specifies a facet while the ``lift''
\begin{equation}
\label{eq:salift}
S_{1}+S_{23}\geq S_{123}
\end{equation}
does not, being just the sum of \cref{eq:subad} and \cref{eq:ssa}.\footnote{\, 
Geometrically, the saturation of this weaker inequality \cref{eq:salift} corresponds to a hyperplane which intersects the boundary of the HEC only along a lower dimensional subspace.} For $\N\geq 4$ the QEC$_{\N}$ is essentially unknown; however, for any $\N$, one can easily construct an ``outer bound'' (a larger cone that contains it) by considering all instances of SA for all possible pairs of disjoint subsets of $[\N+1]$. For any given $\N$, these define the $\N$-party subadditivity cone (SAC$_\N$), an object which will play a central role in our derivation of the holographic entropy cone (see also \cite{Hernandez-Cuenca:2019jpv}).\footnote{\, The reader might already wonder if one could not derive a more stringent bound by also including SSA. This is of course correct, but as we will see it is SA, rather than SSA, which plays a more fundamental role.}

\subsection{Holographic constructions}
\label{ssec:configurations}

Having introduced an $\N$-party entropy space and the QEC$_{\N}$ therein, we now consider the construct of an entropy cone in the context of holography.  A natural splitting of the Hilbert space is achieved\footnote{\, Even if strictly speaking the Hilbert space of a QFT does not factorize, we assume that this splitting is a valid approximation, see \cite{Witten:2018zxz} and references therein for more details on this issue.} by partitioning the space on which the CFT lives, which we now specify in more detail to indicate the generality of the setup.
Consider a (not necessarily connected) asymptotically AdS manifold $\mathcal{M}$ with $\M$ boundaries, $\partial \mathcal{M}=\bigcup_{m=1}^{\M}\partial\mathcal{M}_m$.\footnote{\, The dimension of $\mathcal{M}$ will not play any role.} For each boundary component $\partial\mathcal{M}_m$, consider a Cauchy slice $\Sigma_m$ and a partition of it into an arbitrary number of connected regions $\mathcal{A}_m^i$. Given a number of parties $\N$, we introduce a surjective \textit{coloring} of these regions
\begin{equation}
    \beta:\{\mathcal{A}_m^i\st\forall m,i\}\rightarrow [\N+1].
\end{equation}
In other words, each Hilbert space factor $\mathcal{H}_\ell$, which we label by a color, is associated to some collection of these regions. Given a polychromatic index $\I$, the set of regions which under $\beta$ receive a color $\ell\in\I$ is the preimage $\beta^{-1}(\I)$, and will be referred to as the \textit{subsystem} associated to $\I$ (or even more simply the subsystem $\I$).\footnote{\, We use similar notation and terminology for indices $\underline{\I}$ which include the purifier.}
A choice of such a manifold, Cauchy slice, partition and a coloring defines what we will call an $\N$-party \textit{holographic configuration}, denoted by $\conf_{\N}$.\footnote{\, Throughout our construction, we will be working in the regime where the holographic dual describes a classical bulk spacetime, i.e., we will not consider stringy or quantum corrections; see the discussion in \cref{sec:discussion} for additional comments.} 

Using the HRT prescription \cite{Hubeny_2007}, we can associate an entropy vector $\mathbf{S}(\conf_{\N},g)$ to any given holographic configuration.\footnote{\, The second argument in $\mathbf{S}(\conf_{\N},g)$ is a shorthand meant to indicate the dependence on the bulk spacetime metric $g_{ab}$.  Although it may seem more natural to use the CFT quantity $\rho$, we choose to leave this implicit (as describing any density matrix which gives the bulk $g_{ab}$) to emphasize that in the present context
the entanglement entropy is given by a geometrical construct (namely the relevant set of extremal surfaces).}
However, notice that typically the entropies will not be finite, which occurs whenever an extremal surface is anchored on the AdS boundary, in other words when the corresponding region admits an entangling surface.%
\footnote{\, The only case where this does not happen is when each Cauchy surface $\Sigma_m$ is not partitioned at all, and each region $\mathcal{A}_m$ is then the entire $\Sigma_m$.} To circumvent this problem, \cite{Bao:2015bfa} introduced a cut-off surface, making all the entropies finite at the expense of associating to a configuration an entropy vector $\mathbf{S}_{\epsilon}(\conf_{\N},g)$ which is now cut-off dependent.

Given a number of parties $\N$, one can consider the set of all possible (finite) entropy vectors, for all possible $\N$-party configurations and choices of cut-off. It is then easy to show \cite{Bao:2015bfa} that this set has the structure of a convex cone, and is known as the $\N$-party \textit{holographic entropy cone} (HEC$_{\N}$).\footnote{\, In \cite{Bao:2015bfa} the definition was given for the static case, but the same formulation pertains also in the dynamical case. The reader who is already familiar with these constructions is reminded that this is however not the case for the proof of polyhedrality.} Note that despite being called ``holographic'', the construction of this cone is purely geometric, and in using the HRT formula we implicitly assumed that for any $\mathcal{M}$ the bulk dynamics is dual to the evolution of a tensor product of $\M$ copies of a holographic CFT living on $\partial\mathcal{M}$. In what follows we will always make this assumption, leaving the analysis of this subtlety to future work.\footnote{\, See \cite{Marolf:2017shp} for more details regarding this issue.}

While the use of a cut-off is a convenient computational tool in defining the cone, one might worry that it could render the resulting construct intrinsically ill-defined.  In particular, would the cone be meaningful if its building blocks required a specification of a cut-off?  Fortunately, we can circumvent this subtlety by realizing the cone's extreme rays by configurations where none of the minimal surfaces is anchored to the boundary, specifically by the bulk geometry corresponding to multi-boundary wormholes and subsystems covering the entire connected pieces of the boundary \cite{Bao:2015bfa}. In such configurations the corresponding HRT surface areas are therefore finite, with no need for cut-offs.  

While this proof makes it clear that the ``static'' HEC is a physically well motivated object to study, the construction has some limitations. First of all, one would like to extend the result to the ``covariant'' HEC (see \cite{Wall:2012uf,Rota:2017ubr,Bao:2018wwd,Czech:2019lps} for discussions in this direction), which is not even known to be polyhedral.\footnote{ \,
While there is no well-established definition of such a covariant HEC, the intent is to characterize the entropy vectors for all configurations in all physical time-dependent holographic spacetimes.  Since the same subtlety mentioned above and explored in \cite{Marolf:2017shp} is likely more severe in the time-dependent context, we again retreat to geometrical definition, as a collection of entropy vectors given by the HRT prescription, in spacetimes where the latter applies.  Most simplistically, we would then restrict to classical bulk spacetimes obeying the NEC.}
Second, it would be interesting to understand to what extent the properties of the HEC, even in the static case, depend on multiboundary wormhole solutions and on the choice of configurations where all entropies are finite. Finally, it is interesting to investigate how specific substructures of the cone (for example certain internal regions or portions of its boundary) are related to different holographic configurations, especially the more typical ones which have divergent entropies.

Motivated by this, the work of \cite{Hubeny:2018trv} introduced the concept of \textit{proto-entropy} and \textit{proto-entropy vector}, as manifestly cut-off independent objects associated to a configuration and a metric. Given an arbitrary boundary region $\mathcal{A}$ (not necessarily connected), consider a minimal extremal surface $\xi_{\mathcal{A}}$ whose area computes the entropy of $\mathcal{A}$. Such extremal surface can be composed of multiple disjoint pieces, which may or may not be anchored on the boundary. The proto-entropy of $\mathcal{A}$ is then simply defined as the formal sum of the connected components of $\xi_{\mathcal{A}}$. In practice, one should just imagine the HRT prescription without the area functional, with the sum of areas of connected surfaces replaced by a formal sum of the surfaces themselves. The convenience of this construction is that all surfaces are now treated on the same footing, irrespective of whether they are anchored to the boundary (and therefore have infinite area) or not, and no cut-off is ever introduced. Given a pair $(\conf_{\N},g)$, its proto-entropy vector is then defined as the ordered collection of the proto-entropies of the boundary subsystems.

Using this formulation, \cite{Hubeny:2018trv,Hubeny:2018ijt} suggested an approach to the derivation of new holographic entropy inequalities where it is the connectivity of the entanglement wedges, rather than the area of the surfaces, that plays a central role. While in the present work we will focus on the graph models of \cite{Bao:2015bfa} rather than on proto-entropies, the approach to the reconstruction of the HEC presented here will follow the same intuition. We leave a more detailed analysis of the relationship with proto-entropies, and in particular between the HEC and the \textit{holographic entropy polyhedron} to future work \cite{Hernandez-Cuenca:2022a}.

\subsection{Graph models}
\label{ssec:graphs}

In the previous subsection we gave a definition of a holographic configuration in the most general scenario, where the bulk spacetime is dynamical. We will now restrict to static spacetimes, such that the entropies are computed by minimal (rather than extremal) surfaces. In this case, given a pair $(\conf_{\N},g)$, the minimal surfaces define a partition of the bulk time slice, and we can conveniently describe the configuration by a graph constructed as follows \cite{Bao:2015bfa}. 

To each region in the partition of the bulk time slice we associate a vertex. If a region is adjoining to the boundary, we label the vertex by the color of the adjoining boundary region. If two vertices correspond to two adjoining regions of the bulk time slice, they are connected by an edge. Any such edge corresponds to a piece of a minimal surface and will carry a weight equal to its area.\footnote{\, Pieces of RT surfaces which reach the boundary will correspond to edges with infinite weights. If necessary, one can again imagine introducing a cut-off, and work with a graph having all edge weights finite.}
Note that such graphs can get quite complicated;
for example, even in a simple configuration with regions specified by symmetric distribution of disks on a spatial slice of $\R^{2,1}$, two correlated regions would correspond to a planar graph with 4 vertices and 3 edges, while a graph encoding four pairwise-correlated regions would be a non-planar one with 43 vertices and 90 edges. 
However, we will not need to consider the specific details of such graphs.

The convenience of this representation stems from the fact that it suggests how to formulate an alternative, but \textit{equivalent}, definition of the (static) HEC which is entirely based on graphs, without the need to consider more explicit holographic configurations as introduced above. In essence, one first defines a ``graph model'' of holographic entanglement, with a prescription for how to compute an entropy vector starting from it, and then proves that the set of entropy vectors realized by such graph models is equal to the HEC.\footnote{\, Given an arbitrary graph model one can convert it into a ``canonical form'' using certain entropy-preserving transformations (explicitly reviewed in \cref{sec:gops}). One then shows that the entropy vector of any graph model in canonical form can be explicitly realized by a multiboundary wormhole configuration. See \cite{Bao:2015bfa} for more details.} Since these graph models are the main tool that we will use in the rest of the paper, we now review their original definition in detail.    
 
Consider an undirected graph $G=(V,E)$ with vertex set $V$ and edge set $E$ of unordered pairs of vertices.\footnote{\, For simplicity the graph is also assumed to be simple (it is not a multigraph) and loopless ($i\neq j$ for all $(i,j)\in E$), although without these restrictions the entropy cone would remain unchanged.}
With a slight (but by now well-established) abuse of notation, we
denote a set of \emph{boundary vertices} by $\partial V \subseteq V$. We will refer to the other vertices (if any\footnote{\, As pointed out by \cite{Avis:2021xnz}, the restriction to only $\partial V = V$ corresponds to the study of the cut function, which when written as a vector is well known to span a polyhedral cone whose facets are the subadditive inequalities \cite{Cu85,Fu91}.}) in $V$ as the \emph{bulk vertices}. 
Analogously to what we discussed for configurations, to each boundary vertex we assign a color via the coloring
$\beta:\partial V \to [\N+1]$ and $\beta^{-1}(\I)$ is the preimage of the subsystem $\I$. A graph $G$ together with a specification of boundary vertices and a coloring $\beta$ defines a \emph{topological graph model} of holographic entanglement, which will be denoted by $\tgm{\N}$. 

If this structure is further endowed with a \emph{weight map} $w:E\to\R_{>0}$,  we will call it a \emph{graph model} of holographic entanglement, and denote it by $\gm{\N}$. We will often think of the map $w$ as a \emph{weight vector} $\mathbf{w}=w(E)$
in the space of weights $\mathbb{R}^{\E}$ (where $\E\coloneqq\abs{E}$), consisting of all weights $w_e\coloneqq w(e)$ for all $e\in E$ with some ordering.  
 
Any subset $U\subseteq V$ characterizes a bipartition or \emph{cut} of $G$, which defines a set of \emph{cut edges} $\cut(U)\subseteq E$ as
\begin{equation}
\label{eq:cutedges}
  \cut(U) \coloneqq \{(v, v') \in E \st v \in U ,\; v'\in U^\complement\}
\end{equation}
(where $U^\complement \coloneqq V \setminus U$ is the complementary set of vertices).
For a graph model $\gm{\N}$, the \emph{cut weight} of a cut $U$ is defined as the total weight of its edges 
\begin{equation}
\label{eq:cost}
    \norm{\cut(U)} \coloneqq \sum_{e \in \cut(U)} w(e).
\end{equation}
For any non-empty $\I\subseteq[\N]$, a set $U\subseteq V$ is a \emph{cut homologous to $\I$}, or an \emph{$\I$-cut}, if it contains precisely the boundary vertices colored by $\I$, i.e. if $U\cap\partial V = \beta^{-1}(\I)$. We will denote an arbitrary $\I$-cut by $U_{\I}$ and to simplify the notation, we will occasionally denote the corresponding set of cut edges by $\cut_{\I}$ (as a shorthand for $\cut (U_{\I})$). The minimum cut weight among all $\I$-cuts gives the \emph{entropy} $S_{\I}$ of the associated subsystem, i.e.,  
\begin{equation}
\label{eq:formalentropy}
    S_{\I} \coloneqq \min_{U_{\I}} \norm{\cut(U_{\I})} \ .
\end{equation}
Any $\I$-cut $U_{\I}$ with minimum cut weight $\norm{\cut(U_{\I})}=S_{\I}$ is a \emph{min-cut} for $\I$ and will be denoted by $\mU_{\I}$ (similarly, we will occasionally denote the set of cut edges for a min-cut by $\cut^*_{\I}$). Notice that from the definition it immediately follows that the complement of a min-cut $U^*_{\I}$ is a min-cut for the complementary subsystem $\I^{\complement}$. Min-cuts however are not necessarily unique, as for holographic configurations. We will come back to this ``degeneracy'' in later sections, since it will play an important role.

\subsection{Basic properties of min-cuts}
\label{ssec:graph_lemmas}

Here we collect some basic results about min-cuts that will be useful later. 

For a given topological graph model $\tgm{\N}$, consider an arbitrary subsystem $\I$,\footnote{\, Throughout this subsection we use non-underlined indices $\I$ rather than $\underline{\I}$ to simplify the notation, but all lemmas obviously also hold for subsystems that include the purifier.} an arbitrary $\I$-cut $U_{\I}$, and the induced subgraph\footnote{\, The induced subgraph $G[U]$ of a graph $G=(V,E)$ is the graph $G[U]=(U,F)$ with vertex set $U\subset V$ and edges $F\subseteq E$ connecting only vertices in $U$.} $\tgm{\N}\left[U_{\I}\right]$, with vertex set $U_{\I}$ and boundary vertices $\partial V\cap U_{\I}$ with the coloring inherited from $\tgm{\N}$.
In the special case of a min-cut $\mU_{\I}$, this subgraph is the natural analog of the entanglement wedge (or more precisely the homology region) for the subsystem $\I$.
However, as we now show, each connected component of a min-cut $\mU_{\I}$ has to be connected to the boundary, i.e. it has to contain at least one boundary vertex:%
\footnote{\, We are ignoring the trivial case where $\tgm{\N}$ is disconnected and some of its connected components do not include any boundary vertices. However it should be clear that even in these cases, such disconnected components are never cut by a min-cut.}
\begin{nlemma}[Topological minimality]
\label{lem:topmin}
If the induced subgraph $\tgm{\N}\left[U_{\I}\right]$ of a topological graph model $\tgm{\N}$ on the vertices of a $\I$-cut $U_{\I}$ is disconnected, and there is a connected component which does not include any of the boundary vertices of $\tgm{\N}$, then the cut $U_{\I}$ cannot be a min-cut for any graph model $\gm{\N}$ on $\tgm{\N}$.
\end{nlemma}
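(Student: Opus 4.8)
The plan is to argue by contradiction: suppose $U_{\I}$ is a min-cut for some graph model $\gm{\N}$ on $\tgm{\N}$, yet the induced subgraph $\tgm{\N}[U_{\I}]$ has a connected component $C\subseteq U_{\I}$ containing no boundary vertex. The idea is to delete $C$ from the cut, i.e.\ consider the modified set $U'_{\I}\coloneqq U_{\I}\setminus C$, and show that its cut weight is strictly smaller, contradicting minimality. First I would check that $U'_{\I}$ is still an $\I$-cut: since $C$ contains no boundary vertices, $U'_{\I}\cap\partial V=U_{\I}\cap\partial V=\beta^{-1}(\I)$, so $U'_{\I}$ is indeed homologous to $\I$.

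Next I would compare $\cut(U'_{\I})$ with $\cut(U_{\I})$. Because $C$ is a connected component of the induced subgraph $\tgm{\N}[U_{\I}]$, there are no edges of $G$ between $C$ and the rest of $U_{\I}$ (any such edge would lie in the induced subgraph and connect $C$ to another component, contradicting that $C$ is a full connected component). Hence every edge incident to a vertex of $C$ either stays within $C$ or leaves $U_{\I}$ entirely, i.e.\ goes to $U_{\I}^{\complement}$. Writing $E_C$ for the set of edges with at least one endpoint in $C$, one sees $\cut(U_{\I})=\cut(U'_{\I})\,\sqcup\,(E_C\cap\cut(U_{\I}))$ as a disjoint union: moving $C$ out of the cut removes exactly those edges from $C$ to $U_{\I}^{\complement}$, and adds none (edges internal to $C$ were never cut, and edges from $C$ into $U_{\I}\setminus C$ do not exist). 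Therefore $\norm{\cut(U'_{\I})}=\norm{\cut(U_{\I})}-\sum_{e\in E_C\cap\cut(U_{\I})}w(e)$.

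The final step is to observe that this subtracted quantity is strictly positive. Since $w:E\to\R_{>0}$ assigns strictly positive weights to all edges, it suffices to show that $E_C\cap\cut(U_{\I})$ is non-empty, i.e.\ that at least one edge joins $C$ to $U_{\I}^{\complement}$. This follows provided $G$ is connected (or, more carefully, provided $C$ is not a whole connected component of $G$ itself): if $C$ had no edge leaving it at all, then $C$ would be a union of connected components of $G$ with no boundary vertices, which is precisely the degenerate situation excluded in the footnote preceding the lemma. Under that standing assumption, $C$ must have some edge to $V\setminus U_{\I}=U_{\I}^{\complement}$, so $\norm{\cut(U'_{\I})}<\norm{\cut(U_{\I})}=S_{\I}$, contradicting that $U_{\I}$ achieves the minimum in \cref{eq:formalentropy}. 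Hence no such min-cut exists.

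I expect the only genuinely delicate point to be the bookkeeping in the second step — carefully justifying that moving $C$ out of the cut neither creates new cut edges nor leaves internal-to-$C$ edges cut — together with being precise about the excluded degenerate case so that the strict positivity in the last step is actually guaranteed. Everything else is essentially immediate from the definitions of $\I$-cut, $\cut(U)$, and induced subgraph.
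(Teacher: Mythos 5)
Your proposal is correct and follows essentially the same route as the paper's proof: delete the boundary-free component from the cut and observe that the resulting $\I$-cut has strictly smaller weight. You simply spell out more carefully the bookkeeping (no edges between the component and the rest of $U_{\I}$, strict positivity via the excluded degenerate case from the footnote) that the paper's two-line proof leaves implicit.
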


\begin{proof}
    Denote by $U^{\varnothing}_{\I}$ a connected component of $\tgm{\N}\left[U_{\I}\right]$ such that $U^{\varnothing}_{\I}\cap\partial V=\varnothing$. Then $U'_{\I}=U_{\I}\setminus U^{\varnothing}_{\I}$ is a new $\I$-cut on $\tgm{\N}$ such that 
    \begin{equation}
    \norm{\cut(U'_{\I})}<\norm{\cut(U_{\I})}.
     \end{equation}
    Hence $U_{\I}$ cannot be a min-cut.
\end{proof}

The second basic property of min-cuts is the graph version of the property of HRRT known as ``entanglement wedge nesting'', which states that entanglement wedges of nested regions must themselves be nested. We will state this property without proof (see \cite{Avis:2021xnz} for more details):

\begin{nlemma}[Nesting]
\label{lem:nesting}
For any graph model $\gm{\N}$, subsystems $\I,\K$ with $\K\subset\I$, and min-cut $\mU_{\I}$ for $\I$, there exists a min-cut $\mU_{\K}$ for $\K$ such that $\mU_{\K}\subset \mU_{\I}$.
\end{nlemma}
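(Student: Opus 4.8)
The plan is to prove the statement by a standard submodularity (``uncrossing'') argument applied to cut weights, which is the graph-theoretic shadow of strong subadditivity and is exactly what makes entanglement wedge nesting work. First I would recall the key submodular inequality for cut weights on an undirected weighted graph: for any two subsets $U, W \subseteq V$,
\begin{equation}
\norm{\cut(U)} + \norm{\cut(W)} \;\geq\; \norm{\cut(U\cap W)} + \norm{\cut(U\cup W)} .
\end{equation}
This is elementary: classify each edge $e=(v,v')$ according to which of $U, W$ the endpoints $v, v'$ lie in, and check edge by edge that the contribution of $e$ to the right-hand side never exceeds its contribution to the left-hand side (the only edges that could matter are those crossing, and one verifies the inequality holds with room to spare for the ``diagonal'' edges and with equality otherwise). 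Since all weights $w_e$ are positive, no cancellation issues arise.

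Next I would set up the two sets to uncross. Let $\mU_{\I}$ be the given min-cut for $\I$, and let $V_{\K}$ be \emph{any} min-cut for $\K$ (which exists since $S_{\K}$ is a minimum over a nonempty finite collection of cuts). Apply submodularity with $U = \mU_{\I}$ and $W = V_{\K}$. The crucial homology bookkeeping is that, because $\K \subset \I$ and cuts are pinned on the boundary vertices by $U\cap \partial V = \beta^{-1}(\cdot)$, one has $(\mU_{\I} \cap V_{\K}) \cap \partial V = \beta^{-1}(\I)\cap\beta^{-1}(\K) = \beta^{-1}(\K)$ and $(\mU_{\I}\cup V_{\K})\cap\partial V = \beta^{-1}(\I)\cup\beta^{-1}(\K) = \beta^{-1}(\I)$; hence $\mU_{\I}\cap V_{\K}$ is a $\K$-cut and $\mU_{\I}\cup V_{\K}$ is an $\I$-cut. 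Therefore $\norm{\cut(\mU_{\I}\cap V_{\K})} \geq S_{\K}$ and $\norm{\cut(\mU_{\I}\cup V_{\K})} \geq S_{\I}$, while the left-hand side equals $S_{\I} + S_{\K}$ by minimality. Combining with submodularity forces both inequalities on the right to be equalities, so $\mU_{\I}\cap V_{\K}$ is itself a min-cut for $\K$. Taking $\mU_{\K} \coloneqq \mU_{\I}\cap V_{\K}$ gives a min-cut for $\K$ with $\mU_{\K}\subseteq\mU_{\I}$, which is the claim (if one wants strict containment $\subset$ as written, note $\mU_{\K}$ contains $\beta^{-1}(\K)\subsetneq\beta^{-1}(\I)\subseteq$ vertices forced into $\mU_{\I}$, and one may further invoke \cref{lem:topmin} to discard any boundaryless components, though for the purposes of the paper the non-strict inclusion is what is actually used).

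I expect the only genuine subtlety — and hence the ``main obstacle'' — to be the edge-by-edge verification of submodularity done carefully enough to be airtight, together with the homology bookkeeping that the intersection and union of the two cuts land in the correct homology classes; neither is deep, but both must be stated precisely. Since the excerpt explicitly says this lemma is stated ``without proof (see \cite{Avis:2021xnz})'', I would in fact keep the writeup to a brief sketch along these lines rather than a full argument, and defer the reader to that reference for the complete details.
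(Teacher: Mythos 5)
Your proof is correct: the paper itself states this lemma without proof (deferring to the cited reference), and the submodularity/uncrossing argument you give — applying $\norm{\cut(U)}+\norm{\cut(W)}\geq\norm{\cut(U\cap W)}+\norm{\cut(U\cup W)}$ to $\mU_{\I}$ and an arbitrary min-cut for $\K$, checking that the intersection and union land in the $\K$- and $\I$-homology classes, and forcing both resulting inequalities to saturate — is exactly the standard argument used there (and is the same submodularity fact the paper itself invokes in the footnote to the min-cut decomposition lemma). Your handling of the $\subseteq$ versus $\subset$ issue is also sensible; nothing further is needed.
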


Notice that in case of degeneracy, the inclusion $\mU_{\K}\subset \mU_{\I}$ need not necessarily hold for all choices of min-cut pairs.

Using this lemma, one can immediately derive the following result:

\begin{nlemma}[No-crossing]
\label{lem:nocrossing}
For any graph model $\gm{\N}$, and subsystems $\I,\K$ with $\I\cap\K=\emptyset$, there exist min-cuts $\mU_{\I}$ and $\mU_{\K}$ such that $\mU_{\I}\cap\mU_{\K}=\emptyset$.
\end{nlemma}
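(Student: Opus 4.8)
The plan is to derive No-crossing directly from Nesting (\cref{lem:nesting}) together with the elementary observation about complements of min-cuts already noted in \cref{ssec:graphs}: namely that if $\mU_{\J}$ is a min-cut for $\J$, then $\mU_{\J}^{\complement} = V\setminus \mU_{\J}$ is a min-cut for the complementary subsystem $\J^{\complement}$. The idea is that $\I$ and $\K$ being disjoint means $\K\subseteq \I^{\complement}$, so I can apply Nesting to the nested pair $\K\subset\I^{\complement}$.

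First I would pick an arbitrary min-cut $\mU_{\I}$ for $\I$ and let $\mU_{\I^{\complement}}\coloneqq\mU_{\I}^{\complement}$, which by the complement property is a min-cut for $\I^{\complement}$. Since $\I\cap\K=\emptyset$ we have $\K\subseteq\I^{\complement}$; here I should address the edge case $\K=\I^{\complement}$ separately (then $\mU_{\K}=\mU_{\I^{\complement}}$ works trivially and is disjoint from $\mU_{\I}$ by construction), and otherwise treat $\K\subsetneq\I^{\complement}$. Applying \cref{lem:nesting} to the min-cut $\mU_{\I^{\complement}}$ and the subsystem $\K\subset\I^{\complement}$ yields a min-cut $\mU_{\K}$ for $\K$ with $\mU_{\K}\subseteq\mU_{\I^{\complement}}=\mU_{\I}^{\complement}=V\setminus\mU_{\I}$. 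But $\mU_{\K}\subseteq V\setminus\mU_{\I}$ is precisely the statement that $\mU_{\K}\cap\mU_{\I}=\emptyset$, which is what we want. One subtlety to check is that Nesting as stated requires a strict containment $\K\subset\I^{\complement}$ of subsystems (sets of colors), so I should make sure the trivial/degenerate cases where $\K$ or $\I^{\complement}$ is empty, or $\K=\I^{\complement}$, are handled, but these are immediate.

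I do not anticipate a serious obstacle here — the only thing to be careful about is bookkeeping with complements being taken with respect to $[\N+1]$ (so that the purifier is implicitly available as a color when forming $\I^{\complement}$), and keeping track of the fact that \cref{lem:nesting} only guarantees the existence of \emph{one} min-cut with the nesting property, which is exactly the existential phrasing of the statement we are proving. The proof is therefore essentially a two-line argument: take complements, invoke Nesting, translate the containment into disjointness.
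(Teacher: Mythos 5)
Your proof is correct and follows essentially the same route as the paper's: pass to $\I^{\complement}$, apply Nesting to $\K\subset\I^{\complement}$, and use the fact that the complement of a min-cut is a min-cut for the complementary subsystem to translate the containment into disjointness. The only cosmetic difference is that you fix an arbitrary $\mU_{\I}$ first and then complement, which (since Nesting holds for any given min-cut of the larger subsystem) actually yields the slightly stronger statement that every min-cut for $\I$ admits a disjoint min-cut for $\K$.
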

\begin{proof}
    Consider the subsystem $\I^{\complement}$. Since $\I\cap\K=\emptyset$, it follows that $\K\subset\I^{\complement}$ and by \cref{lem:nesting} there exists min-cuts $\mU_{\I^{\complement}}$ and $\mU_{\K}$ such that $\mU_{\K}\subset\mU_{\I^{\complement}}$. The complement of $\mU_{\I^{\complement}}$ is then a min-cut $\mU_{\I}$ such that $\mU_{\I}\cap\mU_{\K}=\emptyset$.
\end{proof}

Finally, we mention another property which is closely related to nesting and will be particularly useful in later proofs:

\begin{nlemma}[Min-cut decomposition]
\label{lem:mincutdecaltgen}
For any graph model $\gm{\N}$, if the induced subgraph $\gm{\N}\left[\mU_{\I}\right]$ for a min-cut $\mU_{\I}$ is composed of two disjoint components, $$\gm{\N}\left[\mU_{\I}\right] = \gm{\N}\left[U_{\J}\right] \oplus \gm{\N}\left[U_{\K}\right]$$ with $\I = \J \cup \K$, then $U_{\J}$ and $U_{\K}$ are min-cuts for the corresponding subsystems.
\end{nlemma}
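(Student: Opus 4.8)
The plan is to show that the decomposition $\gm{\N}[\mU_{\I}] = \gm{\N}[U_{\J}] \oplus \gm{\N}[U_{\K}]$ forces each piece to be minimal on its own, using the additivity of cut weights across a disjoint union together with \cref{lem:nesting}. First I would observe that since the two induced subgraphs are disjoint (no edges between them), no edge of $\mU_{\I}$ has one endpoint in $U_{\J}$ and the other in $U_{\K}$; hence every cut edge of $U_{\I}$ (viewed as a subset of $V$, with $U_{\I}$ the vertex set $\mU_{\I}$) is either an edge from $U_{\J}$ to $U_{\I}^{\complement}$ or an edge from $U_{\K}$ to $U_{\I}^{\complement}$. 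Moreover, again by disjointness, an edge leaving $U_{\J}$ cannot land in $U_{\K}$, so $\cut(U_{\J}) = \cut_{\I}|_{\text{from }U_{\J}}$ and similarly for $U_{\K}$, giving the key additivity
\begin{equation}
\label{eq:mcdecompsplit}
\norm{\cut(\mU_{\I})} = \norm{\cut(U_{\J})} + \norm{\cut(U_{\K})}.
\end{equation}
Here I am using that $U_{\J}$ is a $\J$-cut and $U_{\K}$ is a $\K$-cut, which follows because $\partial V \cap \mU_{\I} = \beta^{-1}(\I)$ splits, under the disjoint decomposition, into $\partial V \cap U_{\J}$ and $\partial V \cap U_{\K}$ with colors exactly $\J$ and $\K$ respectively (this is what ``$\I = \J \cup \K$'' together with the component structure encodes).

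Next I would bound each term from below by the corresponding entropy: $\norm{\cut(U_{\J})} \geq S_{\J}$ and $\norm{\cut(U_{\K})} \geq S_{\K}$ by the definition \cref{eq:formalentropy}. Substituting into \cref{eq:mcdecompsplit} gives $S_{\I} = \norm{\cut(\mU_{\I})} \geq S_{\J} + S_{\K}$. To get the matching upper bound I would invoke \cref{lem:nesting} (or rather \cref{lem:nocrossing}, since $\J \cap \K = \emptyset$ as they are vertex sets of disjoint components): there exist min-cuts $\mU_{\J}^{\,0}$ and $\mU_{\K}^{\,0}$ with $\mU_{\J}^{\,0} \cap \mU_{\K}^{\,0} = \emptyset$, and then $\mU_{\J}^{\,0} \cup \mU_{\K}^{\,0}$ is an $\I$-cut whose cut weight is at most $\norm{\cut(\mU_{\J}^{\,0})} + \norm{\cut(\mU_{\K}^{\,0})} = S_{\J} + S_{\K}$ (at most, because some edges might be internal to the union and hence not cut). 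Therefore $S_{\I} \leq S_{\J} + S_{\K}$, and combining the two inequalities yields $\norm{\cut(U_{\J})} + \norm{\cut(U_{\K})} = S_{\I} = S_{\J} + S_{\K}$ with $\norm{\cut(U_{\J})} \geq S_{\J}$ and $\norm{\cut(U_{\K})} \geq S_{\K}$; hence both inequalities are saturated, i.e.\ $\norm{\cut(U_{\J})} = S_{\J}$ and $\norm{\cut(U_{\K})} = S_{\K}$, so $U_{\J}$ and $U_{\K}$ are min-cuts for $\J$ and $\K$.

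I expect the only genuinely delicate point to be the bookkeeping in the first step: carefully justifying that $U_{\J}$ and $U_{\K}$ are legitimately a $\J$-cut and a $\K$-cut (i.e.\ that the boundary-vertex colorings restrict correctly), and that no cut edge of $\mU_{\I}$ is double-counted or missed when splitting $\cut(\mU_{\I})$ across the two components. Both follow cleanly from the hypothesis that $\gm{\N}[\mU_{\I}]$ is a \emph{disjoint} (edgeless-between) union, but it is worth spelling out since the whole argument hinges on the exact additivity \cref{eq:mcdecompsplit}. Everything else is a short squeeze argument using only \cref{eq:formalentropy} and \cref{lem:nocrossing}, both already available.
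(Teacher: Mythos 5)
Your proof is correct and uses the same two ingredients as the paper's: the disjointness hypothesis forces the exact splitting $\norm{\cut(\mU_{\I})}=\norm{\cut(U_{\J})}+\norm{\cut(U_{\K})}$, and subadditivity of cut weights under unions of cuts supplies the reverse bound $S_{\I}\le S_{\J}+S_{\K}$; the paper packages this as a proof by contradiction (swap $U_{\J}$ for a cheaper $\J$-cut and union it with $U_{\K}$), whereas you run it as a two-sided squeeze, which also hands you the saturation $S_{\I}=S_{\J}+S_{\K}$ that the paper only remarks on afterwards. The one superfluous step is your appeal to \cref{lem:nocrossing}: the union of \emph{any} min-cuts for $\J$ and $\K$ is an $\I$-cut of weight at most $S_{\J}+S_{\K}$ whether or not they intersect (every cut edge of $A\cup B$ is a cut edge of $A$ or of $B$), which is precisely why the paper can take $U^{*}_{\J}\cup U_{\K}$ without worrying about overlaps.
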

\begin{proof}
    Proceed by contradiction. Suppose $U_{\J}$ is not a min-cut, so an actual min-cut $U_{\J}^*$ for $\J$ has $\norm{\cut (U_{\J}^*)} < \norm{\cut(U_{\J})}$.
    By hypothesis the union $U_{\J}^* \cup U_{\K}$ defines a cut for $\I$.
    By subadditivity,\footnote{\, Phrased more generally, this follows from the fact that the min-cut function on a graph is submodular.} $\norm{\cut(U_{\J}^* \cup U_{\K})} \leq \norm{\cut(U_{\J}^*)} + \norm{\cut(U_{\K})} < \norm{\cut(U_{\J})} + \norm{\cut(U_{\K})}$, where we used minimality in the last step.
    But by the disjointness assumption, the right-hand side equals $\norm{\cut(U_{\I}^*)}$. 
    Hence $U_{\J}^* \cup U_{\K}$ is a cut for $\I$ of smaller weight than the claimed min-cut $U_{\I}^*$, a contradiction (and the same argument of course also applies to $\K$).
\end{proof}

An alternate way of stating \cref{lem:mincutdecaltgen} is that if the set of cut edges for a min-cut $\mU_\I$ can be split into two sets of cut edges for cuts of subsystems bipartitioning $\I$, then each set corresponds to a min-cut for that subsystem. Note that this statement can be trivially iterated when $\gm{\N}\left[\mU_{\I}\right]$ is composed of multiple disjoint components.

The min-cut decomposition also makes an immediate connection to SA saturation: if $\gm{\N}\left[U_{\I}^*\right] = \gm{\N}\left[U_{\J}^*\right] \oplus \gm{\N}\left[U_{\K}^*\right]$ with $\I = \J \cup \K$, then $S_{\I}=S_{\J}+S_{\K}$, so $\J$ and $\K$ have no mutual information and are fully decorrelated.\footnote{\, The existence of such decomposition of the induced subgraph $\gm{\N}\left[U_{\I}^*\right]$ is also a necessary condition for the mutual information to vanish, as we will review later in \cref{subsec:hologpmi} (see \cref{lem:vanishingI2}).}

\section{The min-cut subspace of a graph model}
\label{sec:tools}

In the previous section we reviewed the definition of a graph model of holographic entanglement and how an entropy vector is associated to it. We now begin to consider coarser objects associated to graph models. This will be a recurring theme throughout this work, and we will consider an even coarser object in \cref{sec:cone_marginal}.

In \cref{subsec:equivclasses} we introduce the notions of a ``min-cut structure'' and its corresponding ``W-cell'' in the space of edge weights. These allow us to organize graph models into equivalence classes, and we will introduce the main objects associated to an equivalence class, namely the ``S-cell'' and its ``min-cut subspace''. These subspaces will then be used in \cref{sec:logic} to resolve the structure of the holographic entropy cone. The definition is not entirely new, as variations of it were already used previously. Here we sharpen this notion and explore its properties in much greater detail, highlighting in particular the role played by ``degeneracy'', namely the possible coexistence of alternative min-cuts which compute the entropy of a subsystem. 

In \cref{subsec:gmnodeg} and \cref{sssec:gmdeg} we then explain how the S-cell and min-cut subspace of an equivalence class of graph models can be determined from the W-cell. While in \cref{sec:basics} we have defined weights as strictly positive, in some cases it will be useful to consider extremal situations where some of them vanish, and we clarify how to deal with this type of situations in \cref{ssec:vanishing}. In \cref{subsec:properties}, we present certain important properties of W-cells, S-cells and min-cut subspaces which will be used in later proofs. Finally, in \cref{subsec:disconnected}, we explain how for disconnected graphs all these constructs can be obtained from those of the connected ``building blocks''.

Additional important properties of the objects introduced in this section, related to how these transform under recoloring of boundary vertices in a graph, will be analysed in \cref{sec:recolorings}.\footnote{\, In the discussion, \cref{sec:discussion}, we will also briefly comment on the relation between the min-cut subspaces defined here, and similar construct introduced in \cite{Hubeny:2018ijt,Hubeny:2018trv} for proto-entropies and the definition of the holographic entropy polyhedron.}

\subsection{Equivalence classes of graph models}
\label{subsec:equivclasses}

Consider a graph model of holographic entanglement $\gm{\N}$, as defined in \cref{ssec:graphs}. As we anticipated above, the min-cut $\mU_{\I}$ for an arbitrary subsystem $\I$ is not necessarily unique. We denote the \emph{set of min-cuts for $\I$} by $\mathscr{U}_{\I}$.
%
%
A subsystem indexed by $\I$ will be said to be \emph{generic} (\emph{degenerate}) if $\mathscr{U}_{\I}$ has cardinality equal to (greater than) one. If $\mathscr{U}_{\I}$ has more than one element, every $\mU_{\I}\in \mathscr{U}_{\I}$ is referred to as a \emph{degenerate min-cut}, in the sense that any such cut achieves the minimum weight among all possible cuts for $\I$.\footnote{\, Notice that the cardinality of $\mathscr{U}_{\I}$ is bounded by $1\leq \abs{\mathscr{U}_{\I}} \leq 2^{\abs{V}-(\N+1)}$, since only the $\abs{V}-(\N+1)$ bulk vertices can be optionally included in the $\I$-cut.
} 
Similarly, we introduce the following terminology for graph models

\begin{ndefi}[Generic and degenerate graph models]\label{def:generic}
A graph model is \emph{generic} if every subsystem is \emph{generic}, while it is \emph{degenerate} if at least one subsystem is \emph{degenerate}. 
\end{ndefi}

Notice that a ``general'' graph model may or may not be ``generic'' according to this definition. The motivation for this choice of terminology is that random graph models typically do not have any degeneracy. 

For an arbitrary graph model $\gm{\N}$ we define its \textit{min-cut structure} as follows

\begin{ndefi}[Min-cut structure of a graph model]
\label{def:mincutstructure}
The \emph{min-cut structure} $\mathfrak{m}(\gm{\N})$ of a graph model $\gm{\N}$ is the collection of its min-cut sets for all polychromatic indices, i.e.
\begin{equation}
    \mathfrak{m}(\gm{\N}) \coloneqq \{\mathscr{U}_{\I}\; \text{\emph{for all}}\; \I \}
\end{equation}
\end{ndefi}

Any two graph models on the same \textit{topological} graph model $\tgm{\N}$ will be considered equivalent if their min-cut structures are equal. This equivalence relation allows us to organize all such graph models into a finite set of equivalence classes, each one corresponding to a distinct min-cut structure $\mathfrak{m}$. We write such an equivalence class as a pair $(\tgm{\N},\mathfrak{m})$, stressing the dependence on both the min-cut structure and the underlying topological graph model. Occasionally, we will also write $(\tgm{\N},\mathfrak{m})[\gm{\N}]$ to stress that $(\tgm{\N},\mathfrak{m})$ 
is not an arbitrary equivalence class, but the class specified by the representative graph model $\gm{\N}$. Any element of a class is specified by a choice of weights consistent with the min-cut structure $\mathfrak{m}$, and it will be convenient to associate to each equivalence class a region in the space of edge weights $\mathbb{R}^{\E}_{>0}$. 

\begin{ndefi}[W-cell]
\label{def:wcell}
    The \emph{W-cell} $\mathcal{W}(\tgm{\N},\mathfrak{m})\subset\mathbb{R}^{\E}$ of the min-cut structure $\mathfrak{m}$ on the topological graph model $\tgm{\N}$ is the set of weight vectors of all graph models $\gm{\N}$ in the class $(\tgm{\N},\mathfrak{m})$.
\end{ndefi}

\noindent In what follows, when the specification a topological graph model $\tgm{\N}$ is clear from context, and we only need to keep track of the dependence of a W-cell on a min-cut structure $\mathfrak{m}$, we will simply write $\mathcal{W}_{\mathfrak{m}}$ instead of $\mathcal{W}(\tgm{\N},\mathfrak{m})$.

Given a topological graph model $\tgm{\N}$, one may want to specify a min-cut structure on it more abstractly, by directly listing the min-cut sets $\mathscr{U}_{\I}$ without recurring to graph models and explicit choices of edge weights. Notice however that an arbitrary choice of cuts for each index $\I$ is not necessarily a meaningful min-cut structure, since there may not exist a choice of weights that make all these cuts (and no others) minimal
(for example by violating some of the lemmas in \cref{ssec:graph_lemmas}). We will comment more explicitly on this point in the next section.

Each equivalence class will also be associated to a specific region of entropy space, called the \textit{S-cell} of $(\tgm{\N},\mathfrak{m})$, defined as follows
\begin{ndefi}[S-cell]
\label{def:scell}
    The \emph{S-cell} $\mathcal{S}(\tgm{\N},\mathfrak{m})\subset\mathbb{R}^{\D}$ of the min-cut structure $\mathfrak{m}$ on the topological graph model $\tgm{\N}$ is the set of entropy vectors of all graph models $\gm{\N}$ in the class $(\tgm{\N},\mathfrak{m})$.
\end{ndefi}

By definition, each S-cell is contained in the HEC, since it is a set of entropy vectors that can be realized by graph models.  As we will see, different S-cells can have different dimensions, and clearly the union of all S-cells that can be obtained from all possible topological graph models and min-cut structures is the whole HEC$_{\N}$.  However, the set of all S-cells does not form a partition of the HEC$_{\N}$ since a given entropy vector can in general belong to multiple S-cells.\footnote{\, This can easily happen for example with disconnected graph models. We will comment on this type of situation in more detail in \cref{subsec:properties}.} 
Furthermore, the relation between equivalence classes and S-cells is not a bijection, since the same S-cell can be associated to distinct equivalence classes.\footnote{\, For example one can start from a class $(\tgm{\N},\mathfrak{m})$ and add new vertices to $\tgm{\N}$ to obtain a new topological graph model $\tgm{\N}'$ such that the min-cut structure $\mathfrak{m}$ and the sets of cut edges $\cut^*_{\J}$ for all subsystems remain unchanged (a trivial way to do so is by adding a disconnected component of only bulk vertices).}

Having introduced the notion of an S-cell, we now consider a linear subspace of entropy space which is naturally associated to it. Specifically, we define the \textit{min-cut subspace} of a class $(\tgm{\N},\mathfrak{m})$ as follows
\begin{ndefi}[Min-cut subspace]
\label{def:vspace}
    The \emph{min-cut subspace} $\mathbb{S}(\tgm{\N},\mathfrak{m})$ of the min-cut structure $\mathfrak{m}$ on the topological graph model $\tgm{\N}$ is the minimal linear subspace which contains the \emph{S-cell} $\mathcal{S}(\tgm{\N},\mathfrak{m})$, i.e.
    \begin{equation}
        \mathbb{S}(\tgm{\N},\mathfrak{m})\coloneqq\text{\emph{Span}}(\mathcal{S}(\tgm{\N},\mathfrak{m}))
    \end{equation}
\end{ndefi}
\noindent Notice that even the relation between S-cells and min-cut subspaces is not a bijection, since different S-cells can give the same min-cut subspace.\footnote{\, This situation can also easily be realized with disconnected graphs, see \cref{subsec:properties}.} 

Having introduced the main definitions, we will now proceed to explain 
how, given a topological graph model and a min-cut structure, one can compute the S-cell and the min-cut subspace explicitly. We will start from the slightly simpler case of generic graphs and then extend the analysis to situations where some min-cuts are degenerate.

\subsection{Min-cut structures without degeneracy}
\label{subsec:gmnodeg}

Given a topological graph model $\tgm{\N}$, and a \textit{generic} min-cut structure $\mathfrak{m}$, we can determine the W-cell of $(\tgm{\N},\mathfrak{m})$ as follows. To each edge of $\tgm{\N}$ we associate a weight \textit{variable} $w_e$ (rather than a weight value as in a graph model $\gm{\N}$). For any polychromatic index $\I$, we can consider all cuts $U_{\I}$, and the min-cut structure $\mathfrak{m}$ specifies which one is the min-cut $\mU_{\I}$. The minimality of $\mU_{\I}$ translates into a set of linear inequalities in the weight variables, which take the form 
\begin{equation}
\label{eq:open_constaints}
\norm{\cut(\mU_{\I})}<\norm{\cut(U_{\I})}\qquad \forall\, U_{\I}\neq\mU_{\I}.
\end{equation}
The collection of all these inequalities in \cref{eq:open_constaints}, for all polychromatic indices $\I$, combined with the condition that each edge weight must be non negative
\begin{equation}
\label{eq:weight-nonnegative-generic}
    w_e>0\qquad \forall\, e\in E
\end{equation}
specifies a polyhedral cone in $\mathbb{R}^{\E}_{>0}$. The cone structure follows from the inequalities being homogeneous, and since there are at least $\E$ linearly independent ones (those in \cref{eq:weight-nonnegative-generic}), the cone is pointed. Furthermore, since all the inequalities are strict, the (non-empty) solution corresponds to the interior of this cone, and the cone is full-dimensional.\footnote{\, The linear span of the vectors inside the cone is the full space.} This is the W-cell $\mathcal{W}(\tgm{\N},\mathfrak{m})$.

As we mentioned earlier, an arbitrary choice of min-cut for each subsystem $\I$ does not necessarily correspond to a meaningful min-cut structure. This is because the system of inequalities described above could have no solutions, in which case the W-cell is just the empty set. In the following we will always ignore these situations, and whenever we consider a min-cut structure on a certain topological graph model, we will implicitly assume that its W-cell is non-trivial.

Having showed how to determine the W-cell $\mathcal{W}(\tgm{\N},\mathfrak{m})$ explicitly, we will now explain how to derive the corresponding S-cell $\mathcal{S}(\tgm{\N},\mathfrak{m})$ in entropy space. Notice that for any graph model $\gm{\N}$, each entropy $S_{\I}$ is computed by a sum of weights, each one with unit coefficient, cf. \cref{eq:cost}. For each subsystem $\I$ we introduce an incidence vector $\mathbf{\Gamma}_{\I}\in\{0,1\}^{\E}$,
\begin{equation}
\label{eq:incidence}
    \Gamma_{\I}^e \coloneqq \begin{cases}
1\qquad \text{if}\; e\in\cut(\mU_{\I})\\
0\qquad \text{otherwise}
\end{cases}
\end{equation}
which specifies which edges participate in the cut. We can then write the entropy as 
\begin{equation}
\label{eq:gamma_def}
    S_{\I}=\Gamma_{\I}^{e}\,w_e
\end{equation}
where the $\D\!\times\!\E$ matrix $\Gamma_{\I}^{e}$ represents a linear map $\Gamma: \mathbb{R}^{\E}\to \mathbb{R}^{\D}$. The map which associates entropy vectors to weight vectors of fixed topological graph and min-cut structure is the \textit{restriction} $\Gamma|_{\mathcal{W}}$ of $\Gamma$ to the W-cell $\mathcal{W}(\tgm{\N},\mathfrak{m})$. The S-cell is the image of this restricted map\footnote{\, Equivalently, one could also define the S-cell as the image of $\mathcal{W}$ under the \textit{unrestricted} map $\Gamma$, i.e., $\mathcal{S}=\Gamma(\mathcal{W})$, but we preferred the option in the main text because the matrix $\Gamma$ is specified by the min-cut structure, which is only defined within $\mathcal{W}$.}
\begin{equation}
    \mathcal{S}(\tgm{\N},\mathfrak{m})=\text{Im}\;\Gamma|_{\mathcal{W}}
\end{equation}
and the min-cut subspace is
\begin{equation}
\label{eq:genericV}
    \mathbb{S}(\tgm{\N},\mathfrak{m})=\text{Span}(\text{Im}\;\Gamma|_{\mathcal{W}})=\text{Im}\;\Gamma
\end{equation}
where the last equality follows from the fact that, as explained above,
$\text{Span}(\mathcal{W})=\mathbb{R}^{\E}$.

In conclusion, in the case of a generic min-cut structure $\mathfrak{m}$ on a topological graph model $\tgm{\N}$, the min-cut subspace is simply the column space of the matrix $\Gamma$.

\subsection{Min-cut structures with possible degeneracy}
\label{sssec:gmdeg}

We will now generalize the previous construction to the case where some min-cuts could be degenerate. We start again by determining the W-cell of a given min-cut structure $\mathfrak{m}$ and topological graph model $\tgm{\N}$. 

For each degenerate subsystem $\I$, we label the min-cuts $\mU_{\I}$ in the set $\mathscr{U}_{\I}$ with an upper index $\alpha\in\{1,2,\ldots,\abs{\mathscr{U}_{\I}}\}$. The W-cell is now specified by the following set of inequalities
\begin{equation}
\label{eq:open_contraints2}
    \norm{\cut({\mU_{\I}}^{\alpha})} <\norm{\cut(U_{\I})}\qquad \forall\,U_{\I}\neq{\mU_{\I}}^{\alpha},\;\; \forall\,\alpha
\end{equation}
for each polychromatic index,\footnote{\, For each generic subsystem, there is just a single index $\alpha\in\{1\}$.} together with the strict positivity of the weights (\ref{eq:weight-nonnegative-generic}), and crucially a set of  \textit{degeneracy equations}
\begin{equation}
\label{eq:degeneracy_eq}
    \norm{\cut({\mU_{\I}}^{\alpha})}=\norm{\smash{\cut({\mU_{\I}}^{\beta})}} \qquad \forall\,\I,\;\; \forall\,\alpha, \beta
\end{equation}

Like in the generic case, the W-cell is the interior of a polyhedral cone, but now with a crucial difference. Since all inequalities are strict, denoting by $\mathbb{W}$ the proper linear subspace of $\mathbb{R}^{\E}$ which corresponds to the solution of the degeneracy equations (\ref{eq:degeneracy_eq}), we have 
\begin{equation}
     \text{Span}(\mathcal{W}(\tgm{\N},\mathfrak{m}))=\mathbb{W}
\end{equation}
rather than the full space.

To find the S-cell we can proceed similarly to generic case. For each degenerate subsystem $\I$, we first choose
a ``representative'' min-cut ${\mU_{\I}}^{\alpha}$, by a specific choice of $\alpha$. We denote a choice of min-cuts for all $\I$ schematically by $\{\alpha\}$.
With this choice we can then construct an incidence vector $\mathbf{\Gamma}_{\I}^{\alpha}$ for each polychromatic index $\I$, and as before a linear map $\Gamma^{\{\alpha\}}: \mathbb{R}^{\E}\to \mathbb{R}^{\D}$. Different choices of representatives will in general give different linear maps, however once restricted to the W-cell, all these maps have the same image, the S-cell of $\mathfrak{m}$, i.e., 
\begin{equation}
\label{eq:repindependence}
    \mathcal{S}(\tgm{\N},\mathfrak{m})= \text{Im}\;\Gamma^{\{\alpha\}}|_{\mathcal{W}}\qquad \forall\,\{\alpha\}
\end{equation}

To see this, consider an arbitrary weight vector $\mathbf{w}\in\mathcal{W}$ and two maps $\Gamma^{\{\alpha\}}$ and $\Gamma^{\{\alpha'\}}$ for two different choices of representative min-cuts. We want to show that for any such choices
\begin{equation}
    \Gamma^{\{\alpha\}}(\mathbf{w})=\Gamma^{\{\alpha'\}}(\mathbf{w})
\end{equation}
which we can rewrite as
\begin{equation}
    (\Gamma^{\{\alpha\}}-\Gamma^{\{\alpha'\}})(\mathbf{w})=0
\end{equation}
For a row $\I$ of the matrix that appears in the equation above we should then have
\begin{equation}
\label{eq:rowcondition}
    (\mathbf{\Gamma}_{\I}^{\alpha}-\mathbf{\Gamma}_{\I}^{\alpha'})(\mathbf{w})=0
\end{equation}
If the choice of representatives for the subsystem $\I$ in $\{\alpha\}$ and $\{\alpha'\}$ was the same, \cref{eq:rowcondition} is trivial. Otherwise it is precisely  one of the degeneracy equations that specify $\mathcal{W}$, and it is therefore also satisfied by $\mathbf{w}$.

As for the generic case, the min-cut subspace of a min-cut structure with degeneracies is again defined as the span of the S-cell. Unlike the generic case however, the min-cut subspace is now generally not equal to the image of any of the unrestricted maps $\Gamma^{\{\alpha\}}$.\footnote{\, It is tempting to guess that the min-cut subspace of a degenerate min-cut structure is the intersection of all the images of the unrestricted maps for all choices of representatives, i.e., that $\mathbb{S}=\bigcap_{\{\alpha\}}\text{Im}\;\Gamma^{\{\alpha\}}$. This however is not entirely clear and we leave this question for future work.} Hence
\begin{equation}
\label{eq:restV}
    \mathbb{S}(\tgm{\N},\mathfrak{m})=\text{Span}(\text{Im}\;\Gamma^{\{\alpha\}}|_{\mathcal{W}})\subseteq\text{Span}(\text{Im}\;\Gamma^{\{\alpha\}})
\end{equation}
and typically $\mathbb{S}(\tgm{\N},\mathfrak{m}) \subset \text{Span}(\text{Im}\;\Gamma^{\{\alpha\}})$.

Using this construction it is straightforward to prove the following result, which provides a more direct way to determine the min-cut subspace of an arbitrary class $(\tgm{\N},\mathfrak{m})$ when one is not interested in the W-cell or the S-cell.

\begin{nlemma}
\label{lem:VfromW}
For any class $(\tgm{\N},\mathfrak{m})$ and choice of $\Gamma^{\{\alpha\}}$, the min-cut subspace of $(\tgm{\N},\mathfrak{m})$ is the image of $\mathbb{W}$ under $\Gamma^{\{\alpha\}}$
\begin{equation}
\label{eq:VfromW}
    \mathbb{S}(\tgm{\N},\mathfrak{m})=\Gamma^{\{\alpha\}}(\mathbb{W})
\end{equation}
\end{nlemma}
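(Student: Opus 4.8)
The plan is to show the two inclusions $\mathbb{S}(\tgm{\N},\mathfrak{m})\subseteq\Gamma^{\{\alpha\}}(\mathbb{W})$ and $\Gamma^{\{\alpha\}}(\mathbb{W})\subseteq\mathbb{S}(\tgm{\N},\mathfrak{m})$ separately, using the characterization of $\mathbb{S}$ as $\text{Span}(\text{Im}\,\Gamma^{\{\alpha\}}|_{\mathcal{W}})$ from \cref{eq:restV} together with the fact (established in \cref{sssec:gmdeg}) that $\text{Span}(\mathcal{W})=\mathbb{W}$. The first inclusion is the easy direction: every entropy vector in the S-cell is $\Gamma^{\{\alpha\}}(\mathbf{w})$ for some $\mathbf{w}\in\mathcal{W}\subseteq\mathbb{W}$, so $\mathcal{S}\subseteq\Gamma^{\{\alpha\}}(\mathbb{W})$; since $\Gamma^{\{\alpha\}}(\mathbb{W})$ is a linear subspace containing $\mathcal{S}$, and $\mathbb{S}$ is by definition the \emph{minimal} such subspace, we get $\mathbb{S}\subseteq\Gamma^{\{\alpha\}}(\mathbb{W})$.

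For the reverse inclusion, the key point is that a linear map sends the span of a set to the span of the image of that set: $\Gamma^{\{\alpha\}}(\text{Span}(\mathcal{W}))=\text{Span}(\Gamma^{\{\alpha\}}(\mathcal{W}))$. Indeed, $\Gamma^{\{\alpha\}}(\mathcal{W})\subseteq\Gamma^{\{\alpha\}}(\text{Span}(\mathcal{W}))$ and the right side is a subspace, so $\text{Span}(\Gamma^{\{\alpha\}}(\mathcal{W}))\subseteq\Gamma^{\{\alpha\}}(\text{Span}(\mathcal{W}))$; conversely any element of $\text{Span}(\mathcal{W})$ is a linear combination $\sum_i c_i\mathbf{w}_i$ with $\mathbf{w}_i\in\mathcal{W}$, and applying $\Gamma^{\{\alpha\}}$ and using linearity writes its image as $\sum_i c_i\,\Gamma^{\{\alpha\}}(\mathbf{w}_i)\in\text{Span}(\Gamma^{\{\alpha\}}(\mathcal{W}))$. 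Applying this with $\text{Span}(\mathcal{W})=\mathbb{W}$ gives $\Gamma^{\{\alpha\}}(\mathbb{W})=\text{Span}(\Gamma^{\{\alpha\}}(\mathcal{W}))=\text{Span}(\text{Im}\,\Gamma^{\{\alpha\}}|_{\mathcal{W}})=\mathbb{S}(\tgm{\N},\mathfrak{m})$, which in fact proves the equality directly (and subsumes the first inclusion).

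I do not anticipate a serious obstacle here: the statement is essentially the observation that a linear map commutes with taking linear spans, combined with the already-proven fact that the W-cell spans $\mathbb{W}$. The one subtlety worth spelling out is the independence of the right-hand side from the choice of representatives $\{\alpha\}$ — but this is already guaranteed, since the left-hand side $\mathbb{S}(\tgm{\N},\mathfrak{m})$ is manifestly representative-independent and the argument above shows it equals $\Gamma^{\{\alpha\}}(\mathbb{W})$ for \emph{every} choice of $\{\alpha\}$; alternatively one can note that the difference of any two such maps annihilates $\mathbb{W}$ by \cref{eq:rowcondition}, since the degeneracy equations \cref{eq:degeneracy_eq} are precisely the defining equations of $\mathbb{W}$. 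It therefore suffices to pick one representative choice and run the span-commutation argument.
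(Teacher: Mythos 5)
Your proposal is correct and follows essentially the same route as the paper's proof: both directions rest on the same two facts, namely that $\mathbb{S}$ is the span of $\mathcal{S}=\Gamma^{\{\alpha\}}(\mathcal{W})$ and that $\operatorname{Span}(\mathcal{W})=\mathbb{W}$. The paper phrases the argument by extracting bases $\mathcal{B}_{\mathbb{S}}\subseteq\mathcal{S}$ and $\mathcal{B}_{\mathbb{W}}\subseteq\mathcal{W}$ and proving the two inclusions separately, whereas you package the same content as ``a linear map commutes with taking spans,'' which is a cosmetic rather than substantive difference.
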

\begin{proof}
    Since the min-cut subspace is defined as the linear span of the S-cell, $\mathcal{S}$ contains 
    some basis
    $\mathcal{B}_{\mathbb{S}}$ of $\mathbb{S}$. And by \cref{eq:restV}, there exists a collection of vectors in $\mathcal{W}$ which are mapped to $\mathcal{B}_{\mathbb{S}}$, therefore $\Gamma^{\{\alpha\}}(\mathbb{W})\supseteq\mathbb{S}$ for any $\{\alpha\}$. On the other hand, the W-cell is full dimensional in $\mathbb{W}$, i.e., its linear span is $\mathbb{W}$. Therefore, any $\mathbf{w}\in\mathbb{W}$ can be written as a linear combination of a basis $\mathcal{B}_{\mathbb{W}}\subseteq\mathcal{W}$, which is mapped inside $\mathcal{S}$ by $\Gamma^{\{\alpha\}}$, implying $\Gamma^{\{\alpha\}}(\mathbb{W})\subseteq\mathbb{S}$. Combining the two inclusions, $\Gamma^{\{\alpha\}}(\mathbb{W})=\mathbb{S}$.
\end{proof}

\noindent Notice that in the generic case, where $\mathbb{W}=\mathbb{R}^{\E}$, \cref{eq:VfromW} reduces to \cref{eq:genericV}. Because of \cref{eq:repindependence}, the specific choice of min-cut representatives for a min-cut structure is often immaterial, and in what follows we will often drop the explicit dependence of $\Gamma^{\{\alpha\}}$ on such a choice, writing simply $\Gamma$.

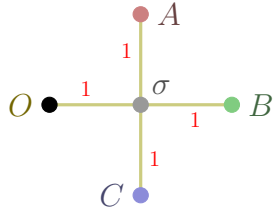
\begin{figure}[tb]
\centering
\begin{tikzpicture}
	\tikzmath{
		\edgelabelsize=0.7;
		\v =0.1;  
		\e=1.2;  
	}
	\coordinate (bv) at (0,0);  
	\coordinate (Av) at (90:\e);  
	\coordinate (Bv) at (0:\e);  
	\coordinate (Cv) at (270:\e);  
	\coordinate (Dv) at (180:\e);  
	\draw[edgestyle] (Av) -- node[scale=\edgelabelsize,edgeweightcolor,left,pos=0.4]{$1$} (bv);
	\draw[edgestyle] (Bv) -- node[scale=\edgelabelsize,edgeweightcolor,below,pos=0.4]{$1$} (bv);
	\draw[edgestyle] (Cv) -- node[scale=\edgelabelsize,edgeweightcolor,right,pos=0.4]{$1$}  (bv);
	\draw[edgestyle] (Dv) -- node[scale=\edgelabelsize,edgeweightcolor,above,pos=0.4]{$1$} (bv);
	\filldraw [color=bvcolor] (bv) circle (\v) node[bvcolor!50!black,above right=0pt]{$\sigma$} ;
	\filldraw [color=Acolor] (Av) circle (\v) node[Acolor!50!black,right=2pt]{$A$} ;
	\filldraw [color=Bcolor] (Bv) circle (\v) node[Bcolor!50!black,right=2pt]{$B$} ;
	\filldraw [color=Ccolor] (Cv) circle (\v) node[Ccolor!50!black,left=2pt]{$C$} ;
	\filldraw [color=Ocolor] (Dv) circle (\v) node[Dcolor!50!black,left=2pt]{$O$} ;
\end{tikzpicture} 
    \caption{The graph model $\gm{3}$ which generates the extreme ray of the HEC$_3$ corresponding to the $4$-party perfect state. The boundary vertices are $\partial V=\{A,B,C,O\}$, and there is a single bulk vertex, $\sigma$. Our drawing conventions for this and all subsequent figures will be to use colors for the boundary vertices (black for the purifier $O$), and to display all bulk vertices in gray.}
    \label{fig:exampleone}
\end{figure}

We conclude this subsection with an explicit example of these constructs. Consider the graph model $\gm{3}$ in \cref{fig:exampleone}. The indicated choice of weights on the underlying topological graph model $\tgm{3}$ specifies the following min-cut structure 
\begin{equation}
\label{eq:wcellexample}
\begin{aligned}
    \mathscr{U}_{A} & = \{ \{A\} \}\quad & \mathscr{U}_{AB} &= \{ \{A,B\}, \{A,B,\sigma\} \}\quad & \mathscr{U}_{ABC} &= \{ \{A,B,C\} \}\\
   \mathscr{U}_{B} & = \{ \{B\} \}  & \mathscr{U}_{AC} &= \{ \{A,C\}, \{A,C,\sigma\} \} &  \\
   \mathscr{U}_{C} & = \{ \{C\} \} & \mathscr{U}_{BC} &= \{ \{B,C\}, \{B,C,\sigma\} \} & 
\end{aligned}
\end{equation}
The degenerate subsystems are $AB$, $AC$ and $BC$, and they give rise to the following degeneracy equations
\begin{equation}
\label{eq:degeqexample}
\begin{aligned}
    w_{_{A\sigma}}+w_{_{B\sigma}} &= w_{_{C\sigma}}+w_{_{O\sigma}} \\
    w_{_{A\sigma}}+w_{_{C\sigma}} &= w_{_{B\sigma}}+w_{_{O\sigma}} \\
    w_{_{B\sigma}}+w_{_{C\sigma}} &= w_{_{A\sigma}}+w_{_{O\sigma}}
\end{aligned}
\end{equation}
The solution of the degeneracy equations is the $1$-dimensional subspace $\mathbb{W}\subseteq \mathbb{R}^{4}$ generated by the vector $(1,1,1,1)$. The W-cell is the ray $w(1,1,1,1)$, with $w>0$, and its image under the map (the rows are labeled by the entropies in the order $(A,B,C,AB,AC,BC,ABC)$, and the columns by the weights in the order $(A\sigma,B\sigma,C\sigma,O\sigma)$)
\begin{equation}
\label{eq:gammaperfect}
    \Gamma=\left(
    \begin{array}{cccc}
    1 & 0 & 0 & 0 \\
    0 & 1 & 0 & 0 \\
    0 & 0 & 1 & 0 \\
    1 & 1 & 0 & 0 \\
    1 & 0 & 1 & 0 \\
    0 & 1 & 1 & 0 \\
    0 & 0 & 0 & 1 \\
    \end{array}
    \right)
\end{equation}
fixed by the choice of representatives
\begin{align}
    & \mU_{\!AB} = \{A,B\}\nonumber\\
    & \mU_{\!AC} = \{A,C\}\nonumber\\
    & \mU_{\!BC} = \{B,C\}
\end{align}
is the S-cell
\begin{equation}
\label{eq:perfectray}
    \mathcal{S} = \lambda \, (1,1,1,2,2,2,1),\qquad \lambda>0
\end{equation}
which is the 4-party ``perfect state''\footnote{\, This is the pure state on four parties which is absolutely maximally entangled, i.e., the state such that each subsystem has maximal entropy.} extreme ray of the HEC$_3$. In this example the min-cut subspace is therefore $1$-dimensional. We will see momentarily that this is always the case for graph models realizing the extreme rays of the HEC.

\subsection{Vanishing weights}
\label{ssec:vanishing}

Up to this point we have only considered situations where all the weights in a graph model are strictly positive. However, one may wonder if there could be any issue when allowing some of the weights to vanish, and in this subsection we explore this situation carefully. The upshot is that if some of the weights vanish in a graph model, all our previous definitions and constructions should be applied to a new graph where all edges with vanishing weight have been deleted. 

Consider a topological graph model $\tgm{\N}$, and suppose that instead of specifying a min-cut structure by a choice of edge weights, we instead try to specify it by a list of min-cut sets for all polychromatic subsystems, which we denote by $\mcs{\mathfrak{m}}$. Any such choice will specify a region $\mcs{\mathcal{W}}$ of the space of edge weights $\mathbb{R}^{\E}$ via a set of min-cut inequalities 
(\ref{eq:open_contraints2}) and degeneracy equations (\ref{eq:degeneracy_eq}), together with the condition of strict positivity of the edge weights (\ref{eq:weight-nonnegative-generic}). However, as already mentioned before, this region can be empty, in which case $\mcs{\mathfrak{m}}$ would not correspond to a valid min-cut structure on $\tgm{\N}$. For a random choice of min-cut sets, this would typically be the case, since the min-cut sets would easily violate some of the basic properties of min-cuts reviewed in \cref{sec:basics}. However, it can also happen that the only reason why $\mcs{\mathcal{W}}$ is empty is that some of the degeneracy equations are forcing some of the edge weights to vanish.

Consider now a bipartition $(E_0,E_{>})$ of the edge set $E$ and the following constraints on the weights
\begin{align}
\label{eq:0weights}
   & w_e=0\qquad \forall\, e\in E_0 \nonumber\\
   & w_e>0\qquad \forall\, e\in E_{>}
\end{align}
Suppose that the new region $\mathcal{W}$ specified by (\ref{eq:0weights}), and precisely the same min-cut inequalities and degeneracy equations that participated in the specification of $\mcs{\mathcal{W}}$, is non-empty. According to our definition, even if $\mathcal{W}$ is non-empty, it is not a W-cell for any min-cut structure on $\tgm{\N}$, since some weights vanish. However, $\mathcal{W}$ can also be specified by an equivalent set of inequalities and equations that can be obtained by simply canceling the terms $w_e$ for all $e\in E_0$ from all the original min-cut inequalities and degeneracy equations that defined $\mcs{\mathcal{W}}$, and by also removing the corresponding equations from \cref{eq:0weights}. 

Consider now a new topological graph model $\mathring{\tgm{\N}}$ obtained by simply deleting all edges in $E_0$ from $\tgm{\N}$. The space of weights for $\mathring{\tgm{\N}}$ is now $\mathbb{R}^{\E-|E_0|}_{>0}$ and \cref{eq:0weights} imposes that all remaining weights are strictly positive. Furthermore for each min-cut ${\mU_{\I}}^{\alpha}$ in $\mcs{\mathfrak{m}}$ we have
\begin{equation}
    \norm{\cut ({\mU_{\I}}^{\alpha})}\,-\!\!\!\sum_{e\,\in\, E_0(\I,\alpha)} \!\!\!w_e\, =\, \norm{\smash{\mathring{\cut} ({\mU_{\I}}^{\alpha})}}
\end{equation}
where 
\begin{equation}
   E_0(\I,\alpha)=\cut ({\mU_{\I}}^{\alpha})\cap E_0 
\end{equation}
and $\smash{\mathring{\cut} ({\mU_{\I}}^{\alpha})}$ is the set of cut-edges for the min-cut ${\mU_{\I}}^{\alpha}$ on the new graph $\mathring{\tgm{\N}}$. The equations and inequalities obtained by cancelling the vanishing weights can then be reinterpreted as a set of min-cut inequalities and degeneracy equations for a min-cut structure  $\mathring{\mathfrak{m}}=\mcs{\mathfrak{m}}$ on the new graph, and the region $\mathcal{W}$ is therefore the W-cell $\mathcal{W}(\mathring{\tgm{\N}},\mathring{\mathfrak{m}})$. We will see an example of this reduction procedure in the next section.

Finally, let us briefly comment on the extreme situation where $\mcs{\mathcal{W}}$ is the $0$-dimensional region that only contains the origin of $\mathbb{R}^{\E}$. In this case the reduction we just described simply produces a new topological graph model $\mathring{\tgm{\N}}$ with no edges, whose space of weights is now $\mathbb{R}^0_{>0}=\{\mathbf{0}\}$. In this space there is only a single W-cell $\mathcal{W}$, which is simply the entire space. The corresponding min-cut structure is generic only in the case where all the vertices in $\mathring{\tgm{\N}}$ are boundary vertices. Otherwise, the min-cut structure is ``maximally degenerate'' in the sense that any subset of the bulk vertices can be included in the min-cut of each polychromatic subsystem.

\subsection{Main properties of W-cells, S-cells and min-cut subspaces}
\label{subsec:properties}

In this subsection we comment on a few important properties of W-cells, S-cells and min-cut subspaces which will be used in later proofs. We begin with a simple observation about the set of W-cells associated to a given topological graph model:

\begin{nlemma}
\label{lem:Wpartition}
For any topological graph model $\tgm{\N}$, the set of \emph{W}-cells associated to all the min-cut structures that can be specified on $\tgm{\N}$ forms a partition of the space of weights $\mathbb{R}^{\E}_{>0}$.
\end{nlemma}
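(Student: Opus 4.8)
The plan is to show two things: first, that every weight vector $\mathbf{w}\in\mathbb{R}^{\E}_{>0}$ lies in \emph{some} W-cell, and second, that it lies in \emph{at most one}. For the covering property, I would argue as follows: given any strictly positive weight vector $\mathbf{w}$, it defines a bona fide graph model $\gm{\N}$ on the underlying topological graph model $\tgm{\N}$. That graph model has a well-defined min-cut structure $\mathfrak{m}(\gm{\N})$ obtained by computing, for each polychromatic index $\I$, the set $\mathscr{U}_{\I}$ of all $\I$-cuts achieving the minimum cut weight (a finite computation since there are finitely many cuts). By \cref{def:wcell}, $\mathbf{w}$ belongs to the W-cell $\mathcal{W}(\tgm{\N},\mathfrak{m}(\gm{\N}))$, so the union of all W-cells is all of $\mathbb{R}^{\E}_{>0}$.

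For disjointness, suppose $\mathbf{w}$ lies in two W-cells $\mathcal{W}_{\mathfrak{m}}$ and $\mathcal{W}_{\mathfrak{m}'}$ for min-cut structures $\mathfrak{m}\neq\mathfrak{m}'$. By \cref{def:wcell}, $\mathbf{w}$ is then the weight vector of a graph model in the class $(\tgm{\N},\mathfrak{m})$ and also of one in $(\tgm{\N},\mathfrak{m}')$; but these two graph models are literally the same object $\gm{\N}=(\tgm{\N},\mathbf{w})$. A graph model determines its min-cut structure uniquely — the set $\mathscr{U}_{\I}$ of min-cuts for $\I$ is an intrinsic feature of $\gm{\N}$, namely the collection of $\I$-cuts attaining the value $S_{\I}$ defined in \cref{eq:formalentropy}. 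Hence $\mathfrak{m}=\mathfrak{m}(\gm{\N})=\mathfrak{m}'$, a contradiction. Therefore distinct W-cells are disjoint, and together with the covering property this establishes that they partition $\mathbb{R}^{\E}_{>0}$.

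One subtlety worth spelling out is the relationship to the inequality/equation descriptions of W-cells in \cref{subsec:gmnodeg} and \cref{sssec:gmdeg}: there a W-cell $\mathcal{W}(\tgm{\N},\mathfrak{m})$ was characterized as the solution set of the strict min-cut inequalities \cref{eq:open_contraints2} together with the degeneracy equations \cref{eq:degeneracy_eq}. I would note that this description is manifestly consistent with the partition claim, since for a given $\mathbf{w}$ the pattern of which inequalities \cref{eq:open_constaints} hold strictly versus which are saturated is completely determined by $\mathbf{w}$, and that pattern \emph{is} exactly the data encoded in $\mathfrak{m}$. In other words, the W-cells are precisely the cells of the hyperplane-and-halfspace arrangement cut out in $\mathbb{R}^{\E}_{>0}$ by all the comparisons $\norm{\cut(U_{\I})}$ versus $\norm{\cut(U'_{\I})}$, and cells of such an arrangement automatically partition the ambient region.

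I do not expect a serious obstacle here; the statement is essentially a bookkeeping consequence of the fact that ``having min-cut structure $\mathfrak{m}$'' is a property of a weight vector that is both always defined and mutually exclusive across distinct $\mathfrak{m}$. The only point requiring a modicum of care is making sure the min-cut structure really is well-defined for \emph{every} $\mathbf{w}\in\mathbb{R}^{\E}_{>0}$ (not just ``generic'' ones) — but this is immediate since for any fixed $\mathbf{w}$ the minimum in \cref{eq:formalentropy} is attained (finitely many cuts) and the argmin set $\mathscr{U}_{\I}$ is thereby unambiguously determined.
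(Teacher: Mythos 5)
Your proof is correct and is essentially the paper's own argument, which is the one-line observation that any $\mathbf{w}\in\mathbb{R}^{\E}_{>0}$ specifies a min-cut structure uniquely; you have simply unpacked this into the covering and disjointness halves. No gaps.
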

\begin{proof}
    Given a topological graph model $\tgm{\N}$, any choice of weight vector $\mathbf{w}\in\mathbb{R}^{\E}_{>0}$ specifies a min-cut structure uniquely.
\end{proof}

Recall that any W-cell is the interior of a pointed polyhedral cone whose linear span is the subspace $\mathbb{W}$ determined by the degeneracy equations. We now want to consider the boundary of such a cone, and in particular its extreme rays. For a given W-cell $\mathcal{W}$ we denote its closure
by $\overline{\mathcal{W}}$. A $d$-dimensional \textit{face} $\overline{\mathcal{F}}_d$  of $\overline{\mathcal{W}}$ is defined as the intersection of $\overline{\mathcal{W}}$ with a hyperplane $\mathbb{H}$ such that $\overline{\mathcal{W}}$ 
is contained entirely in one of the half-spaces specified by $\mathbb{H}$ (including $\mathbb{H}$ itself). 
According to this definition, each face is again a closed polyhedral cone, and we denote its interior\footnote{\, We define the interior with respect to the subspace topology.} by $\mathcal{F}_d$. Notice in particular that the $1$-dimensional faces are also closed; they are \textit{closed} extreme rays, since they contain the origin. 

In general it is not clear if the interior of a face is by itself a W-cell for some min-cut structure,\footnote{\, Indeed, the structure of the set of W-cells for a given topological graph model is an interesting object to study. We leave this problem for future work. \label{ftn:Wstructures}} but we will prove that this is the case for certain extreme rays. We denote a closed extreme ray of $\overline{\mathcal{W}}$ by $\overline{\mathbf{x}}_{\mathcal{W}}$ and its interior (the corresponding \textit{open} ray) by $\mathbf{x}_{\mathcal{W}}$. Moreover, we will say that a ray (either closed or open) is \textit{nowhere-zero} if all the components of
any vector in its interior are strictly positive. We then have the following lemma:
\begin{nlemma}
\label{lem:Wray}
For any class $(\tgm{\N},\mathfrak{m})$ and a nowhere-zero extreme ray $\overline{\mathbf{x}}_{\mathcal{W}}$ of  $\overline{\mathcal{W}}(\tgm{\N}$,$\mathfrak{m})$, there exists a min-cut structure $\mathfrak{m}'$ on $\tgm{\N}$ whose \emph{W}-cell is $\mathbf{x}_{\mathcal{W}}$.
\end{nlemma}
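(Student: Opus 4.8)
The plan is to exhibit the min-cut structure $\mathfrak{m}'$ explicitly and verify that its W-cell is exactly the open ray $\mathbf{x}_{\mathcal{W}}$. Since $\overline{\mathbf{x}}_{\mathcal{W}}$ is a nowhere-zero extreme ray of $\overline{\mathcal{W}}(\tgm{\N},\mathfrak{m})$, any vector $\mathbf{w}'$ in its interior has all components strictly positive, hence defines a genuine graph model $\gm{\N}$ on $\tgm{\N}$. Let $\mathfrak{m}' \coloneqq \mathfrak{m}(\gm{\N})$ be the min-cut structure this graph model induces, and let $(\tgm{\N},\mathfrak{m}')$ be the corresponding equivalence class with W-cell $\mathcal{W}' \coloneqq \mathcal{W}(\tgm{\N},\mathfrak{m}')$. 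By \cref{lem:Wpartition}, $\mathbf{w}' \in \mathcal{W}'$, and since the W-cell is determined by $\mathbf{w}'$ alone and is scale-invariant, $\mathbf{x}_{\mathcal{W}} \subseteq \mathcal{W}'$. The work is to show the reverse inclusion, i.e.\ that $\mathcal{W}'$ is no bigger than this single ray.

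First I would set up the local picture. Near $\mathbf{w}'$, the inequalities defining $\mathcal{W}(\tgm{\N},\mathfrak{m})$ fall into two types: those saturated at $\mathbf{w}'$ and those strict at $\mathbf{w}'$. Because $\overline{\mathbf{x}}_{\mathcal{W}}$ is a $1$-dimensional face of the closed cone $\overline{\mathcal{W}}$, the saturated constraints (together with the degeneracy equations defining $\mathbb{W} = \mathrm{Span}(\mathcal{W})$) cut out precisely the line $\mathrm{Span}(\mathbf{w}')$: this is the defining property of an extreme ray of a pointed polyhedral cone. Now, in passing from $\mathfrak{m}$ to $\mathfrak{m}'$, what happens is exactly that each min-cut inequality of $\mathfrak{m}$ saturated at $\mathbf{w}'$ becomes, for $\mathfrak{m}'$, a degeneracy equation (the two cuts in question now tie), while the strict ones remain strict min-cut inequalities; and the old degeneracy equations of $\mathfrak{m}$ persist as degeneracy equations of $\mathfrak{m}'$. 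Crucially, because $\mathbf{w}'$ is nowhere-zero, no new constraint of the form $w_e = 0$ is generated, so the reduction-to-a-subgraph subtlety of \cref{ssec:vanishing} does not arise and $\mathfrak{m}'$ is an honest min-cut structure on $\tgm{\N}$ itself. Consequently $\mathbb{W}' \coloneqq \mathrm{Span}(\mathcal{W}')$ is the solution set of all the degeneracy equations of $\mathfrak{m}'$, which includes all the saturated constraints at $\mathbf{w}'$; hence $\mathbb{W}' \subseteq \mathrm{Span}(\mathbf{w}')$, forcing $\mathbb{W}'$ to be exactly the line through $\mathbf{w}'$ (it is nonempty since it contains $\mathbf{w}'$). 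Then $\mathcal{W}'$, being a full-dimensional open subcone of the $1$-dimensional $\mathbb{W}'$ containing the nowhere-zero point $\mathbf{w}'$, can only be the open ray $\mathbf{x}_{\mathcal{W}} = \mathbb{R}_{>0}\,\mathbf{w}'$, which is what we wanted.

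The step I expect to be the main obstacle is the precise bookkeeping of which inequalities of $\mathfrak{m}$ are saturated at $\mathbf{w}'$ and the argument that these, adjoined to $\mathbb{W}$, pin down the line $\mathrm{Span}(\mathbf{w}')$ --- in other words, correctly translating ``$\overline{\mathbf{x}}_{\mathcal{W}}$ is an extreme ray'' into a clean statement about the rank of the active constraint system, and then matching that system one-to-one with the constraints that define $\mathcal{W}'$. One has to be a little careful that every facet of $\overline{\mathcal{W}}$ through $\overline{\mathbf{x}}_{\mathcal{W}}$ indeed corresponds to a genuine (achievable) tie between two cuts, so that it really does become a degeneracy equation of $\mathfrak{m}'$ rather than being lost; this is where the nowhere-zero hypothesis does real work, ruling out the degenerate scenario in which a facet of $\overline{\mathcal{W}}$ is a coordinate hyperplane $w_e = 0$ rather than a min-cut tie. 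Once that correspondence is in hand, the containment $\mathbb{W}' \subseteq \mathrm{Span}(\mathbf{w}')$ and hence the conclusion are immediate.
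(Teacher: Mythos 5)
Your proposal is correct and follows essentially the same route as the paper's proof: both invoke \cref{lem:Wpartition} to identify the unique $\mathfrak{m}'$ containing the ray, use the nowhere-zero hypothesis to conclude that the extreme ray is cut out by tie (degeneracy) hyperplanes rather than coordinate hyperplanes $w_e=0$, and observe that these ties become degeneracy equations of $\mathfrak{m}'$, pinning $\mathcal{W}'$ to the $1$-dimensional span. The only cosmetic difference is that you argue directly via $\mathbb{W}'\subseteq\mathrm{Span}(\mathbf{w}')$ while the paper phrases the same step as a contradiction with a hypothetical second ray.
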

\begin{proof}
    Consider a class $(\tgm{\N},\mathfrak{m})$ and a nowhere-zero extreme ray $\overline{\mathbf{x}}_{\mathcal{W}}\in\overline{\mathcal{W}}_{\mathfrak{m}}$, where we have introduced the shorthand notation $\overline{\mathcal{W}}_{\mathfrak{m}}=\overline{\mathcal{W}}(\tgm{\N},\mathfrak{m})$. We denote by $\mathbb{F}_1$ the $1$-dimensional
    linear subspace generated by $\overline{\mathbf{x}}_{\mathcal{W}}$. Since $\overline{\mathbf{x}}_{\mathcal{W}}$ is nowhere-zero, by \cref{lem:Wpartition} there exists a unique min-cut structure $\mathfrak{m}'$ whose W-cell $\mathcal{W}_{\mathfrak{m}'}$ contains $\mathbf{x}_{\mathcal{W}}$. Suppose now that $\mathbf{x}_{\mathcal{W}}$ is not a W-cell by itself, i.e., that $\mathcal{W}_{\mathfrak{m}'}\neq\mathbf{x}_{\mathcal{W}}$. Then there exists at least another open ray $\mathbf{w}\in\mathcal{W}_{\mathfrak{m}'}$ which corresponds to the same min-cut structure $\mathfrak{m}'$ of $\mathbf{x}_{\mathcal{W}}$. In general the facets of $\overline{\mathcal{W}}$ are supported by two different types of hyperplanes. Some correspond to degeneracy equations for min-cuts in $\mathfrak{m}$, and others are facets of $\mathbb{R}^{\E}_{\geq 0}$. However by the assumption that $\overline{\mathbf{x}}_{\mathcal{W}}$ is nowhere-zero, it follows that $\mathbb{F}_1$ is completely determined by the degeneracy equations only. Since any ray that is not contained in $\mathbb{F}_1$ would violate at least one of these equations, it must be that $\mathbf{w}=\lambda\,\mathbf{x}_{\mathcal{W}}$ with $\lambda>0$, and $\mathbf{x}_{\mathcal{W}}$ is the whole W-cell of $\mathfrak{m}'$.
\end{proof}

Using this lemma, we can then show that if a class $(\tgm{\N},\mathfrak{m})$ has a $1$-dimensional min-cut subspace, we can always find another class such that the min-cut subspace is preserved and the new W-cell is just a single ray. 
\begin{nlemma}
\label{lem:key}
    For any class $(\tgm{\N},\mathfrak{m})$ such that $\mathbb{S}(\tgm{\N},\mathfrak{m})$ is $1$-dimensional, there exists a class $(\htgm{\N},\hat{\mathfrak{m}})$ such that
    \begin{equation}
        \mathbb{S}(\htgm{\N},\hat{\mathfrak{m}})=\mathbb{S}(\tgm{\N},\mathfrak{m})
    \end{equation}
    and $\mathcal{W}(\htgm{\N},\hat{\mathfrak{m}})$ is a single ray.
\end{nlemma}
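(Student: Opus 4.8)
The plan is to start from the class $(\tgm{\N},\mathfrak{m})$ with $\dim\mathbb{S}(\tgm{\N},\mathfrak{m})=1$ and reduce it, in possibly several stages, to a class whose W-cell is a single ray, while keeping the min-cut subspace fixed. The natural candidate for the target ray is an extreme ray of $\overline{\mathcal{W}}(\tgm{\N},\mathfrak{m})$: since the S-cell $\mathcal{S}(\tgm{\N},\mathfrak{m})=\Gamma|_{\mathcal{W}}(\mathcal{W})$ spans only a $1$-dimensional subspace, the linear map $\Gamma$ restricted to $\mathbb{W}$ has rank $1$, so \emph{every} ray of $\mathcal{W}$ is mapped into the same line $\mathbb{S}$. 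In particular, picking any extreme ray $\overline{\mathbf{x}}_{\mathcal{W}}$ of $\overline{\mathcal{W}}$ that lies in the interior of the positive orthant (is nowhere-zero), \cref{lem:Wray} immediately produces a min-cut structure $\mathfrak{m}'$ on the \emph{same} $\tgm{\N}$ whose W-cell is exactly the open ray $\mathbf{x}_{\mathcal{W}}$, and by \cref{lem:VfromW} its min-cut subspace is $\Gamma'(\mathbb{W}')$ where $\mathbb{W}'=\mathbb{F}_1$ is that single line; checking that this equals the original $\mathbb{S}$ is the easy part, since $\mathbf{x}_{\mathcal{W}}\in\overline{\mathcal{W}}$ and $\Gamma$ agrees with $\Gamma'$ on the incidence data for the min-cuts that survive. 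So in the favorable case where $\overline{\mathcal{W}}$ has a nowhere-zero extreme ray we are essentially done with $(\htgm{\N},\hat{\mathfrak{m}})=(\tgm{\N},\mathfrak{m}')$.

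The main obstacle is precisely the case where \emph{no} extreme ray of $\overline{\mathcal{W}}$ is nowhere-zero, i.e.\ every extreme ray has at least one vanishing component, so \cref{lem:Wray} does not apply directly. Here I would invoke the vanishing-weights reduction of \cref{ssec:vanishing}: pick any extreme ray $\overline{\mathbf{x}}_{\mathcal{W}}$ and let $E_0$ be the set of edges on which it vanishes. Deleting those edges produces a new topological graph model $\mathring{\tgm{\N}}$ on which $\mathring{\mathfrak{m}}$ becomes a genuine min-cut structure with a W-cell that contains the image of $\overline{\mathbf{x}}_{\mathcal{W}}$ as a nowhere-zero ray of the reduced weight space $\mathbb{R}^{\E-|E_0|}_{>0}$. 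One must check that this deletion does not change the min-cut subspace: the incidence vectors $\mathbf{\Gamma}_{\I}$ simply lose their $E_0$-components, but those components were already forced to contribute nothing on $\overline{\mathbf{x}}_{\mathcal{W}}$ (and hence on all of the rank-one-spanned $\mathbb{W}$ after we restrict to the face), so the image is preserved. Concretely, $\Gamma(\mathbb{W})$ and $\mathring{\Gamma}(\mathring{\mathbb{W}})$ are the same $1$-dimensional subspace of $\R^\D$.

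Having performed this deletion, I would then apply \cref{lem:Wray} \emph{inside} $\mathring{\tgm{\N}}$: either the ray we produced is already the whole W-cell of its min-cut structure on $\mathring{\tgm{\N}}$, in which case we are done with $(\htgm{\N},\hat{\mathfrak{m}})=(\mathring{\tgm{\N}},\mathring{\mathfrak{m}})$, or that W-cell is higher-dimensional, in which case it still has extreme rays (now in a strictly smaller edge space) and we repeat. Since each iteration strictly decreases the number of edges $\E$, this terminates; the degenerate endpoint (empty edge set, as discussed at the end of \cref{ssec:vanishing}) trivially has a single $0$-dimensional W-cell, but in fact the min-cut subspace being $1$-dimensional and spanned by a genuine entropy vector with positive entries prevents the edge set from collapsing entirely, so the process stops at some class with a genuine single-ray W-cell. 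The resulting class $(\htgm{\N},\hat{\mathfrak{m}})$ then satisfies $\mathbb{S}(\htgm{\N},\hat{\mathfrak{m}})=\mathbb{S}(\tgm{\N},\mathfrak{m})$ and $\mathcal{W}(\htgm{\N},\hat{\mathfrak{m}})$ is a single ray, as required.

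I expect the genuinely delicate point to be bookkeeping in the vanishing-weight step: one has to verify carefully that after deleting $E_0$ the min-cut \emph{sets} $\mathring{\mathscr{U}}_{\I}$ are still well-defined (no spurious new min-cuts appear, and no previously-degenerate subsystem becomes problematic once the $E_0$ terms are cancelled from the degeneracy equations), and that the surviving degeneracy equations plus strict positivity indeed cut out exactly the reduced ray. All of this is made legitimate by the construction in \cref{ssec:vanishing}, so the argument is really an assembly of \cref{lem:Wray}, \cref{lem:VfromW}, and the reduction procedure, together with the rank-one observation that forces every ray of $\mathcal{W}$ to map onto the same line $\mathbb{S}$.
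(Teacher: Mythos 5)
Your overall strategy matches the paper's: locate an extreme ray of $\overline{\mathcal{W}}_{\mathfrak{m}}$, apply \cref{lem:Wray} when it is nowhere-zero, and otherwise first delete the zero-weight edges via the reduction of \cref{ssec:vanishing}. However, there is a genuine gap in the step where you say ``pick \emph{any} extreme ray $\overline{\mathbf{x}}_{\mathcal{W}}$.'' The rank-one observation only tells you that $\Gamma$ maps every point of $\overline{\mathcal{W}}_{\mathfrak{m}}$ \emph{into} the line $\mathbb{S}$, i.e.\ $\Gamma\mathbf{w}'=\lambda_{\mathbf{w}'}\mathbf{S}$ with $\lambda_{\mathbf{w}'}\geq 0$; it does not exclude $\lambda_{\mathbf{w}'}=0$ on the boundary. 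An extreme ray with vanishing components can be annihilated by $\Gamma$, in which case your reduction produces a class whose min-cut subspace is $\{\mathbf{0}\}$ rather than $\mathbb{S}$, and your claim that ``$\Gamma(\mathbb{W})$ and $\mathring{\Gamma}(\mathring{\mathbb{W}})$ are the same $1$-dimensional subspace'' fails. This is not a hypothetical: in the worked example following the lemma (the W-cell of \cref{eq:largeWcelleg} with the map $\Gamma$ whose only nonzero column is the $O\sigma$ one), the extreme ray $(1,0,0)$ is mapped to the null vector, and deleting the corresponding edges leaves a graph all of whose entropies vanish. The missing ingredient is the existence argument the paper supplies: any interior weight vector is a conical combination of the extreme rays of $\overline{\mathcal{W}}_{\mathfrak{m}}$ and is mapped to a \emph{strictly} positive multiple of $\mathbf{S}$, so at least one extreme ray must satisfy $\lambda>0$, and it is one of \emph{those} that you must select before reducing.

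A secondary, smaller issue: your iteration (``repeat if the W-cell after deletion is still higher-dimensional'') is unnecessary and somewhat undercuts the argument. After deleting $E_0$, the reduced ray is determined by the same degeneracy equations and min-cut inequalities adapted to $\mathring{\tgm{\N}}$, hence it is still a nowhere-zero extreme ray of the closure of the relevant W-cell on the reduced graph, and \cref{lem:Wray} applies in one step — no induction on $\E$ is needed. If you do insist on iterating, you would have to re-establish the $\lambda>0$ selection at every stage, which your write-up never does even once.
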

\begin{proof}
Consider an arbitrary class $(\tgm{\N},\mathfrak{m})$ such that the min-cut subspace is $1$-dimensional, and suppose its W-cell $\mathcal{W}_{\mathfrak{m}}$ is \textit{not} a single ray. Since $\mathbb{S}(\tgm{\N},\mathfrak{m})$ is $1$-dimensional, the S-cell on the other hand is just a single ray, which we denote by $\mathbf{S}$. For any choice of $\Gamma$ (fixed by a choice of representative min-cuts in $\mathfrak{m}$) we then have
\begin{equation}
    \Gamma\mathbf{w}=\lambda_\mathbf{w}\mathbf{S}\qquad \forall\,\mathbf{w}\in\mathcal{W}_{\mathfrak{m}}
\end{equation}
where $\lambda_\mathbf{w}>0$ is a scaling factor that depends on $\mathbf{w}$. By linearity this implies that
\begin{equation}
\label{eq:Wboundaryimage}
    \Gamma\mathbf{w}'=\lambda_{\mathbf{w}'}\mathbf{S}\qquad \forall\,\mathbf{w}'\in\partial\mathcal{W}_{\mathfrak{m}}
\end{equation}
where $\partial\mathcal{W}_{\mathfrak{m}}=\overline{\mathcal{W}}_{\mathfrak{m}}\setminus\mathcal{W}_{\mathfrak{m}}$ denotes the boundary of $\mathcal{W}_{\mathfrak{m}}$, and $\lambda_{\mathbf{w}'}\geq 0$ is a new rescaling factor which depends on $\mathbf{w}'$. Notice that in general $\lambda_{\mathbf{w}'}$ can now vanish. However, any $\mathbf{w}\in\overline{\mathcal{W}}_{\mathfrak{m}}$ can be written as a conical combination of the extreme rays of $\overline{\mathcal{W}}_{\mathfrak{m}}$. Therefore, since for any $\mathbf{w}$ we have $\lambda_\mathbf{w}>0$, there must exist at least one extreme ray $\mathbf{x}_{\mathcal{W}_{\mathfrak{m}}}\in\partial\mathcal{W}_{\mathfrak{m}}$ such that if we apply \cref{eq:Wboundaryimage}, with $\mathbf{w}'=\mathbf{x}_{\mathcal{W}_{\mathfrak{m}}}$, we have $\lambda_{\mathbf{w}'}>0$.

We now consider one the extreme rays which satisfy this condition, and we have to distinguish two cases, depending on whether $\mathbf{x}_{\mathcal{W}_{\mathfrak{m}}}$ has some vanishing components or not. 

If $\mathbf{x}_{\mathcal{W}_{\mathfrak{m}}}$ has no vanishing components, it follows from \cref{lem:Wray} that it is by itself a W-cell $\mathcal{W}_{\hat{\mathfrak{m}}}$ for some min-cut structure $\hat{\mathfrak{m}}$ on $\tgm{\N}$. Furthermore, by \cref{eq:Wboundaryimage}, $\mathbf{x}_{\mathcal{W}_{\mathfrak{m}}}$ is mapped by $\Gamma$ to the same S-cell as all other weight vectors in $\mathcal{W}_{\mathfrak{m}}$. Therefore $(\htgm{\N},\hat{\mathfrak{m}})$, with $\htgm{\N}=\tgm{\N}$, has the same min-cut subspace as $(\tgm{\N},\mathfrak{m})$.

On the other hand, if $\mathbf{x}_{\mathcal{W}_{\mathfrak{m}}}$ has one or more vanishing components, we can first use the reduction described in \cref{ssec:vanishing} (we simply delete the edges with vanishing weights) to obtain a new graph $\htgm{\N}$ and a new ray $\hat{\mathbf{x}}_{\mathcal{W}_\mathfrak{m}}$ which has no vanishing weights. Since $\hat{\mathbf{x}}_{\mathcal{W}_\mathfrak{m}}$ is determined by all the original degeneracy equations and inequalities that determined $\mathbf{x}_{\mathcal{W}_{\mathfrak{m}}}$, now adapted to the new graph, it is still an extreme ray of the closure of a W-cell $\mathcal{W}_{\hat{\mathfrak{m}}}$ for a min-cut structure $\hat{\mathfrak{m}}$ on $\htgm{\N}$, and therefore by \cref{lem:Wray} a W-cell by itself. Finally, we just need to verify that any new map $\hat{\Gamma}$ defined for $\mathcal{W}_{\hat{\mathfrak{m}}}$ on $\htgm{\N}$ will map $\hat{\mathbf{x}}_{\mathcal{W}_{\hat{\mathfrak{m}}}}$ to the same $\mathbf{S}$. To see this, notice that starting from our initial choice of $\Gamma$ for the class $(\tgm{\N},\mathfrak{m})$, we can obtain a valid choice of $\hat{\Gamma}$ for $(\htgm{\N},\hat{\mathfrak{m}})$ by deleting the columns corresponding to the edges that we have removed from $\tgm{\N}$. The equality
\begin{equation}
    \hat{\Gamma}\; \hat{\mathbf{x}}_{\mathcal{W}_{\hat{\mathfrak{m}}}}=\Gamma\; \mathbf{x}_{\mathcal{W}_\mathfrak{m}}
\end{equation}
then simply follows from the fact that the columns which have been removed from $\Gamma$ are precisely the columns which were multiplied by the vanishing components of $\mathbf{x}_{\mathcal{W}_\mathfrak{m}}$.
\end{proof}

\begin{figure}[tb]
    \centering
    \begin{subfigure}{0.3\textwidth}
    \centering
    \begin{tikzpicture}

	\tikzmath{
	\edgelabelsize=0.7;
		\v =0.1;  
	}
	\coordinate (Av) at (2,0.5);  
	\coordinate (Bv) at (3,2);  
	\coordinate (Ov) at (2,3.5);  
	\coordinate (bv) at (1,2);  

	\draw[edgestyle] (Av) -- node[scale=\edgelabelsize,edgeweightcolor,left,midway]{$1$} (bv);
	\draw[edgestyle] (Ov) -- node[scale=\edgelabelsize,edgeweightcolor,left,midway]{$1$} (bv);
	\draw[edgestyle] (Av) -- node[scale=\edgelabelsize,edgeweightcolor,right,midway]{$1$} (Ov);
	
	\filldraw [color=bvcolor] (bv) circle (\v) node[bvcolor!50!black,left=2pt]{$\sigma$} ;
	\filldraw [color=Acolor] (Av) circle (\v) node[Acolor!50!black,right=2pt]{$A$} ;
	\filldraw [color=Bcolor] (Bv) circle (\v) node[Bcolor!50!black,right=2pt]{$B$} ;
	\filldraw [color=Ocolor] (Ov) circle (\v) node[Ocolor!50!black,right=2pt]{$O$} ;

\end{tikzpicture} 
\subcaption[]{}
\label{fig:Akeylemma}
    \end{subfigure}
    \begin{subfigure}{0.3\textwidth}
    \centering
    \begin{tikzpicture}

	\tikzmath{
	\edgelabelsize=0.7;
		\v =0.1;  
	}
	\coordinate (Av) at (2,0.5);  
	\coordinate (Bv) at (3,2);  
	\coordinate (Ov) at (2,3.5);  
	\coordinate (bv) at (1,2);  

	\draw[edgestyle] (Av) -- node[scale=\edgelabelsize,edgeweightcolor,left,midway]{$1$} (bv);
	\draw[edgestyle] (Ov) -- node[scale=\edgelabelsize,edgeweightcolor,left,midway]{$1$} (bv);

	\filldraw [color=bvcolor] (bv) circle (\v) node[bvcolor!50!black,left=2pt]{$\sigma$} ;
	\filldraw [color=Acolor] (Av) circle (\v) node[Acolor!50!black,right=2pt]{$A$} ;
	\filldraw [color=Bcolor] (Bv) circle (\v) node[Bcolor!50!black,right=2pt]{$B$} ;
	\filldraw [color=Ocolor] (Ov) circle (\v) node[Ocolor!50!black,right=2pt]{$O$} ;

\end{tikzpicture} 
\subcaption[]{}
\label{fig:Bkeylemma}
    \end{subfigure}
        \begin{subfigure}{0.3\textwidth}
    \centering
    \begin{tikzpicture}

	\tikzmath{
	\edgelabelsize=0.7;
		\v =0.1;  
	}
	\coordinate (Av) at (2,0.5);  
	\coordinate (Bv) at (3,2);  
	\coordinate (Ov) at (2,3.5);  
	\coordinate (bv) at (1,2);  

	\draw[edgestyle] (Av) -- node[scale=\edgelabelsize,edgeweightcolor,right,midway]{$1$} (Ov);
	
	\filldraw [color=bvcolor] (bv) circle (\v) node[bvcolor!50!black,left=2pt]{$\sigma$} ;
	\filldraw [color=Acolor] (Av) circle (\v) node[Acolor!50!black,right=2pt]{$A$} ;
	\filldraw [color=Bcolor] (Bv) circle (\v) node[Bcolor!50!black,right=2pt]{$B$} ;
	\filldraw [color=Ocolor] (Ov) circle (\v) node[Ocolor!50!black,right=2pt]{$O$} ;
	
\end{tikzpicture} 
\subcaption[]{}
\label{fig:Ckeylemma}
    \end{subfigure}
    \caption{An example illustrating \cref{lem:key} and the reduction described in \cref{ssec:vanishing} for vanishing weights. The graph (a) is our starting point. The graphs in (b) and (c) are obtained by deleting from (a) the edges that correspond to the vanishing entries of the extreme rays of the closure of the W-cell specified by (a), cf. \cref{eq:2dW1dV}.}
    \label{fig:keylemma}
\end{figure}
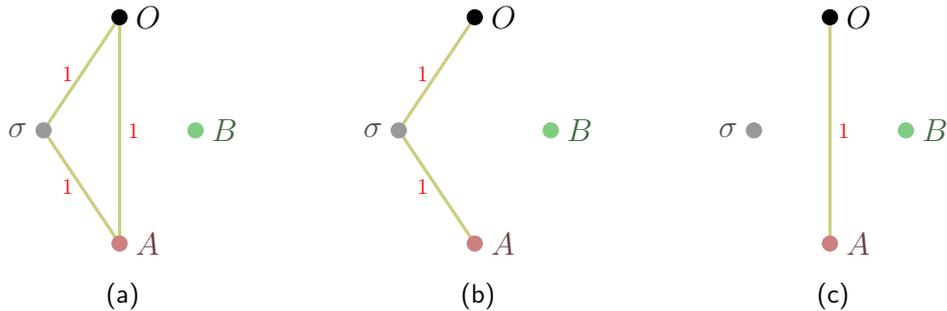

As an example of this lemma and the reduction procedure for vanishing weights presented in \cref{ssec:vanishing}, consider the graph model in \cref{fig:Akeylemma}. The min-cut structure specified by the chosen weights is
\begin{equation}
    \mathscr{U}_{A}=\{\{A\},\{A,\sigma\}\},\qquad \mathscr{U}_{B}=\{\{B\}\},\qquad \mathscr{U}_{AB}=\{\{A,B\},\{A,B,\sigma\}\}
\end{equation}
and there is only one degeneracy equation (it is the same for the indices $A$ and $AB$)
\begin{equation}
    w_{_{AO}}+w_{_{A\sigma}} = w_{_{AO}}+w_{_{O\sigma}}
\end{equation}
The solution to this degeneracy equation is the $2$-dimensional subspace
\begin{equation}
    \mathbb{W}=(0,1,-1)^{\perp}\subset\mathbb{R}^3
\end{equation}
where we ordered the weights according to $(AO,A\sigma,O\sigma)$. The W-cell is the interior of the following polyhedral cone in $\mathbb{W}$ (written as embedded in $\mathbb{R}^3_{>0}$)
\begin{equation}
\label{eq:2dW1dV}
    \text{cone}\,\{(0,1,1),(1,0,0)\}
\end{equation}
Its image under the map
\begin{equation}
\label{eq:2dW1dVmap}
    \Gamma=\left(
    \begin{tabular}{ccc}
        1 & 1 & 0 \\
        0 & 0 & 0 \\
        1 & 0 & 1 \\
    \end{tabular}
    \right)
\end{equation}
fixed by the choice of representatives $\mU_{A}=\{A\}$ and $\mU_{AB}=\{A,B,\sigma\}$, is the S-cell, which is just the single ray (it is the extreme ray of the HEC$_2$ corresponding to a Bell pair for the subsystem $AO$)
\begin{equation}
\label{eq:Scell2dW1dVmap}
    \mathcal{S}=\lambda\, (1,0,1),\qquad \lambda>0
\end{equation}
since both elements of \cref{eq:2dW1dV} are mapped to the same vector by \cref{eq:2dW1dVmap}. 

We then have a situation where an equivalence class has a $1$-dimensional min-cut subspace and a $2$-dimensional W-cell, and we can therefore apply \cref{lem:key}. The extreme rays of the closure of this W-cell are given in \cref{eq:2dW1dV}, and since neither of them is nowhere-zero, we first need to apply the reduction described in \cref{ssec:vanishing}. Deleting from the graph in \cref{fig:Akeylemma} the edges corresponding to the vanishing entries of the vectors in \cref{eq:2dW1dV}, we obtain the graphs shown in \cref{fig:Bkeylemma} and \cref{fig:Ckeylemma}. The non-zero entries of the same vectors specify the weights of the remaining edges. The W-cell of the min-cut structure for the graph in \cref{fig:Ckeylemma} is obviously $1$-dimensional (since $\E=1$), in agreement with \cref{lem:key}. On the other hand, for the graph in \cref{fig:Bkeylemma}, the space of edge weights is $2$-dimensional. However, there is still a degeneracy, since there are two options for the min-cut for $A$, namely $\{A\}$ and $\{A,\sigma\}$. Therefore, also for this graph, the W-cell is $1$-dimensional, again in agreement with \cref{lem:key}. Notice that the generators of these $1$-dimensional W-cells are obtained by deleting the vanishing components of the vectors in \cref{eq:2dW1dV}, and thus are simply $(1,1)$ and $(1)$. Finally, the new maps $\hat{\Gamma}$ for the new min-cut structures on the new graphs in \cref{fig:Bkeylemma} and \cref{fig:Ckeylemma} are respectively
\begin{equation}
    \hat{\Gamma}=\left(
    \begin{tabular}{cc}
        1 & 0 \\
        0 & 0 \\
        0 & 1 \\
    \end{tabular}
    \right) \qquad \text{and}\qquad
    \hat{\Gamma}=\left(
    \begin{tabular}{c}
        1 \\
        0 \\
        1 \\
    \end{tabular}
    \right)
\end{equation}
which are obtained by deleting the columns of $\Gamma$ from \cref{eq:2dW1dVmap} which correspond to the vanishing components of the vectors in \cref{eq:2dW1dV}. It is then immediate to verify that once applied to the aforementioned generators of the new $1$-dimensional W-cells, these maps give the same S-cell (\ref{eq:Scell2dW1dVmap}) as the original graph from \cref{fig:keylemma}, and therefore also the same-min-cut subspace. 

Similarly to W-cells, S-cells also correspond to the interior of a polyhedral cone, and it is again interesting to explore situations where an (open) extreme ray of the closure of an S-cell is an S-cell by itself.\footnote{\, Notice that since S-cells do not form a partition of the HEC, an (open) extreme ray of the closure of an S-cell can in principle be an S-cell by itself, even if it is also in the interior of a different S-cell. This type of questions, like the one mentioned in footnote \ref{ftn:Wstructures}, is related to investigations of substructures of the HEC which are beyond the scope of this work.} We will not answer this question in general, but we will construct a class of examples where this is the case,\footnote{\, One can construct these examples by building disconnected graphs starting from graph realizations of the extreme rays of the HEC (see below).} starting from the prototypical situations where S-cells are $1$-dimensional, i.e., graph models realizing the extreme rays of the HEC. 

Let us begin by first proving a basic fact about a given graph model $\gm{\N}$ and the face\footnote{ \, 
Recall that a face can have any dimension $0\leq d\leq \D$.}
of the HEC$_{\N}$ that contains the entropy vector of $\gm{\N}$:      
\begin{nthm}\label{thm:deg}
    Given a graph model $\gm{\N}$ and a face $\overline{\mathcal{F}}_d$ of the \emph{HEC}$_{\N}$ such that the entropy vector $\mathbf{S}(\gm{\N})$ belongs to $\mathcal{F}_d$, the minimal supporting linear subspace $\mathbb{F}_d$ of $\overline{\mathcal{F}}_d$ contains the min-cut subspace $\mathbb{S}(\tgm{\N},\mathfrak{m})[\gm{\N}]$.
\end{nthm}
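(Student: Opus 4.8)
The plan is to show that every entropy vector in the S-cell $\mathcal{S}(\tgm{\N},\mathfrak{m})[\gm{\N}]$ lies in $\mathbb{F}_d$, since then the minimal linear subspace $\mathbb{S}$ spanned by the S-cell is forced to sit inside $\mathbb{F}_d$. The key idea is to move inside the W-cell $\mathcal{W}(\tgm{\N},\mathfrak{m})$ and argue that the whole image under $\Gamma$ stays on the face $\overline{\mathcal{F}}_d$. First I would recall that $\overline{\mathcal{F}}_d$ is a face of the \emph{HEC}$_{\N}$, hence a closed convex cone, and that its minimal supporting subspace $\mathbb{F}_d$ is by definition $\text{Span}(\overline{\mathcal{F}}_d)$; in particular any entropy vector of any graph model that happens to land in $\overline{\mathcal{F}}_d$ automatically lands in $\mathbb{F}_d$. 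So it suffices to prove that the entire S-cell is contained in $\overline{\mathcal{F}}_d$ (indeed in $\mathcal{F}_d$, but containment in the closure is all we need to conclude the subspace statement).

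Next I would exploit the defining property of a face: there is a valid holographic entropy inequality $\sum_{\I} q_{\I} S_{\I} \ge 0$ (a supporting hyperplane $\mathbb{H}$ of the \emph{HEC}$_{\N}$) such that $\overline{\mathcal{F}}_d = \text{HEC}_{\N}\cap\mathbb{H}$, and the hypothesis says $\mathbf{S}(\gm{\N})$ saturates it, i.e.\ $\sum_{\I}q_{\I}S_{\I}(\gm{\N})=0$. Now take any other graph model $\gm{\N}'$ in the same equivalence class $(\tgm{\N},\mathfrak{m})$, with weight vector $\mathbf{w}'\in\mathcal{W}(\tgm{\N},\mathfrak{m})$. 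Its entropy vector is $\Gamma\mathbf{w}'$ for a fixed choice of representatives $\Gamma=\Gamma^{\{\alpha\}}$, and similarly $\mathbf{S}(\gm{\N})=\Gamma\mathbf{w}$ for the representative weight vector $\mathbf{w}$ of $\gm{\N}$. The crucial observation is that the segment from $\mathbf{w}$ to $\mathbf{w}'$ — or more conveniently the point $\mathbf{w}+t(\mathbf{w}'-\mathbf{w})$ — stays in the convex cone $\mathcal{W}$ for $t\in[0,1]$ (W-cells are convex, being interiors of polyhedral cones), so its image $\Gamma(\mathbf{w}+t(\mathbf{w}'-\mathbf{w}))$ is a valid holographic entropy vector for all such $t$, hence lies in the \emph{HEC}$_{\N}$ and satisfies $\sum_{\I}q_{\I}S_{\I}\ge 0$. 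Since $\Gamma$ is linear, the function $t\mapsto \sum_{\I}q_{\I}(\Gamma(\mathbf{w}+t(\mathbf{w}'-\mathbf{w})))_{\I}$ is affine in $t$, is $\ge 0$ on $[0,1]$ (in fact on a neighborhood, since $\mathbf{w}$ is an interior point of $\mathcal{W}$ and we may extend slightly past $t=0$), and equals $0$ at $t=0$; an affine function that is nonnegative on an interval and vanishes at an interior point of that interval is identically zero. Therefore $\Gamma\mathbf{w}'$ also saturates the inequality, i.e.\ $\mathbf{S}(\gm{\N}')\in\overline{\mathcal{F}}_d\subseteq\mathbb{F}_d$.

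This shows $\mathcal{S}(\tgm{\N},\mathfrak{m})[\gm{\N}]\subseteq\mathbb{F}_d$, and since $\mathbb{F}_d$ is a linear subspace, $\mathbb{S}(\tgm{\N},\mathfrak{m})[\gm{\N}]=\text{Span}(\mathcal{S})\subseteq\mathbb{F}_d$, which is the claim. One subtlety to handle carefully: the map $\Gamma=\Gamma^{\{\alpha\}}$ depends on a choice of representative min-cuts when $\mathfrak{m}$ is degenerate, but by \cref{eq:repindependence} all such choices agree on $\mathcal{W}$, so the argument is insensitive to the choice; equivalently one can just use the actual entropy map, which on $\mathcal{W}$ coincides with every $\Gamma^{\{\alpha\}}|_{\mathcal{W}}$. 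A second point worth a sentence: one must make sure the extension of $t$ slightly below $0$ keeps the point in $\mathcal{W}$ — this is exactly where we use that $\mathbf{w}$ lies in the \emph{interior} of the W-cell (every graph model's weight vector does, by definition of the equivalence class), guaranteeing that the affine function vanishes at a genuine interior point of an interval on which it is nonnegative.

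I expect the main obstacle to be purely expository rather than mathematical: being careful that "$\mathbf{S}(\gm{\N})\in\mathcal{F}_d$" (the \emph{relative interior} of the face) is used only to know which supporting hyperplane to pick and that, once picked, the nonnegativity-plus-interior-zero argument forces saturation for the whole W-cell image; and making the degenerate-$\Gamma$ bookkeeping clean by invoking \cref{eq:repindependence}. The convexity of $\mathcal{W}$ and the linearity of $\Gamma$ do all the real work, so no hard estimates are needed.
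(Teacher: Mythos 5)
Your proof is correct and follows essentially the same route as the paper's: both arguments fix the supporting hyperplane $\mathbb{H}$ with $\overline{\mathcal{F}}_d=\mathbb{H}\cap\text{HEC}_{\N}$, use that the whole S-cell lies in the HEC$_{\N}$ (hence on one side of $\mathbb{H}$), and then use an openness/interior-point argument to force the entire S-cell onto $\mathbb{H}$ before taking spans. The only cosmetic difference is that you run the interiority argument upstream in the W-cell via convexity and an affine function in $t$, whereas the paper asserts directly that $\mathcal{S}$ is (relatively) open; your version is, if anything, slightly more careful on that point.
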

\begin{proof}
    Given a graph model $\gm{\N}$, all entropy vectors in the S-cell $\mathcal{S}$ of $(\tgm{\N},\mathfrak{m})[\gm{\N}]$ can be realized by simply varying the weights in $\gm{\N}$, so clearly $\mathcal{S}\subseteq\text{HEC}_{\N}$. Furthermore, $\mathcal{S}$ is an open set, since it is the image under a linear map of an open set in the space of weights (the W-cell). By definition, a face $\overline{\mathcal{F}}_d$ of the HEC$_{\N}$ is $\overline{\mathcal{F}}_d=\mathbb{H}\cap\text{HEC}_{\N}$, where $\mathbb{H}$ is a certain hyperplane in $\mathbb{R}^{\D}$ such that the HEC$_{\N}$ is entirely contained in one of the two half-spaces specified by $\mathbb{H}$ and in $\mathbb{H}$ itself. Since $\mathcal{S}$ is open, and by assumption $\mathcal{F}_d\cap\mathcal{S}$ is non-empty, it must be that $\mathcal{S}\subset \overline{\mathcal{F}}_d$, otherwise there would exist elements of $\mathcal{S}$ on both sides of $\mathbb{H}$, contradicting the fact that $\mathcal{S}\subseteq\text{HEC}_{\N}$ and $\overline{\mathcal{F}}_d$ is a face of the HEC$_{\N}$. Therefore $\mathbb{S}=\text{Span}(\mathcal{S})\subset\text{Span}(\overline{\mathcal{F}}_d)=\mathbb{H}$.
\end{proof}
This immediately leads to the following corollary:
\begin{ncor}
\label{cor:ext}
    A graph model $\gm{\N}$ realizes an extreme ray of the \emph{HEC}$_{\N}$ only if the min-cut subspace $\mathbb{S}(\tgm{\N},\mathfrak{m})[\gm{\N}]$ is $1$-dimensional.
\end{ncor}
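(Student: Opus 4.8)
The plan is to obtain this as an immediate consequence of \cref{thm:deg}. Suppose $\gm{\N}$ realizes an extreme ray of $\text{HEC}_{\N}$; since $\text{HEC}_{\N}$ is a pointed polyhedral cone, this extreme ray is a $1$-dimensional face $\overline{\mathcal{F}}_1$, whose interior $\mathcal{F}_1$ is the open ray and whose minimal supporting linear subspace $\mathbb{F}_1$ is the line it spans, so $\dim \mathbb{F}_1 = 1$. Because $\gm{\N}$ realizes this ray, its entropy vector $\mathbf{S}(\gm{\N})$ is a point of $\overline{\mathcal{F}}_1$ distinct from the origin, hence $\mathbf{S}(\gm{\N}) \in \mathcal{F}_1$. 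Applying \cref{thm:deg} with $d = 1$ then gives $\mathbb{S}(\tgm{\N},\mathfrak{m})[\gm{\N}] \subseteq \mathbb{F}_1$, so $\dim \mathbb{S}(\tgm{\N},\mathfrak{m})[\gm{\N}] \le 1$.

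For the reverse inequality I would just unwind the definitions from \cref{subsec:equivclasses}: the S-cell $\mathcal{S}(\tgm{\N},\mathfrak{m})[\gm{\N}]$ contains $\mathbf{S}(\gm{\N})$, since $\gm{\N}$ is a representative of its own equivalence class, and the min-cut subspace is by definition the span of the S-cell; therefore $\mathbf{S}(\gm{\N}) \in \mathbb{S}(\tgm{\N},\mathfrak{m})[\gm{\N}]$. As noted above $\mathbf{S}(\gm{\N}) \ne \mathbf{0}$ by pointedness of $\text{HEC}_{\N}$, so $\dim \mathbb{S}(\tgm{\N},\mathfrak{m})[\gm{\N}] \ge 1$. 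Combining the two bounds yields $\dim \mathbb{S}(\tgm{\N},\mathfrak{m})[\gm{\N}] = 1$, which is the claim.

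I do not expect a genuine obstacle here: the statement is essentially a restatement of \cref{thm:deg} in the special case where the face is an extreme ray, and the only point that warrants any care is checking that $\mathbf{S}(\gm{\N})$ is nonzero (so that the min-cut subspace is at least one-dimensional rather than possibly trivial), which follows at once from the fact that $\text{HEC}_{\N}$ contains no nontrivial linear subspace. Everything else is bookkeeping with the constructions already in place. Note also that the converse of the corollary need not hold, so the statement is correctly phrased as a one-directional implication.
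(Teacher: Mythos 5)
Your proof is correct and is essentially the paper's own argument: the paper simply cites \cref{thm:deg} applied to the case where the face is an extreme ray, exactly as you do. Your additional check that $\mathbf{S}(\gm{\N})\neq\mathbf{0}$ (so the min-cut subspace is not $0$-dimensional) is a small but legitimate piece of diligence that the paper leaves implicit.
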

\begin{proof}
    This is just Theorem \ref{thm:deg} applied to the specific case where the face $\overline{\mathcal{F}}_1$ is an extreme ray of the HEC$_{\N}$.
\end{proof}

This result will play a central role in the reconstruction of the HEC discussed in \cref{sec:logic}. It can be seen to be equivalent to Theorem $2(b)$ of \cite{Avis:2021xnz}, where it was proven using a different setting wherein $1$-dimensional min-cut subspaces constitute the extreme rays of a polyhedral cone whose facet description is known. Because of this, were the converse of \cref{cor:ext} to hold, it would provide an explicit derivation of the HEC via direct computation of its extreme rays. Unfortunately though, the converse of \cref{cor:ext} is in fact false. Namely, a graph model with a $1$-dimensional min-cut subspace is not guaranteed to yield an extreme ray of the HEC, as also shown by \cite{Avis:2021xnz} through a counterexample in appendix B therein.\footnote{\, See \cite{Avis:2021xnz} for more details on how \cref{cor:ext} can nonetheless be used to construct the HEC.} An example of \cref{cor:ext} is instead the graph model $\gm{3}$ from \cref{sssec:gmdeg} which realizes the $4$-party perfect state extreme ray of the HEC$_3$ (cf. \cref{fig:exampleone}).

\subsection{Disconnected graphs}
\label{subsec:disconnected}

In this subsection we comment on the construction of a topological graph model and min-cut structure $(\tgm{\N},\mathfrak{m})$ via the disjoint union of simpler \textit{building blocks}, and show how the W-cell, S-cell and min-cut subspace of the new graph can be obtained from those of its components. Conversely, the same analysis also clarifies how given an arbitrary class $(\tgm{\N},\mathfrak{m})$ where $\tgm{\N}$ is disconnected, one can decompose these structures according to the connected components of $\tgm{\N}$.

Let us begin with the simple situation where we consider an arbitrary collection of $k$ $\N$-party topological graph models and min-cut structures 
\begin{equation}
    \{(\tgm{\N}^i,\mathfrak{m}^i)\,, \forall i\in[k]\}
\end{equation}
where the boundary vertices of all topological graph models in the collection are colored with the same set of colors $[\N+1]$. We then construct a new topological graph model 
\begin{equation}
    \tgm{\N}=\bigoplus_{i\in[k]}\tgm{\N}^i
\end{equation}
and min-cut structure
\begin{equation}
    \mathfrak{m}=\{\bigcup_{i\in[k]}\mathscr{U}_{\I}^i\;\text{for all}\; \I\}
\end{equation}
The space of weights of the new graph is the direct sum of the weight spaces of the individual graphs 
\begin{equation}
    \mathbb{R}^{\E}=\bigoplus_{i\in[k]}\mathbb{R}^{\E^i}
\end{equation}
and since the weights on any given graph can be varied independently from those of the others, the W-cell is also a \textit{direct sum}\footnote{\, The direct sum of cones is the special case of the Minkowski sum (see below) where the individual cones are contained in orthogonal subspaces.} of W-cells
\begin{equation}
    \mathcal{W}(\tgm{\N},\mathfrak{m})=\bigoplus_{i\in[k]}\mathcal{W}(\tgm{\N}^i,\mathfrak{m}^i)
\end{equation}
The new S-cell on the other hand is not the direct sum of the individual S-cells, because the W-cells are not mapped to orthogonal subspaces of entropy space. However, since the weights can still be varied independently, the S-cell is the \textit{Minkowski sum}\footnote{\, The Minkowski sum of two sets of vectors $X,Y$ is the set of sums $\mathbf{x}+\mathbf{y}$ for all $\mathbf{x}\in X$, $\mathbf{y}\in Y$.} of the individual S-cells 
\begin{equation}
    \mathcal{S}(\tgm{\N},\mathfrak{m})=\bigplus_{i\in[k]}\mathcal{S}(\tgm{\N}^i,\mathfrak{m}^i)
\end{equation}
and the min-cut subspace is simply the sum of the individual subspaces 
\begin{equation}
    \mathbb{S}(\tgm{\N},\mathfrak{m})=\bigplus_{i\in[k]}\mathbb{S}(\tgm{\N}^i,\mathfrak{m}^i)
\end{equation}

In the particular case where the individual graphs realize $1$-dimensional min-cut subspaces, or equivalently where their S-cells are single rays, the S-cell of the composite graph is simply the conical hull of such rays.\footnote{\, As exemplified for \cref{cor:ext}, for each component graph there is only a single free variable, i.e., a global rescaling of the weights. The fact that the rays are open and the rescaling factors have to be strictly positive implies that the resulting S-cell is an open set (as it should be).} A special instance of this construction is when the individual S-cells correspond to extreme rays of the HEC$_{\N}$. For example, by considering the full list of extreme rays one can construct a new graph whose S-cell is the interior of the HEC$_{\N}$. But by taking different collections of extreme rays, one can also use this procedure to construct graphs whose S-cells overlap, clarifying as we anticipated in the previous section that S-cells do not form a partition of the HEC$_{\N}$.\footnote{\, For this last construction the cone has to be non-simplicial, i.e., the number of extreme rays should be strictly larger than its dimension, which happens to be the case for any $\N\geq 4$.} Moreover, the same type of construction can also be used to generate graphs with different S-cells but same min-cut subspace, for example by considering two collections of extreme rays which span the supporting subspace of a given face of the HEC$_{\N}$.

We now want to extend this construction to more general situations, where the boundary vertices of each building block may be colored by a different set of colors. Notice that in the most general scenario the sets of colors of any two building blocks may or may not intersect, even if the boundary vertices of the two topological graph models are labeled by the same number of colors.\footnote{\, For example, we could have a topological graph model $\tgm{2}=\tgm{2}^1\oplus\tgm{2}^2$ where the boundary vertices of $\tgm{2}^1$ are labeled by $\{A,B,O\}$ and those of $\tgm{2}^2$ by $\{A,C,O\}$. Furthermore, if we imagine that the collection of building blocks is the set of connected components of a larger topological graph model that we want to decompose, there is only a single purifier, which does not even need to appear in each component.} To deal with this type of situations, it will be convenient to transform all the building blocks in such a way that their boundary vertices are labeled by precisely the same set of colors. This operation however changes the number of parties, and therefore the dimension of entropy space of a given building block. Because of this, before we can apply the procedure described above, we first need to make sure that we are working in the same space, and we will achieve this by embedding the entropy space of a building block into the entropy space of the entire collection in a precise way.\footnote{\, Of course this step is only necessary in the composition of building blocks, and not when we want to analyse the components of a given, disconnected, topological graph model.} 

At same time, the transformation of a given building block should be designed in such a way that the relevant data is not altered in any way, i.e., the same embedding should map the S-cell and min-cut subspace of the original building block to the new the S-cell and min-cut subspace obtained after the transformation. Since this simple construction will be particularly useful in later sections, we explain it in detail for a single graph. 

Consider a topological graph model $\tgm{\N}$ (which by itself does not have to be connected), and suppose we want to use it as a building block in a collection that comprises $\N'$ colors, with $\N'>\N$. We define:
\begin{ndefi}[Standard lift of a topological graph model]
\label{def:tgmlift}
The standard lift of a topological graph model $\tgm{\N}$ to $\N'$ parties, with $\N'>\N$, is the topological graph model $\tgm{\N'}'$ obtained from $\tgm{\N}$ by first labeling by $\N'+1$ all the boundary vertices in $\tgm{\N}$ which were initially labeled by $\N+1$, and then adjoining $\N'-\N$ disconnected boundary vertices labeled by the new colors in $\{\N+1,\ldots,\N'\}$.
\end{ndefi}

A standard lift of a topological graph model $\tgm{\N}$ to $\N'$ parties is naturally associated to the following embedding of $\mathbb{R}^{\D}$ into $\mathbb{R}^{\D'}$

\begin{ndefi}[Standard embedding]
\label{def:uncorembed}
Given $\N$ and $\N'>\N$, the standard embedding of  $\mathbb{R}^{\D}$ into $\mathbb{R}^{\D'}$ is the embedding specified by the following equations
\begin{equation}
    S_{\I'}=
    \begin{cases}
    S_{\I'\cap[\N]}     & \forall\,\I'\;\;\text{\emph{s.t.}}\;\; \I'\cap[\N]\neq\emptyset\\
    0                    & \text{\emph{otherwise}}
    \end{cases}
\end{equation}
\end{ndefi}

We leave it as an exercise for the reader to verify that given a topological graph model $\tgm{\N}$ and its standard lift to $\N'$ parties, the S-cell of any min-cut structure, is the standard embedding of the S-cell for the equivalent min-cut structure on $\tgm{\N}$. Notice that this also implies the analogous result for min-cut subspaces.

Given an arbitrary collection of building blocks, we can then apply the standard lift procedure to each one of them, to obtain a new (equivalent) collection where all building blocks have precisely the same set of colors. The W-cells, S-cells and min-cut subspaces of the various min-cut structures can then be obtained as described above. Finally, given the results of this subsection, in what follows we will typically focus our attention on connected graphs (most notably in \cref{subsec:simpletreepmi}). It should however always be clear that we can use the construction presented here to generalize any result obtained for connected graph to this more general scenario.

\section{Min-cut subspaces from marginal independence}
\label{sec:cone_marginal}

In the preceding section we have defined a particular subspace of the entropy space: the min-cut subspace of a graph model. This subspace provides a key construct to characterize the structure of the HEC since it naturally encapsulates the facets, as well as the extreme rays, of the cone. 
In the present section, we consider another class of subspaces of the entropy space: the ``patterns of marginal independence'' (PMI). As we will see, these new subspaces typically contain the min-cut subspaces properly, and one may expect them to be too coarse. Instead, it  will turn out that these subspaces are the ones that distill the essential information. 

We begin by briefly reviewing in \cref{subsec:pmireview} the formal definition of ``patterns of marginal independence'' from \cite{Hernandez-Cuenca:2019jpv}. We then define ``holographic PMIs'' for min-cut structures on topological graph models in \cref{subsec:hologpmi}. Finally in \cref{subsec:simpletreepmi} we show that when a topological graph model has the topology of a tree and all boundary vertices have a different color, the min-cut subspace for any min-cut structure coincides with the PMI. A more general analysis of the relation between min-cut subspaces and PMIs, which involves recolorings of boundary vertices, will be carried out in \cref{sec:recolorings}.

\subsection{Review of marginal independence and PMIs}
\label{subsec:pmireview}
 
 Consider an arbitrary $\N$-partite quantum system, described for convenience by a pure state $\ket{\psi}$ on a Hilbert space with $\N+1$ tensor factors, cf. \cref{eq:purification}. Given two non-overlapping subsystems $\underline{\I},\underline{\K}$ we can measure the total amount of correlation between the corresponding marginals $\rho_{\underline{\I}},\rho_{\underline{\K}}$ by evaluating their \textit{mutual information}\footnote{\, By non-overlapping we mean $\underline{\I}\cap\underline{\K}=\varnothing$. Notice that if $\underline{\I}\cup\underline{\K}=[\N+1]$ the mutual information reduces to twice the entropy of either subsystem.}
 \begin{equation}
 \label{eq:i2instance}
     {\sf{I}}(\underline{\I}:\underline{\K})
     \coloneqq
     S_{\underline{\I}}+S_{\underline{\K}}-S_{\underline{\I} \cup \underline{\K}} \ .
 \end{equation}
 If and only if the mutual information vanishes, the two subsystems are \textit{independent} and we have the factorization
 \begin{equation}
     \rho_{\underline{\I}\cup \underline{\K}}=\rho_{\underline{\I}}\otimes\rho_{\underline{\K}} \ .
 \end{equation}
 For any given $\N$-partite state, it is straightforward to determine which pairs of subsystems $(\underline{\I},\underline{\K})$ are independent, since one simply has to compute all possible instances of the mutual information. 
 
 Conversely, one can imagine specifying a certain pattern of independences by listing which pairs of subsystems are independent and which pairs are not (notice that a pattern is ``complete'', in the sense that each instance of the mutual information is demanded to either vanish or not). For a given pattern, one can then ask if there exists a density matrix which realizes it. This problem was first introduced in \cite{Hernandez-Cuenca:2019jpv} and dubbed the \textit{marginal independence problem}.

 One obvious restriction to the set of realizable patterns simply comes from the linear dependences among various instances of the mutual information. Since the total number of instances is greater than the dimension of entropy space, the instances are linearly dependent. Writing a particular instance as a linear combination of other instances 
 \begin{equation}
 \label{eq:linear_comb}
     \sf{I}(\underline{\I}:\underline{\K})=\sum_{\alpha}c_{\alpha}\, \sf{I}(\underline{\I}^{\alpha}:\underline{\K}^{\alpha})
 \end{equation}
 one immediately sees that if 
 \begin{equation}
 \label{eq:partialpattern}
     \sf{I}(\underline{\I}^{\alpha}:\underline{\K}^{\alpha})=0\qquad \forall \alpha
 \end{equation}
 then it must be that $ \sf{I}(\underline{\I}:\underline{\K})=0$. This type of implications restricts the set of meaningful patterns, since a pattern that requires all instances in \cref{eq:partialpattern} to vanish and $\sf{I}(\underline{\I}:\underline{\K})$ to be non-vanishing clearly can never be realized.
 
 A separate kind of restriction comes from physical constraints such as subadditivity. To see how instances of SA constrain the set of possibly realizable patterns, we can rewrite \cref{eq:linear_comb} as 
 \begin{equation}
 \label{eq:sign_linear_comb}
     \sf{I}(\underline{\I}:\underline{\K})=\sum_{\alpha_{+}}c_{\alpha_{+}}\sf{I}(\underline{\I}^{\alpha_{+}}:\underline{\K}^{\alpha_{+}})+\sum_{\alpha_{-}}c_{\alpha_{-}}\sf{I}(\underline{\I}^{\alpha_{-}}:\underline{\K}^{\alpha_{-}}),
 \end{equation}
 where $c_{\alpha_{+}}$ ($c_{\alpha_{-}}$) refer to the positive (negative) coefficients $c_{\alpha}$. 
 Suppose now that we partially specify a pattern by requiring that all $c_{\alpha_{+}}$ terms in \cref{eq:sign_linear_comb} vanish. Since SA implies that for any state each instance of the mutual information is non-negative, the only way to satisfy \cref{eq:sign_linear_comb} without violating SA is by also requiring that all other instances appearing in \cref{eq:sign_linear_comb} vanish. 
 
 Given the above fundamental restrictions,\footnote{ \, There are of course further physical restrictions such as SSA, but it will turn out that considering just SA is particularly useful.}
 \cite{Hernandez-Cuenca:2019jpv} defined the marginal independence problem by only considering patterns of independences which are consistent with linear dependences among the instances of the mutual information, and all the instances of SA. Geometrically, one can think of these consistent patterns as corresponding to the faces of the polyhedral cone in entropy space defined by subadditivity. More precisely, one first defines the \textit{subadditivity cone} (SAC) as follows (we already mentioned this object in \cref{ssec:cones} but we repeat the precise definition here for convenience)
 
 \begin{ndefi}[SAC]\label{def:SAcone}
    The $\N$-party \emph{subadditivity cone (SAC$_\N$)} is the polyhedral cone in $\mathbb{R}^{\D}$ obtained from the intersection of all the half-spaces specified by the inequalities $\sf{I}(\underline{\I}:\underline{\K})\geq 0$ for all pairs $(\underline{\I},\underline{\K})$ of non-intersecting subsystems.
\end{ndefi}

One then defines the \textit{patterns of marginal independence} (PMI) as specific linear subspaces of the entropy space which naturally characterize the instances of vanishing mutual information:

 \begin{ndefi}[Pattern of marginal independence]\label{def:patterns}
    A \emph{pattern of marginal independence (PMI)} is the linear supporting subspace $\mathbb{P}$ of a face of the subadditivity cone.
\end{ndefi}

The reason behind this definition is the following: any PMI is now a geometric object, corresponding to the intersection of a certain set of hyperplanes of the form $\sf{I}(\underline{\I}:\underline{\K}) = 0$, and by construction this set respects the linear dependences among the hyperplanes in the sense described above. Furthermore, since we are only considering subspaces that correspond to faces of the SA cone, any PMI is guaranteed to contain a region of entropy space (the face) such that all entropy vectors in this region respect all instances of SA. All naive patterns which do not respect the linear dependences among the instances of the mutual information, or that do not include any (non-trivial) region of entropy space whose elements respect all instances of SA, are automatically excluded. Notice in particular that all $1$-dimensional PMIs are generated by the extreme rays of the SAC$_{\N}$.

Even if \cref{def:patterns} excludes a large set of meaningless patterns, it does not guarantee that each PMI contains at least on entropy vector that can be realized by a density matrix. We will say that a PMI $\mathbb{P}$ is \textit{realizable} if there exists a density matrix $\rho$ such that the entropy vector $\bf{S}(\rho)$ belongs to $\mathbb{P}$ but not to any lower dimensional subspace $\mathbb{P}'\subset\mathbb{P}$. Notice that the fact that a PMI is realizable does not imply that each vector $\bf{S}\in\mathbb{P}$ (even within the intersection of $\mathbb{P}$ and the SAC) is the entropy vector of some density matrix. For any PMI $\mathbb{P}$ \textit{realized} by a density matrix $\rho$, we will denote by $\pi$ the map which associates $\mathbb{P}$ to $\rho$, $\mathbb{P}=\pi(\rho)$. Furthermore, for any PMI $\mathbb{P}$, we denote by $\Pi(\mathbb{P})$ the matrix such that
\begin{equation}
\label{eq:pmimatrix}
    \text{Ker}\;\Pi^{\intercal}=\mathbb{P}
\end{equation}
In other words, the columns of $\Pi$ are the coefficients of the instances of the vanishing mutual information in $\mathbb{P}$. For example, for $\N=3$, the $1$-dimensional PMI generated by the perfect state extreme ray of \cref{eq:perfectray} corresponds to the matrix
\begin{equation}
    \Pi=\left(
  \begin{array}{cccccc}
  1 & 1 & 0 & 1 & 0 & 0\\
  1 & 0 & 1 & 0 & 1 & 0\\
  0 & 1 & 1 & 0 & 0 & 1\\
  -1 & 0 & 0 & 0 & 0 & -1\\
  0 & -1 & 0 & 0 & -1 & 0\\
  0 & 0 & -1 & -1 & 0 & 0\\
  0 & 0 & 0 & 1 & 1 & 1\\
  \end{array}
  \right)
\end{equation}
Occasionally we will informally say that an instance of the mutual information $\sf{I}(\underline{\I}:\underline{\K})$ is ``in a PMI $\mathbb{P}$'', meaning that $\sf{I}(\underline{\I}:\underline{\K})=0$ is one of the hyperplanes that determine $\mathbb{P}$, or equivalently, that the vector normal to this hyperplane is one of the columns of $\Pi(\mathbb{P})$.

In general, the marginal independence problem asks which PMIs are realizable by a given class of states. In the context of the present work, the states of interest are the geometric states in holographic theories. We will argue in \cref{sec:logic} that the solution to this \textit{holographic} marginal independence problem (HMIP) provides sufficient information to reconstruct the holographic entropy cone. But in order to do this, we first need to clarify what we mean by ``holographic PMI'' and then to establish a connection between PMIs and min-cut subspaces of graph models.

\subsection{Patterns of marginal independence for graph models}
\label{subsec:hologpmi}

Having reviewed the definition of a pattern of marginal independence for arbitrary quantum states, we will now introduce a similar definition for graph models, and explain some of its basic properties.

As for quantum states, any graph model $\gm{\N}$ gives an entropy vector, and it is therefore straightforward to determine the corresponding PMI, defined as follows
\begin{ndefi}[PMI of a graph model]\label{def:gpmi}
    Given a graph model $\gm{\N}$, its \emph{PMI}   $\mathbb{P}=\pi(\gm{\N})$ is the $\N$-party \emph{PMI} of smallest dimension that contains the entropy vector $\mathbf{S}(\gm{\N})$.
\end{ndefi}
However, we are as usual interested in min-cut structures on topological graph models, rather than in specific graph models, but in order to be able to work with PMIs of min-cut structures, we first need to check that this is a well defined concept.
Let us first recall the necessary and sufficient conditions for an instance of the mutual information to vanish in a graph model:\footnote{\, For holographic configurations this is a well know consequence of the HRRT formula.}

\begin{nlemma}
\label{lem:vanishingI2}
Given a graph model $\gm{\N}$ and two subsystems $\underline{\I}$ and $\underline{\K}$ with $\underline{\I}\cap\underline{\K}=\varnothing$, the mutual information $ \sf{I}(\underline{\I}:\underline{\K})$ vanishes if and only if there exist min-cuts $\mU_{\underline{\I}}$, $\mU_{\underline{\K}}$ and $\mU_{\underline{\I}\cup\underline{\K}}$ such that
\begin{equation}
\label{eq:EWdiscon}
    G[\mU_{\underline{\I}\cup\underline{\K}}]=G[\mU_{\underline{\I}}]\oplus G[\mU_{\underline{\K}}]
\end{equation}
\end{nlemma}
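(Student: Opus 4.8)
The plan is to prove both directions of the iff. The ``if'' direction is essentially already contained in the earlier material: if there exist min-cuts satisfying \cref{eq:EWdiscon}, then by \cref{lem:mincutdecaltgen} (min-cut decomposition) the sets $\mU_{\underline{\I}}$ and $\mU_{\underline{\K}}$ are genuine min-cuts for their subsystems, so $S_{\underline{\I}\cup\underline{\K}} = \norm{\cut(\mU_{\underline{\I}\cup\underline{\K}})} = \norm{\cut(\mU_{\underline{\I}})} + \norm{\cut(\mU_{\underline{\K}})} = S_{\underline{\I}} + S_{\underline{\K}}$, where the middle equality uses that the two induced subgraphs are disjoint (no edges run between them, since $\mU_{\underline{\I}}$ and $\mU_{\underline{\K}}$ partition the vertex set of the induced subgraph on $\mU_{\underline{\I}\cup\underline{\K}}$ and there are no crossing edges). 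Hence $\sf{I}(\underline{\I}:\underline{\K}) = 0$. This is the short half.

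The ``only if'' direction is the substantive one. Assume $\sf{I}(\underline{\I}:\underline{\K}) = 0$, i.e. $S_{\underline{\I}\cup\underline{\K}} = S_{\underline{\I}} + S_{\underline{\K}}$. First I would invoke nesting (\cref{lem:nesting}): since $\underline{\I} \subset \underline{\I}\cup\underline{\K}$, pick a min-cut $\mU_{\underline{\I}\cup\underline{\K}}$ and inside it a min-cut $\mU_{\underline{\I}} \subset \mU_{\underline{\I}\cup\underline{\K}}$. Now set $W \coloneqq \mU_{\underline{\I}\cup\underline{\K}} \setminus \mU_{\underline{\I}}$. Because $\mU_{\underline{\I}\cup\underline{\K}}\cap\partial V = \beta^{-1}(\underline{\I}\cup\underline{\K})$, $\mU_{\underline{\I}}\cap\partial V = \beta^{-1}(\underline{\I})$, and $\underline{\I}\cap\underline{\K}=\varnothing$, the set $W$ is a cut for $\underline{\K}$, so $\norm{\cut(W)} \geq S_{\underline{\K}}$. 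The key computation is to bound the cut weights: the cut edges of $\mU_{\underline{\I}\cup\underline{\K}}$, the cut edges of $\mU_{\underline{\I}}$ (within $\mU_{\underline{\I}\cup\underline{\K}}$), and the cut edges of $W$ satisfy a sum rule coming from the fact that each edge incident to $\mU_{\underline{\I}\cup\underline{\K}}$ contributes to these sets in a controlled way — specifically $\norm{\cut(\mU_{\underline{\I}})} + \norm{\cut(W)} = \norm{\cut(\mU_{\underline{\I}\cup\underline{\K}})} + 2\norm{E(\mU_{\underline{\I}}, W)}$, where $E(\mU_{\underline{\I}},W)$ is the set of edges running between $\mU_{\underline{\I}}$ and $W$ (this is the standard submodularity/inclusion-exclusion identity for cuts). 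Combining with $\norm{\cut(\mU_{\underline{\I}})} = S_{\underline{\I}}$, $\norm{\cut(\mU_{\underline{\I}\cup\underline{\K}})} = S_{\underline{\I}} + S_{\underline{\K}}$, and $\norm{\cut(W)} \geq S_{\underline{\K}}$ forces $\norm{E(\mU_{\underline{\I}}, W)} \leq 0$, hence $= 0$ (weights are positive), and also $\norm{\cut(W)} = S_{\underline{\K}}$, so $W$ is a min-cut for $\underline{\K}$. The vanishing of $E(\mU_{\underline{\I}}, W)$ says precisely that there are no edges between $\mU_{\underline{\I}}$ and $W$ inside $G[\mU_{\underline{\I}\cup\underline{\K}}]$, which is exactly the disjoint decomposition $G[\mU_{\underline{\I}\cup\underline{\K}}] = G[\mU_{\underline{\I}}] \oplus G[W]$ with $W = \mU_{\underline{\K}}$.

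I expect the main obstacle to be stating and justifying the cut-weight identity cleanly, since the paper works with edges as unordered pairs and one must be careful about edges with one endpoint outside $\mU_{\underline{\I}\cup\underline{\K}}$ versus edges internal to it: an edge from $\mU_{\underline{\I}}$ to the exterior of $\mU_{\underline{\I}\cup\underline{\K}}$ lies in both $\cut(\mU_{\underline{\I}})$ and $\cut(\mU_{\underline{\I}\cup\underline{\K}})$ but not $\cut(W)$, an edge from $W$ to the exterior lies in $\cut(W)$ and $\cut(\mU_{\underline{\I}\cup\underline{\K}})$ but not $\cut(\mU_{\underline{\I}})$, an edge between $\mU_{\underline{\I}}$ and $W$ lies in $\cut(\mU_{\underline{\I}})$ and $\cut(W)$ but not $\cut(\mU_{\underline{\I}\cup\underline{\K}})$, and an edge from $\mU_{\underline{\I}\cup\underline{\K}}$ to its exterior that is internal to neither... — a short case analysis over the at-most-four-block partition $(\mU_{\underline{\I}}, W, \text{rest of } \mU_{\underline{\I}\cup\underline{\K}}\text{ is empty}, (\mU_{\underline{\I}\cup\underline{\K}})^\complement)$ handles it. Alternatively, and perhaps more in the spirit of the paper, one can bypass the explicit identity entirely by using \cref{lem:mincutdecaltgen}'s proof technique in reverse: note $\mU_{\underline{\I}} \cup W$ realizes weight $S_{\underline{\I}} + S_{\underline{\K}}$ only if the cross edges vanish, and if they did not vanish one would get a strictly cheaper cut for $\underline{\I}\cup\underline{\K}$ by the submodular inequality, contradicting minimality. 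I would present the argument in this second, contradiction-free-of-explicit-identities form, citing submodularity of the min-cut function as the paper does in the footnote to \cref{lem:mincutdecaltgen}, and then separately remark that once the cross edges vanish, $W$ is automatically a min-cut for $\underline{\K}$ by \cref{lem:mincutdecaltgen} itself, closing the loop.
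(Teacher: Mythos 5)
Your ``if'' direction is fine (though the appeal to \cref{lem:mincutdecaltgen} there is superfluous, since the statement already posits that all three sets are min-cuts), and note that the paper itself states this lemma without proof, as a known consequence of HRRT, so the only benchmark is whether your argument is sound. The problem is in your ``only if'' direction: the inequality you extract points the wrong way. From your (correct) identity $\norm{\cut(\mU_{\underline{\I}})}+\norm{\cut(W)}=\norm{\cut(\mU_{\underline{\I}\cup\underline{\K}})}+2\norm{E(\mU_{\underline{\I}},W)}$ together with $\norm{\cut(\mU_{\underline{\I}})}=S_{\underline{\I}}$ and $\norm{\cut(\mU_{\underline{\I}\cup\underline{\K}})}=S_{\underline{\I}}+S_{\underline{\K}}$ you get $\norm{\cut(W)}-S_{\underline{\K}}=2\norm{E(\mU_{\underline{\I}},W)}$; combining this with $\norm{\cut(W)}\geq S_{\underline{\K}}$ yields only $\norm{E(\mU_{\underline{\I}},W)}\geq 0$, not $\leq 0$. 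The gap is not merely presentational: the claimed conclusion is false for a general nested pair. Take boundary vertices $A,B,O$ and one bulk vertex $\sigma$ with edges $A\sigma,B\sigma,O\sigma$ of weights $1,1,2$. Then $S_A=S_B=1$ and $S_{AB}=2$, so ${\sf I}(A:B)=0$; but the (degenerate) min-cut $\mU_{AB}=\{A,B,\sigma\}$ with nested $\mU_{A}=\{A\}$ gives $W=\{B,\sigma\}$ and $E(\mU_{A},W)=\{A\sigma\}\neq\varnothing$. The lemma is an existence statement, and nesting alone does not select a decomposing pair of min-cuts. Your alternative closing remark has the same defect: $\mU_{\underline{\I}}\cup W$ is by construction the min-cut $\mU_{\underline{\I}\cup\underline{\K}}$ itself, whose cut weight is blind to edges internal to it, so no ``strictly cheaper cut'' ever materializes.

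The repair is to run the same bookkeeping on a union rather than on a difference, so that minimality pushes against the cross term with the right sign. By \cref{lem:nocrossing} choose \emph{disjoint} min-cuts $\mU_{\underline{\I}}$ and $\mU_{\underline{\K}}$; then $U=\mU_{\underline{\I}}\cup\mU_{\underline{\K}}$ is an $(\underline{\I}\cup\underline{\K})$-cut, and the same edge-counting identity gives $\norm{\cut(U)}=S_{\underline{\I}}+S_{\underline{\K}}-2\norm{E(\mU_{\underline{\I}},\mU_{\underline{\K}})}$. Since $\norm{\cut(U)}\geq S_{\underline{\I}\cup\underline{\K}}=S_{\underline{\I}}+S_{\underline{\K}}$, strict positivity of the weights forces $E(\mU_{\underline{\I}},\mU_{\underline{\K}})=\varnothing$, whereupon $U$ is itself a min-cut for $\underline{\I}\cup\underline{\K}$ and realizes \cref{eq:EWdiscon}. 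Your case analysis of edge types and your use of positivity carry over verbatim to this corrected setup.
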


This implies that the vanishing of any instance of the mutual information is ``detected'' by the min-cut structure and that one can therefore think of a PMI as being determined directly by the min-cut structure, specifically:

\begin{ncor}
\label{cor:mpmi}
For any topological graph model $\tgm{\N}$ and min-cut structure $\mathfrak{m}$ we have
\begin{equation}
    \pi(\gm{\N})=\pi(\gm{\N}\,')\qquad \forall\; \gm{\N},\gm{\N}\,'\in(\tgm{\N},\mathfrak{m})
\end{equation}
\end{ncor}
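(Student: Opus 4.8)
The plan is to reduce the statement to \cref{lem:vanishingI2}. The key observation is that $\pi(\gm{\N})$ is completely determined by the set of mutual-information instances that vanish at $\mathbf{S}(\gm{\N})$, and that \cref{lem:vanishingI2} characterizes this set purely in terms of data carried by the class $(\tgm{\N},\mathfrak{m})$, with no reference to the edge weights.

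First I would record that $\mathbf{S}(\gm{\N})\in\text{SAC}_{\N}$ for every graph model $\gm{\N}$ (the min-cut function is submodular, so every instance of SA holds; equivalently $\text{HEC}_{\N}\subseteq\text{SAC}_{\N}$). Hence $\mathbf{S}(\gm{\N})$ lies in the relative interior of a unique face of $\text{SAC}_{\N}$, and by the description of PMIs in \cref{subsec:pmireview} (see \cref{def:patterns} and the surrounding discussion), the PMI of smallest dimension containing it --- which is $\pi(\gm{\N})$ by \cref{def:gpmi} --- is the intersection of the hyperplanes $\{\sf{I}(\underline{\I}:\underline{\K})=0\}$ over exactly those disjoint pairs $(\underline{\I},\underline{\K})$ for which $\sf{I}(\underline{\I}:\underline{\K})$ vanishes in $\gm{\N}$. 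Writing $Z(\gm{\N})$ for this set of pairs, we thus have
\[
\pi(\gm{\N})=\bigcap_{(\underline{\I},\underline{\K})\in Z(\gm{\N})}\{\mathbf{S}\in\mathbb{R}^{\D}\st\sf{I}(\underline{\I}:\underline{\K})=0\},
\]
so it suffices to show that $Z(\gm{\N})$ depends only on $(\tgm{\N},\mathfrak{m})$ and not on the representative $\gm{\N}$.

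This is precisely what \cref{lem:vanishingI2} supplies. For disjoint $\underline{\I},\underline{\K}$ it asserts that $(\underline{\I},\underline{\K})\in Z(\gm{\N})$ if and only if there exist min-cuts $\mU_{\underline{\I}}\in\mathscr{U}_{\underline{\I}}$, $\mU_{\underline{\K}}\in\mathscr{U}_{\underline{\K}}$ and $\mU_{\underline{\I}\cup\underline{\K}}\in\mathscr{U}_{\underline{\I}\cup\underline{\K}}$ with $G[\mU_{\underline{\I}\cup\underline{\K}}]=G[\mU_{\underline{\I}}]\oplus G[\mU_{\underline{\K}}]$. But the min-cut sets $\mathscr{U}_{\underline{\I}},\mathscr{U}_{\underline{\K}},\mathscr{U}_{\underline{\I}\cup\underline{\K}}$ are, by \cref{def:mincutstructure}, part of the data of the min-cut structure $\mathfrak{m}$, and whether the induced subgraph $G[\mU_{\underline{\I}\cup\underline{\K}}]$ decomposes as the disjoint union $G[\mU_{\underline{\I}}]\oplus G[\mU_{\underline{\K}}]$ depends only on the underlying topological graph $\tgm{\N}$ and on these vertex sets --- it makes no reference to the weights. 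Hence the right-hand condition of \cref{lem:vanishingI2} has the same truth value for every $\gm{\N}$ in the class $(\tgm{\N},\mathfrak{m})$, so $Z(\gm{\N})=Z(\gm{\N}\,')$ for all $\gm{\N},\gm{\N}\,'\in(\tgm{\N},\mathfrak{m})$, and the displayed formula then gives $\pi(\gm{\N})=\pi(\gm{\N}\,')$. I do not expect any genuine obstacle: the corollary is essentially an immediate consequence of \cref{lem:vanishingI2}, and the only point needing a little care is the first step --- checking that ``the PMI of smallest dimension containing $\mathbf{S}(\gm{\N})$'' is well defined and coincides with the intersection of the saturated-SA hyperplanes --- which follows once one notes that $\mathbf{S}(\gm{\N})\in\text{SAC}_{\N}$ and that each point of a polyhedral cone lies in the relative interior of a unique face.
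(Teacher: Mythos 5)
Your proof is correct and follows essentially the same route as the paper's: both arguments reduce the statement to \cref{lem:vanishingI2} and observe that the disconnected-decomposition criterion there is determined entirely by the min-cut structure $\mathfrak{m}$ and the topological graph, with no reference to edge weights. The extra care you take in justifying that $\pi(\gm{\N})$ is the intersection of the saturated-SA hyperplanes (via $\mathbf{S}(\gm{\N})\in\text{SAC}_{\N}$ and uniqueness of the face containing a point in its relative interior) is a detail the paper leaves implicit, but it does not change the substance of the argument.
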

\begin{proof}
    Given $(\tgm{\N},\mathfrak{m})$ consider any two graph models $\gm{\N}$ and $\gm{\N}\,'$ in $(\tgm{\N},\mathfrak{m})$ and any instance $\sf{I}(\underline{\I}:\underline{\K})$ which vanishes for
    $\pi(\gm{\N})$. 
    By \cref{lem:vanishingI2} there exist min-cuts $W_{\underline{\I}}$, $W_{\underline{\K}}$ and $W_{\underline{\I}\cup\underline{\K}}$ in  $\mathfrak{m}$ such that \cref{eq:EWdiscon} holds, and by the same lemma $\sf{I}(\underline{\I}:\underline{\K})$ is also one of the instances that vanish for $\pi(\gm{\N}\,')$.
\end{proof}

Having showed that the PMI of a min-cut structure is a well defined concept, from now on we will always work with these objects and denote them by $\pi(\tgm{\N},\mathfrak{m})$
\begin{equation}
    \pi(\tgm{\N},\mathfrak{m})\coloneqq \pi(\gm{\N})\qquad\text{for any}\quad\gm{\N}\in(\tgm{\N},\mathfrak{m})
\end{equation}
Nevertheless, recall that not all PMIs are realizable holographically (or even by arbitrary quantum states). This prompts us to introduce the definition of a holographic PMI as one which can be realized by a graph model $\gm{\N}$. In terms of equivalence classes we define it as follows:
\begin{ndefi}[Holographic PMI]
\label{def:hologpmi}
    A \emph{PMI} $\mathbb{P}$ is \emph{holographic} if there exists a topological graph model $\tgm{\N}$ and a min-cut structure $\mathfrak{m}$ such that
    \begin{equation}
        \mathbb{P}=\pi(\tgm{\N},\mathfrak{m})
    \end{equation}
\end{ndefi}

\begin{figure}[tb]
    \centering
    \begin{subfigure}{0.49\textwidth}
    \centering
    \begin{tikzpicture}

	\tikzmath{
	\edgelabelsize=0.7;
		\v =0.1;  
	}
	\coordinate (Av) at (1,3);  
	\coordinate (Bv) at (4,2);  
	\coordinate (Cv) at (0.5,1);  
	\coordinate (Ov) at (2,4);  
	\coordinate (bv) at (2,2);  

	\draw[edgestyle] (Av) -- node[scale=\edgelabelsize,edgeweightcolor,below,midway]{$1$} (bv);
	\draw[edgestyle] (Bv) -- node[scale=\edgelabelsize,edgeweightcolor,above,pos=0.6]{$1$} (bv);
	\draw[edgestyle] (Cv) -- node[scale=\edgelabelsize,edgeweightcolor,above,midway]{$1$}  (bv);
	\draw[edgestyle] (Av) -- node[scale=\edgelabelsize,edgeweightcolor,above,midway]{$1$} (Bv);
	\draw[edgestyle] (Bv) -- node[scale=\edgelabelsize,edgeweightcolor,below,midway]{$1$} (Cv);
	\draw[edgestyle] (Cv) -- node[scale=\edgelabelsize,edgeweightcolor,left,midway]{$1$}  (Av);
	\draw[edgestyle] (Ov) -- node[scale=\edgelabelsize,edgeweightcolor,right,pos=0.3]{$4$} (bv); 
	
	\filldraw [color=bvcolor] (bv) circle (\v) node[bvcolor!50!black,below]{$\sigma$} ;
	\filldraw [color=Acolor] (Av) circle (\v) node[Acolor!50!black,above=2pt]{$A$} ;
	\filldraw [color=Bcolor] (Bv) circle (\v) node[Bcolor!50!black,right]{$B$} ;
	\filldraw [color=Ccolor] (Cv) circle (\v) node[Ccolor!50!black,left]{$C$} ;
	\filldraw [color=Ocolor] (Ov) circle (\v) node[Ocolor!50!black,above=2pt]{$O$} ;

\end{tikzpicture} 
    \end{subfigure}
    \begin{subfigure}{0.49\textwidth}
    \centering
    \begin{tikzpicture}

	\tikzmath{
	\edgelabelsize=0.7;
		\v =0.1;  
	}
	\coordinate (Av) at (1,3);  
	\coordinate (Bv) at (4,2);  
	\coordinate (Cv) at (0.5,1);  
	\coordinate (Ov) at (2,4);  
	\coordinate (bv) at (2,2);  

	\draw[edgestyle] (Av) -- node[scale=\edgelabelsize,edgeweightcolor,below,midway]{$1$} (bv);
	\draw[edgestyle] (Bv) -- node[scale=\edgelabelsize,edgeweightcolor,above,pos=0.6]{$1$} (bv);
	\draw[edgestyle] (Cv) -- node[scale=\edgelabelsize,edgeweightcolor,above,midway]{$1$}  (bv);
	\draw[edgestyle] (Av) -- node[scale=\edgelabelsize,edgeweightcolor,above,midway]{$1$} (Bv);
	\draw[edgestyle] (Bv) -- node[scale=\edgelabelsize,edgeweightcolor,below,midway]{$1$} (Cv);
	\draw[edgestyle] (Cv) -- node[scale=\edgelabelsize,edgeweightcolor,left,midway]{$1$}  (Av);
	\draw[edgestyle] (Ov) -- node[scale=\edgelabelsize,edgeweightcolor,right,pos=0.3]{$2$} (bv); 
	
	\filldraw [color=bvcolor] (bv) circle (\v) node[bvcolor!50!black,below]{$\sigma$} ;
	\filldraw [color=Acolor] (Av) circle (\v) node[Acolor!50!black,above=2pt]{$A$} ;
	\filldraw [color=Bcolor] (Bv) circle (\v) node[Bcolor!50!black,right]{$B$} ;
	\filldraw [color=Ccolor] (Cv) circle (\v) node[Ccolor!50!black,left]{$C$} ;
	\filldraw [color=Ocolor] (Ov) circle (\v) node[Ocolor!50!black,above=2pt]{$O$} ;

\end{tikzpicture} 
    \end{subfigure}
    \hfill
    \begin{subfigure}{0.49\textwidth}
    \centering
    \small{
    \begin{equation*}
    \Gamma=\left(
        \begin{array}{ccccccc}
        1 & 1 & 0 & 1 & 0 & 0 & 0\\
        1 & 0 & 1 & 0 & 1 & 0 & 0\\
        0 & 1 & 1 & 0 & 0 & 1 & 0\\
        0 & 1 & 1 & 1 & 1 & 0 & 0\\
        1 & 0 & 1 & 1 & 0 & 1 & 0\\
        1 & 1 & 0 & 0 & 1 & 1 & 0\\
        0 & 0 & 0 & 1 & 1 & 1 & 0\\
        \end{array}
    \right)
    \end{equation*}
    }
    \end{subfigure}
    \begin{subfigure}{0.49\textwidth}
    \centering
    \small{
    \begin{equation*}
        \Gamma=\left(
        \begin{array}{ccccccc}
        1 & 1 & 0 & 1 & 0 & 0 & 0\\
        1 & 0 & 1 & 0 & 1 & 0 & 0\\
        0 & 1 & 1 & 0 & 0 & 1 & 0\\
        0 & 1 & 1 & 1 & 1 & 0 & 0\\
        1 & 0 & 1 & 1 & 0 & 1 & 0\\
        1 & 1 & 0 & 0 & 1 & 1 & 0\\
        0 & 0 & 0 & 0 & 0 & 0 & 1\\
        \end{array}
    \right)
    \end{equation*}
    }
    \end{subfigure}
    \caption{An example of two graph models, with the same underlying topological model but different (generic) min-cut structures, corresponding to the same PMI but different min-cut subspaces. The min-cut structures are specified by the $\Gamma$ matrices, where the rows are labeled by polychromatic indices $(A,B,C,AB,AC,BC,ABC)$ and the columns by the edges $(AB,AC,BC,A\sigma,B\sigma,C\sigma,O\sigma)$. In both cases the PMI is the full entropy space $\mathbb{R}^7$ (since no mutual information vanishes), but the graph on the left has the $6$-dimensional min-cut subspace $\mathbb{S}=(1,1,1,-1,-1,-1,1)^{\perp}$, while for the graph on the right $\mathbb{S}=\mathbb{R}^7$. Note that the $6$-dimensional min-cut subspace of the graph on the left is the hyperplane defined by the vanishing of the tripartite information, corresponding to the saturation of MMI.}
    \label{fig:i3generator}
\end{figure}
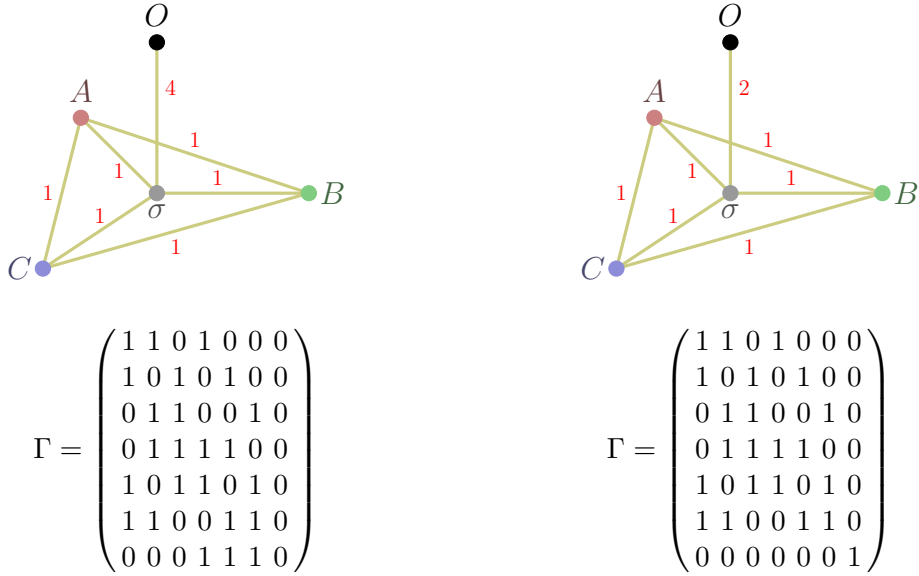

Having defined the PMI of a class $(\tgm{\N},\mathfrak{m})$, a natural question to ask is how is such a PMI related to the min-cut subspace. In general we have the following inclusion relation:

\begin{nlemma}
\label{lem:vinpmi}
For any class $(\tgm{\N},\mathfrak{m})$, the \emph{PMI} $\pi(\tgm{\N},\mathfrak{m})$ is the lowest dimensional $\N$-party \emph{PMI} that contains the min-cut subspace
\begin{equation}
    \mathbb{S}(\tgm{\N},\mathfrak{m})\subseteq\pi(\tgm{\N},\mathfrak{m})
\end{equation}
\end{nlemma}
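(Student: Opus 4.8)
The plan is to deduce both halves of the statement — the inclusion $\mathbb{S}(\tgm{\N},\mathfrak{m})\subseteq\pi(\tgm{\N},\mathfrak{m})$ and the minimality of $\pi(\tgm{\N},\mathfrak{m})$ among PMIs containing it — from \cref{lem:vanishingI2}, \cref{cor:mpmi} and \cref{def:gpmi}, together with the fact (recalled via \cref{def:patterns} and the facet description of SAC$_{\N}$) that every PMI is an intersection of mutual-information hyperplanes $\sf{I}(\underline{\I}:\underline{\K})=0$. The first step is to record the relevant bookkeeping: by \cref{lem:vanishingI2} the set $Z(\mathfrak{m})$ of instances $\sf{I}(\underline{\I}:\underline{\K})$ that vanish on a graph model in the class is intrinsic to the min-cut structure $\mathfrak{m}$ (it is exactly the set of instances admitting the decomposition \eqref{eq:EWdiscon} in $\mathfrak{m}$), and by \cref{cor:mpmi} the PMI $\pi(\tgm{\N},\mathfrak{m})$ is well defined; combining with \cref{def:gpmi}, it is the smallest-dimensional PMI containing $\mathbf{S}(\gm{\N})$, and a short check shows this equals $\bigcap_{\sf{I}\in Z(\mathfrak{m})}\{\sf{I}=0\}$.

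For the inclusion I would argue directly, avoiding any reference to spans of faces: every vector of the S-cell $\mathcal{S}(\tgm{\N},\mathfrak{m})$ is the entropy vector $\mathbf{S}(\gm{\N}\,')$ of some $\gm{\N}\,'\in(\tgm{\N},\mathfrak{m})$, whose PMI is $\pi(\tgm{\N},\mathfrak{m})$ by \cref{cor:mpmi} and which of course contains $\mathbf{S}(\gm{\N}\,')$. Hence $\mathcal{S}(\tgm{\N},\mathfrak{m})\subseteq\pi(\tgm{\N},\mathfrak{m})$, and since $\pi(\tgm{\N},\mathfrak{m})$ is a linear subspace, passing to the span gives $\mathbb{S}(\tgm{\N},\mathfrak{m})=\text{Span}\,\mathcal{S}(\tgm{\N},\mathfrak{m})\subseteq\pi(\tgm{\N},\mathfrak{m})$.

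For minimality, let $\mathbb{P}'$ be any PMI with $\mathbb{S}(\tgm{\N},\mathfrak{m})\subseteq\mathbb{P}'$, and write $\mathbb{P}'=\bigcap_{\sf{I}\in Z'}\{\sf{I}=0\}$ for the collection $Z'$ of mutual-information hyperplanes containing it. Picking any representative $\gm{\N}$, we have $\mathbf{S}(\gm{\N})\in\mathcal{S}(\tgm{\N},\mathfrak{m})\subseteq\mathbb{S}(\tgm{\N},\mathfrak{m})\subseteq\mathbb{P}'$, so every $\sf{I}\in Z'$ vanishes on $\mathbf{S}(\gm{\N})$; by \cref{lem:vanishingI2} the decomposition \eqref{eq:EWdiscon} then holds in $\mathfrak{m}$, i.e.\ $Z'\subseteq Z(\mathfrak{m})$. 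Intersecting over the larger index set $Z(\mathfrak{m})$ therefore gives $\pi(\tgm{\N},\mathfrak{m})=\bigcap_{\sf{I}\in Z(\mathfrak{m})}\{\sf{I}=0\}\subseteq\bigcap_{\sf{I}\in Z'}\{\sf{I}=0\}=\mathbb{P}'$; in particular $\dim\pi(\tgm{\N},\mathfrak{m})\le\dim\mathbb{P}'$, so $\pi(\tgm{\N},\mathfrak{m})$ is indeed a lowest-dimensional PMI containing $\mathbb{S}(\tgm{\N},\mathfrak{m})$ — in fact it is contained in every PMI that contains it.

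The whole argument is short; the one place that deserves care is the minimality step, where one must exclude the possibility that some mutual-information hyperplane not forced by $\mathfrak{m}$ could carve out a strictly smaller PMI still containing all of $\mathbb{S}(\tgm{\N},\mathfrak{m})$. This is exactly the point at which it is essential that $\mathbb{S}(\tgm{\N},\mathfrak{m})$ is spanned by genuine entropy vectors of graph models of the class, so that \cref{lem:vanishingI2} can be applied pointwise to conclude $Z'\subseteq Z(\mathfrak{m})$. (Readers who prefer to skip the hyperplane bookkeeping may note that minimality also follows at once from \cref{def:gpmi}: since $\mathbb{P}'$ is a PMI containing $\mathbf{S}(\gm{\N})$, its dimension is at least that of the smallest such PMI, namely $\pi(\tgm{\N},\mathfrak{m})$.)
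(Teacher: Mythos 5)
Your proof is correct and follows essentially the same route as the paper: the inclusion is obtained exactly as in the paper's proof (every point of the S-cell is the entropy vector of a representative whose PMI is $\pi(\tgm{\N},\mathfrak{m})$ by \cref{cor:mpmi}, then pass to the span), and the minimality is the paper's one-line appeal to \cref{def:gpmi}, which you also give in your closing parenthetical. Your additional hyperplane bookkeeping via $Z(\mathfrak{m})$ and \cref{lem:vanishingI2} merely makes explicit the slightly stronger fact that $\pi(\tgm{\N},\mathfrak{m})$ is \emph{contained in} every PMI containing $\mathbb{S}(\tgm{\N},\mathfrak{m})$, which the paper does not spell out but does not need.
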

\begin{proof}
    Given a class $(\tgm{\N},\mathfrak{m})$, \cref{cor:mpmi} implies that any graph model $\gm{\N}$ in this equivalence class has the same PMI, $\pi(\gm{\N})=\pi(\tgm{\N},\mathfrak{m})$. One then has the inclusion $\mathcal{S}(\tgm{\N},\mathfrak{m})\subset\pi(\tgm{\N},\mathfrak{m})$, and since $\mathbb{S}(\tgm{\N},\mathfrak{m})$ is the linear span of $\mathcal{S}(\tgm{\N},\mathfrak{m})$, that  $\mathbb{S}(\tgm{\N},\mathfrak{m})\subseteq\pi(\tgm{\N},\mathfrak{m})$. Furthermore, by \cref{def:gpmi}, it also follows that $\pi(\tgm{\N},\mathfrak{m})$ is the lowest dimensional $\N$-party PMI such that this inclusion holds.
\end{proof}

Let us now denote by $\mathbb{V}$ and $\mathbb{P}$ the min-cut subspace and PMI of $(\tgm{\N},\mathfrak{m})$ respectively. One may wonder if there could exist another topological graph model $\tgm{\N}'$ and min-cut structure $\mathfrak{m}'$ such that $\pi(\tgm{\N}',\mathfrak{m}')=\mathbb{P}$, while $\mathbb{S}(\tgm{\N}',\mathfrak{m}')\neq\mathbb{V}$. This can easily happen, as exemplified in \cref{fig:i3generator}. On the other hand, the opposite is not possible, as clarified by the following corollary:
\begin{ncor}
\label{cor:vfixespmi}
Given any two classes $(\tgm{\N},\mathfrak{m})$ and $(\tgm{\N}',\mathfrak{m}')$
\begin{equation}
    \mathbb{S}(\tgm{\N},\mathfrak{m})=\mathbb{S}(\tgm{\N}',\mathfrak{m}')\quad \Longrightarrow \quad \pi(\tgm{\N},\mathfrak{m})=\pi(\tgm{\N}',\mathfrak{m}')
\end{equation}
\end{ncor}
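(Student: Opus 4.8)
The plan is to reduce the statement to \cref{lem:vinpmi}. That lemma tells us that for any class $(\tgm{\N},\mathfrak{m})$, the PMI $\pi(\tgm{\N},\mathfrak{m})$ is \emph{uniquely determined} as the lowest-dimensional $\N$-party PMI that contains the min-cut subspace $\mathbb{S}(\tgm{\N},\mathfrak{m})$. The key observation is that this characterization of $\pi(\tgm{\N},\mathfrak{m})$ refers only to the subspace $\mathbb{S}(\tgm{\N},\mathfrak{m})$ and to the (fixed, $\mathfrak{m}$-independent) lattice of $\N$-party PMIs; it does not refer to $\tgm{\N}$, to $\mathfrak{m}$, or to any other data of the class. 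Hence if two classes happen to have the same min-cut subspace, they must have the same PMI.

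Concretely, I would argue as follows. Suppose $\mathbb{S}(\tgm{\N},\mathfrak{m})=\mathbb{S}(\tgm{\N}',\mathfrak{m}')$; call this common subspace $\mathbb{V}$. By \cref{lem:vinpmi} applied to $(\tgm{\N},\mathfrak{m})$, the PMI $\pi(\tgm{\N},\mathfrak{m})$ contains $\mathbb{V}$ and is the lowest-dimensional $\N$-party PMI with this property. Applying \cref{lem:vinpmi} again to $(\tgm{\N}',\mathfrak{m}')$, the PMI $\pi(\tgm{\N}',\mathfrak{m}')$ likewise contains $\mathbb{V}$ and is the lowest-dimensional $\N$-party PMI containing $\mathbb{V}$. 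But ``the lowest-dimensional PMI containing $\mathbb{V}$'' is well defined: if two PMIs $\mathbb{P}_1,\mathbb{P}_2$ both contain $\mathbb{V}$ and are minimal with respect to inclusion of dimension among PMIs containing $\mathbb{V}$, then $\mathbb{P}_1=\mathbb{P}_2$, since otherwise $\mathbb{P}_1\cap\mathbb{P}_2$ would be a smaller subspace still containing $\mathbb{V}$, and (being an intersection of faces' supporting subspaces, hence itself the supporting subspace of the face given by intersecting the corresponding sets of vanishing-mutual-information hyperplanes) would itself be a PMI of strictly smaller dimension, contradicting minimality of both. Therefore $\pi(\tgm{\N},\mathfrak{m})=\pi(\tgm{\N}',\mathfrak{m}')$.

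I do not anticipate a serious obstacle: the statement is essentially a uniqueness corollary of \cref{lem:vinpmi}. The only point that needs a line of care is the claim that the intersection of two PMIs is again a PMI (so that ``the smallest PMI containing $\mathbb{V}$'' genuinely exists and is unique), which follows directly from \cref{def:patterns} together with the fact, noted after \cref{def:patterns}, that intersections of supporting hyperplanes of the form $\sf{I}(\underline{\I}:\underline{\K})=0$ cut out faces of the SAC and hence their supporting subspaces are PMIs. One should also note that $\mathbb{V}$ itself is contained in \emph{at least one} PMI — namely $\pi(\tgm{\N},\mathfrak{m})$ by \cref{lem:vinpmi} — so the family of PMIs containing $\mathbb{V}$ is nonempty and the minimum is attained.
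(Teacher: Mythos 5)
Your proof is correct and takes essentially the same route as the paper's: the paper's own proof is a one-line appeal to \cref{lem:vinpmi}, observing that the PMI is characterized as the lowest-dimensional PMI containing the min-cut subspace and is therefore uniquely fixed by that subspace. Your extra paragraph justifying \emph{why} this minimal PMI is unique (via intersections of PMIs, which the paper's footnote after \cref{def:patterns} treats as supporting subspaces of faces of the SAC) simply makes explicit a point the paper leaves implicit.
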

\begin{proof}
    By \cref{lem:vinpmi}, the PMI of a class $(\tgm{\N},\mathfrak{m})$ is the lowest dimensional $\N$-party PMI that contains the min-cut subspace $\mathbb{S}(\tgm{\N},\mathfrak{m})$, and is therefore uniquely fixed by such subspace.
\end{proof}

As a consequence of \cref{cor:vfixespmi}, a PMI $\mathbb{P}$ is completely determined by a min-cut subspace, and it is interesting to ask in what cases the min-cut subspace and PMI coincide. We will see in the next subsection that this is the case at least for a particular class of topological graph models called ``simple trees''. 

But before we proceed to the next section, let us briefly pause to summarize the list of coarser and coarser objects that along the way we have introduced and associated to graph models. The landscape of these constructs and the maps between them is shown in \cref{fig:summary}. We stress that, as we have clarified with various examples, none of these maps is injective.

\begin{figure}[tb]
\begin{tikzpicture}
    \hspace*{1.7cm}
    \node{{\Large $\gm{\N}\; \longmapsto\; (\tgm{\N},\mathfrak{m})\; \longmapsto\; \mathcal{S}\; \longmapsto\; \mathbb{S}\; \longmapsto\; \mathbb{P}$}};
    \draw[thick,dotted] (-3.65,0.1) -- (-3.65,1.5);
    \draw[thick,dotted] (0,-0.25) -- (0,-1.65);
    \draw[thick,dotted] (2,0.1) -- (2,1.5);
    \draw[thick,dotted] (4,-0.25) -- (4,-1.65);
    \node[align=left,font=\footnotesize\sf] at (-3.65,2.2) {organization of\\ graph models into\\ an equivalence class};
    \node[align=left,font=\footnotesize\sf] at (0,-2.35) {construction of\\ the S-cell from an\\ equivalence class};
    \node[align=left,font=\footnotesize\sf] at (2,2) {linear span\\ of the S-cell};
    \node[align=left,font=\footnotesize\sf] at (4,-2.35) {PMI of smallest\\ dimension that\\ contains $\mathbb{S}$};
\end{tikzpicture}
\vspace{0.5cm}
\caption{A summary of the various constructs that we have associated to graph models. Starting from a graph model $\gm{\N}$, each map along this chain associates to an object a coarser one. As we have exemplified throughout the text, none of these maps is injective.}
\label{fig:summary}
\end{figure}
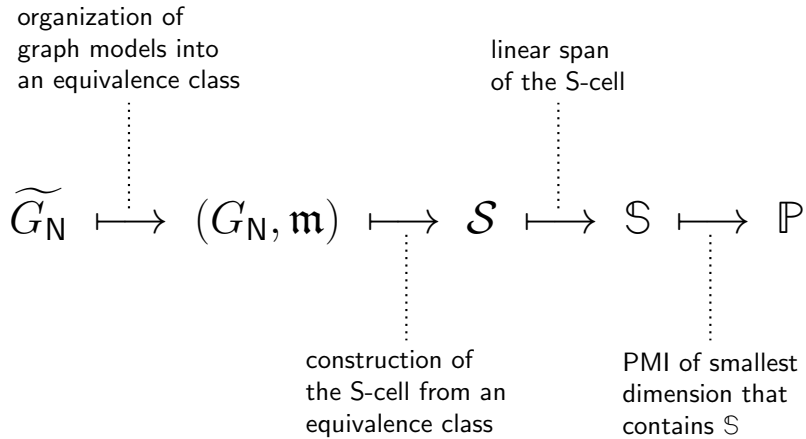

\subsection{The min-cut subspace of a simple tree graph}
\label{subsec:simpletreepmi}

Having introduced the notion of the pattern of marginal independence for a min-cut structure on a topological graph model, we will now show that for a particular class of such models that we call \textit{simple trees}, the pattern of marginal independence is equal to the min-cut subspace. 

The key attribute of such graphs, as we argue below, is that each edge defines a cut for some subsystem, and can thus be naturally associated to the corresponding polychromatic index.\footnote{\, There is a simple exception that, as we will shortly explain, is irrelevant.}
This will allow us to view any relation between the edge weights, which determines the min-cut subspace, purely in terms of subsystem entropies; this in turn can be recast in terms of mutual informations, and hence related to the PMI. Let us start from the basic definition 

\begin{ndefi}[Simple tree graph]\label{def:simple}
A topological graph  model $\tgm{\N}$ with the topology of a tree is \emph{simple} if each boundary vertex is labeled by a different color.
\end{ndefi}
On a given simple tree, consider an arbitrary edge $e\in E$, and the partition of the vertex set into the two complementary subsets $U$ and $U^{\complement}$ separated by $e$. If both $U$ and $U^{\complement}$ contain at least one boundary vertex, then $e$ corresponds to  a bipartition $(\underline{\I},\underline{\I}^{\complement})$ of $[\N+1]$ given by
\begin{equation}
    \beta(\partial V\cap U)=\underline{\I}
\end{equation}
By convention we define $U$ to be the subset that does not include the purifier, and simply write $\beta(\partial V\cap U)=\I$ (not underlined). The subsystem $\I$ associated to an edge via this prescription will be denoted by $\I(e)$. For an arbitrary tree graph and choice of $e$, one may also have $\beta(\partial V\cap U)=\varnothing$. This can happen if one or more leaves are not boundary vertices. In this case we write $\I(e)=\varnothing$, with a little abuse of notation since in this case $\I$ is not a proper polychromatic index according to our definition. While we take into account this possibility for the sake of completeness, notice that by topological minimality (cf., \cref{lem:topmin}) no edge $e$ with $\I(e)=\varnothing$ can belong to the set of cut edges for any min-cut.

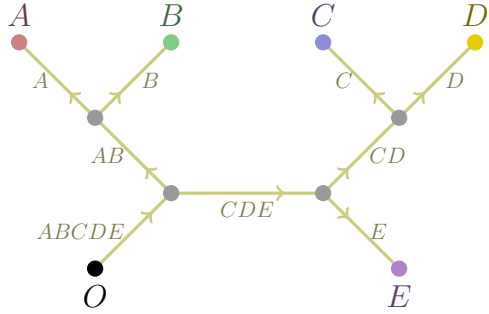
\begin{figure}[tb]
\centering
\begin{tikzpicture}
	\tikzset{->-/.style={decoration={
	  markings,
	  mark=at position .75 with {\arrow{>}}},postaction={decorate}}}
	\tikzset{-<-/.style={decoration={
	  markings,
	  mark=at position .75 with {\arrow{<}}},postaction={decorate}}}

\usetikzlibrary{decorations.markings}

	\tikzmath{
	\edgelabelsize=0.7;
		\v =0.1;  
	}
	\coordinate (Av) at (0,4);  
	\coordinate (Bv) at (2,4);  
	\coordinate (Cv) at (4,4);  
	\coordinate (Dv) at (6,4);  
	\coordinate (Ev) at (5,1);  
	\coordinate (Ov) at (1,1);  
	\coordinate (bvAB) at (1,3);  
	\coordinate (bvCD) at (5,3);  
	\coordinate (bvABCDE) at (2,2);  
	\coordinate (bvCDE) at (4,2);  

	\draw[edgestyle,-<-] (Av) -- node[scale=\edgelabelsize,edgecolor!60!black,left,midway]{$A$} (bvAB);
	\draw[edgestyle,-<-] (Bv) -- node[scale=\edgelabelsize,edgecolor!60!black,right,midway]{$B$} (bvAB);
	\draw[edgestyle,-<-] (Cv) -- node[scale=\edgelabelsize,edgecolor!60!black,left,midway]{$C$}  (bvCD);
	\draw[edgestyle,-<-] (Dv) -- node[scale=\edgelabelsize,edgecolor!60!black,right,midway]{$D$} (bvCD);
	\draw[edgestyle,-<-] (bvAB) -- node[scale=\edgelabelsize,edgecolor!60!black,left,midway]{$AB$} (bvABCDE);
	\draw[edgestyle,-<-] (bvCD) -- node[scale=\edgelabelsize,edgecolor!60!black,right,midway]{$CD$} (bvCDE);
	\draw[edgestyle,->-] (bvABCDE) -- node[scale=\edgelabelsize,edgecolor!60!black,below,midway]{$CDE$} (bvCDE);
	\draw[edgestyle,-<-] (Ev) -- node[scale=\edgelabelsize,edgecolor!60!black,right,midway]{$E$} (bvCDE);
	\draw[edgestyle,->-] (Ov) -- node[scale=\edgelabelsize,edgecolor!60!black,left,midway]{$ABCDE$} (bvABCDE);
	
	\filldraw [color=bvcolor] (bvAB) circle (\v) node[bvcolor!50!black,above right=0pt]{} ;
	\filldraw [color=bvcolor] (bvCD) circle (\v) node[bvcolor!50!black,above right=0pt]{} ;
	\filldraw [color=bvcolor] (bvABCDE) circle (\v) node[bvcolor!50!black,above right=0pt]{} ;
	\filldraw [color=bvcolor] (bvCDE) circle (\v) node[bvcolor!50!black,above right=0pt]{} ;

	\filldraw [color=Acolor] (Av) circle (\v) node[Acolor!50!black,above=2pt]{$A$} ;
	\filldraw [color=Bcolor] (Bv) circle (\v) node[Bcolor!50!black,above=2pt]{$B$} ;
	\filldraw [color=Ccolor] (Cv) circle (\v) node[Ccolor!50!black,above=2pt]{$C$} ;
	\filldraw [color=Dcolor] (Dv) circle (\v) node[Dcolor!50!black,above=2pt]{$D$} ;
	\filldraw [color=Ecolor] (Ev) circle (\v) node[Ecolor!50!black,below=2pt]{$E$} ;
	\filldraw [color=Ocolor] (Ov) circle (\v) node[Ocolor!50!black,below=2pt]{$O$} ;
	
\end{tikzpicture} 
    \caption{An example of a simple tree graph $\tgm{5}$, with explicit edge labeling and orientation indicated.}

    \label{fig:example_simple_tree_label}
\end{figure}

With this convention at hand, it is also convenient to introduce a canonical orientation of the edges of the tree which will induce an inclusion relation among the indices $\I(e)$. Specifically, denoting by $v_{_{\!\N+1}}$ the boundary vertex of the graph labeled by the purifier, and by $v_{_{\!L}}$ an arbitrary leaf, we consider the path from $v_{_{\!\N+1}}$ to $v_{_{\!L}}$ and orient the edges to turn it into a directed path which starts at $v_{_{\!\N+1}}$ and ends at $v_{_{\!L}}$. We will denote such a directed path by $\mathscr{P}(v_{_{\!\N+1}},v_{_{\!L}})$. By repeating this procedure for all leaves, we fix an orientation for the whole graph.

Using this orientation, we can then introduce a partial order on the set of edges. Given $e,f\in E$ we say that $e<f$ if there exists a leaf $L$ such that $e,f\in \mathscr{P}(v_{_{\!\N+1}},v_{_{\!L}})$ and $e$ precedes $f$. We then have
\begin{align}
\label{eq:ordering}
    e<f &\implies \I(e)\supseteq\I(f)\nonumber\\
    e,f \;\;\text{incomparable} &\implies \I(e)\cap\I(f)=\varnothing
\end{align}
where the equality $\I(e)=\I(f)$ can be attained if $e$ and $f$ are the edges adjoining to a degree-$2$ bulk vertex.\footnote{ \, Note that the implication in \cref{eq:ordering} applies in reverse as well, however most polychromatic subsystems do not have an associated edge (since a simple $\N$-party tree graph has at most $2\N-1$ edges, as compared to $\D=2^\N -1$ polychromatic subsystems).  In particular, two subsystems $\I(e)$ and $\I(f)$ can never be crossing (i.e. have a non-empty intersection which is a proper subset of both).}
An example of a simple tree graph with the explicit edge labeling and orientation is shown in \cref{fig:example_simple_tree_label}. 

This relation between edges and ``homologous'' subsystems allows us to conveniently translate the description of any $\I$-cut from the usual one based on vertices to a new one based on edges. Specifically, from the assumption of simplicity it follows that:

\begin{nlemma}
\label{lem:symdif}
    On a simple tree $\tgm{\N}$, a non-empty collection of edges $X\subseteq E$ is the set of cut-edges $\cut (U_{\I})$ for an $\I$-cut $U_{\I}$, where 
    \begin{equation}
    \label{eq:symdif}
      \I=\I(X)=\underset{e\in X}{\triangle}\K(e)
    \end{equation}
and $\triangle$ denotes the symmetric difference.\footnote{\, For any pair of sets $(X,Y)$, the symmetric difference is the disjunctive union
$$
	X \triangle Y 
	= (X \setminus Y) \cup (Y \setminus X)
	= (X \cup Y) \setminus (X \cap Y)
	= Y \triangle X
$$
Since the symmetric difference is associative, we can iterate this straightforwardly, so that for $n$-ary symmetric difference of a collection of sets we have just the elements which are in an odd number of the sets in that collection.  For example, for $n=3$
$$
	\underset{V\in\{X,Y,Z\}}{\triangle} V
	= X \triangle Y \triangle Z
	= (X \setminus (Y\cup Z)) \cup (Y \setminus (X\cup Z)) \cup (Z \setminus (X\cup Y)) \cup (X \cap Y \cap Z)\,.
$$
}

\end{nlemma}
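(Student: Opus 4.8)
The plan is to reduce everything to linear algebra over $\mathbb{F}_2$. Identify subsets of $V$ (resp.\ of $E$) with their indicator vectors in $\mathbb{F}_2^{V}$ (resp.\ $\mathbb{F}_2^{E}$), so that symmetric difference becomes addition. First I would note that the map $\partial\colon \mathbb{F}_2^{V}\to\mathbb{F}_2^{E}$ given by $\partial(U)=\cut(U)$ is $\mathbb{F}_2$-linear: for $e=(v,v')$ the coordinate $\mathbb{1}_{\cut(U)}(e)=\mathbb{1}_{U}(v)+\mathbb{1}_{U}(v')$ is linear in $\mathbb{1}_U$, so $e$ lies in $\cut(U\triangle U')$ exactly when it lies in precisely one of $\cut(U)$ and $\cut(U')$. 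Its kernel is the set of unions of connected components of $\tgm{\N}$, which for a tree (being connected) is just $\{\varnothing,V\}$, hence $1$-dimensional. Since a tree has $\E=\abs{V}-1$, rank--nullity gives $\dim(\mathrm{im}\,\partial)=\abs{V}-1=\E$, so $\partial$ is surjective; equivalently, for every non-empty $X\subseteq E$ there is a cut $U$ with $\cut(U)=X$, unique up to complementation in $V$ (the fibre being a coset of $\{\varnothing,V\}$). This already gives the ``$X$ is a set of cut-edges'' part of the claim and shows the associated bipartition is determined by $X$.

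Second, I would pin down the purifier-free representative of this bipartition and compute with it. For each $e\in X$, deleting $e$ from the tree yields exactly two components; let $U_e$ be the one not containing the purifier vertex $v_{_{\!\N+1}}$, so that $\cut(U_e)=\{e\}$ and, by the very definition of $\I(e)$, $\I(e)=\beta(\partial V\cap U_e)$. Set
\begin{equation}
    U \;\coloneqq\; \underset{e\in X}{\triangle}\, U_e .
\end{equation}
Linearity of $\partial$ gives $\cut(U)=\underset{e\in X}{\triangle}\{e\}=X$ at once, and since $v_{_{\!\N+1}}$ belongs to none of the $U_e$ it belongs to $U$ with even multiplicity, i.e.\ $v_{_{\!\N+1}}\notin U$. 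So $U$ is precisely the representative of the $X$-bipartition excluding the purifier, matching the convention used to define both $\I(e)$ and the homology index of a cut.

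Third, I would read off the subsystem. Simplicity of $\tgm{\N}$ means $\beta$ is injective on $\partial V$, so $U\cap\partial V=\beta^{-1}(\I)$ for $\I\coloneqq\beta(\partial V\cap U)$, which is exactly what makes $U$ an $\I$-cut. Then, using that $\cap$ distributes over $\triangle$, and that an injective $\beta$ sends symmetric differences of subsets of $\partial V$ to symmetric differences of their images,
\begin{equation}
    \I \;=\; \beta\!\left(\partial V\cap U\right)
      \;=\; \beta\!\left(\underset{e\in X}{\triangle}\,(\partial V\cap U_e)\right)
      \;=\; \underset{e\in X}{\triangle}\,\beta(\partial V\cap U_e)
      \;=\; \underset{e\in X}{\triangle}\,\I(e),
\end{equation}
which is \cref{eq:symdif}. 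The same computation shows the purifier color never appears in $\I$ (consistent with $\I$ being a non-underlined index), and the degenerate possibility $\I=\varnothing$ — which occurs only when every $e\in X$ has $\I(e)=\varnothing$ — is the harmless ``simple exception'': by \cref{lem:topmin} no such $X$ can be the cut-edge set of a min-cut.

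I expect the only points requiring care to be the two elementary distributivity facts used in the last step ($\cap$ over $\triangle$, and injective images commuting with $\triangle$); once those are stated cleanly, the remainder is the standard $\mathbb{F}_2$ cut-space dictionary plus bookkeeping. As a more hands-on fallback avoiding the linear-algebra packaging, one can instead contract each component of $\tgm{\N}\setminus X$ to a point, observe that the quotient is again a tree, $2$-color it, and check that a boundary vertex lands on the purifier-free side iff the unique path from $v_{_{\!\N+1}}$ to it in $\tgm{\N}$ meets $X$ in an odd number of edges; this makes the symmetric-difference formula transparent but is slightly clumsier, so I would keep the $\mathbb{F}_2$ argument as the main line.
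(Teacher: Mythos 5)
Your proof is correct. The underlying mechanism is the same parity statement as in the paper's proof --- a color $\ell$ lands in $\I(X)$ iff the path from the purifier to $v_{_{\!\ell}}$ crosses an odd number of edges of $X$ --- but your packaging is genuinely different. The paper constructs the cut by hand (traverse each path $\mathscr{P}(v_{_{\!\N+1}},v_{_{\!L}})$, flipping the label between $U$ and $U^{\complement}$ at each edge of $X$) and then argues color-by-color that membership in $\I(X)$ matches membership in $\underset{e\in X}{\triangle}\K(e)$ by identifying $\K(e)$ with the set of colors ``downstream'' of $e$. You instead treat the cut map as an $\mathbb{F}_2$-linear map $\partial$, decompose $X=\underset{e\in X}{\triangle}\{e\}$ and $U=\underset{e\in X}{\triangle}U_e$ with $U_e$ the single-edge cuts, and read off \cref{eq:symdif} from linearity plus distributivity of $\cap$ over $\triangle$ and injectivity of $\beta$ on $\partial V$ (this is exactly where simplicity enters, and you correctly flag it). Your version buys brevity and makes the symmetric-difference formula structurally inevitable rather than a pointwise check; the paper's version is more elementary and self-contained. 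Two minor remarks: your rank--nullity paragraph establishing surjectivity of $\partial$ is logically redundant, since the explicit $U=\underset{e\in X}{\triangle}U_e$ already exhibits a cut with $\cut(U)=X$ (though it does supply the ``unique up to complementation'' observation); and your handling of the purifier ($v_{_{\!\N+1}}\notin U_e$ for all $e$, hence $v_{_{\!\N+1}}\notin U$) and of the degenerate case $\I(X)=\varnothing$ matches the paper's conventions. Your closing ``hands-on fallback'' is essentially the paper's own argument.
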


\begin{proof}
    Given a simple tree $\tgm{\N}$ and a non-empty collection of edges $X$, we start by constructing a cut $U$ such that $\cut (U)=X$. Consider an edge $e\in X$ and a leaf $v_{_{\!L}}$ such that $e\in \mathscr{P}(v_{_{\!\N+1}},v_{_{\!L}})$. Starting from $v_{_{\!\N+1}}$, we follow the path $\mathscr{P}(v_{_{\!\N+1}},v_{_{\!L}})$ and label each vertex by $U$ or $U^{\complement}$ as follows. We label $v_{_{\!\N+1}}$ by $U^{\complement}$ then we follow the path and continue labeling the vertices by $U^{\complement}$ until we reach an edge in $X$. After we cross the edge we label all the vertices by $U$ until we reach another edge in $X$. We proceed in this fashion, alternating between $U^{\complement}$ and $U$ each time we cross an edge in $X$ until we reach $v_{_{\!L}}$. Then we repeat the same procedure following other paths, until we have labeled all the vertices in the graph. 
    
    Having constructed the desired cut, we now need to determine $\I(X)=\beta(U\cap\partial V)$. For a color $\ell\in [\N]$, denote by $v_{_{\!\ell}}$ the (unique by simplicity) vertex in $\tgm{\N}$ labeled by $\ell$. From the construction of $U$ described above, it follows that $\ell\in\I(X)$ if and only if the path $\mathscr{P}(v_{_{\!\N+1}},v_{_{\!\ell}})$ includes an odd number of edges in $X$. Furthermore, the index $\K(e)$ associated to an edge $e$ can be seen as the set of colors labeling the boundary vertices that follow $e$ in any path $\mathscr{P}(v_{_{\!\N+1}},v_{_{\!\ell}})$ that includes $e$. Therefore, a color $\ell$ appears in the expression at the right hand side of \cref{eq:symdif} if and only if the path $\mathscr{P}(v_{_{\!\N+1}},v_{_{\!\ell}})$ includes an odd number of edges in $X$, concluding the proof.
    
\end{proof}

Using this translation from a description in terms of vertices to one in terms of edges, we can then prove a useful property of min-cuts on simple tree graphs which is reminiscent of \cref{lem:mincutdecaltgen}.

\begin{nlemma}[Min-cut decomposition for simple trees]
\label{lem:mincutdecalt}
    Let $\tgm{\N}$ be a simple tree with a min-cut structure $\mathfrak{m}$, $\I$ a subsystem, and $\cut_{\I}^*$ the set of edges for some min-cut. Then any $X\subseteq\cut_{\I}^*$ is the set of edges $\cut_{\K}^*=X$ for a min-cut for the subsystem $\K=\K(X)$ given by \cref{eq:symdif}.
\end{nlemma}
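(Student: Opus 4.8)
The plan is to combine the edge-based description of cuts from \cref{lem:symdif} with the submodularity argument already used in \cref{lem:mincutdecaltgen}. First I would set up the bookkeeping: let $\cut_{\I}^*$ be the cut edges of a min-cut $\mU_{\I}$, and fix an arbitrary $X\subseteq\cut_{\I}^*$. By \cref{lem:symdif}, $X$ is the set of cut edges of \emph{some} cut $U_{\K}$ for $\K=\K(X)=\triangle_{e\in X}\K(e)$, and likewise the complementary set $Y\coloneqq\cut_{\I}^*\setminus X$ is the set of cut edges of some cut $U_{\K'}$ with $\K'=\K(Y)=\triangle_{e\in Y}\K(e)$. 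Because symmetric difference is associative and $\cut_{\I}^*=X\sqcup Y$ (disjoint union), we get $\I(\cut_{\I}^*)=\K(X)\triangle\K(Y)$, i.e.\ $\I=\K\triangle\K'$. The key structural point to extract from simplicity is that on a \emph{tree} the induced subgraph on $\mU_{\I}$ splits into components whose boundary-vertex sets are disjoint, and that the partition of $\cut_{\I}^*$ into $X$ and $Y$ realizes precisely such a splitting — so that $\norm{\cut_{\I}^*}=\norm{X}+\norm{Y}$ and moreover $X,Y$ correspond to disjoint vertex subsets. This is exactly the hypothesis of \cref{lem:mincutdecaltgen} (or of its trivial iteration), from which the conclusion follows.

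More concretely, the core step is to argue: \emph{if $\mU_{\I}$ is a min-cut on a simple tree and $X\subseteq\cut_{\I}^*$, then the cut $U_{\K}$ built in \cref{lem:symdif} can be taken to satisfy $\cut(U_{\K})=X$ with $U_{\K}\subseteq\mU_{\I}$ (or disjoint from it, after possibly complementing), and $U_{\I}^*$ decomposes as a disjoint union of induced subgraphs, one cut out by $X$ and one by $Y$.} On a tree, deleting the edges in $\cut_{\I}^*$ disconnects the graph into components, each assigned wholly to $U_{\I}^*$ or its complement; grouping the components that are ``switched on'' by crossing an edge of $X$ versus one of $Y$ gives the desired decomposition of $\gm{\N}[\mU_{\I}]$ into two disjoint induced subgraphs $\gm{\N}[U_{\K}]\oplus\gm{\N}[U_{\K'}]$ with the cut-edge sets $X$ and $Y$ respectively. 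Then \cref{lem:mincutdecaltgen} (in its iterated form, as noted right after its proof) immediately yields that $U_{\K}$ and $U_{\K'}$ are min-cuts for $\K$ and $\K'$, so in particular $\cut_{\K}^*=X$ with $\K=\K(X)$ as claimed.

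The main obstacle I anticipate is the careful verification that the cut $U_{\K}$ produced by the \cref{lem:symdif} construction, when restricted to a sub-collection $X$ of a min-cut's edges, can be chosen so that its induced subgraph is literally one of the pieces of $\gm{\N}[\mU_{\I}]$ — i.e.\ that the ``alternating label'' construction of \cref{lem:symdif} is compatible with the alternating labels that already define $\mU_{\I}$. This is really a statement about trees: removing any subset of edges of a tree yields a forest whose components are unambiguously distributed between the two sides of any cut whose cut-edge set is that subset, and there is genuinely no crossing (as emphasized in the footnote to \cref{eq:ordering}). Once this compatibility is nailed down, everything else is a direct appeal to \cref{lem:symdif} for the identification $\K=\K(X)$ and to \cref{lem:mincutdecaltgen} for the minimality, with the submodularity/subadditivity inequality doing exactly the same work as in the proof of \cref{lem:mincutdecaltgen}. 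I would also remark, for completeness, that edges $e$ with $\I(e)=\varnothing$ never occur in $X$ since by \cref{lem:topmin} they never lie in any $\cut_{\I}^*$, so the abuse of notation flagged before \cref{lem:mincutdecalt} causes no trouble here.
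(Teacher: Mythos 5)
Your reduction to \cref{lem:mincutdecaltgen} does not go through, and the obstruction is exactly the one you flag as your ``main obstacle'' but then wave away. The hypothesis of \cref{lem:mincutdecaltgen} is that $\gm{\N}[\mU_{\I}]$ splits as $\gm{\N}[U_{\J}]\oplus\gm{\N}[U_{\K}]$ with $\I=\J\cup\K$ a \emph{bipartition} of $\I$; but the subsystem $\K(X)=\triangle_{e\in X}\K(e)$ attached to a subset $X\subseteq\cut_{\I}^*$ is built from a symmetric difference, not a union, and is in general \emph{not} a subset of $\I$ (nor of $\I^\complement$). Correspondingly, the $\K(X)$-cut produced by the alternating-label construction of \cref{lem:symdif} is in general not contained in $\mU_{\I}$, nor disjoint from it after complementing. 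Concretely, in the example of \cref{fig:mincutdecexample} take $\I=AC\!D\!E$ with $\cut_{\I}^*=\{e_A,e_{AB},e_{ABC\!D},e_E\}$ and $X=\{e_{AB},e_E\}$: then $\K(X)=AB\triangle E=AB\!E$, whose cut contains $B$ and the bulk vertex adjacent to it, neither of which lies in $\mU_{AC\!D\!E}$. The paper's figure makes precisely this point — the subsystems in the list $\mathscr{L}_4$ there cannot be reached by any iteration of \cref{lem:mincutdecaltgen}, even allowing complementation — so the statement you need, that ``the partition of $\cut_{\I}^*$ into $X$ and $Y$ realizes such a splitting of $\gm{\N}[\mU_{\I}]$,'' is false, and the lemma is genuinely stronger than anything obtainable this way.

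The fix is to abandon the vertex/induced-subgraph picture entirely and argue by a direct edge exchange, which is what \cref{lem:symdif} buys you: $X$ is the cut-edge set of \emph{some} $\K$-cut with $\K=\K(X)$, and $Y=\cut_{\I}^*\setminus X$ is the cut-edge set of some $\J$-cut with $\J=\J(Y)$ and $\I=\K\triangle\J$. If $X$ were not minimal for $\K$, let $X'$ be the cut-edge set of a genuine min-cut for $\K$, so $\norm{X'}<\norm{X}$; since $\K(X')=\K=\K(X)$, the set $X'\triangle Y$ again specifies an $\I$-cut by \cref{lem:symdif}, of weight at most $\norm{X'}+\norm{Y}<\norm{X}+\norm{Y}=\norm{\cut_{\I}^*}$, contradicting minimality of $\cut_{\I}^*$. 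No decomposition of the homology region is needed; the whole point of the edge-based description on a simple tree is that cut-edge sets can be swapped freely without tracking which side of the cut each forest component lands on.
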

\begin{proof}
    By \cref{lem:symdif}, $X\subseteq\cut_{\I}^*$ specifies a $\K$-cut for $\K=\K(X)$, so we just need to show it is minimal. Notice that, similarly,
    $Y=\cut_{\I}^* \setminus X$ specifies a $\J$-cut for $\J=\J(Y)$, and that by assumption $\cut_{\I}^* = X\cup Y$ are the min-cut edges for $\I=\I(X\cup Y)=\K(X) \Delta \J(Y)$. To show that $X$ are min-cut edges for $\K$, assume for contradiction that its actual min-cut edges are $X'$, with weight $\norm{X'}<\norm{X}$. Then since $\I=\K(X') \Delta \J(Y)$, $X'\cup Y$ specifies an $\I$-cut, and its weight gives the desired contradiction $\norm{X'\cup Y}<\norm{\cut_{\I}^*}$.
\end{proof}

In the particular case where a cut involves two edges, this in turn implies a particularly useful connection to the vanishing of an instance of the mutual information:

\begin{nlemma}
\label{lem:2edgesMI}
   Given a simple tree $\tgm{\N}$, a min-cut structure $\mathfrak{m}$ and any pair of edges $e,f\in E$, if $\{e,f\}=\cut^*_{\I(e)\triangle\I(f)}$ (for some choice of representative min-cut in case of degeneracy), then 
   \begin{equation}
   \label{eq:I2twoedges}
       S_{\I(e)}+S_{\I(f)}-S_{\I(e)\triangle\I(f)}
   \end{equation}
   is an instance of the mutual information in the \emph{PMI} of $(\tgm{\N},\mathfrak{m})$.
\end{nlemma}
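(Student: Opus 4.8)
The plan is to produce an explicit instance of the mutual information that (a) equals the combination \cref{eq:I2twoedges} of entropies, and (b) is forced to vanish by the min-cut structure $\mathfrak{m}$, so that it is one of the hyperplanes cutting out $\pi(\tgm{\N},\mathfrak{m})$. Since $\tgm{\N}$ is a tree and $e\neq f$ (otherwise $\I(e)\triangle\I(f)=\varnothing$ and the hypothesis is vacuous), deleting $e$ and $f$ splits $\tgm{\N}$ into exactly three connected components; call $T_2$ the ``middle'' one, adjacent to both $e$ and $f$, and $T_1,T_3$ the other two, with $e$ joining $T_1$ to $T_2$ and $f$ joining $T_2$ to $T_3$. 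The candidate instance is ${\sf I}(\underline{\I}_1:\underline{\I}_3)$ with $\underline{\I}_1\coloneqq\beta(\partial V\cap T_1)$ and $\underline{\I}_3\coloneqq\beta(\partial V\cap T_3)$, which are disjoint because $\tgm{\N}$ is simple and $T_1,T_3$ are disjoint vertex sets. A cut with set of cut edges $\{e,f\}$ must be a union of whole components, and the crossing conditions at $e$ and $f$ pin it down to exactly $T_2$ and $T_1\cup T_3$.

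The first step is to check $S_{\I(e)}+S_{\I(f)}-S_{\I(e)\triangle\I(f)}={\sf I}(\underline{\I}_1:\underline{\I}_3)$. The single edge $e$ defines the bipartition $\{T_1,\,T_2\cup T_3\}$; by definition $\I(e)$ is the colour set of whichever side avoids the purifier vertex, and since the complement of a min-cut is a min-cut for the complementary subsystem both sides have the same entropy, giving $S_{\I(e)}=S_{\beta(\partial V\cap T_1)}$; likewise $S_{\I(f)}=S_{\beta(\partial V\cap T_3)}$. By hypothesis there is a min-cut for $\I(e)\triangle\I(f)$ with set of cut edges $\{e,f\}$, and by the remark above the only two cuts with that cut-edge set are $T_2$ and $T_1\cup T_3$, which are complementary, so $S_{\I(e)\triangle\I(f)}=S_{\beta(\partial V\cap(T_1\cup T_3))}$. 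Since $\beta(\partial V\cap(T_1\cup T_3))=\underline{\I}_1\cup\underline{\I}_3$ as a disjoint union, the combination collapses to $S_{\underline{\I}_1}+S_{\underline{\I}_3}-S_{\underline{\I}_1\cup\underline{\I}_3}={\sf I}(\underline{\I}_1:\underline{\I}_3)$. I would stress that this step uses only $S_{\underline{\J}}=S_{\underline{\J}^{\complement}}$ and thus never needs to track which of the three components actually holds the purifier, nor whether every colour is used.

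The second step is to show ${\sf I}(\underline{\I}_1:\underline{\I}_3)=0$ holds for every graph model in $(\tgm{\N},\mathfrak{m})$, using the criterion \cref{lem:vanishingI2}. Put $W\coloneqq T_1\cup T_3$. Since $W$ is either the hypothesised min-cut or its complement, $W$ is a min-cut; and since the only edges of $\tgm{\N}$ leaving $T_1$ (resp.\ $T_3$) are $e$ (resp.\ $f$), both landing in $T_2$, there are no edges between $T_1$ and $T_3$, so $\tgm{\N}[W]=\tgm{\N}[T_1]\oplus\tgm{\N}[T_3]$. Applying \cref{lem:topmin} to $W$ shows that $T_1$ and $T_3$ each contain a boundary vertex, so $\underline{\I}_1,\underline{\I}_3$ are non-empty and ${\sf I}(\underline{\I}_1:\underline{\I}_3)$ is a bona fide instance; and since $\beta(\partial V\cap W)=\underline{\I}_1\cup\underline{\I}_3$ with $\underline{\I}_1\cap\underline{\I}_3=\varnothing$, \cref{lem:mincutdecaltgen} gives that $T_1$ and $T_3$ are min-cuts for $\underline{\I}_1$ and $\underline{\I}_3$. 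Then $\mU_{\underline{\I}_1}=T_1$, $\mU_{\underline{\I}_3}=T_3$ and $\mU_{\underline{\I}_1\cup\underline{\I}_3}=W$ realise the decomposition \cref{eq:EWdiscon}, so \cref{lem:vanishingI2} forces the vanishing, i.e.\ ${\sf I}(\underline{\I}_1:\underline{\I}_3)=0$ is one of the instances determining $\pi(\tgm{\N},\mathfrak{m})$; combined with the first step this is exactly what is claimed.

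I expect the main difficulty to be bookkeeping rather than anything conceptual: establishing the three-component picture cleanly (existence and uniqueness of the middle component, and that exactly $T_2$ and $T_1\cup T_3$ realise the cut-edge set $\{e,f\}$), and keeping the argument robust against the two mildly annoying features of simple trees — that $\I(e)$ and $\I(f)$ may be nested rather than disjoint (so the combination is not \emph{manifestly} a mutual information until one invokes purity) and that some colours or the purifier need not sit in any particular component. Routing everything through $S_{\underline{\J}}=S_{\underline{\J}^{\complement}}$ and through the lemmas of \cref{ssec:graph_lemmas}, rather than through an explicit case analysis on the purifier's location, should keep the write-up short; the residual degenerate cases ($\I(e)=\I(f)$, or $T_2$ carrying no boundary vertex) either make the hypothesis vacuous or leave $\underline{\I}_1\cup\underline{\I}_3$ non-empty, so they need only a one-line remark.
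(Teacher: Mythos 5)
Your proof is correct. It establishes the same two facts as the paper's proof --- that the combination \cref{eq:I2twoedges} vanishes on the whole class, and that it is literally an instance of \cref{eq:i2instance} --- but by a different route. The paper gets the vanishing from the tree-specific decomposition \cref{lem:mincutdecalt} (so that $\{e\}=\cut^*_{\I(e)}$, $\{f\}=\cut^*_{\I(f)}$ and the three entropies are $w_e$, $w_f$, $w_e+w_f$), and then identifies the combination as a mutual information by a two-case analysis on the partial order \cref{eq:ordering}: $\underline{\J}=\I(e)$, $\underline{\K}=\I(f)$ when $e,f$ are incomparable, and $\underline{\J}=\I(e)^{\complement}$, $\underline{\K}=\I(f)$ (i.e.\ purify) when they are nested. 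Your three-component decomposition packages both cases into one: the location of the purifier among $T_1,T_2,T_3$ is precisely what distinguishes the paper's two cases, and your $\underline{\I}_1,\underline{\I}_3$ reproduce the paper's $\underline{\J},\underline{\K}$ in each of them, so the identification step is unified and arguably cleaner. For the vanishing you instead go through \cref{lem:vanishingI2} together with the general lemmas \cref{lem:mincutdecaltgen} and \cref{lem:topmin}, which works because you have pinned down the cut geometry explicitly ($T_2$ and $T_1\cup T_3$ are the only cuts with cut-edge set $\{e,f\}$); this bypasses \cref{lem:symdif} and \cref{lem:mincutdecalt} entirely, at the cost of having to justify the three-component picture. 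Your treatment of the degenerate cases (non-emptiness of $\underline{\I}_1,\underline{\I}_3$ via topological minimality, vacuity when $\I(e)=\I(f)$) is more explicit than the paper's and is correct.
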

\begin{proof}
   By \cref{lem:mincutdecalt} we have $\{e\}=\cut^*_{\I(e)}$ and $\{f\}=\cut^*_{\I(f)}$ (again for some choice of representatives in case of degeneracy) and the combination in \cref{eq:I2twoedges} vanishes. Therefore all we need to show is that there exists a choice of underlined indices $\underline{\J},\underline{\K}$ such that the expression in \cref{eq:i2instance} is equal to \cref{eq:I2twoedges}. If $e$ and $f$ are incomparable, cf. \cref{eq:ordering}, then $\I(e)\cap\I(f)=\varnothing$ which implies  $\I(e)\triangle\I(f)=\I(e)\cup\I(f)$ and we can choose $\underline{\J}=\I(e)$ and $\underline{\K}=\I(f)$. If $e<f$ (if $f<e$ simply swap $e$ and $f$ in what follows), then $\I(e)\supset\I(f)$ which implies $\I(e)\triangle\I(f)=\I(e)\setminus\I(f)$ and we can choose $\underline{\J}=\I(e)^{\complement}$ and $\underline{\K}=\I(f)$.
\end{proof}
Notice that as exemplified in \cref{fig:mincutdecexample} the implications of \cref{lem:mincutdecalt} and \cref{lem:2edgesMI} are stronger than what would follow from a straightforward iteration of \cref{lem:mincutdecaltgen}.

\begin{figure}[tb]
\begin{subfigure}{0.49\textwidth}
\centering
\begin{tikzpicture}

	\tikzmath{
	\edgelabelsize=0.7;
		\v =0.1;  
	}
	\coordinate (Av) at (0,4);  
	\coordinate (Bv) at (2,4);  
	\coordinate (Cv) at (3,3);  
	\coordinate (Dv) at (3,1);  
	\coordinate (Ev) at (2,0);  
	\coordinate (Ov) at (0,0);  
	\coordinate (bvAB) at (1,3);  
	\coordinate (bvCD) at (2,2);  
	\coordinate (bvABCD) at (1,2);  
	\coordinate (bvABCDE) at (1,1);  

	\draw[cutedgestyle] (Av) -- node[scale=\edgelabelsize,cutedgecolor!60!black,left,midway]{$A$} (bvAB);
	\draw[edgestyle] (Bv) -- node[scale=\edgelabelsize,edgecolor!60!black,right,midway]{$B$} (bvAB);
	\draw[edgestyle] (Cv) -- node[scale=\edgelabelsize,edgecolor!60!black,left,midway]{$C$}  (bvCD);
	\draw[edgestyle] (Dv) -- node[scale=\edgelabelsize,edgecolor!60!black,left,midway]{$D$} (bvCD);
	\draw[cutedgestyle] (bvAB) -- node[scale=\edgelabelsize,cutedgecolor!60!black,left,midway]{$AB$} (bvABCD);
	\draw[edgestyle] (bvCD) -- node[scale=\edgelabelsize,edgecolor!60!black,above,midway]{$CD$} (bvABCD);
	\draw[cutedgestyle] (bvABCD) -- node[scale=\edgelabelsize,cutedgecolor!60!black,left,midway]{$ABCD$} (bvABCDE);
	\draw[cutedgestyle] (Ev) -- node[scale=\edgelabelsize,cutedgecolor!60!black,right,midway]{$E$} (bvABCDE);
	\draw[edgestyle] (Ov) -- node[scale=\edgelabelsize,edgecolor!60!black,left,midway]{$ABCDE$} (bvABCDE);
	
	\filldraw [color=bvcolor] (bvAB) circle (\v) node[bvcolor!50!black,above right=0pt]{} ;
	\filldraw [color=bvcolor] (bvCD) circle (\v) node[bvcolor!50!black,above right=0pt]{} ;
	\filldraw [color=bvcolor] (bvABCD) circle (\v) node[bvcolor!50!black,above right=0pt]{} ;
	\filldraw [color=bvcolor] (bvABCDE) circle (\v) node[bvcolor!50!black,above right=0pt]{} ;

	\filldraw [color=Acolor] (Av) circle (\v) node[Acolor!50!black,above=2pt]{$A$} ;
	\filldraw [color=Bcolor] (Bv) circle (\v) node[Bcolor!50!black,above=2pt]{$B$} ;
	\filldraw [color=Ccolor] (Cv) circle (\v) node[Ccolor!50!black,above=2pt]{$C$} ;
	\filldraw [color=Dcolor] (Dv) circle (\v) node[Dcolor!50!black,above=2pt]{$D$} ;
	\filldraw [color=Ecolor] (Ev) circle (\v) node[Ecolor!50!black,below=2pt]{$E$} ;
	\filldraw [color=Ocolor] (Ov) circle (\v) node[Ocolor!50!black,below=2pt]{$O$} ;



\end{tikzpicture} 
\end{subfigure}
\begin{subfigure}{0.49\textwidth}
    \centering
    \footnotesize{
    \begin{align*}
        & \mathscr{L}_1\!:\;\;  AC\!D\!E \\[10 pt]
        & \mathscr{L}_2\!:\;\;  AC\!D,\; C\!D\!E,\; A\!E,\; A,\; C\!D,\; E\; \\[10 pt]
        & \mathscr{L}_3\!:\;\;  B,\; ABC\!D,\; AB,\; ABC\!D\!E\; \\[10 pt]
        & \mathscr{L}_4\!:\;\;  BC\!D,\; AB\!E,\; BC\!D\!E,\; B\!E\;
    \end{align*}
    }
\end{subfigure}
    \caption{An example of a simple tree with a choice of min-cut for $AC\!D\!E$, showing the different implications of \cref{lem:mincutdecaltgen} and \cref{lem:mincutdecalt}. The min-cut for $AC\!D\!E$ is specified by its set of cut edges $\cut^*_{AC\!D\!E}$, shown in red in the figure. We have labeled each edge $e$ in the graph by the corresponding polychromatic index $\I(e)$. The right panel shows the full list of subsystems whose min-cuts are fixed by $\cut^*_{AC\!D\!E}$ according to \cref{lem:mincutdecalt}, each one corresponding to a subset of $\cut^*_{AC\!D\!E}$ (including our starting choice $\mathscr{L}_1$). The min-cut for each subsystem in $\mathscr{L}_2$ can equivalently be obtained by a straightforward application of \cref{lem:mincutdecaltgen} (after some iteration). For the subsystems in $\mathscr{L}_3$ it is still sufficient to use \cref{lem:mincutdecaltgen}, but one also needs to consider min-cuts for complementary subsystems, which include the purifier (for example, the min-cut for $B$ is fixed by the fact that the min-cut for $BEO$, which is the complement of the min-cut for $ACD$, is disconnected). Even complementarity however is not sufficient to determine the min-cuts for the subsystems in $\mathscr{L}_4$ using \cref{lem:mincutdecaltgen}. Notice that we also have for example ${\sf I}(AB:E)=0$, which follows from \cref{lem:2edgesMI} again via \cref{lem:mincutdecalt}, but which is not implied by \cref{lem:mincutdecaltgen} (since $ABE$ is in the set $\mathscr{L}_4$).}
\label{fig:mincutdecexample}
\end{figure}
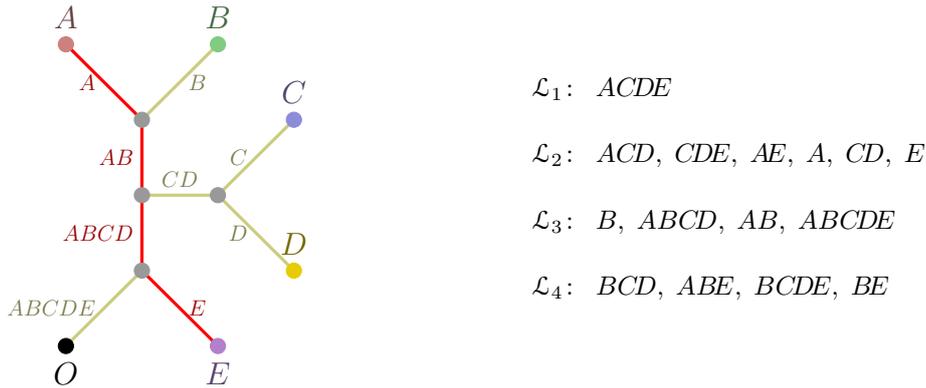

We are now ready to establish the anticipated connection between min-cut subspaces and PMIs for simple tree graphs. As usual we will discuss the case of generic min-cut structures first, and then extend the proof to min-cut structures that might include degeneracies. 

In the generic case, the proof will proceed by showing that due to the tree topology and simplicity, the set of vanishing instances of the mutual information completely determine the min-cut structure.
 
\begin{nlemma}
\label{lem:vpmisimple}
For any simple tree $\tgm{\N}$, and any generic min-cut structure $\mathfrak{m}$ on $\tgm{\N}$, the min-cut subspace $\mathbb{S}(\tgm{\N},\mathfrak{m})$ and the pattern of marginal independence $\pi(\tgm{\N},\mathfrak{m})$ coincide.
\end{nlemma}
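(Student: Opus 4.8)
The inclusion $\mathbb{S}(\tgm{\N},\mathfrak{m})\subseteq\pi(\tgm{\N},\mathfrak{m})$ is already supplied by \cref{lem:vinpmi}, so the whole content is the reverse inclusion, which I would establish in dual form, $\mathbb{S}(\tgm{\N},\mathfrak{m})^{\perp}\subseteq\pi(\tgm{\N},\mathfrak{m})^{\perp}$. For a generic min-cut structure $\mathbb{S}(\tgm{\N},\mathfrak{m})=\text{Im}\,\Gamma$, so $\mathbb{S}(\tgm{\N},\mathfrak{m})^{\perp}=\ker\Gamma^{\intercal}$ is the space of linear relations $\sum_{\I}c_{\I}\,\mathbf{\Gamma}_{\I}=0$ among the incidence vectors in $\mathbb{R}^{\E}$, whereas $\pi(\tgm{\N},\mathfrak{m})^{\perp}$ is spanned by the coefficient vectors of the mutual-information instances in the PMI, i.e. of those ${\sf I}(\underline{\I}:\underline{\K})$ that vanish identically on the S-cell. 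It therefore suffices to produce a spanning set of $\mathbb{S}(\tgm{\N},\mathfrak{m})^{\perp}$ each of whose members is a linear combination of such vanishing mutual informations.

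First I would fix such a spanning set. Call an edge \emph{active} if it lies in some min-cut; by \cref{lem:mincutdecalt} the singleton $\{e\}$ is then itself a min-cut, necessarily for the subsystem $\I(e)$ of \cref{eq:symdif}, and on active edges $e\mapsto\I(e)$ is injective (equal indices force equal singleton min-cuts). Hence for every polychromatic $\J$ one has $\mathbf{\Gamma}_{\J}=\sum_{e\in\cut^{*}_{\J}}\mathbf{\Gamma}_{\I(e)}$, so the vector $c^{(\J)}$ which is $+1$ in coordinate $\J$ and $-1$ in coordinate $\I(e)$ for each $e\in\cut^{*}_{\J}$ lies in $\mathbb{S}(\tgm{\N},\mathfrak{m})^{\perp}$. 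Moreover the $c^{(\J)}$ with $\J$ not of the form $\I(e)$ span $\mathbb{S}(\tgm{\N},\mathfrak{m})^{\perp}$: given any $c$ in it, subtracting $\sum_{\J}c_{\J}\,c^{(\J)}$ over those coordinates leaves a relation supported only on coordinates $\I(e)$, which in $\mathbb{R}^{\E}$ reads $\sum_{e\ \mathrm{active}}c'_{\I(e)}\,\mathbf{e}_{e}=0$ and so is zero.

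Next I would show each such $c^{(\J)}$ is a combination of vanishing mutual informations, by strong induction on $k=\abs{\cut^{*}_{\J}}$; for $k=1$, $\J=\I(e)$ and $c^{(\J)}=0$. For $k\geq 2$ pick an \emph{innermost} $e^{\ast}\in\cut^{*}_{\J}$, namely one whose subtree $T_{e^{\ast}}$ on the side of $e^{\ast}$ away from the purifier has fewest vertices; minimality forces $T_{e^{\ast}}\cap\cut^{*}_{\J}=\{e^{\ast}\}$. With $B=\cut^{*}_{\J}\setminus\{e^{\ast}\}$, \cref{lem:mincutdecalt} and \cref{eq:symdif} give $B=\cut^{*}_{\K(B)}$ and $\J=\I(e^{\ast})\,\triangle\,\K(B)$, while $S_{\I(e^{\ast})}+S_{\K(B)}=S_{\J}$ holds identically on the S-cell. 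The crucial point is that $\I(e^{\ast})$ and $\K(B)$ are never crossing: any boundary vertex of $T_{e^{\ast}}$ is reached from the purifier by crossing $e^{\ast}$ once and then no edge of $B$, so the parity of $B$-crossings is constant over the colors of $\I(e^{\ast})$, i.e. $\I(e^{\ast})\subseteq\K(B)$ or $\I(e^{\ast})\cap\K(B)=\varnothing$. In both cases $S_{\I(e^{\ast})}+S_{\K(B)}-S_{\J}$ is a genuine instance of the mutual information --- ${\sf I}(\I(e^{\ast}):\K(B))$ when disjoint, ${\sf I}(\I(e^{\ast}):\K(B)^{\complement})$ when nested, exactly the dichotomy in the proof of \cref{lem:2edgesMI} --- and, vanishing on the S-cell, it lies in $\pi(\tgm{\N},\mathfrak{m})$ (by \cref{lem:vanishingI2}). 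Writing $m$ for its coefficient vector one checks $c^{(\J)}=c^{(\K(B))}-m$; if $k=2$ then $c^{(\K(B))}=0$, and if $k\geq 3$ then $\K(B)$ is again not of the form $\I(e)$ and $\abs{\cut^{*}_{\K(B)}}=k-1$, so the induction hypothesis applies. Combining the two steps gives $\mathbb{S}(\tgm{\N},\mathfrak{m})^{\perp}\subseteq\pi(\tgm{\N},\mathfrak{m})^{\perp}$, hence equality of the two subspaces.

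I expect the only real obstacle to be this non-crossing condition: splitting $\cut^{*}_{\J}$ into two arbitrary parts produces a vanishing linear combination of entropies but not in general a mutual information, so such a relation need not certify membership in the PMI; the innermost-edge choice, together with the parity observation that all of $\I(e^{\ast})$ then sits uniformly on one side of every remaining cut edge, is precisely what upgrades the relation to a genuine ${\sf I}(\cdot:\cdot)=0$. (The degenerate case is excluded from the present statement and will require the separate argument announced in the text.)
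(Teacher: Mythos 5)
Your proof is correct and follows essentially the same route as the paper's: both pass to the dual statement $\mathbb{S}^{\perp}\subseteq\mathbb{P}^{\perp}$, identify the generators of $\mathbb{S}^{\perp}$ with the relations $S_{\J}=\sum_{e\in\cut^{*}_{\J}}S_{\I(e)}$ supplied by \cref{lem:mincutdecalt}, and decompose each such relation into a sum of vanishing mutual informations. The one place you go beyond the paper is the ordering of that decomposition: the paper telescopes over the cut edges in an unspecified order in \cref{eq:rewrite} and appeals to \cref{lem:2edgesMI}, which literally covers only the first bracket, whereas your innermost-edge induction---together with the parity argument showing $\I(e^{\ast})$ and $\K(B)$ are never crossing---is precisely what guarantees that every intermediate term is a genuine instance of ${\sf I}(\cdot:\cdot)$ rather than merely a vanishing linear combination of entropies. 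This is a worthwhile tightening of the same argument rather than a different proof.
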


\begin{proof}
Consider an arbitrary simple tree $\tgm{\N}$ and an arbitrary, but generic, min-cut structure $\mathfrak{m}$. We denote by $\mathbb{S}$ and $\mathbb{P}$ the min-cut subspace and PMI of $(\tgm{\N},\mathfrak{m})$ respectively, and by $\Gamma$ the linear map defined in \cref{eq:gamma_def}. To show that $\mathbb{S}$ and $\mathbb{P}$ coincide we only need to show that $\mathbb{S}\supseteq\mathbb{P}$, since by \cref{lem:vinpmi} we have $\mathbb{S}\subseteq\mathbb{P}$.

To show that $\mathbb{S}\supseteq\mathbb{P}$, we will prove that $\mathbb{S}^{\perp}\subseteq\mathbb{P}^{\perp}$, where $\perp$ denotes the orthogonal complement of a linear subspace. Since $\tgm{\N}$ is a simple tree, and $\mathfrak{m}$ is generic, the matrix which represents the map $\Gamma$ can be put schematically in the following form 
\begin{equation}
\label{eq:simpletreegamma}
\Gamma =
\left(
  \begin{array}{ccc}
  \widetilde{\Gamma} & \mathbb{0} & \mathbb{0}\\
  \mathbb{1} & \mathbb{0} & \mathbb{0}\\
  \mathbb{0} & \mathbb{1} & \mathbb{0}\\
  \end{array}
  \right)
\end{equation}
simply by permuting its rows and columns. Here the first, second and third columns of this block matrix correspond respectively to the subsets of edges which participate in at least two cuts in $\mathfrak{m}$, precisely one cut, or no cut at all. 

To see that $\Gamma$ can be put into this form, consider a subsystem $\I$ and the set of cut edges $\cut_{\I}^*$. If this set contains at least two edges, we put the row corresponding to the entropy $S_{\I}$ into the top row block in \cref{eq:simpletreegamma}. In the particular case where no such subsystem $\I$ exists, $\Gamma$ takes the particularly simple form 
\begin{equation}
    \left(
  \begin{array}{cc}
  \mathbb{1} & \mathbb{0}\\
  \end{array}
  \right)
\end{equation}
corresponding to the two bottom rows in \cref{eq:simpletreegamma}. And if all edges participate in at least one cut, we have $\Gamma=\mathbb{1}$. When $\widetilde{\Gamma}$ is non-trivial, the second row in \cref{eq:simpletreegamma} is guaranteed to exist by \cref{lem:mincutdecalt}, since each single edge in $\cut_{\I}^*$ is the min-cut for the corresponding subsystem given by \cref{lem:symdif}. The third row and second column, as well as the third column in \cref{eq:simpletreegamma} may be present or not, depending on the specific case. Their presence or absence however does not affect the essence of the rest of the proof.

Since $\mathbb{S}=\text{Im}\;\Gamma$, we have $\mathbb{S}^{\perp}=\text{Ker}\;\Gamma^{\intercal}$, and the generators of $\mathbb{S}^{\perp}$ are the columns of the matrix 
\begin{equation}
\label{eq:kernel}
\left(
  \begin{matrix}
   \mathbb{1}\\
  -\widetilde{\Gamma}^{\intercal}\\
  \mathbb{0}
  \end{matrix}
  \right)
\end{equation}
where again the last row may or may not be present depending on the specific form of $\Gamma$. In order to show the inclusion $\mathbb{S}\supseteq\mathbb{P}$, it is therefore sufficient to show that each column of this matrix is a linear combination of the columns of the matrix $\Pi(\mathbb{P})$ which specifies the PMI $(\tgm{\N},\mathfrak{m})$. 

In the particular case where $\widetilde{\Gamma}$ is trivial we have $\text{Ker}\;\Gamma^{\intercal}=\{\mathbf{0}\}$, implying that $\mathbb{S}=\mathbb{R}^{\D}$. Since in this case each entropy is computed by a cut of a single edge, by \cref{lem:vanishingI2} none of the instances of the mutual information vanish, therefore $\mathbb{P}=\mathbb{R}^{\D}=\mathbb{S}$ and the theorem holds.

Going back to the more general case, the labeling by polychromatic indices of the rows in each block of \cref{eq:kernel} is fixed by the construction of $\Gamma$. A column $q$ in \cref{eq:kernel} is the vector normal to the hyperplane in $\mathbb{R}^{\D}$ corresponding to the equation
\begin{equation}
\label{eq:entropysum}
    S_{\I}=\sum_{e\in \cut (\mU_{\I})}S_{\K(e)}
\end{equation}
where $\I$ is the polychromatic index labeling the row $q$ in the block $\mathbb{1}$ of \cref{eq:kernel}. In words, these equations say that each entropy $S_{\I}$ such the set $\cut_{\I}^*$ contains more than one edge, is equal to the sum of the entropies of the subsystems ``homologous'' to the edges in the set. This follows already from \cref{lem:mincutdecalt}, but \cref{eq:kernel} says that these equations correspond precisely to the generators of $\mathbb{S}^{\perp}$.

We can then rewrite \cref{eq:entropysum} as follows
\begin{align}
\label{eq:rewrite}
    0=&\;(S_{\K(e_1)}+S_{\K(e_2)}-S_{\K(e_1)\triangle\K(e_2)})\nonumber\\
    +&\;(S_{\K(e_1)\triangle\K(e_2)}+S_{\K(e_3)}-S_{\K(e_1)\triangle\K(e_2)\triangle\K(e_3)})\nonumber\\
    +&\;\ldots
\end{align}
where by \cref{lem:2edgesMI} each term in brackets can be recognized as an instance of the mutual information, cf. \cref{eq:i2instance}. Furthermore, any such instance is guaranteed to belong to the PMI of $(\tgm{\N},\mathfrak{m})$ by \cref{lem:mincutdecalt}, since the entropies that we added and subtracted in \cref{eq:rewrite} all correspond to subsystems homologous to subsets of the edges in $\cut_{\I}^*$. This shows that $\mathbb{S}^{\perp}\subseteq\mathbb{P}^{\perp}$, completing the proof.
\end{proof}

Finally, we extend this result to arbitrary (not necessarily generic) simple tree graphs. The central idea behind this generalization is  again that because of the tree topology, all degeneracy equations correspond to new instances of the mutual information that vanish.
 
\begin{nthm}
\label{thm:VPMIgeneral}
For any simple tree and min-cut structure, the \emph{PMI} and the min-cut subspace coincide.
\end{nthm}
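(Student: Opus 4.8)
As in \cref{lem:vpmisimple}, the inclusion $\mathbb{S}(\tgm{\N},\mathfrak{m})\subseteq\pi(\tgm{\N},\mathfrak{m})$ is automatic from \cref{lem:vinpmi}, so the whole task is to prove the reverse inclusion, which we again phrase as $\mathbb{S}^{\perp}\subseteq\mathbb{P}^{\perp}$. The plan is to reduce the degenerate case to the computation already carried out for generic min-cut structures, by showing that every generator of $\mathbb{S}^{\perp}$ is the normal to an entropy-decomposition relation of the form \cref{eq:entropysum} associated to \emph{some} min-cut (degenerate or not), after which the telescoping identity \cref{eq:rewrite} finishes the argument verbatim.

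\textbf{Step 1: describe $\mathbb{S}^{\perp}$ via $\mathbb{W}$.} Fix a choice of representative min-cuts, giving a map $\Gamma=\Gamma^{\{\alpha\}}$, and recall from \cref{lem:VfromW} that $\mathbb{S}=\Gamma(\mathbb{W})$, where $\mathbb{W}\subseteq\mathbb{R}^{\E}$ is the solution space of the degeneracy equations \cref{eq:degeneracy_eq}. Equivalently, $\mathbb{S}^{\perp}$ is spanned by the normals to the complete set of linear relations that hold among the entropies as the weight vector ranges over $\mathbb{W}$. These relations are of two types. First, since $\tgm{\N}$ is a simple tree and the representative min-cuts obey \cref{lem:mincutdecalt} (which only asks for \emph{a} min-cut, hence applies unchanged to chosen representatives), $\Gamma$ can be brought into the block form \cref{eq:simpletreegamma} exactly as in \cref{lem:vpmisimple}; the relations from $\mathrm{Ker}\,\Gamma^{\intercal}$ are then the decompositions $S_{\I}=\sum_{e\in\cut(\mU_{\I})}S_{\K(e)}$ of \cref{eq:entropysum}, one for each multi-edge min-cut representative. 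Second, each degeneracy equation for a subsystem $\I$ reads $\norm{\cut(\mU_{\I})}=\norm{\cut({\mU_{\I}}')}$ for two of its min-cuts; using \cref{lem:mincutdecalt} to rewrite each side as $\norm{\cut(\mU_{\I})}=\sum_{e\in\cut(\mU_{\I})}S_{\K(e)}$, this becomes the additional relation $\sum_{e\in\cut(\mU_{\I})}S_{\K(e)}=\sum_{f\in\cut({\mU_{\I}}')}S_{\K(f)}$, which together with the representative relation for $\I$ is again of the form \cref{eq:entropysum}, now with the non-representative min-cut ${\mU_{\I}}'$ in place of the representative one.

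\textbf{Step 2: apply the generic argument.} Thus every generator of $\mathbb{S}^{\perp}$ is the normal to a relation $S_{\I}=\sum_{e\in X}S_{\K(e)}$ in which $X$ is the set of cut-edges of \emph{some} min-cut for $\I$. For any such relation we rewrite the normal as the telescoping sum \cref{eq:rewrite}; each summand is, by \cref{lem:2edgesMI}, an instance of the mutual information in $\pi(\tgm{\N},\mathfrak{m})$, because by \cref{lem:mincutdecalt} every two-element subset of the min-cut edge set $X$ is itself the cut-edge set of a min-cut for the corresponding symmetric-difference subsystem (and the intermediate subsystems appearing in \cref{eq:rewrite} are nonempty, since a nonempty positive-weight edge set cannot be a min-cut for $\es$). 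Hence each generator of $\mathbb{S}^{\perp}$ lies in $\mathbb{P}^{\perp}$, giving $\mathbb{S}^{\perp}\subseteq\mathbb{P}^{\perp}$ and therefore $\mathbb{S}(\tgm{\N},\mathfrak{m})=\pi(\tgm{\N},\mathfrak{m})$.

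\textbf{Main obstacle.} The delicate point is Step 1: one must argue carefully that, after restricting the weights to the subspace $\mathbb{W}$ cut out by the degeneracy equations, $\mathbb{S}^{\perp}$ is spanned by precisely the representative decomposition normals together with the degeneracy normals, with no further relations among the entropies surviving. Concretely, using $\mathbb{S}^{\perp}=(\Gamma^{\intercal})^{-1}(\mathbb{W}^{\perp})$, one needs to check that whenever a degeneracy normal $\mathbf{\Gamma}_{\I}^{\alpha}-\mathbf{\Gamma}_{\I}^{\beta}\in\mathbb{W}^{\perp}$ lies in $\mathrm{Im}\,\Gamma^{\intercal}$, a preimage can be chosen among the vectors $\mathbf{e}_{\I}-\sum_{e\in\cut(\mU_{\I})}\mathbf{e}_{\K(e)}$ of type \cref{eq:entropysum}; this is exactly where the tree structure is essential, since on a simple tree each edge carries a well-defined homologous subsystem and the cut-edge sets of min-cuts are closed under taking subsets (\cref{lem:mincutdecalt}). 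Once this bookkeeping is pinned down, the rest is the already-established generic computation. (The only other thing to verify, and it is immediate, is that \cref{lem:symdif}, \cref{lem:mincutdecalt} and \cref{lem:2edgesMI} are all phrased for ``a''/``some'' min-cut and so remain valid once degenerate representatives are selected.)
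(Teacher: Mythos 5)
Your proposal is correct and follows essentially the same route as the paper's proof: reduce to the generic computation of \cref{lem:vpmisimple} for a fixed choice of representative min-cuts, then convert each degeneracy equation via \cref{lem:mincutdecalt} into an entropy relation of the form \cref{eq:entropysum} and telescope it as in \cref{eq:rewrite} into vanishing mutual-information instances guaranteed by \cref{lem:2edgesMI}. The bookkeeping issue you flag as the ``main obstacle'' (that the representative decomposition normals plus the degeneracy normals exhaust $\mathbb{S}^{\perp}$) is handled in the paper at the same level of detail as in your sketch, so nothing essential is missing relative to the published argument.
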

\begin{proof}
As in the proof of \cref{lem:vpmisimple}, we only need to show that $\mathbb{S}\supseteq\mathbb{P}$, since by \cref{lem:vinpmi} $\mathbb{S}\subseteq\mathbb{P}$, and we will again show that $\mathbb{S}^{\perp}\subseteq\mathbb{P}^{\perp}$.

If the min-cut structure is degenerate, the matrix $\Gamma$ is not uniquely specified. However, as we discussed in \cref{sssec:gmdeg}, we can chose a representative min-cut for each degenerate subsystem, and while the specific $\Gamma$ will depend on the representatives, the min-cut subspace will not depend on this choice.

Suppose now that we make a choice of representatives for all degenerate subsystems, and therefore of $\Gamma$. We can imagine to determine a ``partial'' PMI by looking only at a single min-cut for each subsystem as specified by this choice, i.e., we determine the set of instances of the mutual information that vanish according to this subset of the min-cuts. We will denote this partial PMI by $\mathbb{P}_{\Gamma}$ to stress the dependence on this choice. We can then ignore the degeneracies and follow step by step the proof of \cref{lem:vpmisimple} to show that $(\text{Span}\;\Gamma)^{\perp}\subseteq\mathbb{P}_{\Gamma}^{\perp}$.

Because of the degeneracies however, $(\text{Span}\;\Gamma)^{\perp}\subseteq\mathbb{S}^{\perp}$ and $\mathbb{P}_{\Gamma}^{\perp}\subseteq\mathbb{P}^{\perp}$. In order to show that $\mathbb{S}^{\perp}\subseteq\mathbb{P}^{\perp}$, we will show that each degeneracy equation adds a new (not necessarily linearly independent) generator to $(\text{Span}\;\Gamma)^{\perp}$, and that this generator can be written as a linear combination of the generators in $\mathbb{P}^{\perp}$, i.e., that each generator of $\mathbb{S}^{\perp}$ which is not in $(\text{Span}\;\Gamma)^{\perp}$ is also in $\mathbb{P}^{\perp}$.

To see this, we can again follow a similar argument to the one we used in the proof of \cref{lem:vpmisimple}. A degeneracy equation for a subsystem $\I$ is an equation of the form
\begin{equation}
    \sum_{e\,\in\, \cut^{*\alpha}_{\I}}w_e=\sum_{f\,\in\, \cut^{*\beta}_{\I}}w_f
\end{equation}
where we used the more compact notation $\cut^{*\alpha}_{\I}=\cut ({\mU_{\I}}^{\alpha})$. Using \cref{lem:mincutdecalt} we can translate this equation into an equation for the entropies
\begin{equation}
\label{eq:newgenerator}
    \sum_{e\,\in\, \cut^{*\alpha}_{\I}}S_{\J(e)}=\sum_{f\,\in\, \cut^{*\beta}_{\I}}S_{\K(f)}
\end{equation}
which can be seen as a new generator in $\mathbb{S}^{\perp}$. Since both sides of this equation compute $S_{\I}$, we can also think of  \cref{eq:newgenerator} as a combination of the following two equations
\begin{align}
   & S_{\I}=\sum_{e\,\in\, \cut^{*\alpha}_{\I}}S_{\J(e)}\nonumber\\
   & S_{\I}=\sum_{f\,\in\, \cut^{*\beta}_{\I}}S_{\K(f)}
\end{align}
But each of these equations is of the form \cref{eq:entropysum} and can therefore be rewritten as in \cref{eq:rewrite}. All instances of the mutual information which appear in this decomposition belong to the PMI, therefore the new generator of $\mathbb{S}^{\perp}$ corresponding to \cref{eq:newgenerator} is a linear combination of the generators of $\mathbb{P}^{\perp}$. Repeating this construction for each degeneracy equation we obtain that $\mathbb{S}^{\perp}\subseteq\mathbb{P}^{\perp}$, concluding the proof.
\end{proof}

We stress that while the simple tree structure is a nice sufficient condition for the equivalence between PMIs and min-cut subspaces, it is by no means necessary. This equivalence is central in our arguments about the reconstruction of the HEC, and it would be interesting to extend it to a larger class of topological graph models and min-cut structures. While we leave this question for future work, in \cref{sec:logic} we will discuss in more detail what kind of  generalization of \cref{thm:VPMIgeneral} is necessary to achieve the reconstruction, and will see examples of graphs with highly non-trivial topology whose min-cut subspaces and PMIs coincide.

\section{Varying the number of parties}
\label{sec:recolorings}

Up to this point, we have been working with topological graph models and min-cut structures for an arbitrary, but \textit{fixed}, number of parties $\N$. However, in order to uncover some of the deepest relations between min-cut subspaces and PMIs, it is crucial to vary the number of parties and analyse how the various constructs that we have introduced thus far transform under this operation.

As we will see, the subtle behavior of min-cut subspaces and PMIs under recolorings is in stark contrast with the simple behavior of entropy vectors. This should perhaps make even more evident the fundamental differences between an analysis of holographic constraints based on entropy vectors and graph models, from one which purely relies on equivalence classes, like the one advocated here for the HEC, or the one based on proto-entropies for the holographic entropy polyhedron \cite{Hubeny:2018ijt}.

In \cref{subsec:cg} we will analyse \textit{coarse-grainings}, i.e., transformations that reduce the number of parties, and their effect on W-cells, S-cells and min-cut subspaces. The consequences of the opposite transformations, namely \textit{fine-grainings}, will be discussed in \cref{subsec:fg}. Our convention will be to denote by $\N$ the original number of parties and by $\N'$ the new number of parties, both for coarse-grainings and fine-grainings. \textit{Recolorings} of boundary vertices that reduce, respectively increase, the number of parties will be denoted by $\beta^{\downarrow}$ and $\beta^{\uparrow}$.

\subsection{Coarse-grainings of equivalence classes}
\label{subsec:cg}

Given an $\N$-party density matrix $\rho_{\N}$ and a purification $\ket{\psi}_{\N+1}$ in a Hilbert space of the form \cref{eq:purification},  consider a partition of the set $[\N+1]$ into $\N'+1$ (non-empty) parts, with $\N'<\N$.
Each element of the partition is an $\N$-party polychromatic index $\underline{\I}$ which we recolor by a monochromatic index $\ell'\in[\N'+1]$. We capture this coarse-graining by a map $\hat{\phi}: \ell' \mapsto \underline{\I}$, which also tells us which polychromatic indices $\underline{\I}\subseteq[\N+1]$ correspond to the coarse-grained ones $\underline{\I'}\subseteq[\N'+1]$ through
\begin{equation}
\label{eq:coarsegraining}
    \hat{\phi} : \underline{\I'} \;\mapsto\; \bigcup_{\ell'\,\in\,\underline{\I'}}\hat{\phi}(\ell')
\end{equation}

This partition of $[\N+1]$ and redefinition of the polychromatic indices corresponds to a redefinition of the Hilbert space $\underline{\mathcal{H}}$ in \cref{eq:purification} into a new Hilbert space $\underline{\mathcal{H}'}$ with $\N'+1$ factors, each of which is a collection of the original factors in $\underline{\mathcal{H}}$. For an $\N$-party density matrix $\rho$ acting on $\mathcal{H}$ as in \cref{eq:hilbert-space}, this transformation gives a new $\N'$-party density matrix $\rho'$ acting on a Hilbert space $\mathcal{H}'$, obtained from $\underline{\mathcal{H}'}$ by ignoring the new factor that contains the original $\mathcal{H}_{\N+1}$ factor (the purifier) in $\underline{\mathcal{H}'}$. 

We now want to apply such a transformation to an entropy vector, and to do so we need to use non-underlined indices. Notice that in general a coarse-graining as defined in \cref{eq:coarsegraining} can map the $\N$-party purifier $\N+1$ to an arbitrary $\N'$-party color, not necessarily to the $\N'$-party purifier $\N'+1$. Since our convention is that non-underlined polychromatic indices should not include the purifier, we introduce a new map $\phi$ that not only implements a coarse-graining as in \cref{eq:coarsegraining}, but also such that when it acts on an index $\I'$ it replaces $\hat{\phi}(\I')$ with $[\hat{\phi}(\I')]^{\complement}$ if $\hat{\phi}(\I')$ includes the original $\N+1$ subsystem.

The entropy vector of the new density matrix  $\mathbf{S}'(\rho')$ can then be obtained from the entropy vector $\mathbf{S}(\rho)$ of the original density matrix simply as 
\begin{equation}
    S'_{\I'}=S_{\phi(\I')}
\end{equation}
At the level of entropy vectors, we can therefore think of this transformation as a map
\begin{equation}
    \Phi_{\N\rightarrow\N'}:\mathbb{R}^{\D}\rightarrow\mathbb{R}^{\D'}\qquad \mathbf{S}\mapsto\mathbf{S}'=\Phi_{\N\rightarrow\N'} \, \mathbf{S}
\end{equation}
where $\D'=2^{\N'}-1$ and $\Phi_{\N\rightarrow\N'}$ is the $\D'\!\times\!\D$ matrix
\begin{equation}
\label{eq:colorproj}
    (\Phi_{\N\rightarrow\N'})_{\I'\K} \coloneqq \begin{cases}
    1\qquad \text{if}\;\;\K=\phi(\I')\\
    0\qquad \text{otherwise}
    \end{cases}
\end{equation}
This linear map will be referred to as a \textit{color-projection} in what follows.\footnote{\, Technically, the map defined in \cref{eq:colorproj} is not a projection, since it is not an endomorphism of entropy space. However we will still use this terminology since one can simply consider an embedding of $\mathbb{R}^{\D'}$ into $\mathbb{R}^{\D}$, in which case \cref{eq:colorproj} would be a projection.}

A coarse-graining of a graph model is defined similarly. Given $\gm{\N}$ one introduces a \emph{recoloring} by a new coloring map $\beta^{\downarrow} : \partial V \to [\N'+1]$. The recoloring $\beta^{\downarrow}$ induces a coarse-graining of polychromatic indices $\phi$ as described above, and the new entropy vector $\mathbf{S}'(\gm{\N'})$ is then obtained from $\mathbf{S}(\gm{\N})$ via the color-projection defined in \cref{eq:colorproj}
\begin{equation}
    \mathbf{S}'(\gm{\N'})=\Phi_{\N\rightarrow\N'} \, \mathbf{S}(\gm{\N})
\end{equation}
To see this, simply notice that the recoloring does not change the topology of the graph or the weights, and any min-cut of a coarse-grained subsystem $\I'$ is a min-cut for the corresponding original subsystem $\phi(\I')$.

Given the simplicity of the transformation rule for an entropy vector, one may be tempted to conclude that the same transformation also applies to min-cut subspaces. Instead, as we will see, the dimension of a min-cut subspace can even increase under coarse-graining, and this unexpected behavior is another example of a situation where a crucial role is played by the structure of degeneracies.

To understand how this can happen, and in what situations min-cut subspaces do transform analogously to entropy vectors, we need to consider the effect of a coarse-graining on a min-cut structure. As already mentioned above, a recoloring $\beta^{\downarrow}$ only affects the labeling of the boundary vertices of a topological graph model, not its topology, and the space of edge weights therefore is the same, before and after the coarse-graining. For a class $(\tgm{\N},\mathfrak{m})$ and a coarse-graining $\phi$ induced by a recoloring $\beta^{\downarrow}$ , the new min-cut structure $\mathfrak{m}'$ on the new topological graph model $\tgm{\N'}$ can be expressed in terms of the original coloring as
\begin{equation}
\label{eq:mcstructurecg}
    \mathfrak{m}'=\{\mathscr{U}_{\I'}=\mathscr{U}_{\phi(\I')}\; \text{for all}\; \I' \}\subset \mathfrak{m}
\end{equation}
In words, one can think of deriving $\mathfrak{m}'$ from $\mathfrak{m}$ by simply removing all min-cut sets $\mathscr{U}_{\I}$ for all subsystems $\I$ that are ``projected out'' by the coarse-graining, i.e., such that there is no $\I'$ with $\phi(\I')=\I$, and then relabeling the elements of $\mathfrak{m}'$ by the new polychromatic indices. 

Notice that an immediate consequence of \cref{eq:mcstructurecg} is that the relation between the subspaces corresponding to the solutions to the degeneracy equations in $\mathfrak{m}$ and $\mathfrak{m}'$ (cf. \cref{eq:degeneracy_eq}) is given by
\begin{equation}
\label{eq:Wcellinclusion}
    \mathbb{W}'\supseteq\mathbb{W}
\end{equation}
This simply follows from the fact that any equation that appears in $\mathfrak{m}'$ also appears in $\mathfrak{m}$, but in general, not vice versa.

From the transformation rule of the min-cut structures we can also derive the relation between the maps $\Gamma$ and $\Gamma'$ from $\mathbb{R}^3$ to $\mathbb{R}^{\D}$ and $\mathbb{R}^{\D'}$. In particular, since we are especially interested in the behaviour of degeneracies, we need to clarify if and how the relation between these maps can be affected by different choices of representative min-cuts. This is the content of the next lemma

\begin{nlemma}
\label{lem:gammamatricescg}
For any class $(\tgm{\N},\mathfrak{m})$, coarse-graining $\phi$ to a new class $(\tgm{\N'},\mathfrak{m}')$ induced by a recoloring $\beta^{\downarrow}$, and choice of map $\Gamma$ for $\mathfrak{m}$, there exists a choice of map $\Gamma'$ for $\mathfrak{m}'$ such that
\begin{equation}
\label{eq:gammamatricescg}
    \Gamma'=\Phi_{\N\rightarrow\N'}\,\Gamma
\end{equation}
\end{nlemma}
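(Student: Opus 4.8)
The plan is to unwind the definitions of $\Gamma$, $\Gamma'$ and $\Phi_{\N\rightarrow\N'}$ and to show that, after making a compatible choice of representative min-cuts downstairs, the matrices agree row by row. First I would recall from \cref{eq:gamma_def} that the row of $\Gamma$ labeled by a polychromatic index $\I$ is the incidence vector $\mathbf{\Gamma}_{\I}^{\alpha}$ of the set of cut edges $\cut(\mU_{\I}^{\alpha})$ of the chosen representative min-cut, and similarly the row of $\Gamma'$ labeled by $\I'$ is the incidence vector of the cut edges of the chosen representative min-cut for the coarse-grained subsystem $\I'$. By \cref{eq:colorproj}, the row of $\Phi_{\N\rightarrow\N'}\,\Gamma$ labeled by $\I'$ is exactly the row of $\Gamma$ labeled by $\phi(\I')$. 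So the statement to prove is: for each $\I'\subseteq[\N'+1]$, one can choose the representative min-cut for $\I'$ in $\mathfrak{m}'$ to have the same set of cut edges as the chosen representative min-cut for $\phi(\I')$ in $\mathfrak{m}$.

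The key observation making this possible is \cref{eq:mcstructurecg}, which states $\mathscr{U}_{\I'}=\mathscr{U}_{\phi(\I')}$ as sets of cuts on the (topologically unchanged) graph: since the recoloring $\beta^{\downarrow}$ leaves both the topology and the edge weights untouched, a subset $U\subseteq V$ is an $\I'$-cut in $\tgm{\N'}$ realizing the minimum weight if and only if it is a $\phi(\I')$-cut in $\tgm{\N}$ realizing the minimum weight — indeed $U\cap\partial V=(\beta^{\downarrow})^{-1}(\I')=\beta^{-1}(\phi(\I'))$ by construction of $\phi$, and the cut-weight functional \cref{eq:cost} is unchanged. (Here one should also note that when $\phi$ involves the complementation step swapping $\hat\phi(\I')$ for its complement whenever the purifier is included, the min-cut set is still unchanged because the complement of a min-cut is a min-cut for the complementary subsystem, as recorded just after \cref{eq:formalentropy}.) Hence the sets $\mathscr{U}_{\I'}$ and $\mathscr{U}_{\phi(\I')}$ coincide, and in particular they have the same elements; choosing, for every $\I'$, the representative min-cut for $\I'$ to be the very same element of this set that was picked as representative for $\phi(\I')$ upstairs defines a valid choice of $\{\alpha\}$ for $\mathfrak{m}'$, hence a valid map $\Gamma'$.

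With this coordinated choice, the incidence vector $\mathbf{\Gamma}'^{\,\alpha}_{\I'}$ is literally equal to $\mathbf{\Gamma}^{\alpha}_{\phi(\I')}$ as a vector in $\{0,1\}^{\E}$, which is the desired row equality $(\Gamma')_{\I'} = (\Phi_{\N\rightarrow\N'}\,\Gamma)_{\I'}$; ranging over all $\I'$ gives \cref{eq:gammamatricescg}. I expect the only delicate point to be bookkeeping around the map $\phi$ versus $\hat\phi$ — one must be careful that the complementation convention does not spoil the identification of cut-edge sets — but this is handled by the complementary-min-cut remark above, so no real obstacle arises. A minor subtlety worth a sentence in the write-up is that $\Gamma'$ is only defined once a choice of representatives is made, which is exactly why the lemma asserts existence of \emph{a} choice rather than equality for all choices; by \cref{eq:repindependence} the min-cut subspace $\mathbb{S}(\tgm{\N'},\mathfrak{m}')$ is independent of this choice, so the lemma will suffice for its intended downstream use.
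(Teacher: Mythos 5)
Your proof is correct and follows essentially the same route as the paper: both use \cref{eq:mcstructurecg} to transport the choice of representative min-cuts from $\mathfrak{m}$ to $\mathfrak{m}'$ and then observe that the resulting row-by-row identification of incidence vectors is exactly the action of $\Phi_{\N\rightarrow\N'}$ on $\Gamma$. Your extra remark on the complementation convention in $\phi$ (that a min-cut and its complement share the same cut-edge set, so the incidence vector is unaffected) is a point the paper leaves implicit, and it is handled correctly.
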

\begin{proof}
    Given a class $(\tgm{\N},\mathfrak{m})$ consider the map $\Gamma_{\{\alpha\}}$ specified by a choice of representative min-cuts ${\mU_{\I}}^{\alpha}$ for all $\N$-party polychromatic indices $\I$. By \cref{eq:mcstructurecg} it follows that
    \begin{equation*}
        {\mU_{\I'}}^{\alpha}\coloneqq{\mU_{\phi(\I')}}^{\!\!\!
        \!\!\!\!\!\alpha}
    \end{equation*}
    is a choice of representative min-cuts for all the coarse-grained polychromatic indices $\I'$. Notice that the index $\alpha$ did not change, since for the subsystems which are not removed by the coarse-graining, any possible choice of min-cut in $\mathfrak{m}$ is a valid choice of min-cut in $\mathfrak{m}'$. 
    
    Recall now the definition of the matrix $\Gamma$ given in \cref{eq:incidence}. Using for $\mathfrak{m}'$ the choice of representatives induced by the choice for $\mathfrak{m}$ just described, the matrix $\Gamma'$ can be obtained from $\Gamma$ as follows. We first delete the rows corresponding to the subsystems $\I$ which are removed by the coarse-graining, and then permute the remaining rows according to the relabeling fixed by $\beta^{\downarrow}$. But this is precisely the transformation performed on $\Gamma$ by the matrix $\Phi_{\N\rightarrow\N'}$ defined in \cref{eq:colorproj}.
\end{proof}

We are now ready to discuss a general situation where the transformation of min-cut subspaces under coarse-grainings is well behaved, and completely determined by the map $\Phi_{\N\rightarrow\N'}$ between entropy spaces. The next result shows that this is the case whenever \cref{eq:Wcellinclusion} is saturated.  
\begin{nthm}[Color-projections of min-cut subspaces]
\label{lem:VspaceCG}
For any class $(\tgm{\N},\mathfrak{m})$ and coarse-graining $\phi$ induced by a recoloring $\beta^{\downarrow}$ to a new class $(\tgm{\N'},\mathfrak{m}')$
\begin{equation}
\label{eq:vspacecg}
    \mathbb{W}'=\mathbb{W}\qquad\Longrightarrow\qquad\mathbb{S}'=\Phi_{\N\rightarrow\N'}\,\mathbb{S}
\end{equation}
\end{nthm}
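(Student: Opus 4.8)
The plan is to combine Lemma \ref{lem:VfromW} with Lemma \ref{lem:gammamatricescg}. Recall from Lemma \ref{lem:VfromW} that for any class the min-cut subspace is the image of the degeneracy subspace under any choice of $\Gamma$: explicitly $\mathbb{S}=\Gamma(\mathbb{W})$ and $\mathbb{S}'=\Gamma'(\mathbb{W}')$. By Lemma \ref{lem:gammamatricescg}, we may choose the representative min-cuts for $\mathfrak{m}'$ compatibly with those for $\mathfrak{m}$, so that $\Gamma'=\Phi_{\N\rightarrow\N'}\,\Gamma$. The claim then becomes an essentially immediate computation: under the hypothesis $\mathbb{W}'=\mathbb{W}$ we have
\begin{equation}
    \mathbb{S}'=\Gamma'(\mathbb{W}')=\Phi_{\N\rightarrow\N'}\,\Gamma(\mathbb{W})=\Phi_{\N\rightarrow\N'}\,\mathbb{S}.
\end{equation}

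First I would state which choice of maps is being used, invoking Lemma \ref{lem:gammamatricescg} to get a $\Gamma$ for $\mathfrak{m}$ and the induced $\Gamma'$ for $\mathfrak{m}'$ satisfying \cref{eq:gammamatricescg}; this is the only place where any care is needed, since both $\mathbb{S}$ and $\mathbb{S}'$ are a priori independent of the choice of representatives (cf.\ \cref{eq:repindependence} and the definition of the min-cut subspace), so once the identity holds for one compatible choice it holds unconditionally. Then I would apply Lemma \ref{lem:VfromW} twice, once for $(\tgm{\N},\mathfrak{m})$ and once for $(\tgm{\N'},\mathfrak{m}')$, substitute $\mathbb{W}'=\mathbb{W}$ from the hypothesis, and substitute $\Gamma'=\Phi_{\N\rightarrow\N'}\,\Gamma$. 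The conclusion $\mathbb{S}'=\Phi_{\N\rightarrow\N'}\,\mathbb{S}$ follows directly.

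The main (and really only) subtlety worth a sentence is the logical role of the representative-min-cut choices: Lemma \ref{lem:gammamatricescg} only guarantees the existence of \emph{some} pair $(\Gamma,\Gamma')$ with $\Gamma'=\Phi_{\N\rightarrow\N'}\,\Gamma$, not that every pair has this property, so one should remark explicitly that the min-cut subspaces $\mathbb{S}=\Gamma(\mathbb{W})$ and $\mathbb{S}'=\Gamma'(\mathbb{W}')$ do not depend on these choices (this is exactly what Lemma \ref{lem:VfromW} asserts, being stated ``for any class and choice of $\Gamma^{\{\alpha\}}$''). With that observation in place there is no obstacle; the inclusion $\mathbb{W}'\supseteq\mathbb{W}$ from \cref{eq:Wcellinclusion} shows the hypothesis $\mathbb{W}'=\mathbb{W}$ is the natural non-degenerate case, and everything else is linear algebra. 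I would not expect to need anything about S-cells or W-cells beyond their linear spans, nor any property of $\Phi_{\N\rightarrow\N'}$ beyond linearity.
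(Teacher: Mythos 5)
Your proposal is correct and follows essentially the same route as the paper: invoke Lemma \ref{lem:gammamatricescg} to get a compatible pair $(\Gamma,\Gamma')$ with $\Gamma'=\Phi_{\N\rightarrow\N'}\,\Gamma$, apply Lemma \ref{lem:VfromW} to both classes, and substitute $\mathbb{W}'=\mathbb{W}$. Your additional remark about why the representative-independence of the min-cut subspaces makes the ``existence of some compatible choice'' sufficient is a point the paper leaves implicit, but it is the same argument.
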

\begin{proof}
     For a given class $(\tgm{\N},\mathfrak{m})$ consider one of the maps $\Gamma$ defined in \cref{eq:gamma_def} (as usual for some choice of representative min-cuts). For any coarse-graining $\phi$ to a new class $(\tgm{\N'},\mathfrak{m}')$, \cref{lem:gammamatricescg} guarantees that there exists a choice of representative min-cuts for $\mathfrak{m}'$ such that the corresponding map $\Gamma'$ is related to $\Gamma$ via \cref{eq:gammamatricescg}. In general the relation between $\mathbb{W}$ and the min-cut subspace is given by \cref{lem:VfromW}. Therefore by the assumption that $\mathbb{W}'=\mathbb{W}$ we have
     \begin{equation*}
         \mathbb{S}'=\Gamma'(\mathbb{W}')=(\Phi_{\N\rightarrow\N'}\,\Gamma)(\mathbb{W})=\Phi_{\N\rightarrow\N'}\,(\Gamma(\mathbb{W}))=\Phi_{\N\rightarrow\N'}\,\mathbb{S}
     \end{equation*}
     completing the proof.
\end{proof}

A straightforward consequence of this theorem is that min-cut subspaces of generic min-cut structures are always well behaved under coarse-grainings
\begin{ncor}
For any class $(\tgm{\N},\mathfrak{m})$ and coarse-graining $\phi$ induced by a recoloring $\beta^{\downarrow}$, if $\mathfrak{m}$ is generic then \cref{eq:vspacecg} holds.
\end{ncor}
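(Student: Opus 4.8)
The plan is to derive this corollary directly from \cref{lem:VspaceCG} by showing that its hypothesis, namely $\mathbb{W}' = \mathbb{W}$, is automatically satisfied whenever the original min-cut structure $\mathfrak{m}$ is generic. First I would recall that by definition a generic min-cut structure has no degenerate subsystems: $\abs{\mathscr{U}_{\I}} = 1$ for every polychromatic index $\I$. This means there are no degeneracy equations of the form \cref{eq:degeneracy_eq} at all, so the subspace $\mathbb{W}$ cut out by these (vacuous) constraints is the full space $\mathbb{R}^{\E}$, exactly as noted in \cref{subsec:gmnodeg}.

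Next I would observe, via \cref{eq:mcstructurecg}, that the coarse-grained min-cut structure $\mathfrak{m}'$ is obtained from $\mathfrak{m}$ by discarding the min-cut sets $\mathscr{U}_{\I}$ for the subsystems $\I$ projected out by $\phi$, and relabeling the rest. In particular, for every surviving index $\I'$ we have $\mathscr{U}_{\I'} = \mathscr{U}_{\phi(\I')}$, which has cardinality one since $\mathfrak{m}$ is generic. Hence $\mathfrak{m}'$ is itself generic, so it too has no degeneracy equations and $\mathbb{W}' = \mathbb{R}^{\E}$ as well. Therefore $\mathbb{W}' = \mathbb{W} = \mathbb{R}^{\E}$, the hypothesis of \cref{lem:VspaceCG} holds, and the conclusion \cref{eq:vspacecg}, i.e.\ $\mathbb{S}' = \Phi_{\N\rightarrow\N'}\,\mathbb{S}$, follows immediately.

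I do not anticipate any real obstacle here: the corollary is essentially a one-line specialization of the theorem, and the only thing to check carefully is the (routine) fact that coarse-graining cannot \emph{introduce} degeneracy — which is clear because relabeling boundary vertices leaves the graph topology and edge weights untouched, so if every subsystem had a unique min-cut before, the surviving subsystems still do (their min-cut sets are literally inherited unchanged). One might alternatively phrase the proof without invoking \cref{lem:VspaceCG} at all, by noting that in the generic case $\mathbb{S} = \mathrm{Im}\,\Gamma$ and $\mathbb{S}' = \mathrm{Im}\,\Gamma'$ with $\Gamma' = \Phi_{\N\rightarrow\N'}\,\Gamma$ by \cref{lem:gammamatricescg}, so $\mathbb{S}' = \mathrm{Im}(\Phi_{\N\rightarrow\N'}\,\Gamma) = \Phi_{\N\rightarrow\N'}\,\mathrm{Im}\,\Gamma = \Phi_{\N\rightarrow\N'}\,\mathbb{S}$; but the route through \cref{lem:VspaceCG} is cleaner and emphasizes that genericity is just a sufficient condition for the relevant $\mathbb{W}$-preservation.
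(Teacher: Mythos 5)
Your proof is correct and follows essentially the same route as the paper: observe that genericity of $\mathfrak{m}$ gives $\mathbb{W}=\mathbb{R}^{\E}$, note via \cref{eq:mcstructurecg} that coarse-graining cannot introduce new degeneracies so $\mathbb{W}'=\mathbb{R}^{\E}$ as well, and then invoke \cref{lem:VspaceCG}. The alternative phrasing you sketch through $\Gamma'=\Phi_{\N\rightarrow\N'}\,\Gamma$ and \cref{eq:genericV} is also valid but adds nothing beyond the theorem's own proof specialized to $\mathbb{W}=\mathbb{R}^{\E}$.
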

\begin{proof}
    Given a $(\tgm{\N},\mathfrak{m})$, if $\mathfrak{m}$ is generic we have $\mathbb{W}=\mathbb{R}^{\E}$, since the W-cell $\mathcal{W}(\tgm{\N},\mathfrak{m})$ is full dimensional. By \cref{eq:mcstructurecg} a coarse-graining cannot introduce new degeneracies. Therefore the W-cell of the coarse-grained min-cut structure also spans $\mathbb{R}^{\E}$, and \cref{lem:VspaceCG} applies. 
\end{proof}

We have seen that the saturation of \cref{eq:Wcellinclusion} is a sufficient condition for min-cut subspaces to transform according to \cref{eq:vspacecg}. On the other hand, if the inclusion (\ref{eq:Wcellinclusion}) is strict, which can happen if some degeneracy equations are ``lost'' under the coarse-graining, the relation between the min-cut subspaces is more complicated, and depends on the details of the graph, the min-cut structure and the recoloring. In fact, as already mentioned earlier, the dimension of the min-cut subspace can even grow, which can easily happen for coarse-grainings of highly degenerate min-cut structures. 

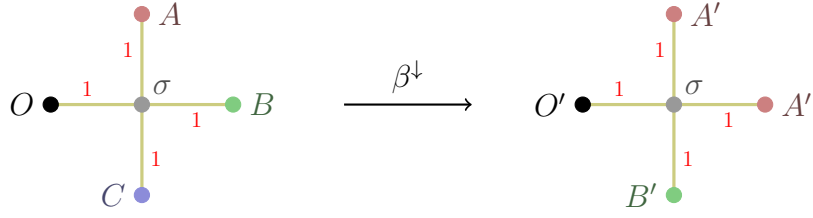
\begin{figure}[tb]
\centering
\begin{tikzpicture}
	\tikzmath{
		\edgelabelsize=0.7;
		\v =0.1;  
		\e=1.2;  
		\shift=7;  
	}
	\coordinate (bv) at (0,0);  
	\coordinate (Av) at (90:\e);  
	\coordinate (Bv) at (0:\e);  
	\coordinate (Cv) at (270:\e);  
	\coordinate (Ov) at (180:\e);  
	\coordinate (bvv) at ($ (0,0)+(\shift,0) $);  
	\coordinate (Avv) at ($ (90:\e)+(\shift,0) $);  
	\coordinate (Bvv) at ($ (0:\e)+(\shift,0) $);  
	\coordinate (Cvv) at ($ (270:\e)+(\shift,0) $);  
	\coordinate (Ovv) at ($ (180:\e)+(\shift,0) $);  

	\draw[edgestyle] (Av) -- node[scale=\edgelabelsize,edgeweightcolor,left,pos=0.4]{$1$} (bv);
	\draw[edgestyle] (Bv) -- node[scale=\edgelabelsize,edgeweightcolor,below,pos=0.4]{$1$} (bv);
	\draw[edgestyle] (Cv) -- node[scale=\edgelabelsize,edgeweightcolor,right,pos=0.4]{$1$}  (bv);
	\draw[edgestyle] (Ov) -- node[scale=\edgelabelsize,edgeweightcolor,above,pos=0.4]{$1$} (bv);
	\filldraw [color=bvcolor] (bv) circle (\v) node[bvcolor!50!black,above right=0pt]{$\sigma$} ;
	\filldraw [color=Acolor] (Av) circle (\v) node[Acolor!50!black,right=2pt]{$A$} ;
	\filldraw [color=Bcolor] (Bv) circle (\v) node[Bcolor!50!black,right=2pt]{$B$} ;
	\filldraw [color=Ccolor] (Cv) circle (\v) node[Ccolor!50!black,left=2pt]{$C$} ;
	\filldraw [color=Ocolor] (Ov) circle (\v) node[Ocolor!50!black,left=2pt]{$O$} ;
	\draw[thick,black,->] (0.38*\shift,0) -- node[above]{$\beta^{\downarrow}$} (0.62*\shift,0);
	\draw[edgestyle] (Avv) -- node[scale=\edgelabelsize,edgeweightcolor,left,pos=0.4]{$1$} (bvv);
	\draw[edgestyle] (Bvv) -- node[scale=\edgelabelsize,edgeweightcolor,below,pos=0.4]{$1$} (bvv);
	\draw[edgestyle] (Cvv) -- node[scale=\edgelabelsize,edgeweightcolor,right,pos=0.4]{$1$}  (bvv);
	\draw[edgestyle] (Ovv) -- node[scale=\edgelabelsize,edgeweightcolor,above,pos=0.4]{$1$} (bvv);
	\filldraw [color=bvcolor] (bvv) circle (\v) node[bvcolor!50!black,above right=0pt]{$\sigma$} ;
	\filldraw [color=Acolor] (Avv) circle (\v) node[Acolor!50!black,right=2pt]{$A'$} ;
	\filldraw [color=Acolor] (Bvv) circle (\v) node[Acolor!50!black,right=2pt]{$A'$} ;
	\filldraw [color=Bcolor] (Cvv) circle (\v) node[Bcolor!50!black,left=2pt]{$B'$} ;
	\filldraw [color=Ocolor] (Ovv) circle (\v) node[Ocolor!50!black,left=2pt]{$O'$} ;
\end{tikzpicture} 
    \caption{The coarse-graining (\ref{eq:cgexample}) of the perfect state graph model $\gm{3}$ from \cref{fig:exampleone}.}
    \label{fig:cgperfectstate}
\end{figure}

As a simple example of this non-trivial behavior, consider the graph models depicted in \cref{fig:cgperfectstate}. The one on the left is the graph that we have already seen before, which realizes the $\N=3$ perfect state, while the one on the right is obtained via the coarse-graining to $\N'=2$ specified by
\begin{equation}
\label{eq:cgexample}
    \hat{\phi}(A')=AB\qquad \hat{\phi}(B')=C\qquad \hat{\phi}(O')=O
\end{equation}
The entropy ray obtained from this coarse-graining of $\gm{3}$ can be directly computed from the ray in \cref{eq:perfectray}, obtaining
\begin{equation}
\label{eq:cgperfectstate}
     \mathbf{S}(\gm{2}) = \lambda \, (2,1,1),\qquad \lambda>0
\end{equation}
However we do not want to simply derive the new entropy ray, but also the min-cut subspace of the new topological graph model and min-cut structure. After the recoloring, the only degenerate min-cut that remains from \cref{eq:wcellexample} is
\begin{equation}
\begin{aligned}
    \mathscr{U}_{A'} &= \{ \{A'_1,A'_2\}, \{A'_1,A'_2,\sigma\} \} \\
\end{aligned}
\end{equation}
(where $A'_1,A'_2$ denote the two original $A,B$ boundary vertices that have now been recolored) and the only degeneracy equation left from \cref{eq:degeqexample} is
\begin{equation}
\begin{aligned}
    w_{_{A'_1\sigma}}+w_{_{A'_2\sigma}} &= w_{_{B'\sigma}}+w_{_{O'\sigma}} 
\end{aligned}
\end{equation}
The solution to this degeneracy equation is the $3$-dimensional subspace
\begin{equation}
    \mathbb{W}'=(1,1,-1,-1)^{\perp}\subset \mathbb{R}^{4}
\end{equation}
where we ordered the weights according to $(A'_1\sigma,A'_2\sigma,B'\sigma,O'\sigma)$. The W-cell is the interior of the following polyhedral cone in $\mathbb{W}'$ (written as embedded in $\mathbb{R}^4_{>0}$)
\begin{equation}
\label{eq:cgperfectwcell}
    \text{cone}\, \{(1,0,0,1),\, (0,1,0,1),\, (0,1,1,0),\, (1,0,1,0) \}
\end{equation}
Its image under the map
\begin{equation}
\label{eq:gammaperfectcg}
    \Gamma=\left(
    \begin{array}{cccc}
    1 & 1 & 0 & 0 \\
    0 & 0 & 1 & 0 \\
    0 & 0 & 0 & 1 \\
    \end{array}
    \right)
\end{equation}
fixed by the choice of representative $\mU_{\!A'}=\{A'_1,A'_2\}$ is the S-cell, which is the interior of the following polyhedral cone in $\mathbb{R}^3$
\begin{equation}
\label{eq:facetbycg}
    \text{cone}\, \{(1,0,1),\, (1,1,0) \}
\end{equation}
This can easily be recognized as the facet of the $\text{HEC}_2=\text{SAC}_2$ supported by the min-cut subspace (or equivalently, PMI), ${\sf I}(B':O')=0$. Notice that the straightforward projection of the perfect state entropy ray from \cref{eq:perfectray} given in \cref{eq:cgperfectstate} is just one of the rays on this facet and has no particular meaning.

The example we just discussed also shows that, like for the min-cut subspace, the dimension of the PMI of a topological graph model and min-cut structure can similarly grow under coarse-graining. Indeed, the reader can easily verify that the PMI of the graph model $\gm{3}$ of \cref{fig:exampleone} is $1$-dimensional, while the coarse-grained graph has a $2$-dimensional PMI.

\subsection{Fine-grainings of equivalence classes}
\label{subsec:fg}

For fine-grainings of density matrices and graph models we can proceed similarly as for coarse-grainings. We consider the case of a state $\ket{\psi}_{\N+1}$ in a Hilbert space $\underline{\mathcal{H}}$ as in \cref{eq:purification}, such that some of the factors admit a tensor product structure into ``finer components'', giving a new Hilbert space $\underline{\mathcal{H}'}$. An $\N$-party density matrix $\rho$ acting on $\mathcal{H}$ in  \cref{eq:hilbert-space} can then be seen as a new $\N'$-party density matrix $\rho'$, with $\N'>\N$, acting on $\mathcal{H}'$. From $\rho'$ we can then compute the entropy vector $\mathbf{S}'(\rho')$, which in general will depend on the details of the initial density matrix $\rho$. For any fine-graining however, we can always chose an appropriate coarse-graining $\phi$ that ``undoes'' it. No matter what the details of the initial density matrix $\rho$ are, the entropy vectors will then be related by this coarse-graining as usual 
\begin{equation}
\label{eq:entropyvecfg}
    \mathbf{S}=\Phi_{\N'\rightarrow\N}\, \mathbf{S}'
\end{equation}

As in the case of coarse-grainings, fine-grainings of graph models can be defined similarly to fine-grainings of density matrices. For a given graph model $\gm{\N}$ realizing an entropy vector $\mathbf{S}$ we consider a boundary recoloring $\beta^{\uparrow} : \partial V \to [\N'+1]$, which will specify a new \textit{fine-grained graph model} $\gm{\N'}$ from which we obtain a new entropy vector $\mathbf{S}'$. We can then ``undo'' the fine-graining with a coarse-graining $\phi$ specified by a boundary recoloring $\beta^{\downarrow}$
\begin{equation}
\label{eq:fgundoing}
    \beta^{\downarrow}=\beta
\end{equation}
where $\beta$ is simply the initial coloring of $\gm{\N}$. The entropy vectors before and after the fine-graining induced by $\beta^{\uparrow}$ will then be related as in \cref{eq:entropyvecfg}. 

For fine-grainings of min-cut structures on topological graph models on the other hand, we again need to be more careful. Given a class $(\tgm{\N},\mathfrak{m})$, a boundary recoloring $\beta^{\uparrow}$ gives a new topological graph model $\tgm{\N'}$, but in general this does not automatically induce a fine-graining, since the new min-cut structure might be incomplete. The reason is that in general $\mathfrak{m}$ does not specify the min-cuts for all subsystems $\I'$, and the fine-grainings induced by $\beta^{\uparrow}$ on different graph models $\gm{\N}\in(\tgm{\N},\mathfrak{m})$ can correspond to different min-cut structures.

To take this indeterminacy into account, we define the set $\mathfrak{m}\big\uparrow\!_{_{\N'}}$ of min-cut structures $\mathfrak{m}'$ for $\tgm{\N'}$ that reduce to the original min-cut structure $\mathfrak{m}$ on $\tgm{\N}$ when we undo the fine-graining, specifically 
\begin{equation}
\label{eq:mstructurelifts}
    \mathfrak{m}\big\uparrow\!_{_{\N'}} \coloneqq \{\mathfrak{m}':\;\mathscr{U}_{\I}=\mathscr{U}_{\phi(\I)}\; \text{for all}\; \I \}
\end{equation}
where $\phi$ is the coarse-graining induced by the recoloring $\beta^{\downarrow}$ given in \cref{eq:fgundoing}. Notice that for any $(\tgm{\N},\mathfrak{m})$ and recoloring $\beta^{\uparrow}$, this set is non-empty. The existence of at least one min-cut structure in $\mathfrak{m}\big\uparrow\!_{_{\N'}}$ is obvious: simply consider any graph model $\gm{\N}\in(\tgm{\N},\mathfrak{m})$, apply the recoloring $\beta^{\uparrow}$ and determine the min-cut structure $\mathfrak{m}'$ of the new graph model $\gm{\N'}$. We then define a fine-graining of $(\tgm{\N},\mathfrak{m})$ as follows:
\begin{ndefi}[Fine-graining of a class $(\tgm{\N},\mathfrak{m})$]
\label{def:eqclassfg}
A fine-graining of a class $(\tgm{\N},\mathfrak{m})$ is a pair $(\beta^{\uparrow},\mathfrak{m}')$ of a recoloring $\beta^{\uparrow}$ and a min-cut structure from $\mathfrak{m}\big\uparrow\!_{_{\N'}}$.
\end{ndefi}

We stress the fundamental difference between a fine-graining of a graph model as defined above, and a fine-graining of an equivalence class as defined here. Given a class $(\tgm{\N},\mathfrak{m})$ and a choice of representative $\gm{\N}$, any recoloring $\beta^{\uparrow}$ will automatically specify a fine-graining of $\gm{\N}$, but in general this is only \textit{one of the possible} fine-grainings of the class $(\tgm{\N},\mathfrak{m})$ according to \cref{def:eqclassfg}. In fact, for a fixed recoloring $\beta^{\uparrow}$, the induced fine-graining of different choices of representatives will in general correspond to different fine-grainings of $(\tgm{\N},\mathfrak{m})$. Importantly, this can even happen when the min-cut subspace of $(\tgm{\N},\mathfrak{m})$ is $1$-dimensional, and one may be inclined to think that a graph model and its equivalence class are essentially the same object. As we will see (and exemplify) in what follows, the reason for this unexpected behavior is that even when the min-cut subspace is $1$-dimensional, the W-cell can be higher dimensional. 

As usual, a choice of representative of an equivalence class is a convenient way to specify a min-cut structure, and in \cref{sec:logic} we will often resort to fine-grainings of graph models. However, the reader should always remain aware of the fundamental difference highlighted here, and of the possibility of alternative fine-grainings of an equivalence class specified by a choice of graph model.

Let us now continue our analysis of the set of possible fine-grainings of a topological graph model and a min-cut structure. We have seen in the previous subsection that in certain situations min-cut subspaces are well behaved under coarse-grainings, in the sense that they transform under the same projection which determines a coarse-grained entropy vector. Given a class $(\tgm{\N},\mathfrak{m})$, and a recoloring $\beta^{\uparrow}$, it is then natural to ask under what conditions there exists a fine-graining, i.e., a choice of a new min-cut structure $\mathfrak{m}'\in\mathfrak{m}\big\uparrow\!_{_{\N'}}$, such that \cref{lem:VspaceCG} applies. 

To answer this question, and understand the origin of some of the properties of fine-grainings mentioned above, we first need to analyse in more detail the effect of a recoloring $\beta^{\uparrow}$ on the set of W-cells for a topological graph model $\tgm{\N}$. As we have seen, a recoloring does not change the topology of $\tgm{\N}$, and in particular its set of edges, therefore the space of edge weights $\mathbb{R}^{\E}_{>0}$ is unaffected by the recoloring and we can compare W-cells before and after it. Their relation is captured by the following lemma 

\begin{nlemma}[Refinement of W-cells]
\label{lem:Wrefinementfg}
Given a class $(\tgm{\N},\mathfrak{m})$ and any recoloring $\beta^{\uparrow}$, the set of \emph{W}-cells $\mathcal{W}_{\mathfrak{m}'}$ for all $\mathfrak{m}'\in\mathfrak{m}\big\uparrow\!_{_{\N'}}$ is a partition of $\mathcal{W}_{\mathfrak{m}}$.
\end{nlemma}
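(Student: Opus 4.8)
The plan is to show two things: first, that the W-cells $\mathcal{W}_{\mathfrak{m}'}$ for $\mathfrak{m}'\in\mathfrak{m}\big\uparrow\!_{_{\N'}}$ are pairwise disjoint and each is contained in $\mathcal{W}_{\mathfrak{m}}$; second, that their union is all of $\mathcal{W}_{\mathfrak{m}}$. Since a recoloring does not alter the underlying graph, we may regard both $\mathcal{W}_{\mathfrak{m}}$ (on $\tgm{\N}$) and all $\mathcal{W}_{\mathfrak{m}'}$ (on $\tgm{\N'}$) as subsets of the same space $\mathbb{R}^{\E}_{>0}$, so the statement makes sense. The key observation driving the whole argument is that for a fixed weight vector $\mathbf{w}\in\mathbb{R}^{\E}_{>0}$, the min-cut set $\mathscr{U}_{\I'}$ for an $\N'$-party subsystem $\I'$ is computed purely from $\mathbf{w}$ and the graph topology; the coloring only fixes \emph{which} cuts are $\I'$-cuts, and the coarse-graining relation $U\cap\partial V=\beta^{-1}(\phi(\I'))$ versus $U\cap\partial V=(\beta^{\uparrow})^{-1}(\I')$ means that the set of $\I'$-cuts on $\tgm{\N'}$ is exactly a subset of the set of $\phi(\I')$-cuts on $\tgm{\N}$ (indeed it is the set of $\phi(\I')$-cuts, since $\beta^{\uparrow}$ and $\phi$ induce the same partition of $\partial V$ when $\phi$ undoes $\beta^{\uparrow}$; more carefully, $(\beta^\uparrow)^{-1}(\I')=\beta^{-1}(\phi(\I'))$ for those $\I'$ that arise, cf.\ \cref{eq:fgundoing,eq:mstructurelifts}).

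First I would establish \textbf{disjointness and containment}. By \cref{lem:Wpartition} applied to $\tgm{\N}$, every $\mathbf{w}\in\mathbb{R}^{\E}_{>0}$ lies in exactly one W-cell of $\tgm{\N}$, and similarly by \cref{lem:Wpartition} applied to $\tgm{\N'}$ it lies in exactly one W-cell of $\tgm{\N'}$. This immediately gives that the $\mathcal{W}_{\mathfrak{m}'}$ are pairwise disjoint (being distinct W-cells on $\tgm{\N'}$), and also that each $\mathbf{w}\in\mathbb{R}^{\E}_{>0}$ determines a unique min-cut structure $\mathfrak{m}'(\mathbf{w})$ on $\tgm{\N'}$. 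For containment $\mathcal{W}_{\mathfrak{m}'}\subseteq\mathcal{W}_{\mathfrak{m}}$, take $\mathbf{w}\in\mathcal{W}_{\mathfrak{m}'}$ for some $\mathfrak{m}'\in\mathfrak{m}\big\uparrow\!_{_{\N'}}$; I need to check that the min-cut structure of $\mathbf{w}$ on $\tgm{\N}$ is exactly $\mathfrak{m}$. Using the definition of $\mathfrak{m}\big\uparrow\!_{_{\N'}}$ in \cref{eq:mstructurelifts}, we have $\mathscr{U}_{\I}(\mathfrak{m}')=\mathscr{U}_{\phi(\I)}(\mathfrak{m})$ for all $\N$-party $\I$; and since the min-cut set of $\mathbf{w}$ for the $\N$-party subsystem $\phi(\I')$ equals the min-cut set of $\mathbf{w}$ for the $\N'$-party subsystem $\I'$ (same cuts, same weights), the min-cut structure of $\mathbf{w}$ on $\tgm{\N}$ agrees with $\mathfrak{m}$ on every index, hence equals $\mathfrak{m}$.

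Then I would establish \textbf{covering}: given $\mathbf{w}\in\mathcal{W}_{\mathfrak{m}}$, let $\mathfrak{m}'=\mathfrak{m}'(\mathbf{w})$ be the (unique, by \cref{lem:Wpartition}) min-cut structure of $\mathbf{w}$ on $\tgm{\N'}$, so $\mathbf{w}\in\mathcal{W}_{\mathfrak{m}'}$. It remains to verify $\mathfrak{m}'\in\mathfrak{m}\big\uparrow\!_{_{\N'}}$, i.e.\ that undoing the fine-graining by the coarse-graining $\phi$ of \cref{eq:fgundoing} recovers $\mathfrak{m}$. But this is precisely the computation just run in reverse: the min-cut set of $\mathbf{w}$ for $\phi(\I)$ on $\tgm{\N}$ equals the min-cut set of $\mathbf{w}$ for $\I$ on $\tgm{\N'}$, which by construction is $\mathscr{U}_{\I}(\mathfrak{m}')$; and since $\mathbf{w}\in\mathcal{W}_{\mathfrak{m}}$ the former is $\mathscr{U}_{\phi(\I)}(\mathfrak{m})$. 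So $\mathscr{U}_{\I}(\mathfrak{m}')=\mathscr{U}_{\phi(\I)}(\mathfrak{m})$ for all $\I$, which is exactly the membership condition in \cref{eq:mstructurelifts}. Combining the three parts, $\{\mathcal{W}_{\mathfrak{m}'}\}$ is a partition of $\mathcal{W}_{\mathfrak{m}}$.

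The main obstacle, and the only point requiring real care, is the bookkeeping in the claim ``the min-cut set of $\mathbf{w}$ for the $\N$-party index $\phi(\I')$ on $\tgm{\N}$ equals the min-cut set of $\mathbf{w}$ for the $\N'$-party index $\I'$ on $\tgm{\N'}$.'' One must check that $\phi$ (with its purifier-complementation convention from \cref{subsec:cg}) sends the $\I'$-cut condition on $\tgm{\N'}$ to the $\phi(\I')$-cut condition on $\tgm{\N}$, handling the purifier correctly (a cut $U$ and its complement $U^\complement$ give the same cut edges, so the complementation built into $\phi$ is harmless at the level of min-cut \emph{sets} once one is consistent about which of $U,U^\complement$ one records). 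Everything else is a direct application of \cref{lem:Wpartition} twice, together with the definition \cref{eq:mstructurelifts}.
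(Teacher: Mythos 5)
Your proposal is correct and follows essentially the same route as the paper's proof: disjointness from \cref{lem:Wpartition} applied to $\tgm{\N'}$, containment $\mathcal{W}_{\mathfrak{m}'}\subseteq\mathcal{W}_{\mathfrak{m}}$ from the reduction condition in \cref{eq:mstructurelifts}, and covering by noting that any $\mathbf{w}\in\mathcal{W}_{\mathfrak{m}}$ induces a fine-grained min-cut structure that necessarily lies in $\mathfrak{m}\big\uparrow\!_{_{\N'}}$. The only cosmetic difference is that you phrase containment via the min-cut structure induced by $\mathbf{w}$ on $\tgm{\N}$ rather than via inclusion of the defining inequalities and degeneracy equations, which is an equivalent formulation.
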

\begin{proof}
Given a class $(\tgm{\N},\mathfrak{m})$ and any recoloring $\beta^{\uparrow}$, consider a min-cut structure $\mathfrak{m}'\in\mathfrak{m}\big\uparrow\!_{_{\N'}}$ and the corresponding W-cell $\mathcal{W}_{\mathfrak{m}'}$. Denoting by $\phi$ the coarse-graining induced by the recoloring $\beta^{\downarrow}$ determined by \cref{eq:fgundoing}, it follows from \cref{eq:mstructurelifts} that all inequalities (\ref{eq:open_contraints2}) and degeneracy equations (\ref{eq:degeneracy_eq}) that specify the W-cell $\mathcal{W}_{\mathfrak{m}}$ of the original min-cut structure $\mathfrak{m}$ also belong to the set of inequalities and degeneracy equations that specify $\mathcal{W}_{\mathfrak{m}'}$. This implies that $\mathcal{W}_{\mathfrak{m}'}\subseteq\mathcal{W}_{\mathfrak{m}}$, for all $\mathfrak{m}'\in\mathfrak{m}\big\uparrow\!_{_{\N'}}$. Furthermore, by \cref{lem:Wpartition}, the different W-cells $\mathcal{W}_{\mathfrak{m}'}$ do not intersect. Therefore it only remains to prove that the union of all W-cells $\mathcal{W}_{\mathfrak{m}'}$ for all $\mathfrak{m}'\in\mathfrak{m}\big\uparrow\!_{_{\N'}}$ is $\mathcal{W}_{\mathfrak{m}}$. Consider a graph model $\gm{\N}\in(\tgm{\N},\mathfrak{m})$, specified by a weight vector $\mathbf{w}\in\mathcal{W}_{\mathfrak{m}}$. The recoloring $\beta^{\uparrow}$ induces a fine-graining of $\gm{\N}$ to a new graph model $\gm{\N'}\in(\tgm{\N'},\mathfrak{m}_{\mathbf{w}})$ where $\mathfrak{m}_{\mathbf{w}}$ is one of the min-cut structures in $\mathfrak{m}\big\uparrow\!_{_{\N'}}$, completing the proof.
\end{proof}

We are now ready to establish for what classes $(\tgm{\N},\mathfrak{m})$ and recolorings $\beta^{\uparrow}$, there exists a fine-graining $(\beta^{\uparrow},\check{\mathfrak{m}})$ such that the min-cut subspace of $(\tgm{\N},\mathfrak{m})$ can be obtained from that of $(\tgm{\N'},\check{\mathfrak{m}})$ via the color projection associated to the coarse-graining $\phi$ specified by the recoloring $\beta^{\downarrow}$ given in \cref{eq:fgundoing}. The next lemma clarifies that this is always the case.

\begin{nlemma}
\label{lem:canocialfg}
    For any class $(\tgm{\N},\mathfrak{m})$ and recoloring $\beta^{\uparrow}$, there exists a choice of min-cut structure $\check{\mathfrak{m}}\in\mathfrak{m}\big\uparrow\!_{_{\N'}}$ such that 
    \begin{equation}
    \label{eq:canonicalfg}
        \mathbb{S}(\tgm{\N},\mathfrak{m})=\Phi_{\N'\rightarrow\N}\,\mathbb{S}(\tgm{\N'},\check{\mathfrak{m}})
    \end{equation}
\end{nlemma}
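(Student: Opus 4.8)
The plan is to build the desired min-cut structure $\check{\mathfrak{m}}$ explicitly by starting from a generic graph model $\gm{\N}$ that realizes $\mathfrak{m}$, lifting it via $\beta^{\uparrow}$, and then perturbing the weights so as to ``re-open'' all the degeneracies that the fine-graining merely records but does not force. Concretely: first pick any weight vector $\mathbf{w}\in\mathcal{W}(\tgm{\N},\mathfrak{m})$ and consider the fine-grained graph model $\gm{\N'}$ obtained from $\beta^{\uparrow}$; its min-cut structure $\mathfrak{m}_{\mathbf{w}}$ lies in $\mathfrak{m}\big\uparrow\!_{_{\N'}}$ by \cref{lem:Wrefinementfg}. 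The subtlety, which is the whole content of the lemma, is that for a bad choice of $\mathbf{w}$ the fine-grained structure $\mathfrak{m}_{\mathbf{w}}$ may carry \emph{extra} degeneracy equations beyond those in $\mathfrak{m}$, so that $\mathbb{W}(\tgm{\N'},\mathfrak{m}_{\mathbf{w}}) \subsetneq \mathbb{W}(\tgm{\N},\mathfrak{m})$, and then \cref{lem:VspaceCG} (applied to the undoing coarse-graining $\phi$) does not directly give \cref{eq:canonicalfg}. So I want to choose the \emph{right} cell in the partition of $\mathcal{W}_{\mathfrak{m}}$ furnished by \cref{lem:Wrefinementfg} — the one whose associated $\mathbb{W}'$ is as large as possible, ideally equal to $\mathbb{W}(\tgm{\N},\mathfrak{m})$.

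The key steps, in order: (1) By \cref{lem:Wrefinementfg}, $\mathcal{W}_{\mathfrak{m}} = \bigsqcup_{\mathfrak{m}'\in\mathfrak{m}\uparrow_{\N'}} \mathcal{W}_{\mathfrak{m}'}$, so the $\mathcal{W}_{\mathfrak{m}'}$ are finitely many relatively-open polyhedral cones tiling $\mathcal{W}_{\mathfrak{m}}$; since $\mathcal{W}_{\mathfrak{m}}$ is full-dimensional in $\mathbb{W}\coloneqq\mathbb{W}(\tgm{\N},\mathfrak{m})$, at least one tile $\mathcal{W}_{\check{\mathfrak{m}}}$ must itself be full-dimensional in $\mathbb{W}$, i.e.\ $\mathrm{Span}(\mathcal{W}_{\check{\mathfrak{m}}}) = \mathbb{W}$, hence $\mathbb{W}(\tgm{\N'},\check{\mathfrak{m}}) = \mathbb{W}$. (2) Check that the undoing recoloring $\beta^{\downarrow}$ of \cref{eq:fgundoing} induces a coarse-graining $\phi:(\tgm{\N'},\check{\mathfrak{m}})\to(\tgm{\N},\mathfrak{m})$ in the sense of \cref{subsec:cg}; this is immediate from the defining condition \cref{eq:mstructurelifts} of $\mathfrak{m}\uparrow_{\N'}$, which says precisely $\mathscr{U}_{\I} = \mathscr{U}_{\phi(\I)}$, matching \cref{eq:mcstructurecg}. (3) Now the inclusion \cref{eq:Wcellinclusion} for this coarse-graining reads $\mathbb{W}(\tgm{\N},\mathfrak{m}) \supseteq \mathbb{W}(\tgm{\N'},\check{\mathfrak{m}})$, and by step (1) it is saturated. (4) Apply \cref{lem:VspaceCG} to the coarse-graining $\phi$ to conclude $\mathbb{S}(\tgm{\N},\mathfrak{m}) = \Phi_{\N'\rightarrow\N}\,\mathbb{S}(\tgm{\N'},\check{\mathfrak{m}})$, which is exactly \cref{eq:canonicalfg}.

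The main obstacle is step (1): establishing that some tile of the refinement is full-dimensional in $\mathbb{W}$. The danger is a degenerate tiling in which \emph{every} $\mathcal{W}_{\mathfrak{m}'}$ sits inside a proper subspace of $\mathbb{W}$ — but this cannot happen, because a finite union of lower-dimensional sets has empty interior (in the subspace topology of $\mathbb{W}$) while $\mathcal{W}_{\mathfrak{m}}$ does not; more carefully, one should note that the relative interiors of the $\overline{\mathcal{W}}_{\mathfrak{m}'}$ of top dimension are open and their closures cover the closure of $\mathcal{W}_{\mathfrak{m}}$, so some $\overline{\mathcal{W}}_{\mathfrak{m}'}$ has nonempty interior in $\mathbb{W}$. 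I expect one mild technical wrinkle here, namely that the tiles $\mathcal{W}_{\mathfrak{m}'}$ are relatively open and need not be closed cones, so the cleanest argument is to pass to closures, use that $\overline{\mathcal{W}}_{\mathfrak{m}} = \bigcup_{\mathfrak{m}'} \overline{\mathcal{W}}_{\mathfrak{m}'}$ is a finite union of closed polyhedral cones all lying in $\mathbb{W}$, and invoke a Baire-category / dimension-counting argument to produce one of full dimension. This is routine but worth spelling out. Everything downstream (steps 2--4) is a direct citation of results already proved in \cref{subsec:cg}.
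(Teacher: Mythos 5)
Your proposal is correct and follows essentially the same route as the paper's proof: invoke \cref{lem:Wrefinementfg} to get a finite partition of $\mathcal{W}_{\mathfrak{m}}$ into the cells $\mathcal{W}_{\mathfrak{m}'}$, observe that at least one cell must have full span $\mathbb{W}_{\mathfrak{m}}$ (the paper leaves the finite-union-of-lower-dimensional-sets argument implicit, which you spell out), and then apply \cref{lem:VspaceCG}. The extra care in your step (1) is a legitimate filling-in of a detail the paper states without proof, not a deviation.
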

\begin{proof}
    We only need to prove that \cref{lem:VspaceCG} applies, and for this we only need to show that there exists $\check{\mathfrak{m}}\in\mathfrak{m}\big\uparrow\!_{_{\N'}}$ such that $\mathbb{W}_{\check{\mathfrak{m}}}=\mathbb{W}_{\mathfrak{m}}$, where $\mathbb{W}_{\check{\mathfrak{m}}}=\text{Span}\,(\mathcal{W}_{\check{\mathfrak{m}}})$ and $\mathbb{W}_{\mathfrak{m}}=\text{Span}\,(\mathcal{W}_{\mathfrak{m}})$. This follows from \cref{lem:Wrefinementfg}, since the W-cells $\mathcal{W}_{\mathfrak{m}'}$ for all $\mathfrak{m}'\in\mathfrak{m}\big\uparrow\!_{_{\N'}}$ form a \textit{finite} partition of $\mathcal{W}_{\mathfrak{m}}$, and therefore there must exist at least one min-cut structure $\check{\mathfrak{m}}\in\mathfrak{m}\big\uparrow\!_{_{\N'}}$ such that $\text{Span}\,(\mathcal{W}_{\check{\mathfrak{m}}})=\text{Span}\,(\mathcal{W}_{\mathfrak{m}})$.
\end{proof}

While this lemma proves the existence of at least one fine-graining such that \cref{eq:canonicalfg} applies, it should be clear from the proof that this fine-graining is in general non-unique. Since these fine-grainings will play a crucial role in the next section, we introduce the following definition 
\begin{ndefi}[Minimally-degenerate fine-graining]
\label{def:mmdgfg}
For any class $(\tgm{\N},\mathfrak{m})$ and recoloring $\beta^{\uparrow}$, a minimally-degenerate fine-graining $(\beta^{\uparrow},\check{\mathfrak{m}})$ is any fine-graining such that 
\begin{equation}
\label{eq:mmdgfg}
    \text{\emph{Span}}\,(\mathcal{W}_{\check{\mathfrak{m}}})= \text{\emph{Span}}\,(\mathcal{W}_{\mathfrak{m}})
\end{equation}
\end{ndefi}
As explained in the proof of \cref{lem:canocialfg}, the condition in \cref{eq:mmdgfg} is sufficient to guarantee that \cref{eq:canonicalfg} holds, but a priori it is not necessary. Therefore it is in principle possible that there exists fine-grainings which are not minimally-degenerate according to \cref{def:mmdgfg}, but which are still well behaved in the sense of the coarse-graining in \cref{eq:canonicalfg}. Notice that in the particular case of a class $(\tgm{\N},\mathfrak{m})$ with a $1$-dimensional W-cell, any recoloring $\beta^{\uparrow}$ automatically specifies a minimally-degenerate fine-graining for the class. This simply follows from the fact that the W-cell $\mathcal{W}_{\mathfrak{m}}$ cannot be partitioned into smaller components, and there is therefore only one possible fine-grained min-cut structure $\check{\mathfrak{m}}$. Furthermore, since $\mathcal{W}_{\mathfrak{m}}=\mathcal{W}_{\check{\mathfrak{m}}}$, \cref{eq:mmdgfg} is trivially satisfied. Equivalently, in this case the class $(\tgm{\N},\mathfrak{m})$ has only a single representative $\gm{\N}$ (up to an irrelevant global rescaling of the weights), and the (unique) fine-graining of the class induced by $\beta^{\uparrow}$ necessarily coincides with the fine-graining of $\gm{\N}$.

Everything we have discussed thus far for fine-grainings of a class $(\tgm{\N},\mathfrak{m})$, can be iterated until one reaches the largest number of different colors that can be assigned to the boundary vertices of $\tgm{\N}$. This number is of course given by the cardinality of $\partial V$, and we denote it by $\sf{V}_{\partial}$. We define a \textit{maximal recoloring} of $\tgm{\N}$ as any recoloring that attains this bound, taking the form
\begin{equation}
\label{eq:maxrecoloring}
    \overline{\beta^{\uparrow}}:\; \partial V \rightarrow\; [\sf{V}_{\partial}]
\end{equation}
Given a class $(\tgm{\N},\mathfrak{m})$, any fine-graining of the form $(\overline{\beta^{\uparrow}},\mathfrak{m}')$, with $\mathfrak{m}'\in\mathfrak{m}\big\uparrow\!_{_{\N'}}$, will be referred to as a \textit{maximal fine-graining} of $(\tgm{\N},\mathfrak{m})$.

An interesting and useful application of maximal fine-grainings and \cref{lem:canocialfg} is the generalization of our analysis from \cref{subsec:simpletreepmi} concerning the relation between min-cut subspaces and PMIs for tree graphs. The main result of that section was \cref{thm:VPMIgeneral}, which says that for any simple tree and min-cut structure, the min-cut-subspace and the PMI coincide. Given an arbitrary topological graph model $\tgm{\N}$ with tree topology, we can now turn it into a simple tree via a maximal recoloring $\overline{\beta^{\uparrow}}$. And for any maximal recoloring, \cref{lem:canocialfg} guarantees the existence of a minimally-degenerate fine-graining such that \cref{eq:canonicalfg} applies. We have thus proved the following 
general result for arbitrary topological graph models with tree topology and min-cut structures on them

\begin{nthm}
\label{thm:nonsimpletree}
For any topological graph model $\tgm{\N}$ with tree topology, min-cut structure $\mathfrak{m}$ on $\tgm{\N}$, and minimally-degenerate maximal fine-graining $(\overline{\beta^{\uparrow}}, \check{\mathfrak{m}})$, the min-cut subspace $\mathbb{S}(\tgm{\N'},\check{\mathfrak{m}})$ is given by
\begin{equation}
   \mathbb{S}(\tgm{\N},\mathfrak{m})=\Phi_{\N'\rightarrow\N}\; \pi(\tgm{\N'},\check{\mathfrak{m}})
\end{equation}
\end{nthm}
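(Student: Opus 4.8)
The plan is to chain together three results already established in the excerpt. First I would invoke Theorem \ref{thm:VPMIgeneral}: given the topological graph model $\tgm{\N}$ with tree topology and the min-cut structure $\mathfrak{m}$ on it, apply a maximal recoloring $\overline{\beta^{\uparrow}}$. Since every boundary vertex now receives a distinct color, $\tgm{\N'}$ is a \emph{simple} tree in the sense of Definition \ref{def:simple}. Hence for \emph{any} min-cut structure on $\tgm{\N'}$ — in particular for the $\check{\mathfrak{m}}$ in the statement — Theorem \ref{thm:VPMIgeneral} gives $\mathbb{S}(\tgm{\N'},\check{\mathfrak{m}})=\pi(\tgm{\N'},\check{\mathfrak{m}})$.

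Second, since $(\overline{\beta^{\uparrow}},\check{\mathfrak{m}})$ is by hypothesis a minimally-degenerate fine-graining of $(\tgm{\N},\mathfrak{m})$, Definition \ref{def:mmdgfg} tells us that $\text{Span}(\mathcal{W}_{\check{\mathfrak{m}}})=\text{Span}(\mathcal{W}_{\mathfrak{m}})$. This is precisely the hypothesis $\mathbb{W}'=\mathbb{W}$ needed to apply Theorem \ref{lem:VspaceCG} to the coarse-graining $\phi$ that undoes the fine-graining (the one induced by the recoloring $\beta^{\downarrow}$ with $\beta^{\downarrow}=\beta$, cf.\ \cref{eq:fgundoing}). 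That theorem then yields $\mathbb{S}(\tgm{\N},\mathfrak{m})=\Phi_{\N'\rightarrow\N}\,\mathbb{S}(\tgm{\N'},\check{\mathfrak{m}})$. Combining this with the first step gives the claimed identity $\mathbb{S}(\tgm{\N},\mathfrak{m})=\Phi_{\N'\rightarrow\N}\,\pi(\tgm{\N'},\check{\mathfrak{m}})$.

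In writing this up I would be careful about two bookkeeping points. One is to verify that the minimally-degenerate fine-graining guaranteed to exist by Lemma \ref{lem:canocialfg} is indeed a fine-graining in the sense of Definition \ref{def:eqclassfg} when $\beta^{\uparrow}$ is the maximal recoloring — i.e.\ that $\mathfrak{m}\big\uparrow\!_{_{\N'}}$ is still non-empty for $\beta^{\uparrow}=\overline{\beta^{\uparrow}}$, which follows from the remark after \cref{eq:mstructurelifts}. The other is orientation of the color-projection: Theorem \ref{lem:VspaceCG} is phrased for a coarse-graining from $\N$ to $\N'$ with $\N'<\N$, so I should relabel its $(\N,\N')$ as $(\N',\N)$ here, since in the present statement it is the fine-grained model that has more parties; the coarse-graining that undoes the fine-graining runs $\N'\to\N$, matching $\Phi_{\N'\rightarrow\N}$.

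I do not expect a serious obstacle: the statement is essentially a corollary assembled from Theorems \ref{thm:VPMIgeneral} and \ref{lem:VspaceCG} together with Lemma \ref{lem:canocialfg} and Definition \ref{def:mmdgfg}. The only mildly delicate point is making sure the equality of \emph{spans} of W-cells (the minimal-degeneracy condition) is exactly the equality $\mathbb{W}'=\mathbb{W}$ appearing in the hypothesis of Theorem \ref{lem:VspaceCG}; this is immediate from the fact that $\mathbb{W}$ is \emph{defined} as the span of the corresponding W-cell in \cref{sssec:gmdeg}, so there is nothing to prove there beyond citing it.
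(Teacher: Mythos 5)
Your proposal is correct and follows essentially the same route as the paper: the paper also obtains this result by combining \cref{thm:VPMIgeneral} (applied to the simple tree produced by the maximal recoloring) with the color-projection property of minimally-degenerate fine-grainings, via \cref{lem:canocialfg} and \cref{lem:VspaceCG}. Your version merely unpacks \cref{lem:canocialfg} into its constituent ingredients (\cref{def:mmdgfg} plus \cref{lem:VspaceCG}), which if anything makes it clearer that the identity holds for \emph{any} minimally-degenerate maximal fine-graining, as the theorem asserts.
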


This general result shows that for fixed $\N$ many\footnote{\, Notice that the trees in \cref{thm:nonsimpletree} are allowed to have an arbitrarily large number of vertices for each color, and therefore could in principle encode the same information content as graphs with more complicated topology.} min-cut subspaces can be seen as color-projections of PMIs for a larger number of parties, and it is interesting to ask if all min-cut subspaces can be obtained in this way. As we will discuss in \cref{sec:logic}, this question is of particular relevance for the reconstruction of the HEC from the solution to the HMIP.

We conclude this section with an example of fine-grainings in the particular case of an equivalence class whose min-cut subspace is $1$-dimensional, illustrating how even in this case there might exists alternative fine-grainings, as well as the interplay between the fine-graining procedure and the application of \cref{lem:key} and the reduction of \cref{ssec:vanishing}.

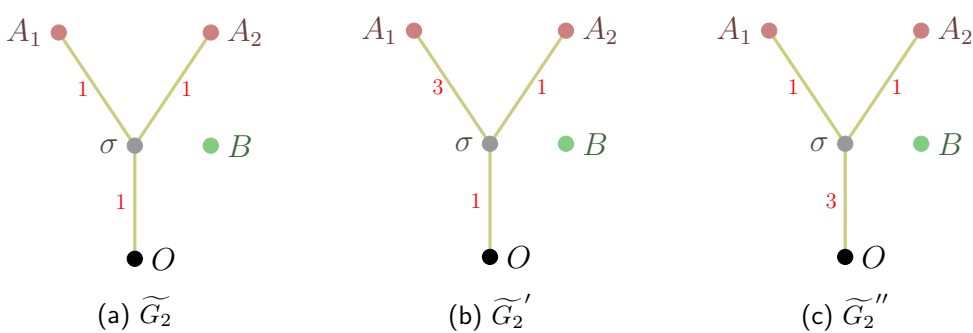
\begin{figure}[tb]
    \centering
    \begin{subfigure}{0.3\textwidth}
    \centering
    \begin{tikzpicture}

	\tikzmath{
	\edgelabelsize=0.7;
		\v =0.1;  
	}
	\coordinate (A1v) at (1,3.5);  
	\coordinate (A2v) at (3,3.5);  
	\coordinate (Bv) at (3,2);  
	\coordinate (Ov) at (2,0.5);  
	\coordinate (bv) at (2,2);  

	\draw[edgestyle] (A1v) -- node[scale=\edgelabelsize,edgeweightcolor,left,midway]{$1$} (bv);
	\draw[edgestyle] (Ov) -- node[scale=\edgelabelsize,edgeweightcolor,left,midway]{$1$} (bv);
	\draw[edgestyle] (A2v) -- node[scale=\edgelabelsize,edgeweightcolor,right,midway]{$1$} (bv);
	
	\filldraw [color=bvcolor] (bv) circle (\v) node[bvcolor!50!black,left=2pt]{$\sigma$} ;
	\filldraw [color=Acolor] (A1v) circle (\v) node[Acolor!50!black,left=2pt]{$A_1$} ;
	\filldraw [color=Acolor] (A2v) circle (\v) node[Acolor!50!black,right=2pt]{$A_2$} ;
	\filldraw [color=Bcolor] (Bv) circle (\v) node[Bcolor!50!black,right=2pt]{$B$} ;
	\filldraw [color=Ocolor] (Ov) circle (\v) node[Ocolor!50!black,right=2pt]{$O$} ;

\end{tikzpicture} 
\subcaption[]{$\gm{2}$}
\label{fig:Afgexample}
    \end{subfigure}
    \begin{subfigure}{0.3\textwidth}
    \centering
    \begin{tikzpicture}

	\tikzmath{
	\edgelabelsize=0.7;
		\v =0.1;  
	}
	\coordinate (A1v) at (1,3.5);  
	\coordinate (A2v) at (3,3.5);  
	\coordinate (Bv) at (3,2);  
	\coordinate (Ov) at (2,0.5);  
	\coordinate (bv) at (2,2);  

	\draw[edgestyle] (A1v) -- node[scale=\edgelabelsize,edgeweightcolor,left,midway]{$3$} (bv);
	\draw[edgestyle] (Ov) -- node[scale=\edgelabelsize,edgeweightcolor,left,midway]{$1$} (bv);
	\draw[edgestyle] (A2v) -- node[scale=\edgelabelsize,edgeweightcolor,right,midway]{$1$} (bv);
	
	\filldraw [color=bvcolor] (bv) circle (\v) node[bvcolor!50!black,left=2pt]{$\sigma$} ;
	\filldraw [color=Acolor] (A1v) circle (\v) node[Acolor!50!black,left=2pt]{$A_1$} ;
	\filldraw [color=Acolor] (A2v) circle (\v) node[Acolor!50!black,right=2pt]{$A_2$} ;
	\filldraw [color=Bcolor] (Bv) circle (\v) node[Bcolor!50!black,right=2pt]{$B$} ;
	\filldraw [color=Ocolor] (Ov) circle (\v) node[Ocolor!50!black,right=2pt]{$O$} ;

\end{tikzpicture}
\subcaption[]{$\gm{2}'$}
\label{fig:Bfgexample}
    \end{subfigure}
        \begin{subfigure}{0.3\textwidth}
    \centering
    \begin{tikzpicture}

	\tikzmath{
	\edgelabelsize=0.7;
		\v =0.1;  
	}
	\coordinate (A1v) at (1,3.5);  
	\coordinate (A2v) at (3,3.5);  
	\coordinate (Bv) at (3,2);  
	\coordinate (Ov) at (2,0.5);  
	\coordinate (bv) at (2,2);  

	\draw[edgestyle] (A1v) -- node[scale=\edgelabelsize,edgeweightcolor,left,midway]{$1$} (bv);
	\draw[edgestyle] (Ov) -- node[scale=\edgelabelsize,edgeweightcolor,left,midway]{$3$} (bv);
	\draw[edgestyle] (A2v) -- node[scale=\edgelabelsize,edgeweightcolor,right,midway]{$1$} (bv);
	
	\filldraw [color=bvcolor] (bv) circle (\v) node[bvcolor!50!black,left=2pt]{$\sigma$} ;
	\filldraw [color=Acolor] (A1v) circle (\v) node[Acolor!50!black,left=2pt]{$A_1$} ;
	\filldraw [color=Acolor] (A2v) circle (\v) node[Acolor!50!black,right=2pt]{$A_2$} ;
	\filldraw [color=Bcolor] (Bv) circle (\v) node[Bcolor!50!black,right=2pt]{$B$} ;
	\filldraw [color=Ocolor] (Ov) circle (\v) node[Ocolor!50!black,right=2pt]{$O$} ;

\end{tikzpicture} 
\subcaption[]{$\gm{2}''$}
\label{fig:Cfgexample}
    \end{subfigure}
    \caption{A choice of three graph models with the same underlying topology. $\gm{2}$ and $\gm{2}'$ have the same min-cut structure and are therefore different representatives of the same class, while $\gm{2}''$ has a different min-cut structure. Both min-cut structures are generic.}
    \label{fig:fgexample}
\end{figure}

Consider the two graph models in \cref{fig:Afgexample} and \cref{fig:Bfgexample}. Both of them have the same generic min-cut structure, which takes the form
\begin{equation}
    \mathscr{U}_{A} =\{\{A_1,A_2,\sigma\}\},\qquad \mathscr{U}_{B}=\{\{B\}\}, \qquad \mathscr{U}_{AB} =\{\{A_1,A_2,\sigma,B\}\}
\end{equation}
The W-cell for this min-cut structure is the interior of the following polyhedral cone in $\mathbb{R}^3_{>0}$
\begin{equation}
\label{eq:largeWcelleg}
    \text{cone}\,\{(1,0,0),(0,1,0),(1,0,1),(0,1,1)\}
\end{equation}
where we ordered the weights according to $(A_1\sigma,A_2\sigma,O\sigma)$. Its image under the map
\begin{equation}
    \Gamma=\left(
    \begin{tabular}{ccc}
        0 & 0 & 1 \\
        0 & 0 & 0 \\
        0 & 0 & 1 \\
    \end{tabular}
    \right)
\end{equation}
is the $1$-dimensional S-cell
\begin{equation}
\label{eq:fgexampleScell}
    \mathcal{S}=\lambda\, (1,0,1),\qquad \lambda>0
\end{equation}
which is indeed the entropy ray of both graph models $\gm{2}$ and $\gm{2}'$. 

The min-cut structure of the graph model in \cref{fig:Cfgexample} is again generic and takes the form 
\begin{equation}
    \mathscr{U}_{A} =\{\{A_1,A_2\}\},\qquad \mathscr{U}_{B}=\{\{B\}\}, \qquad \mathscr{U}_{AB} =\{\{A_1,A_2,B\}\}
\end{equation}
The W-cell is the interior of another polyhedral cone in $\mathbb{R}^3_{>0}$
\begin{equation}
\label{eq:smallWcelleg}
    \text{cone}\,\{(1,0,1),(0,1,1),(0,0,1)\}
\end{equation}
and its image under the map
\begin{equation}
    \Gamma=\left(
    \begin{tabular}{ccc}
        1 & 1 & 0 \\
        0 & 0 & 0 \\
        1 & 1 & 0 \\
    \end{tabular}
    \right)
\end{equation}
is the same $1$-dimensional S-cell given in \cref{eq:fgexampleScell} for the other two graph models.

\begin{figure}[t]
    \centering
        \begin{tikzpicture}

	\tikzmath{
	\edgelabelsize=0.7;
		\v =0.1;  
	}
	\coordinate (A) at (-3,0);  
	\coordinate (B) at (0,0); 
	\coordinate (C) at (3,0);  
	\coordinate (D) at (-1.5,2.598);
	\coordinate (E) at (1.5,2.598);
	\coordinate (F) at (0,5.196);
	\coordinate (G1) at (0,1.732);
	\coordinate (G2) at (0,3.464);
	\coordinate (G3) at (1.5,0.866);
	
	\filldraw[fill=orange,opacity=0.3] (A) -- (B) -- (E) -- (F) -- (D) -- cycle;
	\filldraw[fill=cyan,opacity=0.3] (B) -- (C) -- (E) -- cycle;

	\draw[black] (A) -- (B);
	\draw[black] (B) -- (C);
	\draw[black] (C) -- (E);
	\draw[black] (A) -- (D);
	\draw[black,dashed] (D) -- (E);
	\draw[black] (E) -- (F);
	\draw[black] (D) -- (F);
	\draw[black,dashed] (B) -- (D);
	\draw[black] (B) -- (E);
	
	\filldraw [color=black, fill=white, very thick] (A) circle (\v) node[black,left=2pt]{{\footnotesize $(0,1,0)$}} ;
	\filldraw [color=black] (B) circle (\v) node[black,below=2pt]{{\footnotesize $(0,1,1)$}} ;
	\filldraw [color=black, fill=white, very thick] (C) circle (\v) node[black,right=2pt]{{\footnotesize $(0,0,1)$}} ;
	\filldraw [color=gray, fill=white, very thick] (D) circle (\v) node[gray,left=2pt]{{\footnotesize $(1,1,0)$}} ;
	\filldraw [color=black] (E) circle (\v) node[black,right=2pt]{{\footnotesize $(1,0,1)$}} ;
	\filldraw [color=black, fill=white, very thick] (F) circle (\v) node[black,above=2pt]{{\footnotesize $(1,0,0)$}} ;
	\filldraw [color=orange] (G1) circle (\v) node[black,below=2pt]{{\footnotesize $\gm{2}$}} ;
	\filldraw [color=orange] (G2) circle (\v) node[black,above=2pt]{{\footnotesize $\gm{2}'$}} ;
	\filldraw [color=cyan] (G3) circle (\v) node[black,above=2pt]{{\footnotesize $\gm{2}''$}} ;

\end{tikzpicture}
    \caption{A cross section of $\mathbb{R}^3_{>0}$ by an affine $\mathbb{R}^2$ plane orthogonal to the vector $(1,1,1)$. This cross section shows the partition of $\mathbb{R}^3_{>0}$ and its boundary into W-cells for the topological graph model of the graphs in \cref{fig:fgexample}, before and after the recoloring from \cref{eq:recoloringexample}. See the main text for more details.}
    \label{fig:setWcelleg}
\end{figure}
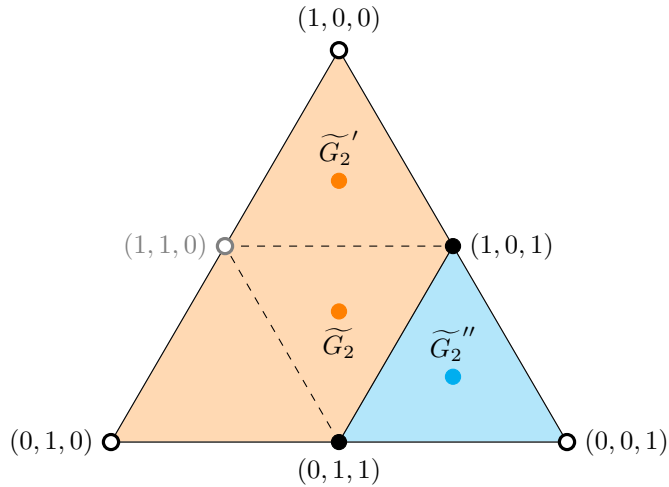

Figure \ref{fig:setWcelleg} shows the partition into W-cells of the space of edge weights $\mathbb{R}^3_{>0}$ for the topological graph model underlying the three graph models of \cref{fig:fgexample}. The regions shaded in orange and cyan correspond respectively to the two full-dimensional W-cells given in \cref{eq:largeWcelleg} and \cref{eq:smallWcelleg}. We leave it as an exercise for the reader to verify that the solid edge separating the these two W-cells does indeed correspond, as suggested by the figure, to a degenerate min-cut structure with a $1$-dimensional W-cell (see \cref{fig:Aegnewweights} for a choice of representative of this class).

The regions on the boundary of $\mathbb{R}^3_{>0}$ instead, correspond to W-cells for min-cut structures on different topological graph models obtained by appropriate deletion of edges, following the reduction described in \cref{ssec:vanishing}. The vertices shown in solid black in \cref{fig:setWcelleg} correspond to extreme rays shared by the two full-dimensional W-cells (see \cref{fig:Begnewweights} for an example). They are $1$-dimensional W-cells whose associated S-cells are again given by \cref{eq:fgexampleScell}, in agreement with \cref{lem:key}. On the other hand, the unfilled vertices corresponding to the canonical bases vectors of $\mathbb{R}^3$, are examples of the situation where \cref{lem:key} cannot be used. As explained in the proof of the same lemma, it can happen that an extreme ray of a W-cell is mapped to the null vector by the corresponding $\Gamma$ map, which is precisely what happens here (cf.\ \cref{fig:Cegnewweights}). All other solid edges in \cref{fig:setWcelleg} can also be checked to be $2$-dimensional W-cells, again after an appropriate deletion of edges (for now, the dashed edges and the $(1,1,0)$ vertex should be ignored, and the segment connecting $(0,1,0)$ to $(1,0,0)$ should be seen as a single face). Finally, notice that again in agreement with \cref{lem:key}, each W-cell (for all dimensions higher than $1$) has at least an extreme ray which is not mapped to the null vector by the corresponding $\Gamma$ matrix. The only exception seems to be the face generated by $\{(0,1,0),(1,0,0)\}$, however in this case even if the W-cell is $2$-dimensional, \cref{lem:key} does not apply, since the entire W-cell is mapped to the null vector and the min-cut subspace is not $1$-dimensional.

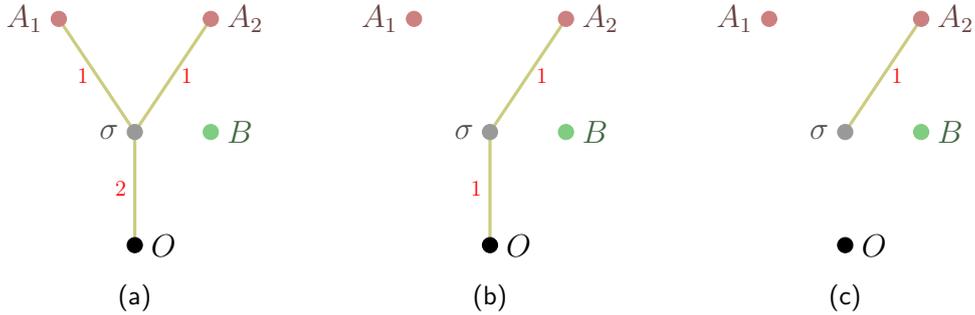
\begin{figure}[b]
    \centering
    \begin{subfigure}{0.3\textwidth}
    \centering
    \begin{tikzpicture}

	\tikzmath{
	\edgelabelsize=0.7;
		\v =0.1;  
	}
	\coordinate (A1v) at (1,3.5);  
	\coordinate (A2v) at (3,3.5);  
	\coordinate (Bv) at (3,2);  
	\coordinate (Ov) at (2,0.5);  
	\coordinate (bv) at (2,2);  

	\draw[edgestyle] (A1v) -- node[scale=\edgelabelsize,edgeweightcolor,left,midway]{$1$} (bv);
	\draw[edgestyle] (Ov) -- node[scale=\edgelabelsize,edgeweightcolor,left,midway]{$2$} (bv);
	\draw[edgestyle] (A2v) -- node[scale=\edgelabelsize,edgeweightcolor,right,midway]{$1$} (bv);
	
	\filldraw [color=bvcolor] (bv) circle (\v) node[bvcolor!50!black,left=2pt]{$\sigma$} ;
	\filldraw [color=Acolor] (A1v) circle (\v) node[Acolor!50!black,left=2pt]{$A_1$} ;
	\filldraw [color=Acolor] (A2v) circle (\v) node[Acolor!50!black,right=2pt]{$A_2$} ;
	\filldraw [color=Bcolor] (Bv) circle (\v) node[Bcolor!50!black,right=2pt]{$B$} ;
	\filldraw [color=Ocolor] (Ov) circle (\v) node[Ocolor!50!black,right=2pt]{$O$} ;

\end{tikzpicture} 
\subcaption[]{}
\label{fig:Aegnewweights}
    \end{subfigure}
    \begin{subfigure}{0.3\textwidth}
    \centering
    \begin{tikzpicture}

	\tikzmath{
	\edgelabelsize=0.7;
		\v =0.1;  
	}
	\coordinate (A1v) at (1,3.5);  
	\coordinate (A2v) at (3,3.5);  
	\coordinate (Bv) at (3,2);  
	\coordinate (Ov) at (2,0.5);  
	\coordinate (bv) at (2,2);  

	\draw[edgestyle] (Ov) -- node[scale=\edgelabelsize,edgeweightcolor,left,midway]{$1$} (bv);
	\draw[edgestyle] (A2v) -- node[scale=\edgelabelsize,edgeweightcolor,right,midway]{$1$} (bv);
	
	\filldraw [color=bvcolor] (bv) circle (\v) node[bvcolor!50!black,left=2pt]{$\sigma$} ;
	\filldraw [color=Acolor] (A1v) circle (\v) node[Acolor!50!black,left=2pt]{$A_1$} ;
	\filldraw [color=Acolor] (A2v) circle (\v) node[Acolor!50!black,right=2pt]{$A_2$} ;
	\filldraw [color=Bcolor] (Bv) circle (\v) node[Bcolor!50!black,right=2pt]{$B$} ;
	\filldraw [color=Ocolor] (Ov) circle (\v) node[Ocolor!50!black,right=2pt]{$O$} ;

\end{tikzpicture}
\subcaption[]{}
\label{fig:Begnewweights}
    \end{subfigure}
        \begin{subfigure}{0.3\textwidth}
    \centering
    \begin{tikzpicture}

	\tikzmath{
	\edgelabelsize=0.7;
		\v =0.1;  
	}
	\coordinate (A1v) at (1,3.5);  
	\coordinate (A2v) at (3,3.5);  
	\coordinate (Bv) at (3,2);  
	\coordinate (Ov) at (2,0.5);  
	\coordinate (bv) at (2,2);  

	\draw[edgestyle] (A2v) -- node[scale=\edgelabelsize,edgeweightcolor,right,midway]{$1$} (bv);
	
	\filldraw [color=bvcolor] (bv) circle (\v) node[bvcolor!50!black,left=2pt]{$\sigma$} ;
	\filldraw [color=Acolor] (A1v) circle (\v) node[Acolor!50!black,left=2pt]{$A_1$} ;
	\filldraw [color=Acolor] (A2v) circle (\v) node[Acolor!50!black,right=2pt]{$A_2$} ;
	\filldraw [color=Bcolor] (Bv) circle (\v) node[Bcolor!50!black,right=2pt]{$B$} ;
	\filldraw [color=Ocolor] (Ov) circle (\v) node[Ocolor!50!black,right=2pt]{$O$} ;

\end{tikzpicture} 
\subcaption[]{}
\label{fig:Cegnewweights}
    \end{subfigure}
    \caption{Graph models specified by particular choices of weights discussed in the main text.}
    \label{fig:egnewweights}
\end{figure}

Let us now consider the maximal recoloring of the graphs models in \cref{fig:fgexample} specified by 
\begin{equation}
\label{eq:recoloringexample}
    A_1\rightarrow A,\quad A_2\rightarrow C,\quad B\rightarrow B,\quad O\rightarrow O
\end{equation}
The new min-cut structures corresponding to the recolored graph models are still generic, and can conveniently be described by their corresponding $\Gamma$ matrices. They are respectively
\begin{equation}
    \Gamma_{\gm{2}}=\left(
    \begin{tabular}{ccc}
        1 & 0 & 0 \\
        0 & 0 & 0 \\
        0 & 1 & 0 \\
        1 & 0 & 0 \\
        0 & 0 & 1 \\
        0 & 1 & 0 \\
        0 & 0 & 1 \\
    \end{tabular}
    \right)\qquad
    \Gamma_{\gm{2}'}=\left(
    \begin{tabular}{ccc}
        0 & 1 & 1 \\
        0 & 0 & 0 \\
        0 & 1 & 0 \\
        0 & 1 & 1 \\
        0 & 0 & 1 \\
        0 & 1 & 0 \\
        0 & 0 & 1 \\
    \end{tabular}
    \right)\qquad
    \Gamma_{\gm{2}''}=\left(
    \begin{tabular}{ccc}
        1 & 0 & 0 \\
        0 & 0 & 0 \\
        0 & 1 & 0 \\
        1 & 0 & 0 \\
        1 & 1 & 0 \\
        0 & 1 & 0 \\
        1 & 1 & 0 \\
    \end{tabular}
    \right)
\end{equation}
As one can immediately see, after recoloring, the graph model in \cref{fig:Afgexample} has a $3$-dimensional min-cut subspace, while the min-cut subspaces of the graph models in \cref{fig:Bfgexample} are $2$-dimensional, even if the two original graph models belonged to the same W-cell and had a $1$-dimensional min-cut subspace. 

The W-cells of the first two fine-grained graph models are
\begin{align}
    &\mathcal{W}_{\gm{2}} = \text{cone}\,\{(1,1,0),(0,1,1),(1,0,1)\}\nonumber\\
    &\mathcal{W}_{\gm{2}'} = \text{cone}\,\{(1,1,0),(1,0,0),(1,0,1)\}
\end{align}
These new W-cells are again shown in \cref{fig:setWcelleg}, where the dashed edges, and the new vertex $(1,1,0)$, are new W-cells that correspond to new min-cut structures that emerge from the recoloring of the original topological graph model. As one can easily guess from the figure, there is an additional full-dimensional W-cell
\begin{equation}
    \text{cone}\,\{(0,1,0),(1,1,0),(0,1,1)\}
\end{equation}
which can immediately be obtained by swapping $A$ and $C$ in the recolored graph model of \cref{fig:Bfgexample}. Combined, all these W-cells form a partition of the original W-cell for the graph models in \cref{fig:Afgexample} and \cref{fig:Bfgexample} given in \cref{eq:largeWcelleg}, in agreement with \cref{lem:Wrefinementfg}. On the other hand, as also shown in \cref{fig:setWcelleg}, the recoloring of the graph model in \cref{fig:Cfgexample} is not associated to any partition of the original W-cell. The full set of W-cells after the recoloring of the topological graph model can also be seen in \cref{fig:flow}, where we explicitly show the transition between different W-cells as a function of two independent weights (the third can be fixed by rescaling). Notice that there is an additional W-cell that is not visible in \cref{fig:triangle}, namely the $0$-dimensional W-cell corresponding to the origin of $\mathbb{R}^3$.
 
Finally, let us briefly comment again on the application of \cref{lem:key} to see how it relates to fine-grainings. Starting from $\gm{2}$ we can apply \cref{lem:key} and select an extreme ray of the closure of the W-cell given in \cref{eq:largeWcelleg} such that the min-cut subspace of the new graph model is again the one generated by the S-cell given in \cref{eq:fgexampleScell}. Suppose that we choose $(0,1,1)$ (cf.\ \cref{fig:Cegnewweights}) and then apply the recoloring from \cref{eq:recoloringexample}. The W-cell of the resulting graph model is the same as the original one, and the new min-cut structure is degenerate. With the choice of representative $\mathscr{U}_{C}=\{\{C\}\}$, the $\Gamma$ matrix is
\begin{equation}
    \Gamma=\left(
    \begin{tabular}{ccc}
        0 & 0 & 0 \\
        0 & 0 & 0 \\
        0 & 1 & 0 \\
        0 & 0 & 0 \\
        0 & 1 & 0 \\
        0 & 1 & 0 \\
        0 & 1 & 0 \\
    \end{tabular}
    \right)
\end{equation}
and the (necessarily $1$-dimensional) S-cell is
\begin{equation}
\label{eq:fgScellexample}
\mathcal{S}=\lambda\, (0,0,1,0,1,1,1),\qquad \lambda>0
\end{equation}
The coarse-graining associated to the initial coloring of the graph (before the recoloring that induced the fine-graining) specifies the color projection (cf.\ \cref{eq:colorproj})
\begin{equation}
\label{eq:colorprojexample}
\Phi_{3\rightarrow2}=\left(
    \begin{tabular}{ccccccc}
        0 & 0 & 0 & 0 & 1 & 0 & 0 \\
        0 & 1 & 0 & 0 & 0 & 0 & 0 \\
        0 & 0 & 0 & 0 & 0 & 0 & 1 \\
    \end{tabular}
    \right)
\end{equation}
and one can immediately verify that once applied to \cref{eq:fgScellexample} this transformation gives the original S-cell from \cref{eq:fgexampleScell}.


\begin{figure}[tb]
    \centering
    \begin{subfigure}{0.49\textwidth}
    \centering
    \begin{tikzpicture}

	\tikzmath{
	\edgelabelsize=0.7;
		\v =0.1;  
	}
	\coordinate (A) at (-6,0);  
	\coordinate (B) at (-6,1); 
	\coordinate (C) at (-6,6);  
	\coordinate (D) at (-5,0);
	\coordinate (E) at (0,0);
	\coordinate (F) at (-4,3);
	\coordinate (G) at (-3,2);
	
	
	\filldraw [color=white, fill=white] (E) circle (\v) node[black,below=2pt]{{\footnotesize $\lambda_1$}} ;
	\filldraw [color=white, fill=white] (C) circle (\v) node[black,left=2pt]{{\footnotesize $\lambda_2$}} ;

	\draw[->] (A) -- (C);
	\draw[->] (A) -- (E);
	\draw[-] (B) -- (F);
	\draw[-] (D) -- (G);
	\draw[-] (B) -- (D);
	
	\filldraw [color=black, fill=black] (B) circle (\v) node[black,left=2pt]{{\footnotesize $1$}} ;
	\filldraw [color=black, fill=black] (D) circle (\v) node[black,below=2pt]{{\footnotesize $1$}} ;
	\filldraw [color=black, fill=white, very thick] (A) circle (\v) node[black,above=2pt]{} ;
	\node[rotate=45] at (-4.5,2.8) {\footnotesize $\lambda_2=\lambda_1+1$};
	\node[rotate=45] at (-3.5,1.8) {\footnotesize $\lambda_1=\lambda_2+1$};
	\node[] at (-0.85,0.55) {\footnotesize $\lambda_1+\lambda_2=1$};
	\draw[<-,dotted] (-5.45,0.55) -- (-1.85,0.55);

    \filldraw[fill=gray,opacity=0.1,draw=none] (-3,6) -- (0,6) -- (0,3) -- (-3,3) -- cycle;

	\coordinate (a) at (-2.5,5.5);  
	\coordinate (b) at (-0.5,4); 
	\coordinate (c) at (-0.5,5.5);  
	\coordinate (o) at (-1.5,3.5);
	\coordinate (s) at (-1.5,4.5);
	\draw[-] (a) -- (s);
	\draw[-] (c) -- (s);
	\draw[-] (o) -- (s);
	\node[align=left] at (-2.3,5) {\footnotesize $\lambda_1$};
	\node[align=left] at (-0.6,5) {\footnotesize $\lambda_2$};
	\node[align=left] at (-2.6,5.7) {\footnotesize $A$};
	\node[align=left] at (-0.4,4) {\footnotesize $B$};
	\node[align=left] at (-0.4,5.7) {\footnotesize $C$};
	\node[align=left] at (-1.5,3.3) {\footnotesize $O$};
	\node[align=left] at (-1.7,4) {\footnotesize $1$};
	\node at (-0.6,4) [circle,fill=black,inner sep=0.5pt]{};
	\node[align=left] at (-1.3,4.4) {\footnotesize $\sigma$};

\end{tikzpicture} 
\subcaption[]{}
    \end{subfigure}
    \begin{subfigure}{0.49\textwidth}
    \centering
    \begin{tikzpicture}

	\tikzmath{
	\edgelabelsize=0.7;
		\v =0.1;  
	}
	\coordinate (A) at (-6,0);  
	\coordinate (B) at (-6,6);  
	\coordinate (C) at (0,0);
	\coordinate (E) at (-4,2);
	

\filldraw [color=white, fill=white] (C) circle (\v) node[black,below=2pt]{{\footnotesize $\lambda_1$}} ;
\filldraw [color=white, fill=white] (B) circle (\v) node[black,left=2pt]{{\footnotesize $\lambda_2$}} ;
\node[rotate=45] at (-4.5,1.8) {\footnotesize $\lambda_1=\lambda_2$};

	\draw[->] (A) -- (B);
	\draw[->] (A) -- (C);
	\draw[-] (A) -- (E);

	\filldraw [color=black, fill=white, very thick] (A) circle (\v) node[black,above=2pt]{} ;

    \filldraw[fill=gray,opacity=0.1,draw=none] (-3,6) -- (0,6) -- (0,3) -- (-3,3) -- cycle;

	\coordinate (a) at (-2.5,5.5);  
	\coordinate (b) at (-0.5,4); 
	\coordinate (c) at (-0.5,5.5);  
	\coordinate (o) at (-1.5,3.5);
	\coordinate (s) at (-1.5,4.5);
	\draw[-] (a) -- (s);
	\draw[-] (c) -- (s);
	\node[align=left] at (-2.3,5) {\footnotesize $\lambda_1$};
	\node[align=left] at (-0.6,5) {\footnotesize $\lambda_2$};
	\node[align=left] at (-2.6,5.7) {\footnotesize $A$};
	\node[align=left] at (-0.4,4) {\footnotesize $B$};
	\node[align=left] at (-0.4,5.7) {\footnotesize $C$};
	\node[align=left] at (-1.5,3.3) {\footnotesize $O$};
	\node at (-0.6,4) [circle,fill=black,inner sep=0.5pt]{};
	\node at (-1.5,3.5) [circle,fill=black,inner sep=0.5pt]{};
	\node[align=left] at (-1.3,4.4) {\footnotesize $\sigma$};

\end{tikzpicture}
\subcaption[]{}
    \end{subfigure}
    \caption{The W-cells for the topological graph model underlying the graphs in \cref{fig:fgexample}, as well as variations of it obtained via the deletion of the appropriate edges. The figure shows the explicit dependence on the values of the weights as well as the degeneracy equations. The shaded insets show the topological graph models associated to the two diagrams, which are distinguished by the presence (left) or absence (right) of the $(\sigma,O)$ edge.}
    \label{fig:flow}
\end{figure}
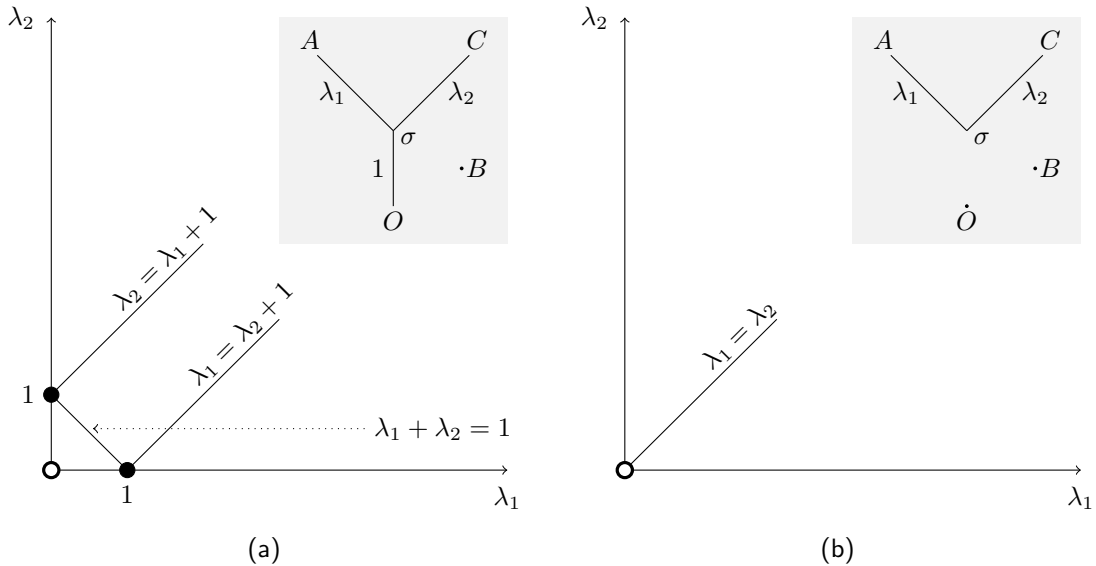

\section{The HEC from marginal independence}
\label{sec:logic}

In this section we will use the machinery that we have developed thus far to relate the construction of the HEC$_{\N}$ to the knowledge of set of holographic PMIs. 

In \cref{subsec:extreme_rays} we will first look at various graph models which realize the known extreme rays of the HEC$_{\N}$ up to $\N=6$, focusing in particular on their PMIs. The reader should not be surprised by the fact that we will look directly at graph models, rather than equivalence classes, since by \cref{cor:ext} each of the graph models discussed here is by itself an equivalence class (up to a global rescaling of the weights). In these cases, an explicit choice of weights should merely be seen as a more compact way of specifying a min-cut structure on the underlying topological graph model. 

Based on the evidence presented in \cref{subsec:extreme_rays}, in \cref{subsec:reconstruction} we will then formulate various conjectures about certain properties of the graph models that realize the extreme rays of the HEC$_{\N}$ for arbitrary $\N$. We will discuss in detail how these conjectures are related to each other, and their implications for the explicit reconstruction of the HEC$_{\N}$ from the set of holographic PMIs.

In most of the following discussions about extreme rays of the HEC$_{\N}$, including the formulation of our conjectures (in particular \ref{con:c1} and \ref{con:c4}), we will always implicitly assume that we are focusing only on the restricted set of ``genuine $\N$-party'' extreme rays, i.e., on extreme rays which are not embeddings into $\N$-party entropy space of extreme rays for fewer parties. The convenience of this assumption lies in the fact that the components of such extreme rays are strictly positive, which allows us to ignore unnecessary complications related to the connectivity of the graphs. Indeed, if some components of an $\N$-party entropy vector $\mathbf{S}$ vanish, it is always possible to use purity\footnote{\, Recall that the von Neumann entropy of a density matrix vanishes if and only if it corresponds to a pure state. Similarly, for graphs, the entropy of a subsystem $\I$ vanishes if and only if the min-cut for $\I$ is disconnected from its complement.} to effectively reduce the number of parties to some $\N'<\N$, and distill the essential information contained in $\mathbf{S}$ into a new $\N'$-party entropy vector $\mathbf{S}'$.

\subsection{Graph models and PMIs for known extreme rays of the HEC}
\label{subsec:extreme_rays}

In \cref{subsec:properties} we proved that any graph model that realizes an extreme ray of the HEC$_{\N}$ has a $1$-dimensional min-cut subspace, and we have shown an explicit example for $\N=3$. In that example, the topological graph model $\tgm{3}$ associated to $\gm{3}$ was a simple tree graph (cf. \cref{def:simple}). By \cref{thm:VPMIgeneral} it follows then that its PMI is also $1$-dimensional (it is the min-cut subspace), and that the ray in \cref{eq:perfectray} is also an extreme ray of the SAC$_3$. As it turns out, the same type of situation also occurs for all graph models which realize the extreme rays of the HEC$_{\N}$, for $\N\in\{2,3,4\}$ \cite{Bao:2015bfa}, as the reader can easily verify.

On the other hand, for $\N=5$, not all realizations of the extreme rays given in \cite{HernandezCuenca:2019wgh,Avis:2021xnz} were trees. The key observation of this subsection is that one can nonetheless find alternative graph models, which are trees, and realize the same extreme rays. These trees however are no longer simple and this non-simplicity is not just a consequence of particular choices of graph realizations, but rather of the fact that the corresponding extreme rays of the HEC$_5$ are not extreme rays of the SAC$_5$ (their PMIs are higher dimensional, see the next subsection for more details). 

In \cref{fig:N5trees}, we provide a tree graph realization for every HEC$_5$ extreme ray which was not realized by a tree graph in \cite{Avis:2021xnz}. Most of these new trees can immediately be obtained from the original graph models from \cite{Avis:2021xnz} by simply splitting every degree-$k$ boundary vertex into $k$ vertices of the same color.\footnote{\, This is not always sufficient: for example the last graph in \cref{fig:N5trees} requires a judicious iterative application of the operations described in \cref{sec:gops}, particularly the ${\sf \Delta}$-{\sf Y} exchange operation shown in \cref{fig:triangle}.} For each of these tree graph models we can then specify a maximal recoloring and consider the associated possible fine-grainings of the corresponding class. As exemplified in the previous section, one possible such fine-gaining is simply obtained via the fine graining of the graph model, i.e., by only applying the recoloring to the boundary vertices, without varying the weights. Despite the fact that in general the dimension of the min-cut subspace can grow after a fine-graining, it turns out that in each case the resulting min-cut subspace is $1$-dimensional.\footnote{\, At this point the reader may already wonder whether we could have used \cref{lem:key} in case the resulting min-cut subspace had a higher dimensionality. Indeed this is the case, and we will use this strategy in what follows.} Since each of these new graph models is now specified on a simple tree (because the recoloring was maximal), it then follows by \cref{thm:VPMIgeneral} that the PMI is also $1$-dimensional, and these graph models thus realize extreme rays of the SAC$_{\N'}$ for the corresponding $\N'={\sf V}_{\partial}-1\geq5$. For instance, via this procedure, the last graph model in \cref{fig:N5trees} lifts to a simple tree graph model for $\N'=10$, which one can check realizes an extreme ray of the SAC$_{10}$.\footnote{\, To check extremality of an entropy vector for $\N$ parties, one can simply check that it satisfies all instances of SA for that $\N$, and that at least $\D-1$ linearly independent instances are saturated.} Notice that even if the fine grainings specified here may\footnote{\, We have not explicitly checked the dimensionality of the W-cells, since it is not crucial here.} not be minimally-degenerate according to \cref{def:mmdgfg}, the min-cut subspaces before and after the fine-graining are still related by a color-projection, since both S-cells are just single rays.

\begin{figure}[tb] 
\centering
\includegraphics[scale=0.655]{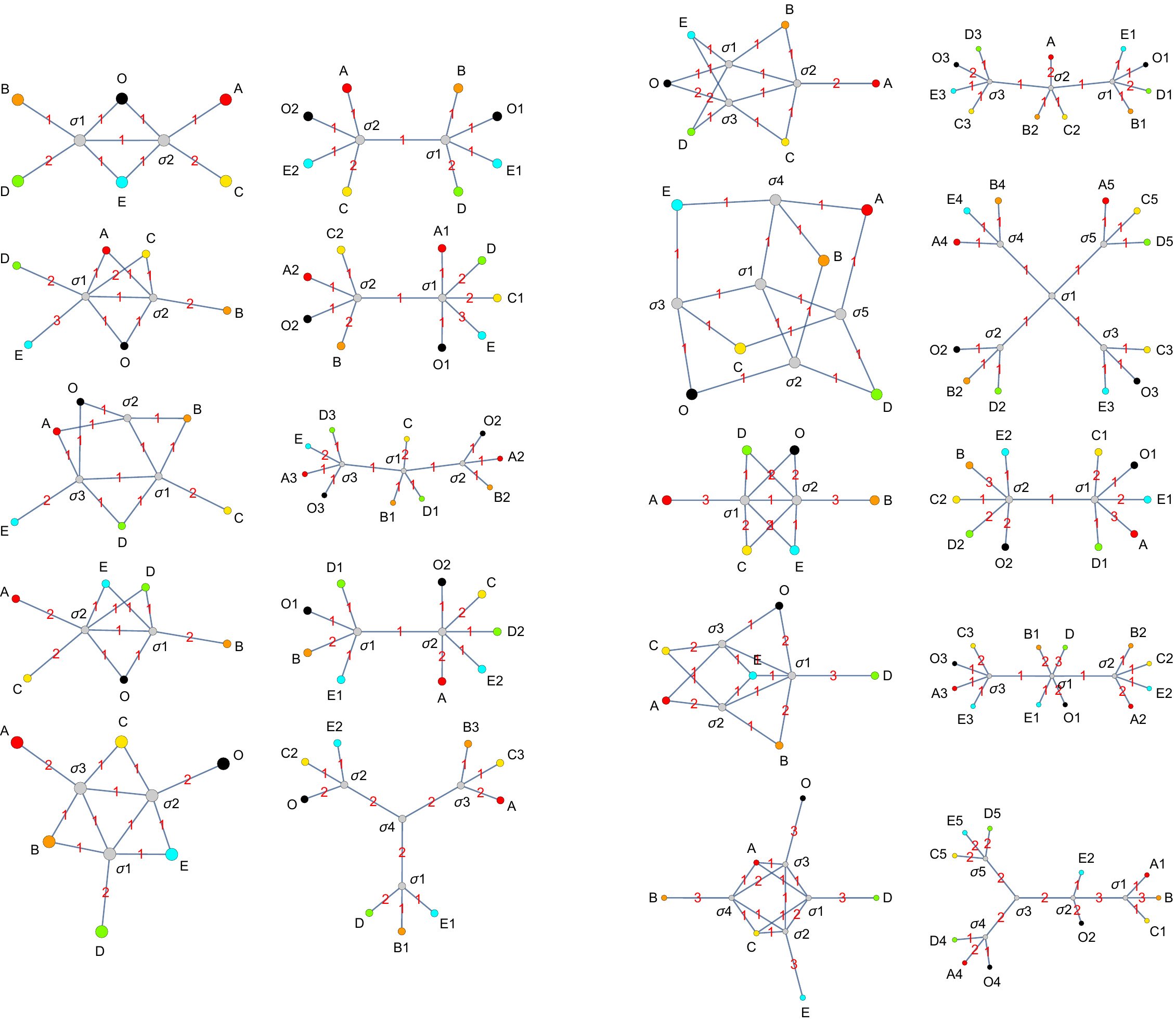} \\
\caption{The first and third columns show graph realizations of the extreme rays of the HEC$_5$ which cannot be realized by simple trees. Rather than their original form from \cite{HernandezCuenca:2019wgh}, here we show the graphs of minimal number of vertices obtained by \cite{Avis:2021xnz}. To the right of each graph, we provide a non-simple tree realization of the same extreme ray, obtained through repeated application of the graph operations described in \cref{sec:gops}.}
\label{fig:N5trees}
\end{figure} 

For $\N\geq6$, the HEC$_{\N}$ is still unknown, but a large number of its extreme rays is known for $\N=6$ \cite{n6wip}, and we can again explore whether we can realize them with graph models with tree topology. If we succeed, we can then repeat the same procedure we just described for the $\N=5$ case, to make these tree graph models simple via a maximal recoloring, and check if the resulting min-cut subspace is $1$-dimensional. If this is the case, these graph models then realize extreme rays of the SAC$_{\N'}$ for some $\N'\geq 6$. 

At the time of writing, a total of $4122$ distinct (orbits of) extreme rays have been discovered. Although only $24$ of these can be realized by simple trees, it turns out that as many as $3905$ of the others can be realized by graph models that can be immediately turned into non-simple trees by just splitting their boundary vertices (as explained before for the $\N=5$ case). In other words, only $193$ extreme rays are realized by graph models that contain cycles involving only bulk vertices, which can thus not be broken by splitting boundary vertices. However, most of these just contain a single bulk $3$-cycle, which is straightforward to break as we show in \cref{sec:gops}. Ultimately, there only remain $14$ graphs models which cannot be obviously turned into trees, as they involve larger cycles or more than a single $3$-cycle. Figures \ref{fig:N6er1} and \ref{fig:N6er3} depict graph models realizations containing a bulk $4$-cycle for two of the HEC$_6$ extreme rays. After splitting the boundary vertices and maximally recoloring them, the topological graph models are the same for these two cases.  Nevertheless, it is easy to convert the graph of \cref{fig:N6er1} into a tree graph (by splitting the bulk cycle at the $\sigma2$ vertex; cf.\ the bottom panel), whereas an analogous transformation does not work for the graph of \cref{fig:N6er3}, even though the original graph looks a bit simpler. For all simple tree graph models obtained via this procedure, the resulting min-cut subspace (and therefore PMI) is indeed $1$-dimensional.

In total there are only a few (some of the $14$ mentioned above) extreme rays of the HEC$_6$ which are realized by graph models that we were not immediately able to convert into tree graphs using these simple operations. While it is possible that some of these extreme rays cannot be realized by any tree graph, we have not attempted a systematic and exhaustive search, and it seems likely that the difficulty in obtaining tree realizations for these extreme rays is just technical rather than fundamental.

\begin{figure}[tb] 
\begin{center}
\includegraphics[width=2.4in]{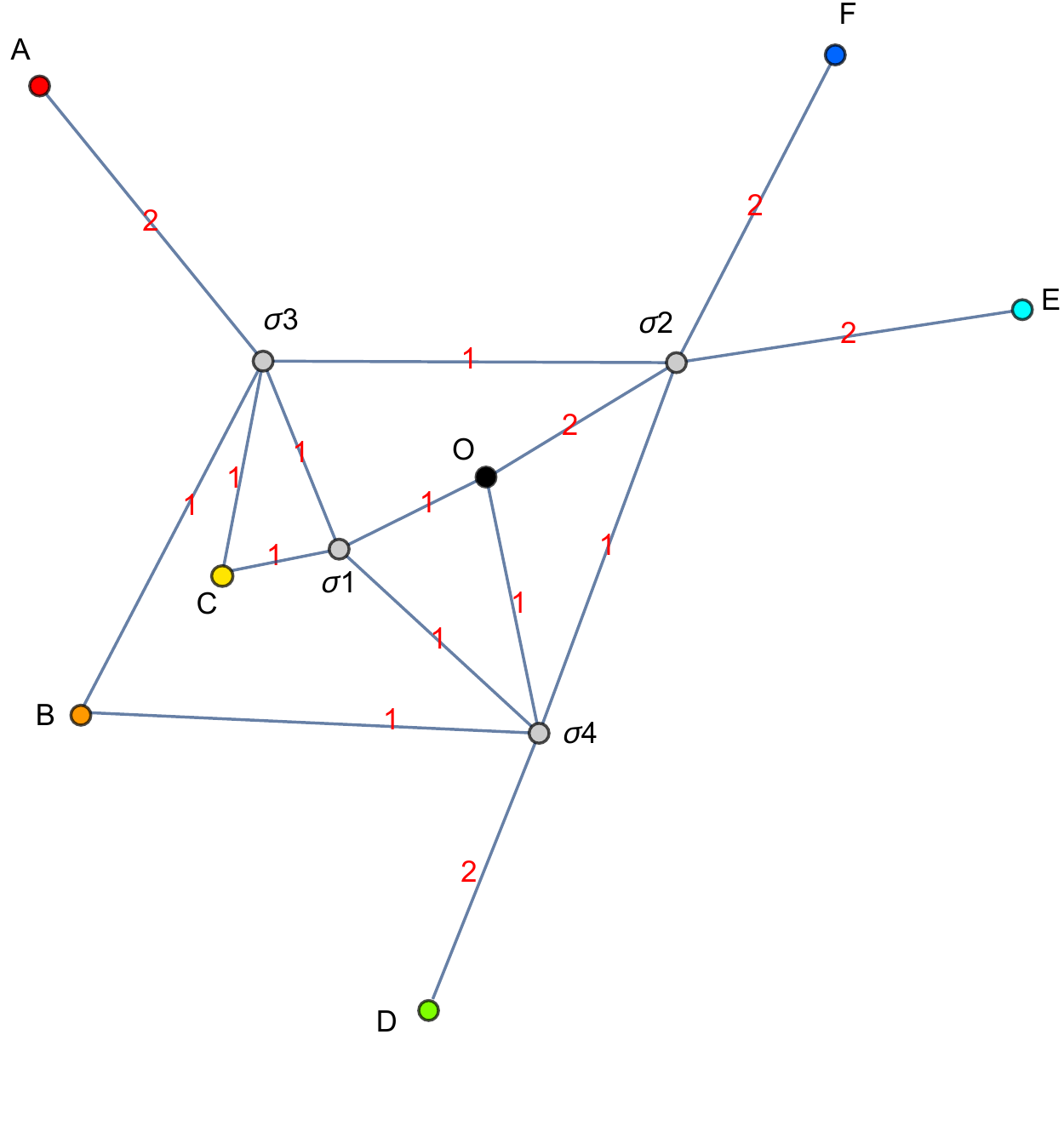}
\hspace{1cm}
\includegraphics[width=2.4in]{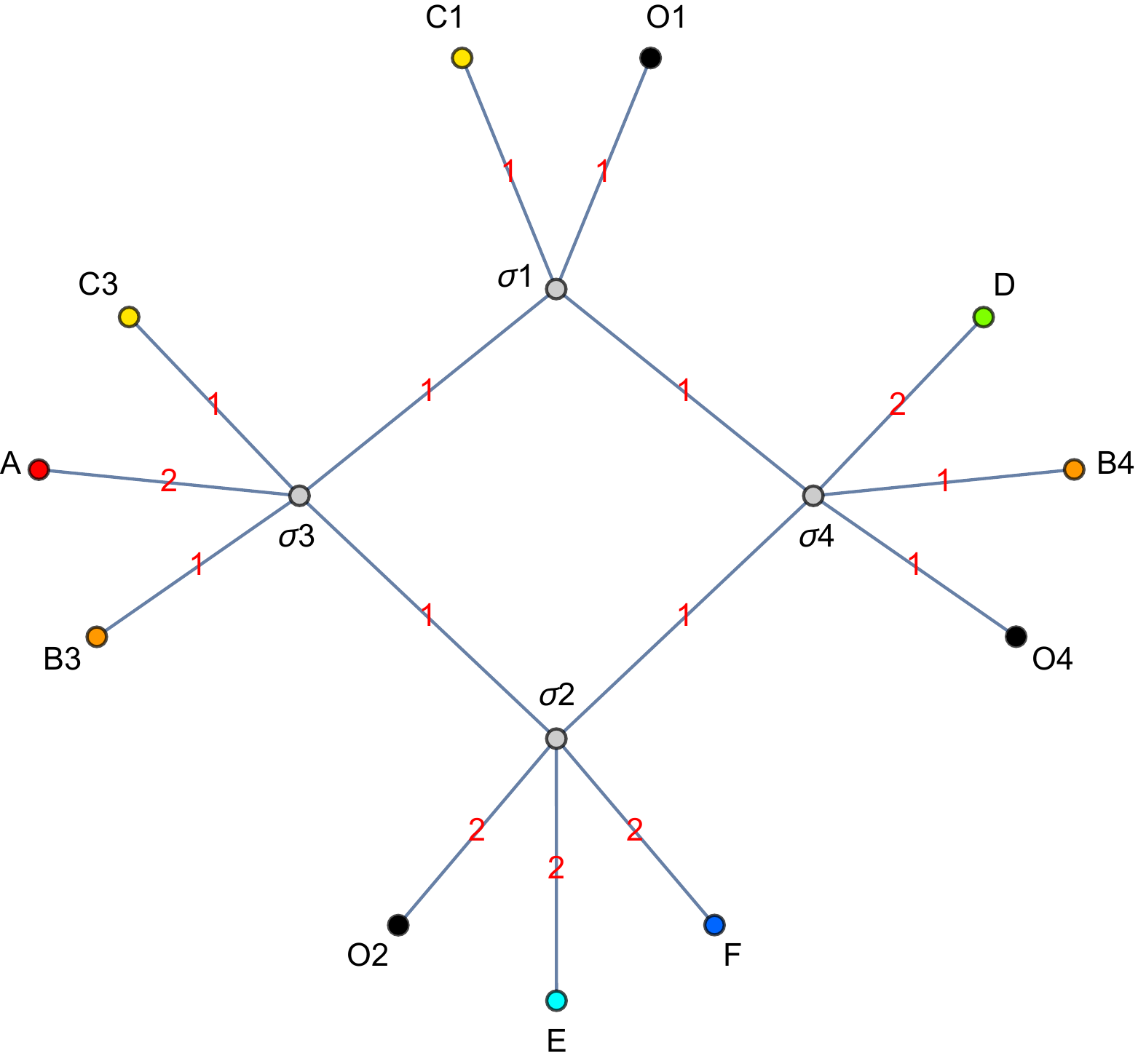}
\includegraphics[width=2.4in]{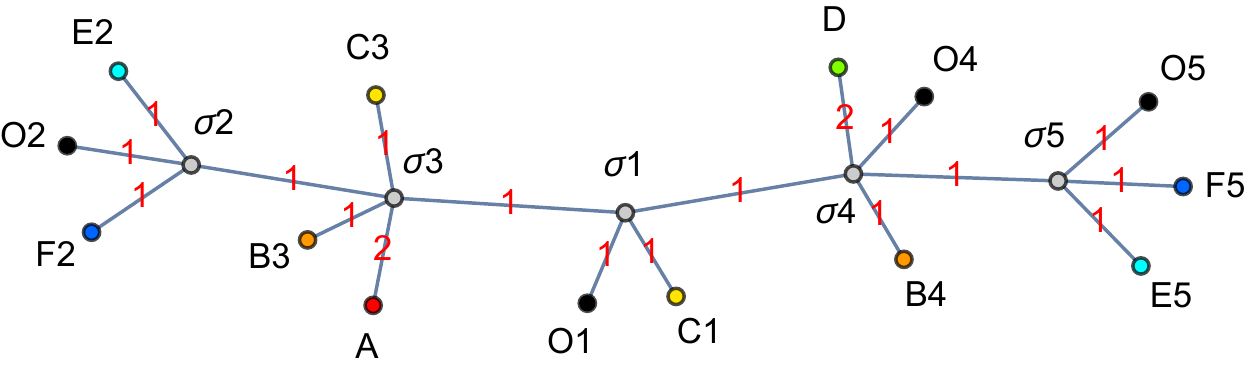}
\caption{Alternative graph realizations of one of the extreme rays of the HEC$_6$. Notice that one of these realizations is a tree graph.}
\label{fig:N6er1}
\end{center}
\end{figure} 

\begin{figure}[tb] 
\begin{center}
\includegraphics[width=2.4in]{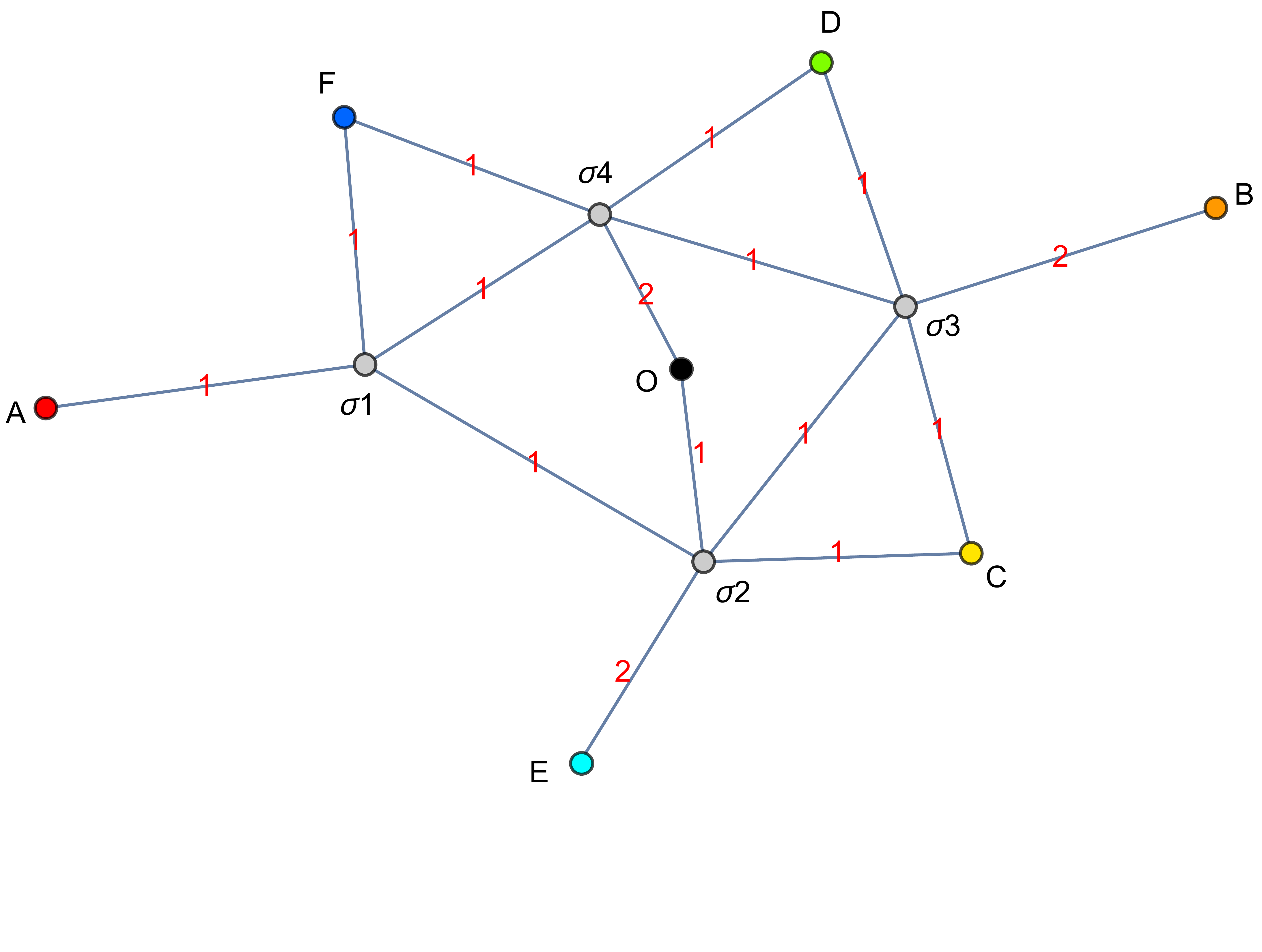}
\hspace{1cm}
\includegraphics[width=2.4in]{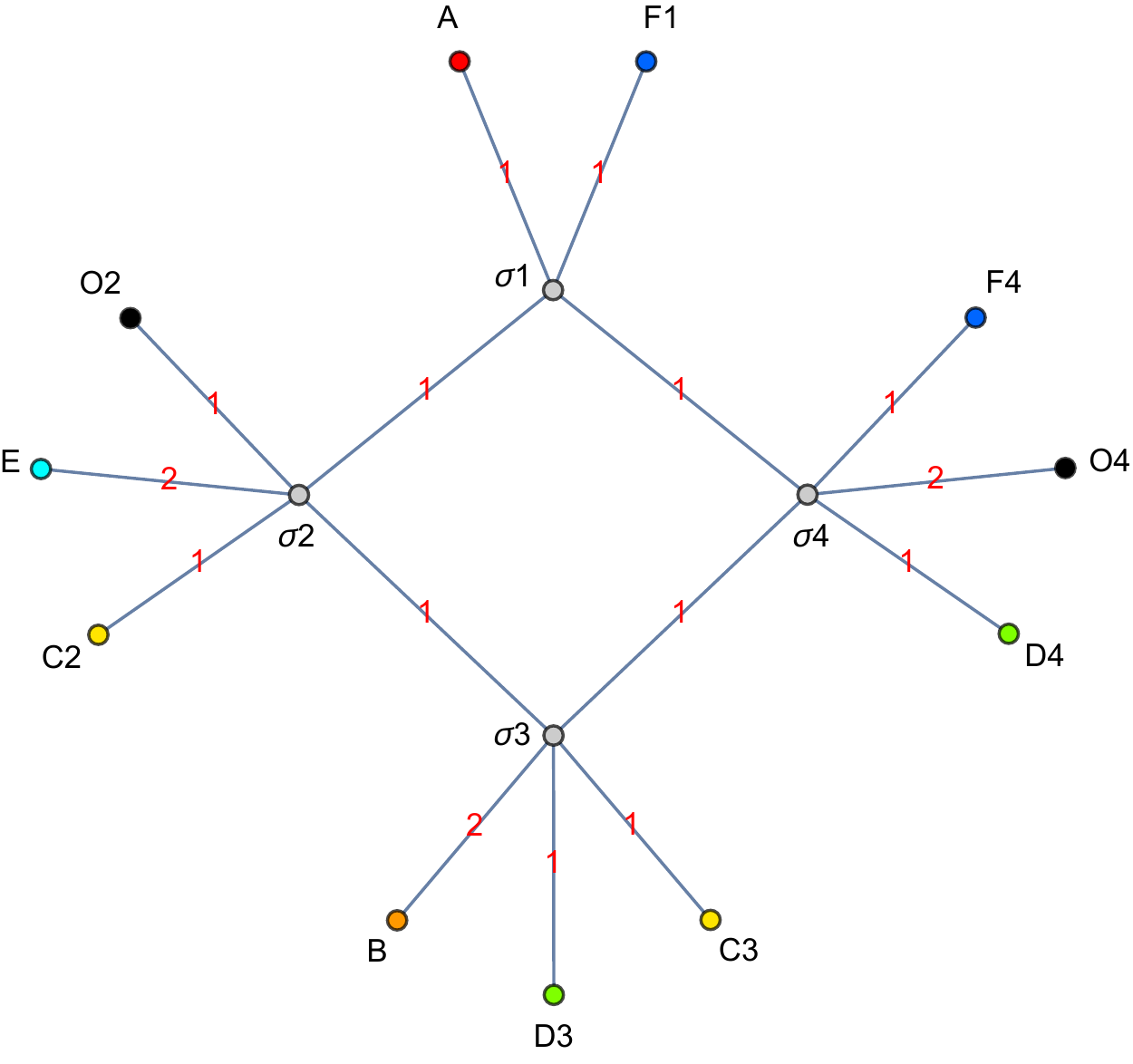}
\caption{Alternative graph realizations of another extreme ray of the HEC$_6$. Notice that the graph on the right has the same topology as the second graph in \cref{fig:N6er1}.}
\label{fig:N6er3}
\end{center}
\end{figure} 

\subsection{Reconstruction of the HEC from marginal independence}
\label{subsec:reconstruction}

We are now ready to discuss the main claim of this work, namely how the reconstruction of the holographic entropy cone is related to the solution of the holographic marginal independence problem. We will not be able to provide a definite proof that such a reconstruction is possible. However, motivated by the observations presented in the previous subsection, we will formulate a series of conjectures each of which will imply that this is indeed the case. We will organize these conjectures following a hierarchy, from the weakest to the strongest, analysing their relations and implications. As we will see, all of these conjectures, except only for the weakest \ref{con:c1} (see below), will imply that this reconstruction has a particularly nice form and is intimately related to $1$-dimensional holographic PMIs, i.e., holographic extreme rays of the SAC.

Let us begin by explaining more explicitly how the \textit{reconstruction} would work. Since the HEC is a polyhedral cone, it can equivalently be described by providing either the full set of its extreme rays, or the full set of its facets, specified by non-redundant entropy inequalities. Given one description, one can in principle\footnote{\, There is no known efficient algorithm to perform this conversion, and in practice this is undoable already for $\N=6$.} obtain the other using well-known conversion algorithms \cite{Fukuda:2015}, and we will focus on the description in terms of extreme rays. Suppose now that we are interested in the HEC$_{\N^*}$ for a specific number of parties $\N^*$, and we are given the full solution to the HMIP, i.e., we are given the full set of all holographic PMIs (recall \cref{def:hologpmi}) for all possible values of $\N$, can we derive all the extreme rays of the HEC$_{\N^*}$ from this data?

As a warm up, let us first discuss the simpler situation where $\N^*\leq 4$. Suppose that we do not know the extreme rays of the HEC$_{\N^*}$, but instead we want to try to extract them from the solution to the HMIP for $\N=\N^*$. Such solution is the collection of all PMIs that can be realized holographically for $\N=\N^*$, and it will contain PMIs of different dimensions $1\leq d\leq\D^*$. For each $1$-dimensional PMI it is straightforward to determine an entropy ray that generates it, as one simply needs to pick a ray oriented towards the positive orthant of entropy space. Furthermore, any such ray is automatically an extreme ray of the HEC$_{\N^*}$ because it is a holographic extreme ray of the SAC$_{\N^*}$.\footnote{\, Notice that any class $(\tgm{\N},\mathfrak{m})$ realizing this ray (which exists by assumption) is guaranteed by \cref{lem:vinpmi} to have a $1$-dimensional min-cut subspace, in agreement with \cref{cor:ext}.} On the other hand, we will ignore all higher dimensional PMIs, since it is not possible to uniquely determine an entropy ray out of them. We can then build a candidate for the HEC$_{\N^*}$ by taking the conical hull of the rays thus obtained. The equivalence of the HEC$_{\N^*}$ with this cone however is just a coincidence related to the fact that we are only considering $\N^*\leq 4$, and as we reviewed in the previous subsection, in this case all extreme rays of the HEC$_{\N^*}$ correspond to $1$-dimensional PMIs.

Let us now focus on $\N^*\geq 5$, and consider an extreme ray $\mathbf{S}$ of the HEC$_{\N^*}$ which is \emph{not} an extreme ray of the SAC$_{\N^*}$, and a class $(\tgm{\N^*},\mathfrak{m})$ realizing it, i.e., such that
\begin{equation}
    \mathbb{S}(\tgm{\N^*},\mathfrak{m})=\mathbb{S}
\end{equation}
where $\mathbb{S}$ is the subspace generated by $\mathbf{S}$. Suppose that there exists a minimally-degenerate fine-graining (cf.\  \cref{def:mmdgfg}) to a class $(\tgm{\N},\check{\mathfrak{m}})$, with $\N>\N^*$ and $\check{\mathfrak{m}}\in\mathfrak{m}\big\uparrow\!_{_{\N}}$, such that the min-cut subspace and the PMI coincide. Denoting by $\Phi_{\N\rightarrow\N^*}$ the corresponding coarse-graining we then have
\begin{equation}
\label{eq:pmiprojection}
    \mathbb{S}=\Phi_{\N\rightarrow\N^*}\,\mathbb{S}(\tgm{\N},\check{\mathfrak{m}})=\Phi_{\N\rightarrow\N^*}\,\pi(\tgm{\N},\check{\mathfrak{m}})
\end{equation}
This implies that $\mathbf{S}$ can be obtained from the solution to the HMIP. The reason is that this solution includes $\pi(\tgm{\N},\check{\mathfrak{m}})$, from $\pi(\tgm{\N},\check{\mathfrak{m}})$ we can obtain $\mathbb{S}$ via a color-projection, and $\mathbf{S}$ can be obtained from $\mathbb{S}$ as described before, since $\mathbb{S}$ is $1$-dimensional. This motivates us to formulate our first conjecture 

\begin{nconj}
\label{con:c1}
For any $\N^*$ and any extreme ray $\mathbf{S}$ of the \emph{HEC}$_{\N^*}$, there exists a class $(\tgm{\N^*},\mathfrak{m})$ realizing $\mathbf{S}$ and a minimally-degenerate fine-graining to a class $(\tgm{\N},\check{\mathfrak{m}})$ for some finite $\N\geq\N^*$ such that
\begin{equation}
    \mathbb{S}(\tgm{\N},\check{\mathfrak{m}})=\pi(\tgm{\N},\check{\mathfrak{m}})
\end{equation}
and $\pi(\tgm{\N},\check{\mathfrak{m}})$ is nowhere-zero.\footnote{\, Here by a nowhere-zero PMI $\mathbb{P}$ we mean that for any choice of generators of $\mathbb{P}$, none of the components vanishes.}
\end{nconj}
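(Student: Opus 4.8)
The plan is to reduce \cref{con:c1} to a single structural input — namely that every genuine extreme ray of the $\text{HEC}_{\N^*}$ admits a realization by a graph model whose underlying topological graph has \emph{tree} topology — and then let the machinery of \cref{sec:recolorings} do the rest. Granting this input, suppose $\mathbf{S}$ is a genuine extreme ray of the $\text{HEC}_{\N^*}$ and $(\tgm{\N^*},\mathfrak{m})$ is a tree realization of it. By \cref{cor:ext} the min-cut subspace $\mathbb{S}(\tgm{\N^*},\mathfrak{m})$ is $1$-dimensional and coincides with the subspace $\mathbb{S}$ generated by $\mathbf{S}$. Now fix a maximal recoloring $\overline{\beta^{\uparrow}}$ of $\tgm{\N^*}$; by \cref{lem:canocialfg} there is a minimally-degenerate (maximal) fine-graining $(\overline{\beta^{\uparrow}},\check{\mathfrak{m}})$ to a class $(\tgm{\N},\check{\mathfrak{m}})$, with $\N={\sf V}_{\partial}-1$ finite, such that $\mathbb{S}=\Phi_{\N\rightarrow\N^*}\,\mathbb{S}(\tgm{\N},\check{\mathfrak{m}})$. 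Since the recoloring is maximal, $\tgm{\N}$ is a \emph{simple} tree, so \cref{thm:VPMIgeneral} gives $\mathbb{S}(\tgm{\N},\check{\mathfrak{m}})=\pi(\tgm{\N},\check{\mathfrak{m}})$ — which is exactly the first assertion of \cref{con:c1} — and, via \cref{thm:nonsimpletree}, $\mathbf{S}$ is then recovered as a color-projection of the holographic PMI $\pi(\tgm{\N},\check{\mathfrak{m}})$.

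It remains to establish the nowhere-zero property of $\pi(\tgm{\N},\check{\mathfrak{m}})$, which I would argue directly from the tree structure. Because the recoloring is maximal, every boundary vertex of $\tgm{\N}$ carries a distinct color, and by \cref{lem:topmin} no leaf that fails to be a boundary vertex ever lies inside a min-cut; hence for every polychromatic index $\I$ (a nonempty proper subset of $[\N+1]$) the complement of any $\I$-cut contains a boundary vertex, so by connectedness $\cut(\mU_{\I})\neq\varnothing$ and $S_{\I}=\norm{\cut(\mU_{\I})}>0$. Thus the unique (up to scale) generator of the $1$-dimensional subspace $\pi(\tgm{\N},\check{\mathfrak{m}})=\mathbb{S}(\tgm{\N},\check{\mathfrak{m}})$ has strictly positive components. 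If the tree realization happens to be disconnected, one first passes to its connected building blocks via \cref{subsec:disconnected}, discarding components without boundary vertices (which are never cut), and applies the above to each block; for a genuine $\N^*$-party ray all relevant components contain boundary vertices.

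A secondary technical point is that, as stressed in \cref{sec:recolorings}, a maximal recoloring can a priori raise the dimension of the min-cut subspace above one. If $\mathbb{S}(\tgm{\N},\check{\mathfrak{m}})$ turns out to be genuinely higher-dimensional, the identity $\mathbb{S}=\pi$ still holds on the simple tree, but one no longer extracts a single ray. I would handle this by invoking \cref{lem:key} together with the vanishing-weight reduction of \cref{ssec:vanishing}: starting from the $1$-dimensional min-cut subspace of $(\tgm{\N^*},\mathfrak{m})$ one produces a class with the same min-cut subspace whose W-cell is a single ray, and then checks that the edge deletions performed along the way preserve tree topology (deleting edges of a simple tree leaves a simple forest, whose components are again covered by \cref{subsec:disconnected}). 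The evidence of \cref{subsec:extreme_rays} is that for all known extreme rays through $\N=6$ the fine-grained min-cut subspace is already $1$-dimensional, so this is a safety net rather than the main line of argument.

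The hard part — and the reason \cref{con:c1} remains a conjecture — is precisely the structural input granted at the outset: that every extreme ray of the $\text{HEC}$ is realized by \emph{some} tree graph model. No general proof of this is known; it has only been checked computationally through $\N=5$ and for the currently known extreme rays of the $\text{HEC}_6$, and even there a handful of rays are realized by graph models with bulk cycles of length $\geq 4$ (or several $3$-cycles) that the graph operations of \cref{sec:gops} — in particular the $\Delta$-$Y$ exchange — do not obviously untangle. A genuine proof would need either a systematic entropy-preserving ``tree-ification'' procedure for arbitrary extreme-ray graph models, or a generalization of \cref{thm:VPMIgeneral} isolating a broader family of topological graph models (provably containing a realization of every $\text{HEC}$ extreme ray) on which $\mathbb{S}=\pi$; the stronger conjectures formulated later in this section are aimed exactly at this gap, and I would expect any serious attack on \cref{con:c1} to route through one of those sharper statements.
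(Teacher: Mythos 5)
The first thing to be clear about is that \ref{con:c1} is a \emph{conjecture}: the paper offers no proof of it, only supporting evidence from known extreme rays up to $\N=6$ and a web of implications among the conjectures \ref{con:c1}--\ref{con:c4}. What you have written is therefore not a proof of \ref{con:c1} but a conditional derivation of it from the premise that every (genuine) extreme ray of the HEC$_{\N^*}$ is realizable by a tree graph model --- which is precisely \ref{con:c3}, a strictly \emph{stronger} unproven conjecture in the paper's hierarchy. Your chain (tree realization $\to$ \cref{cor:ext} $\to$ \cref{lem:key} plus the vanishing-weight reduction of \cref{ssec:vanishing} $\to$ maximal recoloring and \cref{lem:canocialfg} $\to$ \cref{thm:VPMIgeneral} and \cref{thm:nonsimpletree}) is essentially identical to the argument the paper itself gives in \cref{subsec:reconstruction} to show that \ref{con:c3} implies \ref{con:c2}; it is correct as a conditional statement, and your positivity argument for the entropies on a connected simple tree (hence the nowhere-zero property of the resulting $1$-dimensional PMI) is a reasonable filling-in of a detail the paper leaves implicit. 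You are also right, and commendably explicit, that the entire logical weight rests on the tree-realizability input, which nobody knows how to prove.

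So the verdict is: no gap in the conditional reasoning, but the proposal does not (and cannot, by this route) establish \ref{con:c1} itself --- it reduces the weakest conjecture in the paper's hierarchy to one of the strongest. Two smaller remarks. First, for \ref{con:c1} as stated you do not actually need the fine-grained PMI to be $1$-dimensional: the minimally-degenerate condition already guarantees $\mathbb{S}(\tgm{\N^*},\mathfrak{m})=\Phi_{\N\rightarrow\N^*}\,\mathbb{S}(\tgm{\N},\check{\mathfrak{m}})=\Phi_{\N\rightarrow\N^*}\,\pi(\tgm{\N},\check{\mathfrak{m}})$ once \cref{thm:VPMIgeneral} applies, so your ``secondary technical point'' (invoking \cref{lem:key} to force a $1$-dimensional W-cell before fine-graining) is what upgrades the conclusion from \ref{con:c1} to \ref{con:c2}, not what rescues \ref{con:c1}. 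Second, the nowhere-zero condition in the footnote is only cleanly meaningful for a $1$-dimensional PMI; if you allow the fine-grained PMI to remain higher-dimensional, you would need to interpret that condition more carefully than your leaf-counting argument does. Neither point affects the main issue, which is that the sole genuinely open step is \ref{con:c3} (or some weaker tree-ification or $\mathbb{S}=\pi$ statement covering all extreme-ray realizations), exactly as you say.
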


If this conjecture holds, in principle all one needs to do to construct the HEC$_{\N^*}$ is to consider all possible color-projections from $\N$ to $\N^*$ of all nowhere-zero PMIs that are realizable holographically, for each $\N\geq\N^*$, pick a non-negative ray from each $1$-dimensional subspace obtained by this procedure, and take the conical hull of the resulting set. Notice that not all entropy rays obtained in this way will be extremal, however the ones that are non-extremal are guaranteed to be contained in the HEC$_{\N^*}$ by the nowhere-zero condition. This follows from the fact that if a holographic PMI is nowhere-zero, it contains at least a (not necessarily extremal) nowhere-zero entropy vector $\mathbf{S}'$ realized by a graph model. Since we are only considering PMIs that are color-projected to $1$-dimensional subspaces, and by being nowhere-zero $\mathbf{S}'$ cannot be mapped to the null vector by such a projection, the result of the projection must be the subspace generated by the projection of $\mathbf{S}'$, which is inside the HEC$_{\N^*}$. On the other hand, \ref{con:c1} will guarantee that all extreme rays of the HEC$_{\N^*}$ will be included in the resulting set.

At first sight, this procedure seems daunting, since we are demanding to consider all possible values of $\N\geq\N^*$. In practice however this is never necessary, since for any $\N^*$ the HEC$_{\N^*}$ only has a finite number of extreme rays (it is a polyhedral cone) and can therefore be reconstructed from the solution to the HMIP for $\N=\N_{\text{max}}(\N^*)$ given by
\begin{equation}
    \N_{\text{max}}(\N^*)=\max\; \{\, \N_{\text{min}}(\mathbf{S}),\;\forall\,\mathbf{S} \}
\end{equation}
where for each extreme ray $\mathbf{S}$ of the HEC$_{\N^*}$, $\N_{\text{min}}(\mathbf{S})$ denotes the smallest $\N\geq\N^*$ for which \cref{con:c1} holds. The reason is that for each extreme ray $\mathbf{S}$ of the HEC$_{\N^*}$ and PMI $\mathbb{P}$ in the solution to the HMIP for $\N=\N_{\text{min}}(\mathbf{S})$ such that
\begin{equation}
    \mathbb{S}=\Phi_{\N_{\text{min}}(\mathbf{S})\rightarrow\N^*}\mathbb{P},
\end{equation}
there is a lift\footnote{\, This is a straightforward consequence of the standard lift construction described in \cref{subsec:disconnected}, and of the fact that the min-cut subspace of a class determines its PMI (cf.\ \cref{cor:vfixespmi}).} to a new PMI $\mathbb{P}'$ in the solution to the HMIP for $\N=\N_{\text{max}}$ such that
\begin{equation}
    \mathbb{P}=\Phi_{\N_{\text{max}}\rightarrow\N_{\text{min}}}\mathbb{P}'
\end{equation}
Therefore we can simply obtain $\mathbb{S}$ from $\mathbb{P}'$ as follows
\begin{equation}
    \mathbb{S}=\Phi_{\N_{\text{max}}\rightarrow\N^*}\mathbb{P}'
\end{equation}
where
\begin{equation}
\label{eq:colorprojcomposition}
    \Phi_{\N_{\text{max}}\rightarrow\N^*}=\Phi_{\N_{\text{min}}\rightarrow\N^*}\,\Phi_{\N_{\text{max}}\rightarrow\N_{\text{min}}}
\end{equation}

Notice that we are only arguing for the existence of $\N_{\text{max}}(\N^*)$, but we are not providing a way to determine its value for a given $\N^*$. In any concrete reconstruction of the HEC$_{\N^*}$ one would instead need to know the value of $\N_{\text{max}}(\N^*)$ to make sure that one is considering the solution to the HMIP for a value of $\N$ large enough such that it is sufficient to derive the complete set of extreme rays.\footnote{\, In principle one could also proceed as follows. Start from the solution to the HMIP for $\N=\N^*$ and construct a cone as we explained. Then derive its facets and check if they are valid holographic entropy inequalities using the contraction maps of \cite{Bao:2015bfa}. If all facets are valid inequalities, the reconstruction is complete, otherwise one can increase the value of $\N$ and repeat the process. Once $\N_{\text{max}}(\N^*)$ has been reached, the reconstruction is guaranteed to be complete. The problem with this procedure however is that the conversion from extreme rays to facets is inefficient, and that it is currently unknown if all valid holographic inequalities can be proven via contraction maps.} However, it should by now be clear that the focus of the present work is not on any explicit reconstruction of the HEC, but rather on the equivalence of this problem to the solution of the HMIP.

For small $\N$ our evidence strongly supports \ref{con:c1} and allows us to derive some tight upper bounds on $\N_{\text{max}}(\N^*)$. For $\N\leq4$, we have already seen that $\N_{\text{max}}(\N^*)=\N^*$, so the first non-trivial case is $\N=5$. The graph models shown in the first and third columns of \cref{fig:N5trees} correspond to PMIs of dimensions ranging from $2$ (e.g., the first one) to $6$ (e.g., the last one). To explore \cref{con:c1}, one may consider incremental fine-grainings\footnote{\, Again, these are particular choices of fine-grainings of the min-cut structure of a graph model specified by the same choice of weights as in the original graph.} of these graph models obtained by first splitting boundary vertices of degree higher than one, and then recoloring the boundary vertices. In all cases, one observes that the min-cut subspace remains $1$-dimensional upon fine-graining, while the PMI exhibits a non-increasing dimensionality. Remarkably, in most cases one obtains a graph model with a $1$-dimensional PMI after only a couple of steps, and in all cases three steps always suffice.\footnote{\, In fact, by taking the currently known extreme rays of the HEC$_6$ which are also extremal in the SAC$_6$ (equivalently, extreme rays of the SAC$_6$ which are holographic), and coarse-graining down to $5$ parties, we observe that as many as $16$ out of the $19$ orbits of extreme rays of the HEC$_5$ are obtained. In other words, remarkably, $16$ of the extreme rays of the HEC$_5$ descend from the SAC$_{\N}$ already at $\N=6$.} In summary, this allows us to conclude that $\N_{\text{max}}(5)\leq 8$, and that this bound is likely saturated. Although our data for $\N^*=6$ is incomplete and thus insufficient to derive a similar bound, we highlight that the same qualitative features are observed for all graph models realizing the currently known extreme rays of the HEC$_6$ (unlike in the previous subsection, here we are not attempting to realize these extreme rays with tree graph models). The splitting of boundary vertices and the straightforward fine-graining of a graph model specified by a maximal recoloring is sufficient to obtain a new graph model such that the corresponding min-cut subspace and PMI coincide.\footnote{\, Technically this observation is not an exact confirmation that \ref{con:c1} holds in these cases, since a priori the fine-grainings of the min-cut structures may not be minimally-degenerate. However, we can also imagine to first use \cref{lem:key} before the recoloring, to obtain a $1$-dimensional W-cell. The fine-graining induced by the recoloring is then guaranteed to be minimally-degenerate, and \ref{con:c1} holds.}

Notice that while \ref{con:c1} in principle allows for fine-grainings to graph models with higher dimensional min-cut subspaces, in all known cases discussed above the equivalence with PMIs was achieved with $1$-dimensional min-cut subspaces. This suggests that the reconstruction of the HEC could actually be even simpler, and reliant only on the knowledge of $1$-dimensional holographic PMIs. This motivates the following stronger conjecture, consistent with all our data to date:

\begin{nconj}
\label{con:c2}
For any $\N^*$ and any extreme ray $\mathbf{S}$ of the \emph{HEC}$_{\N^*}$, there exists an extreme ray $\mathbf{R}$ of the \emph{SAC}$_{\N}$ for some finite $\N\geq\N^*$, a graph model realizing $\mathbf{R}$ and a recoloring $\beta^{\downarrow}$ such that from the corresponding coarse-graining $\phi$ we have
\begin{equation}
\label{eq:C2colorprojection}
    \mathbf{S}=\Phi_{\N\rightarrow\N^*}\,\mathbf{R}
\end{equation}
\end{nconj}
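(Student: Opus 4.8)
The statement labeled \ref{con:c2} is a \emph{conjecture}, not a theorem with a proof; accordingly, what I would write here is not a proof but an explanation of the logical status of \ref{con:c2} relative to the preceding material, together with the supporting evidence that the paper has assembled. The plan is to first situate \ref{con:c2} as a strengthening of \ref{con:c1}: whereas \ref{con:c1} permits the auxiliary class $(\tgm{\N},\check{\mathfrak{m}})$ to have a min-cut subspace of any dimension (as long as it coincides with the PMI), \ref{con:c2} demands the sharper outcome that the relevant $\N$-party datum be an \emph{extreme ray of the SAC}$_\N$, i.e.\ a one-dimensional holographic PMI, so that \cref{eq:C2colorprojection} reads purely at the level of entropy vectors. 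I would make explicit that \ref{con:c2} $\Rightarrow$ \ref{con:c1}: given the extreme ray $\mathbf{R}$ of the SAC$_\N$ and a graph model realizing it, \cref{cor:ext} together with \cref{lem:vinpmi} forces its min-cut subspace and PMI to both be one-dimensional and equal, and the coarse-graining $\phi$ induced by $\beta^{\downarrow}$ then plays the role of the fine-graining in reverse, so the hypotheses of \ref{con:c1} are met with this particular choice.

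Next I would lay out why \ref{con:c2} suffices for the reconstruction, paralleling the argument already given after \ref{con:c1}: if \ref{con:c2} holds, then to build the HEC$_{\N^*}$ one enumerates the holographic extreme rays of the SAC$_\N$ for $\N$ up to some finite $\N_{\text{max}}(\N^*)$, applies all color-projections $\Phi_{\N\rightarrow\N^*}$, discards any ray sent to the origin (a nowhere-zero extreme ray of the SAC$_\N$ cannot be, by the same reasoning used for \ref{con:c1}), orients each surviving one-dimensional image into the positive orthant, and takes the conical hull. Finiteness of $\N_{\text{max}}$ follows as before from polyhedrality of the HEC$_{\N^*}$ and the standard-lift construction of \cref{subsec:disconnected} combined with \cref{cor:vfixespmi}, using the composition rule \cref{eq:colorprojcomposition}. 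The evidence for \ref{con:c2} is exactly the content of \cref{subsec:extreme_rays}: for $\N^*\le 4$ every HEC extreme ray is already a SAC extreme ray; for $\N^*=5$ each of the non-simple-tree extreme rays of \cref{fig:N5trees} lifts, via splitting degree-$k$ boundary vertices and maximally recoloring, to a simple tree whose min-cut subspace stays one-dimensional, hence by \cref{thm:VPMIgeneral} its PMI is one-dimensional, i.e.\ it is an extreme ray of the SAC$_{\N'}$ with $\N'={\sf V}_\partial - 1$, and coarse-graining back down recovers $\mathbf{S}$; the $\N^*=6$ data (all but $14$ of $4122$ orbits admitting tree realizations) behaves identically.

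The genuine obstacle—and the reason this remains a conjecture—is twofold. First, one must show that \emph{every} extreme ray of the HEC$_{\N^*}$, for all $\N^*$, admits a graph realization that can be fine-grained (after suitable vertex splitting and the $\Delta$-$Y$ type moves of \cref{sec:gops}) into a \emph{simple tree}; the partial census for $\N^*=6$ leaves a handful of extreme rays realized only by graphs with bulk cycles that resist the elementary moves, and there is no proof that such realizations cannot be replaced by trees. Second, even granting a simple-tree fine-graining, one needs the resulting min-cut subspace to stay one-dimensional under the fine-graining, which is \emph{not} automatic—\cref{subsec:cg} and \cref{subsec:fg} exhibit cases where dimension grows—so one would have to invoke \cref{lem:key} to first collapse to a one-dimensional W-cell and argue this can always be arranged compatibly with a maximal recoloring. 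Both gaps are, as the paper puts it, plausibly technical rather than fundamental, but closing them rigorously would require a structural classification of graph realizations of HEC extreme rays that is currently out of reach; this is why \ref{con:c2} is offered as a conjecture supported by strong evidence rather than as a theorem.
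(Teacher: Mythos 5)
You have correctly recognized that \ref{con:c2} is a conjecture for which the paper offers no proof, and your account of its logical status --- that it strengthens \ref{con:c1}, that it suffices for the reconstruction via \cref{eq:hecformula}, that the supporting evidence is the tree/recoloring analysis of \cref{subsec:extreme_rays}, and that the remaining gaps are the tree-realizability of all extreme rays and the control of min-cut-subspace dimension under fine-graining (addressed in the paper via \ref{con:c3}, \cref{lem:key} and \cref{thm:nonsimpletree}) --- matches the paper's own presentation. Nothing further is needed.
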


If this conjecture holds, it is not necessary to know the full solution to the HMIP to reconstruct the HEC. Instead it is sufficient to know only the $1$-dimensional PMIs that are holographic or, equivalently, the extreme rays of the SAC that can be realized by graph models.
We could then redefine $\N_{\text{max}}(\N^*)$ as the smallest number of parties from which all HEC$_{\N^*}$ extreme rays can be recovered as in \cref{eq:C2colorprojection}.
Explicitly, the HEC$_{\N^*}$ would then be given by 

\begin{equation}
\label{eq:hecformula}
    \text{HEC}_{\N^*}=\text{cone}\,\{\Phi_{\N_{\text{max}}\rightarrow\N^*}\,\mathbf{R},\;\forall\,\mathbf{R}\in\mathscr{R}_{_{\text{H}}}^{\N_{\text{max}}},\;\forall\,\Phi_{\N_{\text{max}}\rightarrow\N^*}\}
\end{equation}
where $\mathscr{R}_{_{\text{H}}}^{\N}$ is the set of holographic extreme rays of the SAC$_{\N}$. For instance, from the bound on $\N_{\text{max}}$ mentioned above for $\N^*=5$ we see that the HEC$_5$ can be obtained from the conical hull of just those extreme rays of the SAC$_8$ which are holographic. Notice that from a structural point of view this a highly non-trivial statement, since the projections $\Phi_{\N_{\text{max}}\rightarrow\N^*}$ that appear in \cref{eq:hecformula} are not arbitrary projections, but color-projections associated to coarse-grainings. Finally, notice that the reconstruction formula \cref{eq:hecformula} also gives a full solution to the HMIP for $\N=\N^*$, as one can recover all holographic PMIs from the knowledge of the extreme rays.

Due to its generality however, proving \ref{con:c2} might be quite challenging, in particular because the topology of the graph models that realize the holographic extreme rays of the SAC is not restricted in any way. On the other hand, the observations at the end of the previous subsection suggest that there might be an additional simplification. We have seen that for all extreme rays of the HEC$_5$ we could find realizations by graph models with tree topology. And that for all these tree graph models, the fine-graining specified by a maximal recoloring was automatically giving a new (simple tree) graph model with a $1$-dimensional PMI. For instance, the tree graphs exhibited in \cref{fig:N5trees} show that all the extreme rays of the HEC$_5$ can be obtained by coarse-graining extreme rays of the SAC$_{\N}$ realizable by simple tree graph models for some $5<\N\leq 11$.\footnote{\, Specifically, this upper bound comes from the tree counterpart to the only non-planar graph in \cref{fig:N5trees}, which requires the highest lift of all graphs for obtaining a simple tree: from $\N^*=5$ to $\N=11$.} Similar facts seem to hold for the known extreme rays of the HEC$_6$: by just applying the simple entropy-preserving graph manipulations described in \cref{sec:gops}, we are able to turn almost all graph models realizing these rays into new graph models with tree topology. One of just a handful of exceptions is the graph in \cref{fig:N6er3}. However, we stress that despite the fact that we have not found a tree form for it to support \ref{con:c3}, one can easily check that after a splitting of e.g. vertex $C$ alone, and a maximal recoloring, one already obtains a graph model realizing an extreme ray of the SAC$_7$, in agreement with \ref{con:c2}.

Based on these observations, it is therefore interesting to explore the implications of the following conjecture.

\begin{nconj}
\label{con:c3}
For any $\N^*$, every (nowhere-zero) extreme ray of the \emph{HEC}$_{\N^*}$ is realizable by a graph model with tree topology.
\end{nconj}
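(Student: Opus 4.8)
The plan is to prove \cref{con:c3} by a structural induction: starting from an arbitrary graph model realizing a given (nowhere-zero, genuinely $\N^*$-party) extreme ray $\mathbf{S}$ of $\text{HEC}_{\N^*}$, apply a sequence of \emph{entropy-preserving} transformations that monotonically simplifies the topology until a tree is reached. The crucial structural input is \cref{cor:ext}: any graph model realizing an extreme ray has a $1$-dimensional min-cut subspace, so the weight vector is trapped in a very thin W-cell and there is a lot of rigidity to exploit.

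First I would fix the complexity measure. Splitting a boundary vertex of degree $k$ into $k$ monochromatic vertices never changes any cut weight, so it is entropy-preserving, and after performing all such splits the only cycles left in the graph are among bulk vertices. The natural measure is then the first Betti number $b_1$ of the subgraph induced on the bulk vertices (the number of independent bulk cycles), with the number of bulk vertices as a lexicographic tie-breaker; both are non-negative, so the induction is well-founded. The goal of the inductive step is: whenever $b_1>0$, exhibit an entropy-preserving move that either strictly decreases $b_1$, or keeps $b_1$ fixed while strictly decreasing the number of bulk vertices.

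Second, I would assemble the toolkit. Beyond boundary splitting, the moves are: (i) the entropy-preserving graph operations of \cref{sec:gops}, in particular series reduction of degree-$2$ bulk vertices and the $\Delta$-$Y$ exchange of \cref{fig:triangle}, which removes a bulk $3$-cycle; (ii) passage to a nowhere-zero extreme ray of $\overline{\mathcal{W}}$ of the current min-cut structure, followed, if needed, by the vanishing-weight reduction of \cref{ssec:vanishing} — by \cref{lem:Wray} and \cref{lem:key} this preserves the ($1$-dimensional) min-cut subspace and hence $\mathbf{S}$, while typically deleting edges and vertices; (iii) deletion of any edge that lies in no min-cut of any subsystem, which (because for every $\I$ there is a min-cut avoiding it) leaves all entropies unchanged. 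Moves (ii) and (iii) are where extremality does the work: a one-dimensional min-cut subspace forces the weights onto boundary strata of the W-cell whose representatives are sparser graphs realizing the same ray.

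The inductive step, and the main obstacle, is to show that when $b_1>0$ one of these moves necessarily applies and strictly reduces complexity. For a bulk $3$-cycle this is immediate via $\Delta$-$Y$. The hard case is a bulk cycle of length $\ge 4$, or several interlocking bulk cycles: there is no known single local surgery excising such a configuration while preserving all $2^{\N^*}-1$ entropies. I would attack it by analysing, for a fixed bulk cycle $C$, the restrictions to $C$ of the min-cuts of all subsystems: extremality should force the set of min-cut structures compatible with the graph to be rigid enough that some edge of $C$ is inactive (hence deletable by (iii)) or can be rerouted, effectively a ``$\Delta$-$Y$ at scale''. Making this precise in general is exactly what the paper leaves open — the excerpt already flags $14$ extreme rays of $\text{HEC}_6$ whose known realizations still contain large or multiple bulk cycles — so I expect the bulk of the effort, and the genuine risk that \cref{con:c3} is false as stated, to lie here. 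As a cross-check and an alternative route, I would try to derive \cref{con:c3} from a strengthening of \cref{con:c2}: if every holographic extreme ray of the SAC at the relevant number of parties is realized by a \emph{simple} tree, then coarse-graining such a simple tree (which does not alter the topology) yields a non-simple tree realizing the projected extreme ray of $\text{HEC}_{\N^*}$; conversely, any putative counterexample to \cref{con:c3} among extreme rays must, by \cref{con:c2}, still be a color-projection of an SAC extreme ray, which sharply constrains where to look for one.
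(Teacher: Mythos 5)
First, a point of orientation: \cref{con:c3} is a \emph{conjecture} in this paper. The authors do not prove it; they only provide computational evidence (explicit tree realizations for all extreme rays of the HEC$_5$ and for all but roughly $14$ of the known extreme rays of the HEC$_6$, obtained via the operations of \cref{sec:gops}). So there is no proof in the paper to compare against, and your proposal must be judged on whether it would itself constitute a proof. It does not: as you concede, the entire inductive step for a bulk cycle of length $\geq 4$ (or several $3$-cycles sharing edges) is missing. The ``$\Delta$-$Y$ at scale'' idea is precisely the open problem, not a reduction of it, and the paper's own data ($14$ recalcitrant HEC$_6$ rays) shows the known local moves do not suffice. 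A complexity measure plus a toolkit is a research program, not an induction, until the step ``if $b_1>0$ then some move strictly reduces complexity'' is established.

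Beyond the acknowledged hole, two items in your toolkit are themselves flawed. Move (iii) is false as stated: deleting an edge $e$ that lies in no min-cut of any subsystem is \emph{not} entropy-preserving, because removing $e$ lowers the weight of every cut whose cut-set contains $e$, and a previously non-minimal $\I$-cut through $e$ can thereby become strictly cheaper than the old minimum, decreasing $S_{\I}$. (The deletions in \cref{ssec:vanishing} are safe only because there the weight $w_e$ is already forced to zero on the relevant ray, so nothing changes; that is a much weaker operation than the one you invoke.) Move (ii) is legitimate by \cref{lem:Wray} and \cref{lem:key}, but it need not make progress: the selected extreme ray of $\overline{\mathcal{W}}$ may be nowhere-zero, in which case no edge is deleted and your complexity measure is unchanged, so it cannot serve as the engine of a well-founded induction. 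Finally, your ``alternative route'' via \cref{con:c2} inverts the paper's logic — the authors derive \ref{con:c2} \emph{from} \ref{con:c3} (via \cref{lem:key}, \cref{thm:nonsimpletree} and maximal recolorings), so assuming a strengthened \ref{con:c2} to get \ref{con:c3} would be circular in the context of this program unless you supply an independent proof of that strengthening.
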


The reader will probably already guess how this new conjecture is related to \ref{con:c2} and the reconstruction problem, since we have already seen several examples of this situation. However let us explain more carefully why \ref{con:c2} implies that the reconstruction is given by \cref{eq:hecformula}. 

Consider an extreme ray $\mathbf{S}$ of the HEC$_{\N^*}$, and suppose that \ref{con:c3} holds. Then there exists a class $(\tgm{\N^*},\mathfrak{m})$ whose S-cell is the ray generated by $\mathbf{S}$, and we denote by $\mathbb{S}$ its ($1$-dimensional) min-cut subspace. In general the W-cell of $(\tgm{\N^*},\mathfrak{m})$ could be higher dimensional, and an arbitrary fine-graining could produce a new class whose min-cut subspace has dimension greater than one (recall the example at the end of \cref{subsec:fg}). However, if this is the case, we can always use \cref{lem:key} to find a new class $(\htgm{\N^*},\hat{\mathfrak{m}})$ such that the min-cut subspace is still $\mathbb{S}$, while the S-cell is just a single entropy ray. Notice that, as we have already seen in various examples, when we use \cref{lem:key} we may be forced to apply the reduction described in \cref{ssec:vanishing} and delete some of the edges. We can now apply \cref{thm:nonsimpletree} to $(\htgm{\N^*},\hat{\mathfrak{m}})$ to get yet a new class $(\check{\tgm{\N}},\check{\mathfrak{m}})$, for some $\N>\N^*$, where $\check{\tgm{\N}}$ is now a simple tree and the  color-projection $\Phi_{\N\rightarrow\N^*}$ of its PMI is  $\mathbb{S}$. Finally, since the W-cell of $(\htgm{\N^*},\hat{\mathfrak{m}})$ was $1$-dimensional, it is guaranteed by \cref{lem:Wrefinementfg} to remain $1$-dimensional after the fine-graining, and the PMI of $(\check{\tgm{\N}},\check{\mathfrak{m}})$ is therefore $1$-dimensional as well. 

In summary, assuming \ref{con:c3}, we have shown how to construct a new class that satisfies \ref{con:c2}, showing that \ref{con:c3} is sufficient for the reconstruction of the HEC via \cref{eq:hecformula}.

We conclude by mentioning one more conjecture which is not specific to any particular set of faces of the HEC or the SAC, and could be an interesting general property of all min-cut structure on arbitrary topological graph models. 

\begin{nconj}
\label{con:c4}
For any $\N$, topological graph model $\tgm{\N}$ and min-cut structure $\mathfrak{m}$, there exists a topological graph model $\tgm{\N}'$ \emph{with tree topology} and a min-cut structure $\mathfrak{m}'$ such that
\begin{equation}
\mathbb{S}(\tgm{\N}',\mathfrak{m}')=\mathbb{S}(\tgm{\N},\mathfrak{m})
\end{equation}
\end{nconj}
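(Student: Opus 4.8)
The statement in question is Conjecture \ref{con:c4}, which asserts that every min-cut subspace can be realized by some min-cut structure on a tree graph. Since this is presented as a conjecture rather than a theorem, I will outline the strategy one would pursue to establish it, noting where the argument is expected to break down.

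\textbf{Approach.} The natural plan is to proceed by a reduction on the cycle structure of $\tgm{\N}$, aiming to replace each cycle by a tree-like gadget that preserves the min-cut subspace. First I would decompose $\tgm{\N}$ into its connected components using \cref{subsec:disconnected}, so that it suffices to treat a connected $\tgm{\N}$: the min-cut subspace of the whole is the sum of those of the building blocks, and a disjoint union of trees is a forest, which can be further connected into a single tree by adjoining bulk edges of appropriate (generic) weight without altering any min-cut (one can always route a new heavy bulk edge between two components so it never enters any min-cut, cf.\ \cref{lem:topmin} and the discussion of standard lifts). Next, for a fixed min-cut structure $\mathfrak{m}$, I would fix a choice of representative min-cuts and the associated map $\Gamma$, so that by \cref{lem:VfromW} the target subspace is $\mathbb{S}=\Gamma(\mathbb{W})$ with $\mathbb{W}$ cut out by the degeneracy equations \cref{eq:degeneracy_eq}. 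The goal becomes: build a tree $\tgm{\N}'$, a min-cut structure $\mathfrak{m}'$, a map $\Gamma'$, and a linear surjection between the respective weight spaces that intertwines $\Gamma$ and $\Gamma'$ and carries $\mathbb{W}$ onto $\mathbb{W}'$.

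\textbf{Key steps.} The heart of the argument is a local surgery that removes one independent cycle at a time while controlling the two linear-algebraic data $\mathbb{W}$ and $\Gamma$. Given a cycle in $\tgm{\N}$, I would pick an edge $e$ on it and attempt the ``$\Delta$-$Y$'' style exchange alluded to around \cref{fig:triangle} and in \cref{sec:gops}: replace a triangle of bulk vertices by a star, or more generally split a cycle at a vertex, introducing new bulk vertices and boundary vertices of fresh colors as needed (a maximal recoloring in the sense of \cref{eq:maxrecoloring}), then appeal to \cref{lem:canocialfg} and \cref{thm:nonsimpletree} to transfer the min-cut subspace through the coarse-graining $\Phi_{\N'\rightarrow\N}$. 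One would iterate: each surgery reduces the first Betti number of the graph by one, and terminates at a tree. At each stage, the relation $\mathbb{S}(\tgm{\N}',\mathfrak{m}') $ projecting correctly onto $\mathbb{S}(\tgm{\N},\mathfrak{m})$ under a composition of color-projections needs to be maintained, using \cref{eq:colorprojcomposition} to compose the projections; and since we ultimately want equality of subspaces rather than a projection, the surgery must be engineered so that the relevant color-projection restricted to $\mathbb{S}(\tgm{\N}',\mathfrak{m}')$ is injective (equivalently, no entropy ray in the new S-cell is killed by the projection), which is automatic when the S-cell is one-dimensional (cf.\ the remark after \ref{con:c3}) but is the crux in general. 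Finally one verifies that the resulting tree min-cut structure is genuine, i.e.\ its W-cell is non-empty, by exhibiting an explicit weight vector as in \cref{ssec:vanishing}.

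\textbf{Main obstacle.} The difficulty is precisely that the surgeries which are known to preserve \emph{entropy vectors} (the operations of \cref{sec:gops}) do not obviously preserve \emph{min-cut subspaces}: as emphasized throughout \cref{sec:recolorings}, the dimension of $\mathbb{S}$ can jump under recoloring, and a cycle in $\tgm{\N}$ can encode a degeneracy equation that forces a linear relation among edge weights which, when the cycle is broken, is no longer expressible by any min-cut structure on the resulting tree. In other words, a tree's degeneracy equations are constrained by \cref{lem:symdif} (each edge is homologous to a subsystem, and pairs of edges are never crossing), whereas a general graph can realize richer relations; it is conceivable that some min-cut subspace requires a non-tree topology to be hit exactly. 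Establishing \ref{con:c4} would therefore require either a genuinely new combinatorial construction showing that the extra colors available upon fine-graining are always enough to absorb any such relation, or a classification of which linear subspaces arise as min-cut subspaces at all. This is why the statement is left as a conjecture; the evidence in \cref{subsec:extreme_rays} shows it holds for the $1$-dimensional case across all known extreme rays up to $\N=6$, but the general case remains open.
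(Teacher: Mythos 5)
The statement you were asked about is \ref{con:c4}, which the paper itself presents as a \emph{conjecture} and does not prove; there is no proof in the paper to compare your attempt against. You correctly recognized this, and rather than fabricating a proof you outlined a plausible attack and identified where it would fail — this is the right call. Your proposed route (reduce to connected components via \cref{subsec:disconnected}, break cycles one at a time with ${\sf \Delta}$-{\sf Y}--type surgeries from \cref{sec:gops}, control the min-cut subspace through fine-/coarse-grainings via \cref{lem:canocialfg}, \cref{thm:nonsimpletree} and \cref{eq:colorprojcomposition}) is consistent with the evidence the paper assembles in \cref{subsec:extreme_rays} and with the direction the authors themselves suggest in the discussion, namely understanding the partition of weight space into W-cells well enough to relate it to graph topology. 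Your diagnosis of the obstacle is also the correct one: the operations of \cref{sec:gops} are only known to preserve entropy vectors, not min-cut subspaces, and the degeneracy equations realizable on a (simple) tree are constrained by \cref{lem:symdif} in a way that a general graph's are not.

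Two small corrections. First, \cref{lem:symdif} is stated for \emph{simple} trees, not arbitrary trees, so the constraint you invoke only applies after a maximal recoloring; the conjecture itself allows non-simple trees with arbitrarily many boundary vertices per color, which is precisely the extra freedom the authors point to when arguing the conjecture is not obviously false. Second, your closing claim that the evidence covers ``all known extreme rays up to $\N=6$'' overstates the paper's findings: the authors report that a handful of the $14$ problematic $\N=6$ graph models (e.g.\ the one in \cref{fig:N6er3}) could \emph{not} be converted into trees by their operations, so even \ref{con:c3} — the weaker, extreme-ray-only version of \ref{con:c4} — is not fully verified at $\N=6$.
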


A priori one may have doubts about this conjecture, considering how special tree graphs are compared to arbitrary graphs. Notice however that the trees in \ref{con:c1} are not required to be simple, and can have an arbitrary number of boundary vertices for each color. Finally should be clear that if one could prove \ref{con:c4}, then \ref{con:c3} would follow immediately. By the argument presented above, \ref{con:c3} would then imply \ref{con:c2}, and the reconstruction of the HEC$_{\N^*}$ from the extreme rays of the SAC$_{\N_{\text{max}}}$ for all $\N^*$ would be given by \cref{eq:hecformula}.

\begin{figure}
\centering
\hspace*{1.625cm}\includegraphics[scale=0.65]{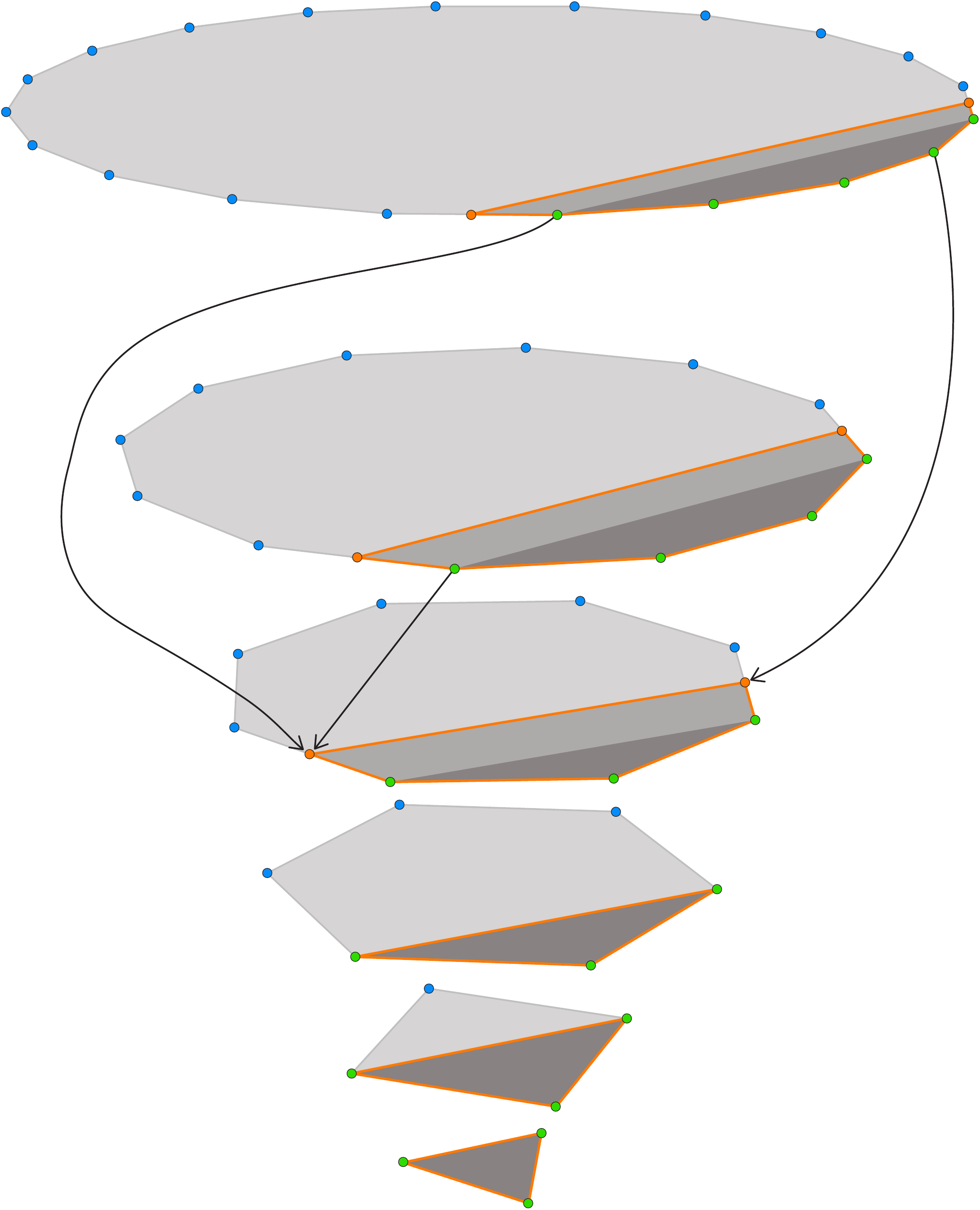}
\put(-417,431){\makebox(0,0){{\footnotesize $\N=8$}}}
\put(-417,301){\makebox(0,0){{\footnotesize $\N=6$}}}
\put(-417,202){\makebox(0,0){{\footnotesize $\N=5$}}}
\put(-417,128){\makebox(0,0){{\footnotesize $\N=4$}}}
\put(-417,60){\makebox(0,0){{\footnotesize $\N=3$}}}
\put(-417,16){\makebox(0,0){{\footnotesize $\N=2$}}}
\put(-218,215){\makebox(0,0){{\footnotesize $\Phi_{6\rightarrow 5}$}}}
\put(-26,360){\makebox(0,0){{\footnotesize $\Phi_{8\rightarrow 5}$}}}
\put(-313,360){\makebox(0,0){{\footnotesize $\Phi'_{8\rightarrow 5}$}}}
\put(-262,171){\makebox(0,0){{\footnotesize $\mathbf{S}_1$}}}
\put(-84,202){\makebox(0,0){{\footnotesize $\mathbf{S}_2$}}}
\put(-160,382){\makebox(0,0){{\footnotesize $\mathbf{R}'_1$}}}
\put(-202,259){\makebox(0,0){{\footnotesize $\mathbf{R}_1$}}}
\put(-8,410){\makebox(0,0){{\footnotesize $\mathbf{R}_2$}}}
\vspace{0.7cm}
    \caption{A schematic cartoon of a cross-section of the SAC$_{\N}$ and the HEC$_{\N}$ for different values of $\N$. The extreme rays of the SAC$_{\N}$ are represented by the green and blue vertices. The green ones are also extreme rays of the HEC$_{\N}$, and their conical hull is the darkest shaded region. The blue ones are outside of the HEC$_{\N}$. The boundary of the HEC$_{\N}$ is highlighted in orange, and the orange vertices represent extreme rays of the HEC$_{\N}$ which are \emph{not} extreme rays of the SAC$_{\N}$. The arrows indicate color-projections associated with coarse-grainings that implement the reconstruction, mapping the green vertices to the orange. For $\N=5$ the figure shows two extreme rays $\mathbf{S}_1$ and $\mathbf{S}_2$ that are not extreme rays of the SAC$_5$ and need to be reconstructed. While $\N=6$ is sufficient to obtain $\mathbf{S}_1$ as $\mathbf{S}_1=\Phi_{6\rightarrow 5}\mathbf{R}_1$, it is not sufficient to reconstruct $\mathbf{S}_2$. On the other hand, $\N=8$ is sufficient to reconstruct both rays, $\mathbf{S}_1=\Phi'_{8\rightarrow 5}\mathbf{R}'_1$ and $\mathbf{S}_2=\Phi_{8\rightarrow 5}\mathbf{R}_2$.
    To avoid clutter, we are suppressing all irrelevant projections/lifts of other extreme rays. The shapes of the cone cross-sections (as well as the numbers of facets and extreme rays of various kinds) are not meant to be taken literally.
    }
    \label{fig:stackofcones}
\end{figure}

\section{Discussion}
\label{sec:discussion}

In this section we briefly summarize our results and what remains to be proven to definitely conclude that the solution to the holographic marginal independence problem contains sufficient information for the reconstruction of the holographic entropy cone. Moreover, we comment on several interesting directions for future investigations which are suggested by the possibility of such a reconstruction, and by the framework that we have developed throughout this work. 

\paragraph{Summary:} The main goal of this work was to argue that the holographic entropy cone can be reconstructed, for an arbitrary number of parties, from the solution to the holographic marginal independence problem. We have shown how such a reconstruction would work, and argued that it takes a particularly simple form. Indeed, we argued that to be able to reconstruct the HEC one does not even need to know the full solution to the HMIP, but only the set of $1$-dimensional holographic
PMIs. Ultimately, what this implies is the following
\begin{claim}
The knowledge of all holographic entropy inequalities, for all values of $\N$, is equivalent to the knowledge of all holographic extreme rays of the \emph{SAC}, for all values of $\N$.
\end{claim}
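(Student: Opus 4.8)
The plan is to show the two directions of the claimed equivalence separately, making heavy use of the machinery developed above together with whichever of the conjectures \ref{con:c1}--\ref{con:c4} one is willing to assume (so the ``proof'' is really a proof modulo the conjectures, as the text makes clear). In one direction, I would assume \ref{con:c2} (or derive it from \ref{con:c3} or \ref{con:c4} via the implications already established) and argue that knowledge of the set $\erh^{\N}$ of holographic extreme rays of the $\text{SAC}_{\N}$, for all $\N$, determines each $\text{HEC}_{\N^*}$ through the reconstruction formula \cref{eq:hecformula}. Concretely: for fixed $\N^*$, \ref{con:c2} guarantees a finite $\N_{\text{max}}(\N^*)$ such that every extreme ray of $\text{HEC}_{\N^*}$ is a color-projection $\Phi_{\N_{\text{max}}\rightarrow\N^*}\mathbf{R}$ of some $\mathbf{R}\in\erh^{\N_{\text{max}}}$; taking the conical hull over all such $\mathbf{R}$ and all admissible coarse-grainings $\phi$ recovers $\text{HEC}_{\N^*}$ exactly, since by \cref{cor:ext} and \cref{lem:vinpmi} every ray so produced is holographic (hence inside the cone), and by assumption no extreme ray is missed. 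Polyhedrality of $\text{HEC}_{\N^*}$ then lets one convert this list of extreme rays into the finite list of facet inequalities; doing this for all $\N^*$ yields all holographic entropy inequalities.

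For the converse direction I would argue that knowledge of all holographic entropy inequalities, equivalently of all the cones $\text{HEC}_{\N}$, determines all holographic extreme rays of the $\text{SAC}_{\N}$. The key observation is that an extreme ray $\mathbf{R}$ of $\text{SAC}_{\N}$ is holographic precisely when it lies in $\text{HEC}_{\N}$: one inclusion is trivial since $\text{HEC}_{\N}\subseteq\text{SAC}_{\N}$, and conversely if $\mathbf{R}\in\text{HEC}_{\N}$ then there is a graph model $\gm{\N}$ with $\mathbf{S}(\gm{\N})$ on the ray generated by $\mathbf{R}$; by \cref{lem:vinpmi} the min-cut subspace $\mathbb{S}(\tgm{\N},\mathfrak{m})$ sits inside the $1$-dimensional PMI generated by $\mathbf{R}$, hence is itself that ray, so $\pi(\tgm{\N},\mathfrak{m})$ is that $1$-dimensional PMI and $\mathbf{R}$ is a holographic PMI. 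Thus $\erh^{\N}$ is exactly the set of extreme rays of $\text{SAC}_{\N}$ that happen to lie in $\text{HEC}_{\N}$, which is computable from the facet description of $\text{HEC}_{\N}$ (itself part of ``all holographic entropy inequalities'') together with the fixed, $\N$-dependent but known facet description of $\text{SAC}_{\N}$. Iterating over all $\N$ gives all holographic extreme rays of the SAC.

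I would organize the write-up as: (i) recall that $\text{SAC}_{\N}$ has an explicit, universally-known facet description (all instances of SA), so its extreme rays are in principle enumerable for each $\N$; (ii) prove the lemma that $\mathbf{R}\in\erh^{\N}$ iff $\mathbf{R}$ is an extreme ray of $\text{SAC}_{\N}$ lying in $\text{HEC}_{\N}$, using \cref{lem:vinpmi} and \cref{cor:ext}; (iii) deduce the converse direction; (iv) invoke \ref{con:c2} (citing the chain \ref{con:c4}$\Rightarrow$\ref{con:c3}$\Rightarrow$\ref{con:c2} for robustness) and \cref{eq:hecformula} for the forward direction; (v) note that polyhedrality of both cones makes the passage between extreme-ray and facet descriptions a matter of (finite, if inefficient) computation, so ``knowledge of inequalities'' and ``knowledge of extreme rays'' are genuinely interchangeable data.

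The main obstacle is of course that the forward direction rests entirely on an unproven conjecture: without \ref{con:c1} (at minimum) there is no guarantee that a finite $\N_{\text{max}}(\N^*)$ exists, i.e.\ that every extreme ray of $\text{HEC}_{\N^*}$ descends from \emph{some} holographic PMI at a bounded number of parties, and without the stronger \ref{con:c2}/\ref{con:c3} one cannot restrict to $1$-dimensional PMIs. So the honest statement of the result is conditional, and the genuinely hard mathematical content --- proving that tree realizations always exist (\ref{con:c4}), or that fine-grainings can always be made minimally-degenerate with $1$-dimensional PMI (\ref{con:c1}) --- is exactly what is left open. A secondary, softer issue is making precise what ``knowledge'' means: I would phrase everything in terms of computability/determination (each data set determines the other by an explicit, if inefficient, procedure) rather than claiming any efficient algorithm, since the extreme-ray$\leftrightarrow$facet conversion is already intractable at $\N=6$.
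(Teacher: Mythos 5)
Your proposal matches the paper's own (explicitly conjecture-dependent) argument for this Claim: the forward direction is \ref{con:c2} (or the chain \ref{con:c4}$\Rightarrow$\ref{con:c3}$\Rightarrow$\ref{con:c2}) feeding the reconstruction formula \cref{eq:hecformula}, while the converse --- that $\erh$ for each $\N$ consists precisely of the extreme rays of the SAC$_{\N}$ lying inside the HEC$_{\N}$, and is therefore recoverable from the facet data --- is the easy, unconditional direction that the paper only states in passing, so your making it explicit is a welcome but not structurally different addition. One minor slip: the reason each projected ray $\Phi_{\N_{\text{max}}\rightarrow\N^*}\,\mathbf{R}$ lies in the HEC$_{\N^*}$ is not \cref{cor:ext} or \cref{lem:vinpmi}, but simply that applying the recoloring $\beta^{\downarrow}$ to a graph model realizing $\mathbf{R}$ yields a graph model whose entropy vector lies on the projected ray.
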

A summarizing cartoon of this equivalence and of the reconstruction procedure is provided in \cref{fig:stackofcones}. We stress that, as we explained above, this claim does not hold for a fixed value of $\N$ when
$\N\geq 5$, i.e., knowing the holographic extreme rays of the SAC for $\N$ parties does not in general provide enough information to derive all the $\N$-party holographic entropy inequalities.

\paragraph{Completing the proof:} To complete the proof of our main claim above it remains to prove \ref{con:c2}. This problem however might be challenging, since \ref{con:c2} is specific to extreme rays, which are unknown (and are precisely what we want to reconstruct), and the graph models realizing them are allowed to have unrestricted topology. We believe that the best way to proceed is to prove \ref{con:c3}. Although we were not able to check that this conjecture holds for all known extreme rays of the HEC$_6$, we also did not attempt an exhaustive search, and it seems likely that one simply needs to consider more complicated tree graphs.

While in principle it is possible that \ref{con:c3} only holds for extreme rays, it also seems reasonable to expect that it is a more general fact about $1$-dimensional min-cut subspaces, in which case proving this conjecture would presumably be easier. In fact, the realizability of min-cut subspaces via tree graphs might be even more general, and one could also attempt to directly prove \ref{con:c4}. An interesting way to proceed in this direction would be to better understand the structure of the partition of the space of edge weights into W-cells for a given topological graph model. Since this structure also determines the maps to entropy space that produce the min-cut subspaces for the various min-cut structures, if one could directly relate this structure to the topology of the graph, one might then be able to prove that any min-cut subspace can be obtained from a tree.

Finally, we should also contemplate the possibility that our main claim above is false. In such a case it might still be possible to reconstruct the HEC from the solution to the HMIP, but it would not be sufficient to only know the holographic extreme rays of the SAC. To prove that the reconstruction is possible one should then try to prove \ref{con:c1}, although this conjecture suffers from the same complications we just mentioned for \ref{con:c2}, since it is specific to extreme rays and the graphs have unrestricted topology. However, one may also wonder if for each equivalence class of graph models (not just those which realize extreme rays), there always exists a fine-graining to a new one such that the min-cut subspace and the PMI coincide. While stronger then \ref{con:c1}, this property might be easier to prove by investigating more deeply the relation between min-cut subspaces and PMIs for arbitrary topological graph models and min-cut structures.

\paragraph{Finding the holographic extreme rays of the SAC:} 

If our main claim above holds, the problem of determining the HEC is mapped to the problem of determining which extreme rays of the SAC are holographic. While this is beyond the scope of the present work, we briefly comment on the importance of this question and some intriguing possible answers.

For a fixed number of parties $\N$, consider the following sets of extreme rays of the SAC$_{\N}$ (dropping the $\N$-dependence to simplify the notation) 
\begin{align}
    \er & \coloneqq \{\text{extreme rays of the SAC}\} \nonumber\\
    \erssa & \coloneqq \{\text{extreme rays of the SAC that satisfy SSA}\}\nonumber\\
    \erq & \coloneqq \{\text{extreme rays of the SAC that can be realized by quantum states}\}\nonumber\\
    \erh & \coloneqq \{\text{extreme rays of the SAC that can be realized by graph models}\} 
\end{align}
These sets clearly satisfy the following chain of inclusions
\begin{equation}
\label{eq:inclusions}
    \erh \subseteq \erq \subseteq \erssa \subseteq \er
\end{equation}
and we have $\erssa=\er$ only for $\N=2$, while typically $\abs{\erssa}\ll\abs{\er}$.

If our main claim holds, the deep question about the physical origin of the HEC and its structure is distilled to the question about the relation between the sets $\erh$ and $\erq$. If $\erh\!\!=\!\!\erq$, all extreme rays of the SAC compatible with quantum mechanics would participate in the construction of the HEC, and there would be no other holographic constraint to resolve. While all data currently available points in this direction, it is conceivable that for larger values of $\N$ there exist extreme rays of the SAC that can be realized by quantum states but not by geometric states. This would hint at more fundamental holographic constraints which operate at the level of the SAC and would be very interesting to investigate.

Answering this question however could be quite difficult, since very little is known about the structure of the QEC, and to the best of our knowledge, there is no systematic construction of quantum states that realize the extreme rays of the SAC. A more approachable question on the other hand could be whether $\erh$ and $\erssa$ coincide. Interestingly, for $\N\leq 5$, for which the elements of $\erssa$ can be determined using standard algorithms \cite{Fukuda:2015}, this turns out to be the case. If $\erh\!=\erssa$, the chain of inclusions \cref{eq:inclusions} immediately implies that $\erh\!=\erq$ and $\erq\!=\erssa$, and we would not only obtain a full characterization of the HEC, but also new information about the QEC for arbitrary $\N$. 
In order to establish if $\erh\!=\erssa$, the structure of $\erssa$ is currently being investigated in \cite{he:2022}. The goal is to obtain a sufficiently detailed characterization of the elements of this set, such that they could then be matched with explicit holographic realizations in terms of graph models.

\paragraph{Generalization to HRT:} 
In this work we focused on the static version of the holographic entropy cone, since this is the situation where configurations of HRRT surfaces can be described by graph models. The connection between the HEC and the solution to the HMIP however seems to suggest that the same structure would pertain also to dynamical spacetimes. Intuitively, the reason is that the crucial information about the configurations of HRRT surfaces is not rooted in the area of the individual surfaces, but rather the connectivity of the entanglement wedges of the various subsystems. It would be interesting to explore this possibility in more detail.\footnote{\, For two-dimensional CFT, it was proven in \cite{Czech:2019lps} that any holographic entropy inequality which holds for RT surfaces also holds for HRT surfaces. Despite the limitations related to the special properties of a $3$-dimensional bulk spacetime, it would be useful to clarify the relation between the present work and the result of \cite{Czech:2019lps}, where it was SSA rather than SA that played a crucial role.}

The most convenient contingency would be one wherein one can devise a graph model for the general time-dependent situation, with the subsystem entanglement entropy again given by the sum of the edge weights of the cut edges for a suitably-defined min-cut, in which case the graph side of the HEC construction would be identical.  The immediate obstacle with this approach arises from the fact that parts of the HRT surfaces for various regions can be timelike-separated, and more importantly that the surfaces in question have their areas extremized rather than simply minimized.  One possibility of circumventing this complication is examined in \cite{ghh:minimax}.

\paragraph{The holographic entropy polyhedron (HEP):}
Many of the tools that we used here have a close relationship with those introduced in \cite{Hubeny:2018ijt,Hubeny:2018trv} in the context of the \textit{holographic entropy arrangement}. The reader who is familiar with these works will probably notice for example the similarity between the min-cut structure of a topological graph model and the proto-entropy vector of a holographic configuration. Analogously, the min-cut subspace is closely related to the set of constraints that must be satisfied by the coefficients of an unspecified ``information quantity'' (defined as an arbitrary linear combination of entropies, or equivalently as an element of the dual of entropy space) such that it vanishes on a full S-cell.
When a min-cut subspace $\mathbb{S}$ has dimension $\D-1$ in fact, $\mathbb{S}^{\perp}$ is the vector of coefficients of the only information quantity that vanishes on all vectors in the S-cell whose linear span is $\mathbb{S}$. This is the type of information quantity that was dubbed a ``primitive'' in \cite{Hubeny:2018ijt,Hubeny:2018trv}. The holographic entropy arrangement was then defined as the set of all hyperplanes in $\N$-party entropy space which correspond to these primitive quantities, and the HEP as the polyhedron carved by the entropy inequalities associated to the primitive quantities that have a definite sign for geometric states. 

In light of the results of the present work, it is tempting to identify the holographic entropy arrangement with the set of codimension-$1$ min-cut subspaces, and the HEP with the HEC. Indeed, as we have shown here, each extreme ray of the HEC corresponds to a $1$-dimensional min-cut subspace, and the min-cut subspace of a graph obtained from the disjoint union of multiple graphs is the sum of the min-cut subspaces of the individual graphs (cf. \cref{subsec:disconnected}). Using this construction one could then combine the graphs realizing the extreme rays of the HEC to obtain new graphs whose min-cut subspaces are the supporting subspaces of the facets of the HEC. All information quantities corresponding to holographic entropy inequalities would then be primitive in the sense of \cite{Hubeny:2018ijt,Hubeny:2018trv}, and the holographic entropy cone and polyhedron would coincide.

This simple reasoning however obfuscates some of the subtleties that were at the core of the motivations of \cite{Hubeny:2018trv} for introducing the HEP in the first place. Throughout this work we have constantly seen that a crucial role in determining the structure, or certain transformation properties, of the objects that we introduced has been played by the ``pattern of degeneracies'' of an equivalence class of graph models. This is especially true for the realization of the extreme rays of the HEC via graph models, which as we have seen, have a $1$-dimensional min-cut subspace and are therefore ``maximally degenerate''. In terms of holographic configurations and extremal surfaces, these degeneracies seem to translate to situations where the configurations are so fine-tuned that the entropies of multiple subsystems are computed by several coexisting surfaces of equal area. In many other contexts however, like bulk reconstruction, one is typically inclined to ignore such fine-tuned cases to avoid worrying about subtleties regarding order of limits \cite{Dong:2016hjy} by focusing instead on generic situations. Indeed, this was precisely the approach followed by \cite{Hubeny:2018trv}, which defined primitive information quantities by restricting to generic configurations. In the language of this work, resolving this issue seems to boil down to the question of whether the min-cut subspaces which support the facets of the HEC can be realized by generic equivalence classes, rather than via the aforementioned construction based on extreme rays. The precise details about the connection between the two formulations however could be more subtle, and requires more careful investigations \cite{Hernandez-Cuenca:2022a}.

Finally, a similar fine-tuning is related to the concrete holographic realizations of graph models, which is obtained via $3$-dimensional multiboundary wormhole geometries \cite{Bao:2015bfa} where the subsystems are chosen to cover the entire boundaries. This choice in fact allows for a fine-tuning of the entanglement among the CFTs that live on the different boundaries, which has no counterpart for subsystems of a single boundary, where the entropies are divergent and cannot be meaningfully regulated. For this reason, \cite{Hubeny:2018trv} defined the HEP by further restricting the set of primitive quantities to those that can be realized by single-boundary configurations. Work is in progress to shed light also on this issue, and to resolve the subtle differences between the HEC and the HEP \cite{Hernandez-Cuenca:2022a}.

\paragraph{Quantum entropies:}
Our discussion in this paper has focused on entropies realizable by geometric states in holography, but it would be interesting to explore if some of the tools developed here can also be utilized to analyze properties of quantum entropies for other classes of states. For example, models of entanglement capturing richer patterns of quantum correlations through hypergraphs and topological links have been explored in \cite{Bao:2020zgx,Bao:2020mqq,Walter:2020zvt,Bao:2021gzu} and much of the structure we studied here admits a natural extrapolation to those settings. It is then interesting to ask if any of our results and conjectures could be suitably generalized. For instance, are min-cuts the only ingredient at the heart of why the HEC descends from the SAC in the sense of \cref{con:c2}? If so, it would be reasonable to expect an analogous phenomenon with hypergraphs, namely, that the hypergraph entropy cone be obtainable from extreme rays of the SAC that are realizable by hypergraphs. This poses a non-trivial question already at $4$ parties, when the hypergraph entropy cone includes an extreme ray which is not an extreme ray of the SAC$_4$, and it is unclear whether it can be obtained as a color projection of an extreme ray of the SAC$_{\N}$ for some $\N>4$ \cite{Bao:2020zgx}.

\paragraph{Quantum Corrections:}
The graph models studied here precisely capture the properties of the RT prescription in holography. What can our results say about holographic entanglement entropy beyond the strictly classical limit where entropies are purely geometric? It would be interesting to explore how quantum contributions from matter fields affect our results. The application of the combinatorial machinery to the study of entropies computed by the quantum extremal surface prescription was already shown to be fruitful in \cite{Akers:2021lms}. A similar reasoning could be used here to discern the imprint that the area term leaves on the generalized entropy and the discrete structure potentially emerging from it.

\acknowledgments
 
It is a pleasure to thank Mukund Rangamani for initial collaboration and many discussions on this project.
We also thank Temple He and Matt Headrick for useful discussions.
We would like to thank KITP, with support from the National Science Foundation under Grant No. NSF PHY-1748958, for its hospitality during the program ``Gravitational Holography''.
During the completion of this project, SHC has been supported by NSF grants PHY-1801805 and PHY-2107939, by a Len DeBenedictis Graduate Fellowship, and by funds from UCSB.
VH has been supported in part by the U.S.\ Department of Energy grant DE-SC0009999 and in part by the U.S.\ Department of Energy grant DE-SC0020360 under the HEP-QIS QuantISED program.
MR has been supported by the University of Amsterdam, via the ERC Consolidator Grant QUANTIVIOL, and by the Stichting Nederlandse Wetenschappelijk Onderzoek Instituten (NWO-I).

\appendix

\section{Graph operations}
\label{sec:gops}

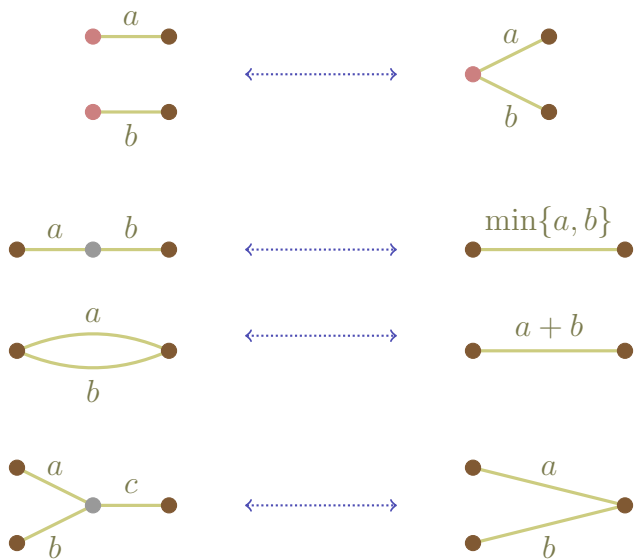
\begin{figure}[b]
    \centering
    \begin{tikzpicture}
    	\tikzmath{\v=0.1;\e=1;\a=2;\g=1;}
    	\coordinate (A1v) at (0,0.5);
    	\coordinate (A2v) at (0,-0.5);
    	\coordinate (Av) at (\e+\a+2*\g,0);
    		
		\draw[edgestyle] (A1v) -- node[above,edgeweightstyle]{$a$} +(1,0);
		\draw[edgestyle] (A2v) -- node[below,edgeweightstyle]{$b$} +(1,0); 
		\filldraw [color=Acolor] (A1v) circle (\v);
		\filldraw [color=Acolor] (A2v) circle (\v);
		\filldraw [color=Fcolor] ($(A1v)+(1,0)$) circle (\v);
		\filldraw [color=Fcolor] ($(A2v)+(1,0)$) circle (\v);
		
		\draw[bfstyle] (\e+\g,0) -- +(\a,0);
		
		\draw[edgestyle] (Av) --  node[above,edgeweightstyle]{$a$}  +(1,0.5);
		\draw[edgestyle] (Av) --  node[below,edgeweightstyle]{$b$}  +(1,-0.5);
		\filldraw [color=Acolor] (Av) circle (\v);
		\filldraw [color=Fcolor] ($(A1v)+(Av)+(1,0)$) circle (\v);
		\filldraw [color=Fcolor] ($(A2v)+(Av)+(1,0)$) circle (\v);

        \useasboundingbox (0,1.2);
    \end{tikzpicture}
    
    \begin{tikzpicture}
    	\tikzmath{\v=0.1;\e=1;\a=2;\g=1;}
    	\coordinate (bv) at (0,0);
        	
		\draw[edgestyle] (bv) -- node[above,edgeweightstyle]{$a$} +(-1,0);
		\draw[edgestyle] (bv) -- node[above,edgeweightstyle]{$b$} +(1,0); 
		\filldraw [color=bvcolor] (bv) circle (\v);
		\filldraw [color=Fcolor] (bv)+(-1,0) circle (\v);
		\filldraw [color=Fcolor] (bv)+(1,0) circle (\v);

		\draw[bfstyle] (\e+\g,0) -- +(\a,0);
        	
		\draw[edgestyle] (Av) --  node[above,edgeweightstyle]{$\min\{a,b\}$}  +(2,0);
		
		\filldraw [color=Fcolor] (Av) circle (\v);
		\filldraw [color=Fcolor] (Av)+(2,0) circle (\v);
        
        \useasboundingbox (0,1.2);
    \end{tikzpicture}
    
    \begin{tikzpicture}
    	\tikzmath{\v=0.1;\ce=2;\a=2;\g=1;}
    	\coordinate (bvL) at (0,0);
    	\coordinate (bvR) at (2*\g+\ce+\a,0);
        	
    	\draw[edgestyle] (bvL) .. controls (\ce*0.34,0.3) and (\ce*0.66,0.3) .. 
    			node[above,edgeweightstyle]{$a$}  (\ce,0);
    	\draw[edgestyle] (bvL) .. controls (\ce*0.34,-0.3) and (\ce*0.66,-0.3) ..
    			node[below,edgeweightstyle]{$b$}  (\ce,0);
    	\filldraw [color=Fcolor] (bv) circle (\v);
    	\filldraw [color=Fcolor] (bv)+(\ce,0) circle (\v);
        
        \draw[bfstyle] (\ce+\g,0.2) -- +(\a,0);
            
        \draw[edgestyle] (bvR) --  node[above,edgeweightstyle]{$a+b$}  +(\ce,0);
		\filldraw [color=Fcolor] (bvR) circle (\v);
		\filldraw [color=Fcolor] (bvR)+(\ce,0) circle (\v);
        
        \useasboundingbox (0,1.2);
    \end{tikzpicture}

    \begin{tikzpicture}
        \tikzmath{\v=0.1;\e=1;\a=2;\g=1;}
    	\coordinate (bv) at (0,0);
    	\coordinate (bvR) at (2*\g+2*\e+\a+1,0);

		\draw[edgestyle] (bv) -- node[above,edgeweightstyle]{$a$}  (-\e,0.5);
		\draw[edgestyle] (bv) -- node[below,edgeweightstyle]{$b$}  (-\e,-0.5);
		\draw[edgestyle] (bv) -- node[above,edgeweightstyle]{$c$}  (\e,0);
		\filldraw [color=bvcolor] (bv) circle (\v);
		\filldraw [color=Fcolor] (bv)+(\e,0) circle (\v);
		\filldraw [color=Fcolor] (bv)+(-\e,0.5) circle (\v);
		\filldraw [color=Fcolor] (bv)+(-\e,-0.5) circle (\v);
		
		\draw[bfstyle] (\e+\g,0) -- +(\a,0);

		\draw[edgestyle] (bvR) -- node[above,edgeweightstyle]{$a$}  +(-2*\e,0.5);
		\draw[edgestyle] (bvR) -- node[below,edgeweightstyle]{$b$}  +(-2*\e,-0.5);
		
		\filldraw [color=Fcolor] (bvR)+(-2*\e,0.5) circle (\v);
		\filldraw [color=Fcolor] (bvR)+(-2*\e,-0.5) circle (\v);
		\filldraw [color=Fcolor] (bvR) circle (\v);

        \useasboundingbox (0,1.2);
    \end{tikzpicture}
    
    \caption{Basic entropy-preserving graph operations. The last one requires $c\geq a+b$. Brown vertices are to be kept fixed under these operations. Boundary vertices are colored in red, and bulk ones in gray. In a general graph, both brown and red vertices may connect to arbitrarily many other edges, whereas gray ones should only appear as shown.}
    \label{fig:gops}
\end{figure}

\begin{figure}[t]
    \centering
        \begin{tikzpicture}
        	\tikzmath{\v=0.1;\e=1;\a=2;\g=1;}
        	\coordinate (bvL1) at (90:0.8*\e);
        	\coordinate (bvL2) at (210:0.8*\e);
        	\coordinate (bvL3) at (330:0.8*\e);
        	\coordinate (bvR) at (2*\g+2.4*\e+\a,0);

    		\draw[edgestyle] (bvL1) -- node[left,edgeweightstyle]{$a$}  (bvL2);
    		\draw[edgestyle] (bvL2) -- node[below,edgeweightstyle]{$c$}  (bvL3);
    		\draw[edgestyle] (bvL3) -- node[right,edgeweightstyle]{$b$}  (bvL1);
    		\filldraw [color=Fcolor] (bvL1) circle (\v);
    		\filldraw [color=Fcolor] (bvL2) circle (\v);
    		\filldraw [color=Fcolor] (bvL3) circle (\v);

    		\draw[bfstyle] (\e+\g,0.2) -- +(\a,0);

    		\draw[edgestyle] (bvR) -- node[right,near end,edgeweightstyle]{$a+b$}  +(90:\e);
    		\draw[edgestyle] (bvR) -- node[above,sloped,edgeweightstyle]{$a+c$}  +(210:\e);
    		\draw[edgestyle] (bvR) -- node[above,sloped,edgeweightstyle]{$b+c$}  +(323:\e);
    		\filldraw [color=bvcolor] (bvR) circle (\v);
    		\filldraw [color=Fcolor] (bvR) +(90:\e) circle (\v);
    		\filldraw [color=Fcolor] (bvR)+(210:\e) circle (\v);
    		\filldraw [color=Fcolor] (bvR)+(323:\e) circle (\v);
    		
            \useasboundingbox (0,1.2);
        \end{tikzpicture}
    \caption{The new entropy-preserving graph operation ${\sf \Delta}$-{\sf Y}. Vertices are color-coded as in \cref{fig:gops}.}
    \label{fig:triangle}
\end{figure}
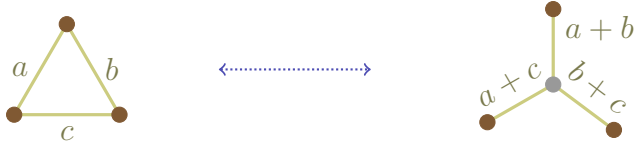

Throughout this work we have been focusing on min-cut structures on topological graph models rather than on graph models and entropy vectors. However, as we have seen, there is a particularly important situation where the two concepts essentially coincide, namely the extreme rays of the HEC. In these cases it is useful to perform certain manipulations which simplify the graph or make manifest some underlying properties. In particular, we have used these operations to obtain realizations of all extreme rays of the HEC$_5$ and most of the HEC$_6$ by tree graphs, in order to establish a connection with the extreme rays of the SAC$_{\N}$. In this appendix we provide a brief presentation of the operations we have used.

The kind of graph operations one is interested in are those which can be used to locally change the vertices, edges, and weights of a general graph while preserving its entropy vector. A few such basic operations were listed in Figure 6 of \cite{Bao:2015bfa}. It turns out these can all be easily broken down into simpler operations, which we reproduce in \cref{fig:gops}.

In the examples we discussed, however, these operations were not always sufficient to convert a graph model into a tree, and there is another graph operation which has proven remarkably useful (e.g. in obtaining the tree graphs at the bottom of \cref{fig:N5trees}), the ${\sf \Delta}$-{\sf Y} exchange shown in \cref{fig:triangle}. Since this is a non-trivial operation, we include a proof that it indeed does not alter the entropy vector of a graph model:
    \begin{table}[b]
        \centering
        \begin{tabular}{c|c|c}
            $\gm{\N}$ & weight & $\gm{\N}'$ \\\hline
            $\es, \{i,j,k\}$ & $0$ & $\es, \{\sigma,i,j,k\}$ \\
            $\{i\},\{j,k\}$ & $w_{ij}+w_{ik} = w_{\sigma i}$ & $\{i\}, \{\sigma,j,k\}$ \\
            $\{j\},\{i,k\}$ & $w_{ij}+w_{jk} = w_{\sigma j}$ & $\{j\}, \{\sigma,i,k\}$  \\
            $\{k\},\{i,j\}$ & $w_{ik}+w_{jk} = w_{\sigma k}$ & $\{k\}, \{\sigma,i,j\}$  \\
        \end{tabular}
        \caption{Weight contributions to general candidate min-cuts on $\gm{\N}$ and $\gm{\N}'$, depending on exactly which subset of the vertices involved in the graph operation in \cref{fig:triangle} is included. The agreement between the two graphs proves \cref{lem:triop}.}
        \label{tab:cuts}
    \end{table}

\begin{nlemma}
\label{lem:triop}
    The ${\sf \Delta}$-{\sf Y} exchange operation preserves the entropies.
\end{nlemma}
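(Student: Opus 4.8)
The plan is to prove $\sf\Delta$-$\sf Y$ invariance by showing that \emph{for every} subsystem $\I$, the min-cut weight $S_\I$ is unchanged when we replace the triangle subgraph $\gm{\N}$ (three mutually connected fixed vertices $i,j,k$ with edge weights $w_{ij},w_{ik},w_{jk}$) by the star subgraph $\gm{\N}'$ (the same three fixed vertices each joined to a new bulk vertex $\sigma$ with weights $w_{\sigma i}=w_{ij}+w_{ik}$, $w_{\sigma j}=w_{ij}+w_{jk}$, $w_{\sigma k}=w_{ik}+w_{jk}$). Since everything outside the depicted subgraph is held fixed, any cut $U$ of the full graph induces a partition of the three fixed vertices $\{i,j,k\}$ into ``inside $U$'' and ``outside $U$'', and by symmetry of $\sf\Delta$-$\sf Y$ under permutations of $\{i,j,k\}$ there are only two inequivalent cases to check: either all three of $i,j,k$ lie on the same side of the cut, or exactly one of them is separated from the other two. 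In the star graph one additionally gets to choose which side $\sigma$ lies on, and one simply takes the cheaper option.

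The key step is the local computation summarized in \cref{tab:cuts}: in the case where $\{i,j,k\}$ are not separated, the triangle contributes $0$ to the cut, and the star contributes $0$ as well by placing $\sigma$ on the same side as $i,j,k$; in the case where (say) $i$ is separated from $\{j,k\}$, the triangle contributes $w_{ij}+w_{ik}$, while the star contributes $w_{\sigma i}=w_{ij}+w_{ik}$ by placing $\sigma$ with $j,k$ (placing $\sigma$ with $i$ would instead cost $w_{\sigma j}+w_{\sigma k}=w_{ij}+w_{jk}+w_{ik}+w_{jk}\ge w_{ij}+w_{ik}$, so the minimum indeed picks out the right value). Thus for \emph{any} fixed choice of how the rest of the graph is cut, the set of achievable total cut weights through the subgraph is identical for $\gm{\N}$ and $\gm{\N}'$. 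Taking the minimum over all cuts homologous to $\I$ on each side then gives $S_\I(\gm{\N})=S_\I(\gm{\N}')$ for every $\I$, which is precisely the statement that the entropy vector is preserved. I would present this as: (1) observe the reduction to the two local cases via permutation symmetry and locality; (2) tabulate the contributions exactly as in \cref{tab:cuts}, checking in the separated case that the alternative placement of $\sigma$ is never strictly cheaper; (3) conclude that the min-cut functions agree on the nose.

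The main obstacle — really the only subtlety — is making the ``locality'' argument airtight: one must argue that an optimal cut of the modified graph never does anything exotic with the new vertex $\sigma$ beyond the binary choice of side, and that decomposing an arbitrary $\I$-cut into (its restriction to the rest of the graph) $\sqcup$ (its restriction to the subgraph) is legitimate and additive on cut weights. This is standard because $\sigma$ has degree three and connects only to $i,j,k$ (exactly the hypothesis imposed in \cref{fig:gops} and \cref{fig:triangle} that gray vertices appear only as shown), so no min-cut can gain anything by fractional or nonlocal behavior; but it should be stated explicitly, perhaps invoking submodularity of the cut function as in the proof of \cref{lem:mincutdecaltgen}. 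I do not expect the inequality bookkeeping in \cref{tab:cuts} to be hard — it is the four lines of that table — so the write-up can be quite short, essentially: ``the reduction to local cases plus \cref{tab:cuts} establishes $S_\I(\gm{\N})=S_\I(\gm{\N}')$ for all $\I$.''
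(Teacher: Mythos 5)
Your proposal is correct and follows essentially the same route as the paper's proof: a case analysis of how a cut partitions $\{i,j,k\}$, the observation that the optimal placement of the new bulk vertex $\sigma$ is always with the majority side (ruling out the other placements by the explicit inequality $w_{\sigma j}+w_{\sigma k}\ge w_{\sigma i}$), and the matching of local cut contributions tabulated in \cref{tab:cuts}. The only cosmetic difference is that you reduce the eight subsets to two cases by permutation symmetry, whereas the paper enumerates them directly.
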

\begin{proof}
Consider two graph models $\gm{\N}$ and $\gm{\N}'$ related by the operation in \cref{fig:triangle}, such that $\gm{\N}$ (on the left) contains the $3$-cycle (${\sf \Delta}$), while $\gm{\N}'$ (on the right) contains the degree-$3$ vertex ({\sf Y}). In $\gm{\N}$, let the pertinent vertices be $\{i,j,k\}$, joined by edges with weights $\{w_{ij},w_{ik},w_{jk}\}$
(in \cref{fig:triangle} labeled more compactly as $\{a,b,c\}$ for simplicity of notation). To obtain $\gm{\N}'$, we preserve the same vertices, delete the edges, add a new vertex $\sigma$, and connect it to each of the former vertices by new edges of weights $\{w_{\sigma i},w_{\sigma j},w_{\sigma k}\}$ given by
\begin{equation*}
    w_{\sigma i}=w_{ij}+w_{ik}\qquad
    w_{\sigma j}=w_{ij}+w_{jk}\qquad
    w_{\sigma k}=w_{ik}+w_{jk}
\end{equation*}
An arbitrary vertex cut on $\gm{\N}$ may contain any of the  $8$ subsets of the vertices $\{i,j,k\}$ that make up the $3$-cycle. In each case, the contribution from the edges that form the cycle to the weight of the corresponding cut is given in \cref{tab:cuts}. In $\gm{\N}'$ there are instead $16$ possible subsets of $\{\sigma,i,j,k\}$ which an arbitrary cut may contain. However, $8$ of them can be immediately ruled out by the fact that they cannot achieve minimum weight. To see this, consider for example a cut containing precisely $\{\sigma,i\}$. This would receive a weight $w_{\sigma j}+w_{\sigma k}=w_{ij}+w_{ik}+2w_{jk}$, which is strictly greater than the weight that $\{i\}$ alone would give, namely, $w_{\sigma i}=w_{ij}+w_{ik}$. In general, one easily observes that for a cut to be of minimum weight, the newly added vertex $\sigma$ should only participate when the cut contains at least two of $\{i,j,k\}$, thus giving only $8$ possibilities, as in $\gm{\N}$. Looking at each case as in \cref{tab:cuts}, we arrive at the desired result that min-cut weights on the two graphs indeed match.
\end{proof}


\providecommand{\href}[2]{#2}\begingroup\raggedright\endgroup

\end{document}